\newcommand\fnsep{\textsuperscript{\, ,}}
\newcommand{\NN}{\mathbb N}
\newcommand{\RR}{\mathbb R}
\newcommand{\CC}{\mathbb C}
\newcommand{\one}{\mathbbm{1}}
\newcommand{\psigma}{\partial\sigma}
\newcommand{\af}{\mathbf{b}}
\newcommand{\cf}{\mathbf{b^*}}
\newcommand{\ab}{\mathbf{a}}
\newcommand{\cb}{\mathbf{a^*}}
\newcommand{\boson}{\mathrm{b}}
\newcommand{\fermion}{\mathrm{f}}
\newcommand{\wf}{\omega^{(\mathrm{b})}}
\newcommand{\wb}{\omega^{(\mathrm{a})}}
\newcommand{\scrJ}{\mathcal J}
\newcommand{\scrB}{\mathcal B}
\newcommand{\scrA}{\mathcal A}
\newcommand{\scrD}{\mathcal D}
\newcommand{\scrE}{\mathcal E}
\newcommand{\scrF}{\mathcal F}
\newcommand{\scrS}{\mathcal S}
\newcommand{\scrT}{\mathcal T}
\newcommand{\scrM}{\mathcal M}
\newcommand{\scrH}{\mathcal H}
\newcommand{\scrL}{\mathcal L}
\newcommand{\scrP}{\mathcal P}
\newcommand{\Fl}{F_\mathrm{l}}
\newcommand{\Fr}{F_\mathrm{r}}
\newcommand{\tF}{\widetilde{F}}
\newcommand{\tFl}{\widetilde{F}_\mathrm{l}}
\newcommand{\tFr}{\widetilde{F}_\mathrm{r}}
\newcommand{\tA}{\widetilde{A}}
\newcommand{\tB}{\widetilde{B}}
\newcommand{\tC}{\widetilde{C}}
\newcommand{\tK}{\widetilde{K}}
\newcommand{\tR}{\widetilde{R}}
\newcommand{\ti}{{\tilde{i}}}
\newcommand{\tj}{{\tilde{j}}}
\newcommand{\tell}{\tilde{\ell}}
\newcommand{\tchi}{\widetilde{\chi}}
\newcommand{\bB}{\mathbf{B}}
\newcommand{\gothh}{\mathfrak{h}}
\newcommand{\Jab}{J_{\ab}}
\newcommand{\Jcb}{J_{\cb}}
\newcommand{\Jaf}{J_{\af}}
\newcommand{\Jcf}{J_{\cf}}
\newcommand{\set}[2]{\{#1\, |\, #2\}}
\newcommand{\bigset}[2]{\bigl\{#1\, \big|\, #2\bigr\}}
\newcommand{\Bigset}[2]{\Bigl\{#1\, \Big|\, #2\Bigr\}}
\newcommand{\comp}{{\mathrm{c}}}
\newcommand{\mf}{m_{\mathrm{f}}}
\newcommand{\mb}{m_{\mathrm{b}}}
\newcommand{\kdelta}{\bm{\delta}}
\newcommand{\myTheta}{\one}
\newcommand{\re}{\mathrm{Re}}
\newcommand{\im}{\mathrm{Im}}
\newcommand{\id}{\mathrm{id}}
\newcommand{\fin}{\mathrm{fin}}
\newcommand{\Hfin}{\scrH_\fin}
\newcommand{\bare}{\mathrm{bare}}
\newcommand{\Left}{{\leftarrow}}
\newcommand{\Right}{{\rightarrow}}
\newcommand{\LR}{{\leftrightarrow}}
\newcommand{\leftT}{\overset{\leftarrow}{T}}
\newcommand{\rightT}{\overset{\rightarrow}{T}}
\newcommand{\lrT}{\overset{\leftrightarrow}{T}}
\newcommand{\ulj}{\underline{j}}
\newcommand{\uF}{\underline{F}}
\newcommand{\us}{{\underline{s}}}
\newcommand{\uut}{{\underline{\mathbf{t}}}}
\newcommand{\uutau}{{\underline{\mathbf{\tau}}}}
\newcommand{\tus}{{\tilde{\underline{s}}}}
\newcommand{\tuut}{{\tilde{\underline{\mathbf{t}}}}}
\newcommand{\usigma}{{\underline{\sigma}}}
\newcommand{\Card}{\mathrm{Card}}
\newcommand{\Split}{\mathrm{Split}}
\numberwithin{equation}{section}
\theoremstyle{plain}
\newtheorem{Th}{Theorem}[section]
\newtheorem{lem}[Th]{Lemma}
\newtheorem{Prop}[Th]{Proposition}
\newtheorem{Cor}[Th]{Corollary}
\theoremstyle{definition}
\newtheorem{Def}[Th]{Definition} 
\newtheorem{Ex}[Th]{Example}
\newtheorem{Hypothesis}[Th]{Hypothesis}
\theoremstyle{remark}
\newtheorem{rk}[Th]{Remark}
\title{Ultraviolet Renormalisation of a Quantum Field Toy Model II }
\author{Benjamin Alvarez\footnote{Corresponding author}\fnsep\footnote{Email: \texttt{benjamin.alvarez@univ-tln.fr}} \\ Aix Marseille Univ, Univ Toulon, CNRS, CPT, Marseille,\\  France \\
\\
Jacob Schach M\o ller\footnote{Email: \texttt{jacob@math.au.dk}} \\ Department of Mathematics, Aarhus University, \\
 Denmark}
\begin{document}

\maketitle

\tableofcontents

\begin{center}
\textbf{Abstract}
\end{center}
We consider a class of toy models describing a fermion field coupled with a boson field. The model can be viewed as a Yukawa model but with scalar fermions. As in our first paper, the interaction kernels are assumed bounded in the fermionic momentum variable and decaying like $|q|^{-p}$ for large boson momenta $q$. With no restrictions on the coupling strength, we prove norm resolvent convergence to an ultraviolet renormalized Hamiltonian, when the ultraviolet cutoff is removed. We do this by subtracting a sufficiently large, but finite, number of recursively defined self-energy counter-terms, which may be interpreted as arising from a perturbation expansion of the ground state energy. The renormalization procedure requires a spatial cutoff and works in three dimensions provided $p>\frac12$, which is as close as one may expect to the physically natural exponent $p = \frac12$.

\section{Introduction and Main Result} 

Quantum field theory is a successful framework in which three of the four fundamental interactions of Nature can be studied: the electromagnetic, the weak and the strong interaction. However, the computation of physical quantities often leads to divergent expressions that one has to renormalize to obtain a physically reasonable interpretation. 

A popular and effective method is Feymann's diagramatic scheme for renormalizing pertubation expansions, cf. \cite{Peskin:1995ev}. However, Feymann's method does not produce an underlying renormalized model, from which the renormalized perturbation expansions arise. See also \cite{GS14} for the Epstein-Glaser approach to renormalization of pertubation expansions. To this end, the most powerful tools goes through Euclidean field theories and Feynmann-Kac-Nelson formulas. See also the recent papers
\cite{DDJ25,DHYZ25} employing a stochastic PDE approach. However, in these approaches, a renormalized Hamiltonian only appears on the scene as a generator of time translations in a representation of the Poincar\'e group and one does not gain any insights into the structure of the renormalized Hamiltonian that would enable one to analyze its properties.

The goal of this paper is to further develop a technique to construct ultraviolet renormalized models directly in the Hamiltonian picture, which in principle should make it possible to do spectral analysis and scattering theory. See, e.g.,  \cite{DeGe00}. However, to be precise, we are working at the level of resolvents of Hamiltonians, so what we in fact produce are renormalized resolvents. However, this still directly allows for further study, using methods centered on resolvents, such as Birman-Schwinger \cite{Seir23}, Feshbach-Schur type methods \cite{BCFS} and local commutator techniques \cite{ABG}.

The present paper is a continuation of \cite{AlvaMoll2021}, where we only considered the leading order self-energy counter-term for our toy model that one may think of as a Yukawa model with scalar fermions. For a discussion of the structure of the model, we refer the reader to \cite{AlvaMoll2021}. In the present paper, we simply define the model with cutoffs without any motivating discussion.

In \cite{AlvaMoll2021}, we extended a method from  \cite{Eck1970,AW2017} for reordering the Neumann expansion of the interacting resolvent to obtain a renormalized resolvent expansion that permits removal of the ultraviolet cutoff. The method goes back to \cite{Hepp1969}. The models renormalized with this reordering idea have all had conserved fermion number. Our toy model -- like the Yukawa model from which it derives -- does not have any particle number conservation. As in the Yukawa model, we deal with the issues arising from the lack of particle number conservation, by introducing a spatial cutoff into our model. In fact, if -- in the toy model -- one drops the two interaction terms that break total particle number conservation, then our toy model would not be ultraviolet singular at all \cite{Al19}.

Most Hamiltonian ultraviolet renormalization procedures involve only leading order (in perturbation theory) self-energy corrections and - in the case of Yukawa - a mass shift. We refer to \cite{De03_01,FalHin25} for the solvable Van hove Hamiltonian, \cite{ASDJ2025,Lampart2025} for generalized spin boson model, \cite{DeGe00,GlJa77_01} for the Yukawa model and the $P(\phi)_2$ theory, \cite{Lampart2020,  Lampart2025} for  interior
boundary conditions (IBC) type methods, \cite{FalHinMar25,Fr1974,GriWun2018,HinLam24,HinMat24,MM2018,M2006,Nelson1964,AW2017} (and references therein) for Nelson type models with both non-relativistic and relativistic electrons. 

Recently, ultraviolet self-energy renormalization has been performed for the linearly coupled bose polaron model \cite{Lampart2022}, taking into account higher order self-energy contributions. This enables the construction of a renormalized model with the ultraviolet cutoff removed. However, the method employed  in \cite{Lampart2022} only yields strong resolvent convergence to a limiting renormalized Hamiltonian as the ultraviolet cutoff is removed, whereas most of the existing schemes -- at leading order -- gives norm resolvent convergence. In fact, we conjecture that implementing the method of this paper for the bose polaron considered in \cite{Lampart2022}, would yield norm resolvent convergence. We moreover believe that the present strategy could be applied to other models -- at most quadratic in the boson field -- studied in the literature. The calculus for what we call \emph{ordered operators} from Section~\ref{Sec-OrdOp} is designed to handle the fermion field in our model and is not needed if no fermion field is present, e.g. in models like the bose polaron. It should be noted that, indeed, one may fairly easily implement a simpler version of our approach for the solvable Van Hove model \cite{De03_01}. 

The current article proposes a systematic method to recursively take into account self-energy corrections -- of arbitrary order -- in a resolvent resummation scheme, yielding norm resolvent convergence when the ultraviolet cutoff is removed. Another central improvement with respect to \cite{AlvaMoll2021,Eck1970, Lampart2022, AW2017} is our ability to handle a second fermion field without any conserved particle number, neither of the two individual particle species nor of a total particle number. This constitutes a step towards extending the construction of ultraviolet renormalized Hamiltonians requiring an infinite mass shift, like Yukawa, to more singular interactions, including -- possibly -- higher dimension.   

Having a norm-convergent series representation of a renormalized resolvent, one may readily analyse the structure of the domain of the renormalized Hamiltonian, as in \cite{AW2017}. This would be a natural next step that we do not include here, since the paper is already quite long and the toy model studied here is mostly meant as a proof of concept, towards a better understanding of the Yukawa model.

\subsection{The model}

The toy model studied in this article is the same as the one in \cite{AlvaMoll2021}. We refer to \cite{AlvaMoll2021} for a discussion and motivation of the form of the model. 

 The Hilbert space we work in is a tensor product of a bosonic Fock space, $\scrF_{\boson}(\gothh)$, and a fermionic Fock space, $\scrF_{\fermion}(\gothh)$, where $\gothh=L^2(\RR^d)$ and $d\in\NN$ is the spatial dimension of the model. We therefore have: 
\begin{equation}
\scrH = \scrF_{\boson}(\gothh)\otimes \scrF_{\fermion}(\gothh).
\end{equation}
We write $\Omega_\boson\in\scrF_\boson(\gothh)$ for the bosonic vacuum and $\Omega_\fermion\in\scrF_\fermion(\gothh)$ for the fermionic vacuum. The vacuum in $\scrH$ is then the tensor product of the two vacua
\[
\Omega = \Omega_\boson \otimes \Omega_\fermion.
\]
We introduce the so-called creation and annihilation operators $\ab(q)$ and $\cb(q)$ for the bosons and $\af(k)$ and $\cf(k)$ for the fermions, fulfilling the canonical commutation and anti-commutation relations 
\[
\begin{aligned}
[\ab(q_1),\cb(q_2)] & =  \kdelta(q_1-q_2), &\qquad  \{\af(k_1),\cf(k_2)\} & =  \kdelta(k_1-k_2),\\
[\ab(q_1),\ab(q_2)] & =  0, &\qquad  \{\af(k_1),\af(k_2)\} & =  0,\\
[\cb(q_1),\cb(q_2)] & =  0, &\qquad  \{\cf(k_1),\cf(k_2)\} & =  0.
\end{aligned}
\]
Moreover, 
\[
\ab(q) \Omega_\boson = 0 \quad \textup{and} \quad \af(k)\Omega_\fermion = 0.
\]
Here $q_1,q_2,k_1,k_2\in \mathbb R^d$, $[A,B] = AB-BA$ denotes the commutator and $\{A,B\} = AB+BA$ denotes the anti-commutator. We use the letter $q$ for boson momenta and $k$ for fermion momenta.
In addition, due to acting in separate tensor components,
\[
\begin{aligned}
[\ab(q), \af(k)] & =  0, & \qquad [\ab(q), \cf(k)] & =  0, \\
[\cb(q), \af(k)] & =  0, & \qquad [\cb(q), \cf(k)] & =  0. 
\end{aligned}
\]
We recall the relativistic dispersion relations 
\[
\wb(q)=\sqrt{q^2 + \mb^2} \qquad \textup{and} \qquad \wf(k)=\sqrt{k^2+\mf^2},
\]
where $\mb$, respectively  $\mf$, labels the mass of the boson field, respectively the fermion field. We will moreover assume that 
\[
\mb > 0  \qquad \textup{and} \qquad \mf >0.
\]
The free Hamiltonian for the two independent fields is
\[
H_0= \int \wb(q) \cb(q)\ab(q) dq + \int \wf(k) \cf(k) \af(k) dk.
\]

We now turn to the interaction between the two fields.
 The interaction kernels with ultraviolet and spatial cutoffs are:
\begin{equation}
\begin{aligned}
\label{condi01}
G_{\Lambda}^{(1)}(k,q) & =  \frac{h^{(1)}(k,q)}{\wb(q)^p}  g(k-q) \chi\Bigl(\frac{k}{\Lambda}\Bigr)\chi\Bigl(\frac{q}{\Lambda}\Bigr),\\
G_{\Lambda}^{(2)}(k,q) & =  \frac{h^{(2)}(k,q)}{\wb(q)^p}  g(k+q)\chi\Bigl(\frac{k}{\Lambda}\Bigr)\chi\Bigl(\frac{q}{\Lambda}\Bigr),
\end{aligned}
\end{equation}
where the exponent $p$ is a real number that physically should be $p=1/2$. The functions $h^{(1)}$ and $h^{(2)}$ should satisfy the following hypothesis.

\begin{Hypothesis}
\label{Hypothesis-h}
For $j=1,2$,  $h^{(j)}\in L^\infty(\RR^d\times\RR^d)$.
\end{Hypothesis}

Moreover, we impose the following hypotheses on the functions implementing the ultraviolet cutoff, $\chi$, which should approximate the constant function $1$, and the (Fourier transform of a) spatial cutoff, $g$, which should approximate a delta function. 

\begin{Hypothesis}[UV cutoff]
\label{Hypothesis-chi}
The function $\chi\in L^\infty(\RR^d)$ is real-valued with $0\leq \chi\leq 1$ and has compact support $\mathrm{supp}(\chi)$. We furthermore assume that $\chi$ is continuous at $0$ with $\chi(0)=1$. For $\Lambda>0$, we set
$\chi_\Lambda(k) = \chi(k/\Lambda)$.
\end{Hypothesis}

\begin{Hypothesis}[Spatial cutoff]
\label{MainHypothesis}
The spatial cutoff $g\in L^\infty(\RR^d)$ is compactly supported with $\mathrm{supp}(g)$ contained in the unit ball $\set{z\in\CC}{|z|<1}$.
\end{Hypothesis}

\begin{rk}
    Note that the assumptions on $\chi$ and $g$ may be easily relaxed. For example, one may drop the assumption that $\chi$ is real-valued with $0\leq \chi\leq 1$ and one may also relax the assumption that $\chi$ and $g$ has compact support. These requirements are convenient, but not really necessary to establish our main result, Theorem \ref{MainTh}. 
\end{rk}

The regularised Hamiltonian is defined as follow
\begin{equation}
\label{RegularisedHamiltonian}
H_{\Lambda}= H\bigl(G^{(1)}_{\Lambda},G^{(2)}_\Lambda\bigr) = H_0 +  H_\mathrm{I}\bigl(G^{(1)}_{\Lambda},G^{(2)}_\Lambda\bigr),
\end{equation}
where
\begin{equation}\label{CutoffInteraction}
H_\mathrm{I}\bigl(G^{(1)}_{\Lambda},G^{(2)}_\Lambda\bigr)=H^{ \ab \cf }\bigl(G^{(1)}_{\Lambda}\bigr) +
H^{\cb \af}\bigl(\overline{G^{(1)}_{\Lambda}}\bigr)+H^{\cb \cf}\bigl(G^{(2)}_{\Lambda}\bigr)+
H^{\ab \af}\bigl(\overline{G^{(2)}_{\Lambda}}\bigr)
\end{equation}
and, for $F\in L^2(\RR^d\times\RR^d)$,
\begin{equation}\label{interactionterms}
\begin{aligned}
H^{\cb \af}(F) & =  \int F(k,q) \af(k) \cb(q) dk dq, & 
H^{ \ab \cf }(F) &=  \int F(k,q) \cf(k) \ab(q) dk dq,\\
H^{\ab \af}(F) &=  \int F(k,q) \af(k) \ab(q) dk dq, & 
H^{\cb \cf}(F) &=  \int F(k,q) \cf(k) \cb(q) dk dq.
\end{aligned}
\end{equation}

\subsection{Ultraviolet renormalization, the main result}

The following basic theorem has been proved in \cite{AlvaMoll2021}

\begin{Th}[The Hamiltonian with Cutoff]
\label{ThSA}
Let $G^{(1)}_{\Lambda}$ and $G^{(2)}_\Lambda$ be defined as in \eqref{condi01} and assume that the Hypotheses~\ref{Hypothesis-h}, \ref{Hypothesis-chi} and \ref{MainHypothesis} are satisfied. Then the operator defined in \eqref{RegularisedHamiltonian} is self-adjoint with domain $\scrD\left( H\right) = \scrD( H_0)$. Furthermore
$H_\Lambda \geq  - C_\Lambda$, where
\[
C_\Lambda =   1+
\int \Bigl(1+\frac1{\wb(q)}\Bigr) \Bigl(\bigl|G^{(1)}_{\Lambda}(k,q)\bigr|^2 +  \bigl|G^{(2)}_{\Lambda}(k,q)\bigr|^2\Bigr)dk dq.
\]
\end{Th}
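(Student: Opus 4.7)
The plan is to apply a Kato--Rellich-type argument, reducing self-adjointness of $H_\Lambda$ on $\scrD(H_0)$ to showing that each of the four interaction terms in \eqref{CutoffInteraction} is $H_0$-bounded with arbitrarily small relative bound. Under Hypotheses~\ref{Hypothesis-h}--\ref{MainHypothesis} the kernels $G^{(j)}_\Lambda$ lie in $L^2(\RR^d\times\RR^d)$, because $h^{(j)},g\in L^\infty$, $\chi_\Lambda$ is bounded with compact support, and $\wb(q)^{-p}$ is bounded on $\mathrm{supp}(\chi_\Lambda)$ thanks to $\mb>0$; the same reasoning shows $G^{(j)}_\Lambda/\sqrt{\wb}\in L^2$.

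The core step is the classical family of $N_\tau$-estimates on Fock space. For $F\in L^2(\RR^d\times\RR^d)$ with $F/\sqrt{\wb}\in L^2$, restricting to the $(n_\boson,n_\fermion)$-particle sector and using the (anti-)commutation relations to move $\cb(q)$ past $\ab(q')$ and $\cf(k)$ past $\af(k')$, one obtains for each of the four terms in \eqref{interactionterms} an estimate of the form
\begin{equation*}
\bigl\|H^{\tau\sigma}(F)\Psi\bigr\|^2 \;\leq\; A_1(F)\,\|\Psi\|^2 + A_2(F)\,\langle\Psi, H_0\Psi\rangle,
\end{equation*}
where $A_1(F)$ and $A_2(F)$ are linear combinations of $\|F\|_{L^2}^2$ and $\|F/\sqrt{\wb}\|_{L^2}^2$ with constants depending on $\mb^{-1}$ and $\mf^{-1}$. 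The key simplification is that fermionic creators/annihilators are bounded on Fock space, so their contribution only produces fermion-number factors controlled by $H_0$ through $\mf N_\fermion\leq \int\wf(k)\cf(k)\af(k)\,dk\leq H_0$; the bosonic operators produce the singular weight $1/\wb(q)$ in the usual way. Since $\mb,\mf>0$, these bounds make $H_\mathrm{I}$ infinitesimally $H_0$-bounded, and Kato--Rellich yields $\scrD(H_\Lambda)=\scrD(H_0)$.

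For the lower bound the same estimate is used in quadratic form. Writing $\langle\Psi, H^{\tau\sigma}(F)\Psi\rangle$ as an inner product, applying Cauchy--Schwarz in the integration variables with $\wb(q)^{1/2}$ distributed between the two sides, and then Young's inequality $2|xy|\leq x^2+y^2$ with coefficients adjusted so that exactly the weight $1+1/\wb(q)$ emerges, one obtains
\begin{equation*}
\bigl|\langle\Psi, H_\mathrm{I}\Psi\rangle\bigr| \;\leq\; \langle\Psi, H_0\Psi\rangle + (C_\Lambda-1)\|\Psi\|^2.
\end{equation*}
Combined with $H_0\geq 0$ this gives $H_\Lambda\geq -C_\Lambda$. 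The main obstacle is purely bookkeeping: all four interaction terms must be handled uniformly, and the $\wb^{1/2}$ factors distributed so as to reproduce the precise constant $C_\Lambda$. Since the fermionic operators are bounded, no ideas beyond the standard Nelson/Yukawa-type cutoff machinery (cf.\ \cite{Fr1974,AW2017}) are required.
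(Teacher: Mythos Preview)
The present paper does not prove Theorem~\ref{ThSA}; it is quoted from the companion paper \cite{AlvaMoll2021}, so there is no proof here to compare against. Your Kato--Rellich approach is the standard one and is essentially what is carried out in \cite{AlvaMoll2021}. One small correction: smeared fermionic operators are bounded \emph{uniformly}, $\|\af(f)\|=\|\cf(f)\|\leq\|f\|$, with no fermion-number factor whatsoever (cf.\ \eqref{SmearedFermions} and the proof of Lemma~\ref{lem-basicblockbound} in the present paper), so the step ``their contribution only produces fermion-number factors controlled by $H_0$'' is unnecessary and the relative bounds are even more direct than your sketch suggests; writing $H^{\ab\af}(F)=\int \af(\overline{F(\cdot,q)})\,\ab(q)\,dq$ and combining $\|\af(\overline{F(\cdot,q)})\|\leq\|F(\cdot,q)\|$ with the standard $\wb^{1/2}$-weighted estimate on $\ab(q)$ already gives the $H_0^{1/2}$-bound.
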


In this paper, our goal is to prove the following main theorem.

\begin{Th}[UV Renormalized Hamiltonian]
\label{MainTh}
Let $H_\Lambda$ be as in \eqref{RegularisedHamiltonian} with interaction kernels given by \eqref{condi01},  fulfilling Hypothesis~\ref{Hypothesis-h}, \ref{Hypothesis-chi} and \ref{MainHypothesis}.  Let $E_{\Lambda} = \inf\sigma(H_{\Lambda})$. Suppose finally that $p>\frac{d}{2}-1$. Then there exists a self-adjoint operator $H$, which is bounded from below, such that $H_{\Lambda}-E_{\Lambda}$ converges in norm resolvent sense to  $H$. Moreover, the renormalized operator $H$ does not depend on the choice of the cutoff function $\chi$.
\end{Th}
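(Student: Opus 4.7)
The strategy follows the resolvent-resummation philosophy of \cite{Hepp1969,Eck1970,AW2017,AlvaMoll2021}: expand $(H_\Lambda - z)^{-1}$ as a Neumann series in $H_\mathrm{I}(G_\Lambda^{(1)},G_\Lambda^{(2)})$, identify the subwords that produce ultraviolet divergences (``self-energy'' subdiagrams), resum them into a renormalized resolvent that admits a norm limit as $\Lambda\to\infty$, and show that the divergent part of $E_\Lambda$ is captured by finitely many explicit counter-terms. The novelty relative to \cite{AlvaMoll2021} is that the self-energy resummation must be iterated in order to absorb divergences of all orders, and that it must be carried out without the crutch of any conserved particle number. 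Concretely, I would fix $z$ in a half-plane with sufficiently negative real part and, using Theorem~\ref{ThSA} for the relative bound of $H_\mathrm{I}$ with respect to $H_0$, write
\[
(H_\Lambda - z)^{-1} = \sum_{n=0}^\infty R_0(z)\bigl(-H_\mathrm{I}(G_\Lambda^{(1)},G_\Lambda^{(2)}) R_0(z)\bigr)^n,\qquad R_0(z) = (H_0-z)^{-1},
\]
and expand each term as a sum over words in the four letters $H^{\ab\cf}, H^{\cb\af}, H^{\ab\af}, H^{\cb\cf}$. Each word would then be processed via the ordered-operator formalism of Section~\ref{Sec-OrdOp}, which algebraically resolves the fermionic antisymmetrization and reduces every word to an explicit momentum-space integral against a symbol free of sign ambiguities.

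Next I would construct recursively a sequence of counter-terms $E_\Lambda^{(N)}$, where $E_\Lambda^{(0)}$ is the leading-order vacuum self-energy of \cite{AlvaMoll2021} and $E_\Lambda^{(N+1)}$ is obtained by adding the next layer of ``dressed'' self-energy loops, in which bare internal propagators are replaced by propagators shifted by $E_\Lambda^{(N)}$. In parallel, I would rewrite the Neumann expansion so that every site where a self-energy subword could be inserted instead carries a dressed resolvent $(H_0 + E_\Lambda^{(N)} - z)^{-1}$ multiplied by a renormalized interaction. This produces a series representation of $(H_\Lambda - E_\Lambda^{(N)} - z)^{-1}$ in which each term is bounded by a product of weighted $L^2$-norms of the interaction kernels involving inverse powers of $\wb$ and $\wf$. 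A direct power-counting estimate then shows that for $N$ chosen large enough in terms of $d$ and $p$, the residual self-energy integrals appearing after resummation are uniformly bounded in $\Lambda$, and this is possible precisely when $p > d/2 - 1$.

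With the uniform bounds in hand, a dominated-convergence argument together with the pointwise limits $G_\Lambda^{(j)}(k,q) \to h^{(j)}(k,q)\, g(k\mp q)/\wb(q)^p$ yields norm convergence of each term and uniform smallness of the tails, hence norm convergence of $(H_\Lambda - E_\Lambda^{(N)} - z)^{-1}$ to some operator $R^{(N)}(z)$. Since $H_\Lambda - E_\Lambda \geq 0$ while $H_\Lambda - E_\Lambda^{(N)}$ is only a priori bounded below, the difference $E_\Lambda - E_\Lambda^{(N)}$ must be uniformly bounded in $\Lambda$ and admits a finite limit $\delta$; absorbing $\delta$ as a trivial spectral shift converts this into norm convergence of $(H_\Lambda - E_\Lambda - z)^{-1}$. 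The limit $R(z)$ satisfies the first resolvent identity and has dense range, hence it is the resolvent of a unique self-adjoint operator $H$, which is lower-semibounded because each $H_\Lambda - E_\Lambda$ is. Independence of $\chi$ follows because every renormalized diagram in the ultraviolet limit depends only on $h^{(j)}$, $g$, $p$, and the dispersion relations $\wb, \wf$, not on the profile of $\chi$.

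The main obstacle I expect is establishing the uniform-in-$\Lambda$ norm bound on the fully resummed renormalized series and showing that the recursion terminates at a finite $N$ under the hypothesis $p > d/2 - 1$. This requires delicate combinatorial bookkeeping to avoid double-counting self-energy insertions at different recursion depths, together with sharp momentum-space estimates whose viability depends crucially on two ingredients not present in \cite{AlvaMoll2021,Eck1970,AW2017}: the ordered-operator formalism, which untangles the fermionic antisymmetrization so that each diagram can be estimated by a scalar integral, and the spatial cutoff $g$, which enforces convergence of those integrals at small $q$ in the absence of any conserved particle number. The improvement over \cite{Lampart2022} from strong to norm resolvent convergence rests precisely on obtaining this uniform control.
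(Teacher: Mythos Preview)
Your outline matches the paper at the coarsest level---reorder the Neumann series, subtract counter-terms $E_\Lambda^{(N)}$, bound the result uniformly in $\Lambda$, pass to the norm limit, then trade $E_\Lambda^{(N)}$ for $E_\Lambda$---but the resummation mechanism you describe is not the one that works here, and this is a genuine gap. The paper does \emph{not} dress the free resolvent to $(H_0+E_\Lambda^{(N)}-z)^{-1}$, nor are the $E_\Lambda^{(N)}$ built recursively from lower-order dressed propagators. Since the counter-terms diverge as $\Lambda\to\infty$, such a dressed resolvent would vanish in norm and carry no information. Instead the paper classifies subwords of the Neumann series by \emph{handed signature strings} (Section~\ref{Sec-Signs}): right-handed, left-handed, or ambidextrous according to their net boson/fermion balance. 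To each handed string $\us$ one attaches a renormalized block $T_\us^{(k)}$ (Definition~\ref{def-remblocks}); for ambidextrous $\us$ this is the bare product \emph{minus its vacuum expectation at $z=0$}. The counter-term $E_\Lambda^{(N)}$ is simply the sum of those vacuum expectations over all ambidextrous strings of length $\le N$ (Definition~\ref{def-selfenergy}), with no recursion beyond the decomposition of handed strings into shorter handed strings (Proposition~\ref{propo-signdecomp}). The reordering theorem (Theorem~\ref{thm-reordering}) then expresses $(H_\Lambda-E_\Lambda^{(N)}-z)^{-1}$ as a sum of products of these blocks with the \emph{free} $R_0(z)$ throughout. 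The UV gain comes from the local subtraction inside each ambidextrous block, which manufactures one extra resolvent per subtraction (see the telescoping identity in the proof of Lemma~\ref{HBFCTANFCO}), not from any propagator dressing.

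You correctly identify the uniform estimate as the crux, but it is far from a direct power count, and you invoke the ordered-operator formalism at the wrong stage. In the paper the renormalized blocks are first normal-ordered into \emph{regular Wick monomials} satisfying an inductive regularity criterion phrased in terms of covers, admissible maps and admissible exponents (Definitions~\ref{DefCover}--\ref{RegularOperator}, Lemma~\ref{InductionLemma}); only then are these converted to \emph{ordered Wick monomials} (Section~\ref{Sec-OrdOp}) by a telescoping that moves one fermion momentum at a time past the resolvents, so as to exploit the smeared-fermion bound $\|\af(f)\|\le\|f\|$. Ordered operators are thus a device for the \emph{estimates} (Propositions~\ref{RNFCT}--\ref{RFCT}, \ref{RegularityofthegeneralisedGsets}), not for the algebraic reorganization of the Neumann series. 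The role of the spatial cutoff $g$ is also different from what you suggest: it is not about small-$q$ convergence (the masses are positive) but rather enforces $|k\mp q|\le 1$, which is what makes the weights $(1+\wf(k)/\wb(q))^n$ and $(1+\wb(q)/\wf(k))^\alpha$ harmless (Lemma~\ref{EstimateWtihRespectToOtherPowers}). The resulting bound on each block is controlled by $K_{1-1/(N+1)}$, which is finite precisely when $p>\tfrac d2-\tfrac{N}{N+1}$; choosing $N$ large gives the full range $p>\tfrac d2-1$ (Theorem~\ref{THinterm}), after which the replacement of $E_\Lambda^{(N)}$ by $E_\Lambda$ is as you say.
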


This result strengthens the one in \cite{AlvaMoll2021}, where $p>\frac{d}{2}-\frac{3}{4}$ had to be assumed. 
The basic idea of the present paper is to add self-energy counter-terms, order by order, to the Hamiltonian $H_\Lambda$ and estimate recursively renormalized resolvent expansions to improve the requirement on the exponent $p$. 

In fact, to zero'th order, when there are no counter-terms, one may simply use the resolvent expansion
\begin{align}\label{IntroNeumann}
\nonumber& (H_\Lambda -z)^{-1} =  (H_0-z)^{-\frac12}\\
&\qquad \biggl\{\sum_{n=0}^\infty \Biggl( -\bigl(H_0-z)^{-\frac12} H_\mathrm{I}\bigl(G^{(1)}_{\Lambda},G^{(2)}_\Lambda\bigr) (H_0-z)^{-\frac12}\Biggr)^n
\biggr\}(H_0-z)^{-\frac12}.
\end{align}
One may use \cite[Lemma~C.1]{AlvaMoll2021} (with $\beta=\frac12$) to argue that if $p > \frac{d}2 -\frac12$, there exists $Z>0$, such that the series is absolutely convergent in norm, uniformly in $\Lambda$,  for $\re(z) \leq -Z$. Furthermore, \cite[Theorem~D.1]{AlvaMoll2021} 
then yields a renormalized Hamiltonian. This would do the trick in $1$ dimension. However, already in $d=2$, this breaks down if $p=\frac12$. However, \cite{AlvaMoll2021} covers the case $d=2$ and $p=\frac12$.

The idea of \cite{AlvaMoll2021} is to take into account the subtraction of the leading order self-energy counter-term
\begin{equation}\label{E2Lambda}
E^{(2)}_{\Lambda} = -\int \frac{\bigl|G^{(2)}_{\Lambda}(k,q)\bigr|^2}{\wf(k)+\wf(q)} dk dq,
\end{equation}
which cancels a singularity in the resolvent expansion above, coming from vacuum expectation values of products of the form
\[
H^{\ab\af}_\Lambda\bigl(\overline{G^{(2)}_\Lambda}\bigr) R_0(z)  H^{\cb\cf}_\Lambda\bigl(G^{(2)}_\Lambda\bigr).
\]
We urge the reader to get an overview of \cite{AlvaMoll2021}, in order to fix some of the underlying ideas in a much simpler setting, before going into the weeds of the present paper. 

 As we progress order by order we push the exponent $p$ higher and higher towards the limiting exponent $\frac{d}2 -1$, where we expect the self-energy renormalization to break down, as it does for the far simpler Van Hove model, cf.~\cite{De03_01}. That is, in dimension $d=3$, we can get as close as we want to $p=\frac12$, but not actually choose $p=\frac12$.

 We note that, as in \cite{AW2017}, one may exploit the reordered Neumann series expression for the resolvent of the renormalized Hamiltonian, in order to study properties of its domain. Since this paper is already quite long, we have not included such considerations here.

\subsection{Outline of the paper}

 In this subsection, we give an overview of the rest of the paper.
 If one inserts the form of the interaction \eqref{CutoffInteraction} into the Neumann expansion, cf.~\eqref{IntroNeumann}, and multiplies everything out using the distributive law, one gets a sum of all possible expressions of the form
 \begin{equation}\label{NeumannTerms}
  R_0(z) H^{s_1}(G_{s_1,\Lambda})R_0(z) \cdots R_0(z) H^{s_k}(G_{s_k,\Lambda}) R_0(z),
 \end{equation}
 where $k = 0,1,2,3,\dotsc$, the $s_i$'s labels the $4$ possible interactions terms $s_i\in\{\ab\af,\ab\cf,\cb\af,\cb\cf\}$ and $G_{s_i,\Lambda}$ is the matching interaction kernel.

 In Section~\ref{Sec-Signs}, we develop a calculus of strings of signatures, such as $s_1,\dotsc,s_k$ above, which is used at two levels. First of all, to succinctly label and handle  products of operators, such as in \eqref{NeumannTerms}, that are UV-singular, in terms of so-called handed signature strings. Secondly, to decompose arbitrary products appearing in the Neumann expansion into products of singular blocks of a given maximal length.
 
In Section~\ref{Sec-ResExp}, we will introduce the renormalized products of operators, indexed by handed signature strings, that we need to estimate in order to handle the reordered Neumann expansion. The resulting operators we refer to as handed blocks of operators. At the end of Section~\ref{Sec-ResExp}, we will introduce higher-order self-energy counter-terms as well as formulate and establish the reordered Neumann expansion of the resolvent with self-energy subtracted up to a given order. See Theorem~\ref{thm-reordering}. The summands of the reordered Neumann expansion are products of handed blocks of operators.

In Section~\ref{Sec-RegOp}, we take the first step towards estimating the handed blocks of operators from Section~\ref{Sec-ResExp}. We begin by normal ordering the handed blocks, which generates operators called regular Wick monomials that are introduced in Definitions~\ref{WickMonomial} and~\ref{RegularOperator}. We establish a calculus for regular Wick monomials that enables us to form products of such expressions, building in counter-terms from the renormalization procedure, while keeping track of an improving ultraviolet behaviour. See Lemma~\ref{HNFCT} and Lemma~\ref{HBFCTANFCO}. At the end of Section~\ref{Sec-RegOp}, we express the handed blocks of operators from the reordered Neumann series as finite sums of regular Wick monomials with their self-energy subtracted, cf.~Lemma~\ref{InductionLemma}.

 In Section~\ref{Sec-OrdOp}, we address an obstacle towards establishing estimates on the regular Wick monomials from Section~\ref{Sec-RegOp}. The main difficulty to overcome is to use boundedness of smeared fermionic annihilation and creation operators, which leads us to rewrite the regular Wick monomials as a sum of what we call ordered Wick monomials that are better suited to exploit smeared fermionic operators. Ordered Wick monomials are defined in Definition~\ref{OrderOperator}. 
 Any handed block of operators, which can be written as a finite sum of regular Wick monomials, is then a finite sum of ordered Wick monomials according to Lemma~\ref{ReorderingFermionLemma}.
 
 In Section~\ref{Sec-GSetsEst}, we establish norm estimates of regular Wick monomials, by passing first to ordered Wick monomials and then estimating these operators. This is done in Propositions~\ref{RNFCT} and~\ref{RFCT}. Using these results, any handed block of operators, may then be estimated to obtain the crucial Proposition~\ref{RegularityofthegeneralisedGsets}.

 Finally, the main result, Theorem~\ref{MainTh}, is proved in Section~\ref{Sec-MainProof}. After using Theorem~\ref{thm-reordering} to reorder the Neumann series, Proposition~\ref{FinalEstimate1}  is used to prove that the reordered Neumann series is absolutely convergent, independently of the ultraviolet cutoff parameter $\Lambda$. We finally take $\Lambda$ to infinity to get norm convergence to the resolvent of a self-adjoint operator $H$, the renormalized Hamiltonian.

\subsection{Miscellaneous notation}

 Throughout the paper, we will use the following notation.
 
 We denote by $\NN = \{1,2,3,\dotsc\}$ the set of natural numbers excluding $\{0\}$. We write $\NN_0 = \NN\cup\{0\}$ and $\NN_{0,\infty} = \NN_0\cup\{\infty\}$.  

 For $a,b\in\NN$ with $a\leq b$, we will use the notation $\llbracket a,b\rrbracket$ for the set of natural numbers ranging from $a$ to $b$, i.e., $\llbracket a,b\rrbracket = \{a,a+1,\dotsc,b-1,b\}\subset\NN$.

  We use the notation $\langle \cdot \vert \cdot \rangle$ for inner products, adopting the convention that $\langle \cdot \vert \cdot \rangle$ is conjugate linear in the first variable and linear in the second variable.

  We will often need to take fractional powers $\xi^\alpha$, $\alpha\in\RR$, of $\xi\in \CC\setminus (-\infty,0]$. Here we implicitly use a complex logarithm $\ln$ defined on $\CC\setminus (-\infty,0]$ to define $\xi^\alpha = e^{\alpha\ln(\xi)}$. Moreover, for an invertible, densely defined, normal operator $T$ with spectrum $\sigma(T)\subset \CC\setminus (-\infty,0]$, the spectral theorem ensures that $T^\alpha$ is well-defined.

Let $N_\boson$ denote the bosonic number operator acting on $\scrF_{\boson}(\gothh)$ and
let $N_\fermion$ denote the fermionic number operator acting on $\scrF_{\fermion}(\gothh)$. We abbreviate $N = N_\boson\otimes\one_{\scrF_\fermion(\gothh)} + \one_{\scrF_\boson(\gothh)}\otimes N_\fermion$ for the total number operator. We will typically drop the extra identity in the tensor product above, when $N_\boson$ and $N_\fermion$ are acting in $\scrH$, e.g.; $N_\boson$ will be used in place of $N_\boson\otimes\one_{\scrF_\fermion(\gothh)}$ and $N_\fermion$ will be used in place of $\one_{\scrF_\boson(\gothh)}\otimes N_\fermion$.

 We will be using smeared fermion annihilation and creation operators, defined for $f\in L^2(\RR^d)$ by
 \begin{equation}\label{SmearedFermions}
    \cf(f) = \int f(k) \cf(k)dk \quad \textup{and} \quad \af(f) = \int \overline{f(k)} \af(k)dk.
 \end{equation}
 Note that with this convention, we have $\af(f)^* = \cf(f)$. In addition, we recall that $\af(f)$ and $\cf(f)$ are bounded operators with $\|\af(f)\|\leq \|f\|$ and $\|\cf(f)\|\leq \|f\|$. The corresponding smeared boson operators, which we will not make explicit use of, are unbounded.
 
\section{Signature Strings and their Calculus}\label{Sec-Signs}

 What we are aiming for is to renormalize the Neumann expansion of the interacting resolvent. Each term in the Neumann expansion is a product of  the four interaction terms sandwiched by free resolvents. We will be grouping such arbitrary products into blocks that we can renormalize and estimate. How we group the factors in a long product will only depend on which of the four interaction terms are sitting between each of the resolvent pairs. In order to handle the renormalization procedure, we first build a calculus for strings of signatures $(s_1,s_2,\dotsc,s_k)$, where the $s_k$'s denotes one of the four possible interaction term. In a first reading, the reader may safely skip the proofs throughout Section~\ref{Sec-Signs}.

 \subsection{Handed signature strings}

\begin{Def}\label{def-signature}
By a \emph{signature}, we understand a choice of one of four labels $s\in \{\ab\af,\ab\cf,\cb\af,\cb\cf\}$. For $k\geq 1$, we write $\scrS^{(k)}_0 = \{\ab\af,\ab\cf,\cb\af,\cb\cf\}^k$ for the $k$-fold Cartesian product of the set of $4$ basic signatures. An element $\us$ of $\scrS^{(k)}_0$ is called a \emph{signature string} of length $k$. 
\end{Def}

We introduce functions that read of the annihilation and creation operator content in a single signature, and functions that count the difference between the numbers of annihilation and creation operators of a given type in a signature string.

\begin{Def}\label{def-countingfunctions}
Define two functions $n_\ab\colon \{\ab\af,\ab\cf,\cb\af,\cb\cf\} \to \{-1,+1\}$ and
$n_\af\colon \{\ab\af,\ab\cf,\cb\af,\cb\cf\} \to \{-1,+1\}$ by setting
\[
n_\ab(s) = \begin{cases} +1, & s\in \{\ab\af,\ab\cf\} \\
-1, & s\in \{\cb\af,\cb\cf\} \end{cases} \quad \textup{and} \quad 
n_\af(s) = \begin{cases} +1, & s\in \{\ab\af,\cb\af\} \\
-1, & s\in \{\ab\cf,\cb\cf\}. \end{cases}
\]
For $j,j',k\in\NN$ with $1\leq j\leq j'\leq k$ and
$\us=(s_1,\dotsc,s_k)\in\scrS^{(k)}_0$, we define
\[
n_\ab(j,j';\us)  = \sum_{i=j}^{j'} n_\ab(s_i) \qquad \textup{and} \qquad n_\af(j,j';\us) = \sum_{i=j}^{j'} n_\af(s_i).
\]
\end{Def}

 We will also be needing an involution of signatures and signature strings that mirrors the action of taking the adjoint of an operator.

\begin{Def}\label{def-stringinvolution} For a single signature $s$, we define and involution $s\mapsto s^*$ by setting:
\[
(\ab\af)^* = \cb\cf, \quad (\ab\cf)^* = \cb\af, \quad (\cb\af)^* = \ab\cf \quad \textup{and} \quad (\cb\cf)^* = \ab\af.
\]
For a string of signatures $\us\in \scrS^{(k)}_0$ of length $k$, we define
\[
\us^* = (s_k^*,\dotsc,s_2^*,s_1^*). 
\]
\end{Def}

\begin{rk}
Note that we have the basic rules
\begin{equation}\label{sign-adjointcount}
\begin{aligned}
n_\ab(s^*) &= -n_\ab(s), \quad  & n_\ab(j,j';\us^*) &= -n_\ab(k-j'+1,k-j+1;\us), \\
n_\af(s^*) &= - n_\af(s), \quad & n_\af(j,j';\us^*) &= -n_\af(k-j'+1,k-j+1;\us).
\end{aligned}
\end{equation}
\end{rk}

We are now ready to formulate the two main definitions of this subsection. They identify the types of signature string that will correspond to renormalized blocks of operator products. Renormalised blocks are built recursively. Some blocks can be estimated using only resolvents from the right while others using only resolvents from the left. When two neighbouring blocks compete for the same resolvent in the middle, they will be concatenated and, in some cases, renormalized. See~Definition~\ref{def-remblocks}. The order in which annihilation and creation operators may appear in the resulting (renormalized) operator products can therefore be described recursively, mirroring the recursive construction of the renormalized blocks of operators in Definition~\ref{def-remblocks}. In Remark~\ref{rem-signs-rec} below, we will briefly explain how to construct \emph{handed signature strings} through a recursive construction. However, a recursive definition of \emph{handed signature strings} is difficult to work with. We use as a definition a more convenient direct characterization of the handed signature strings, instead of the recursive construction discussed in Remark~\ref{rem-signs-rec}. This is what is done in the following two definitions.

\begin{Def}[Handed signature strings]\label{def-handedsignatures}
    Let $k\in \NN$ and $\us \in\scrS^{(k)}_0$. We say that $\us$ is \emph{handed} if the following two conditions are satisfied:
    \begin{enumerate}[label = \textup{(\arabic*)}]
        \item if $k\geq 2$, we have
          \begin{equation}\label{def-handedbasic}
    \begin{aligned}
        &\forall i\in \llbracket 1,k-1\rrbracket:  & \begin{cases} n_\ab(1,i;\us)\geq 0 \\
        n_\ab(1,i;\us)=0\Rightarrow n_\af(1,i;\us) \geq 0,
        \end{cases}\\
        &\forall i\in \llbracket 2,k\rrbracket:  & \begin{cases} n_\ab(i,k;\us)\leq 0 \\ n_\ab(i,k;\us)=0\Rightarrow n_\af(i,k;\us) \leq 0.
        \end{cases}
    \end{aligned}
    \end{equation}
    \item if $k\geq 3$ and $n_\ab(1,k;\us)=n_\af(1,k;\us)=0$, then we have the stronger conclusion.
 \begin{equation}\label{def-ambi-right}
 \forall i\in\llbracket 2,k-1\rrbracket:\qquad  n_\ab(1,i;\us)=0\Rightarrow n_\af(1,i;\us)>0.
 \end{equation} 
 \end{enumerate}
 We write $\scrS^{(k)}$ for the subset of $\scrS^{(k)}_0$ consisting of handed signature strings. 
\end{Def}

\begin{rk}\label{rem-handedsigns} Let $k\in\NN$.
    \begin{enumerate}[label = \textup{(\roman*)}]
        \item\label{rem-signslength1} If $k=1$, and any $\us= (s)\in\scrS^{(1)}_0$, all conditions in Definition~\ref{def-handedsignatures} are trivially satisfied. Hence, any of the four strings of length $1$ are handed.
        \item\label{rem-signslength2} If, $k=2$ and $\us=(s_1,s_2)\in\scrS^{(2)}_0$, the requirement \eqref{def-handedbasic} simply imposes that $n_\ab(s_1) = 1$ and $n_\ab(s_2) 
 =-1$. Hence, the handed strings $\scrS^{(2)}$ of length $2$ are 
        \[
        (\ab\af,\cb\cf), \quad (\ab\cf,\cb\af),\quad (\ab\cf,\cb\cf),\quad  (\ab\af,\cb\af).
        \]
        \item\label{item-handed-endpoints} If $k\geq 2$ and $\us\in \scrS^{(k)}$, then $n_\ab(s_1) = 1$ and $n_\ab(s_k) = -1$. This follows from \eqref{def-handedbasic}.
          \item\label{item-handedtotalsum} If $\us\in\scrS^{(k)}$, then $n_\ab(1,k;\us)\in\{-1,0,+1\}$. This is obvious for $k=1$ and for $k\geq 2$ it follows from \ref{item-handed-endpoints}, \eqref{def-handedbasic} and the estimates
          \begin{align*}
            -1 &= n_\ab(s_k) \leq n_\ab(1,k-1;\us) + n_\ab(s_k)= n_\ab(1,k;\us)\\
            & = n_\ab(s_1) +n_\ab(2,k;\us) \leq n_\ab(s_1) = +1.
          \end{align*}
          \item From \ref{item-handedtotalsum} it immediately follows that if $k$ is even, then $n_\ab(1,k;\us) = 0$, and if $k$ is odd, then $n_\ab(1,k;\us) \in\{-1,+1\}$.
          \item\label{item-handedsymmetric} Suppose $k\geq 3$ and $\us\in\scrS^{(k)}$ with $n_\ab(1,k;\us) =n_\af(1,k;\us)= 0$. Then the following property dual to \eqref{def-ambi-right} holds true
          \begin{equation}\label{def-ambi-left}
\forall i\in\llbracket 2,k-1\rrbracket:\qquad  n_\ab(i,k;\us)=0\Rightarrow n_\af(i,k;\us)<0.
          \end{equation}
           To see this, let $i\in \llbracket 2,k-1\rrbracket$ with $n_\ab(i,k;\us)=0$. Then $n_\ab(1,i-1;\us) = n_\ab(1,i-1;\us) + n_\ab(i,k;\us) = n_\ab(1,k;\us)=0$. Note that by \ref{item-handed-endpoints}, we must therefore have $i-1\geq 2$. But then it follows from \ref{def-ambi-right}  that $n_\af(1,i-1;\us)>0$.  Consequently, $n_\af(i,k;\us) = n_\af(1,k;\us) - n_\af(1,i-1;\us)  = - n_\af(1,i-1;\us)<0$.
          \item Let $\us\in\scrS_0^{(k)}$. It is a consequence of \ref{item-handedsymmetric} that $\us$ is a handed signature string if and only if $\us^*$ is a handed signature string.
    \end{enumerate}
\end{rk}

 We divide the handed string into three distinct types. The name differentiates between signature strings with a surplus of annihilation or creation operators with a third class being signature strings corresponding to operator products that preserve both boson and fermion particle numbers.  

\begin{Def}\label{def-lra}
   Let $k\in\NN$ and let $\us\in\scrS^{(k)}$ be a handed signature string of length $k$.  We say that $\us$ is:
    \begin{enumerate}[label = \textup{(\arabic*)}]
        \item\label{item-def-rh} \emph{right-handed} if in addition 
           $n_\ab(1,k;\us)\in \{0,1\}$ and $n_\ab(1,k;\us) = 0\Rightarrow n_\af(1,k;\us) >0$.
        \item\label{item-def-lh} \emph{left-handed} if in addition 
           $n_\ab(1,k;\us)\in \{-1,0\}$ and $n_\ab(1,k;\us) = 0\Rightarrow n_\af(1,k;\us) <0$.
        \item\label{item-def-ambi} \emph{ambidextrous} if       
              $n_\ab(1,k;\us) = n_\af(1,k;\us) = 0$.
 \end{enumerate}  
        We write $\scrS^{(k)}_\Right$ for the set of right-handed strings, $\scrS^{(k)}_\Left$ for the left-handed strings, and finally, $\scrS^{(k)}_\LR$ for the ambidextrous signature strings.  
\end{Def}

\begin{rk}\label{rem-signs} We make the following simple observations.
\begin{enumerate}[label = \textup{(\roman*)}]
\item\label{rem-signs-1} $\scrS^{(1)}_\Right = \{(\ab\af),(\ab\cf)\}$, $ \scrS^{(1)}_\Left = \{(\cb\af),(\cb\cf)\}$ and $\scrS^{(1)}_\LR = \emptyset$.
\item $\scrS^{(2)}_\Right = \{(\ab\af,\cb\af)\}$, $\scrS^{(2)}_\Left = \{(\ab\cf,\cb\cf)\}$ and $\scrS^{(2)}_\LR = \{(\ab\af,\cb\cf),\allowbreak(\ab\cf,\cb\af)\}$.
 \item\label{rem-signsdisjoint} For $k\in\NN$, the sets $\scrS^{(k)}_\Right$, $\scrS^{(k)}_\Left$ and $\scrS^{(k)}_\LR$ are mutually disjoint.
\item\label{rem-signleftandright} A string of signatures $\us$ is right-handed if and only if its adjoint $\us^*$ is left-handed.
    \item\label{rem-signambi} A string of signatures $\us$ is ambidexstrous if and only if its adjoint $\us^*$ is ambidexstrous.
    \item\label{rem-sign-evenodd} Suppose $k$ is an odd integer. Then $\scrS^{(k)}_\LR =\emptyset$ and if $\us\in \scrS^{(k)}_\Right$, we have $n_\ab(1,k;\us)=1$, and if $\us\in\scrS^{(k)}_\Left$, we have $n_\ab(1,k;\us)=-1$. 
    \item\label{rem-signnovacuumR} For $\us\in\scrS^{(k)}_\Right$, the property \eqref{def-ambi-right} is satisfied. Indeed, let $i\in \llbracket 2,k-1\rrbracket$ and assume that $n_\ab(1,i;\us)=0$. Then we have $n_\ab(1,k;\us) = n_\ab(i+1,k) \leq 0$, hence we must have $n_\ab(1,k;\us)=n_\ab(i+1;\us)= 0$ and therefore $n_\af(1,k;\us)>0$, (since $\us$ is right-handed) and $n_\af(i+1,k;\us)\leq 0$ by \eqref{def-handedbasic}. Since $n_\af(1,i;\us) = n_\af(1,k;\us)- n_\ab(i+1,k;\us)>0$, we may now conclude the claim. 
    \item\label{rem-signnovacuumL} For $\us\in\scrS^{(k)}_\Left$,the property \eqref{def-ambi-left} holds true. Indeed, this follows from \ref{rem-signleftandright} and~\ref{rem-signnovacuumR} above by taking adjoints.
\end{enumerate}
\end{rk}

\subsection{Signature calculus}

\begin{Def}[Composition of signature strings]
  Let $\us = (s_1,s_2,\dotsc,s_k)\in\scrS^{(k)}_0$ and $\us' = (s'_1,s'_2,\dotsc,s'_{k'})\in\scrS^{(k')}_0$ be two strings of signatures. We define a new string of signatures $\us'' = \us\circ\us'$ of length $k''= k+k'$ by setting
  $\us''= (s_1,s_2,\dotsc,s_k,s'_1,s'_2\dotsc,s'_{k'})$. We call $\us''$ the composition of $\us$ and $\us'$.
\end{Def}

\begin{rk}\label{rem-signComp} Let $k,k'\in\NN$, $\us\in\scrS^{(k)}_0$ and $\us'\in \scrS^{(k')}_0$.
Note that $(\us\circ\us')^* = \us'^{*}\circ \us^*$.
\end{rk}

\begin{rk}\label{rem-signs-rec} Before continuing, we pause to explain how to construct handed signature strings recursively. The starting point is to set
$\scrS^{(1)}_\Right = \{(\ab\af),(\ab\cf)\}$, $ \scrS^{(1)}_\Left = \{(\cb\af),(\cb\cf)\}$ and $\scrS^{(1)}_\LR = \emptyset$, cf. Remark~\ref{rem-signs}~\ref{rem-signs-1}.
Secondly, fix $k>1$ and assume the sets $\scrS^{(\ell)}_\Right, \scrS^{(\ell)}_\Left$ and $\scrS^{(\ell)}_\LR$ have been defined for $\ell\in\NN$ with $1\leq \ell<k$. 
Then we set
\begin{align*}
\scrS^{(k)} = \Bigset{\us_1\circ \us_2}{\exists 1\leq \ell< k:\ \us_1\in \scrS^{(\ell)}_\Right, \us_2 \in \scrS^{(k-\ell)}_\Left\cup\scrS^{(k-\ell)}_\LR }\\
\cup \, \Bigset{\us_1\circ \us_2}{\exists 1\leq \ell< k:\ \us_1\in \scrS^{(\ell)}_\Right\cup\scrS^{(\ell)}_\LR, \us_2 \in \scrS^{(k-\ell)}_\Left }
\end{align*}
and define
 \begin{align*}
\scrS^{(k)}_\Right &= \Bigset{\us\in \scrS^{(k)}}{n_\ab(\us) > 0 \textup{ or } n_\ab(\us) = 0\textup{ and }n_\af(\us)>0},\\
\scrS^{(k)}_\Left &= \Bigset{\us\in \scrS^{(k)}}{n_\ab(\us) < 0 \textup{ or } n_\ab(\us) = 0\textup{ and }n_\af(\us)<0},\\
\scrS^{(k)}_\LR &= \Bigset{\us\in \scrS^{(k)}}{n_\ab(\us) = 0\textup{ and }n_\af(\us)=0}.
 \end{align*}
That the recursively defined sets of handed signature strings coincide with those from Definitions~\ref{def-handedsignatures} and~\ref{def-lra}, follows from Propositions~\ref{prop-signcomp} and~\ref{propo-signdecomp} below.
\end{rk}

When we build renormalized operator blocks in Subsect. \ref{subsec-RenBlocks}, we compose operators corresponding to handed signature strings. The operator product will correspond to a concatenation of handed signature strings. The following proposition shows that handedness is preserved under the relevant concatenations. 

\begin{Prop}[Composition of handed signature strings] \label{prop-signcomp}
    Let $k_\Right,k_\Left,\allowbreak k_\LR\in\NN$, and let $\us_\Right\in \scrS^{(k_\Right)}_\Right$, $\us_\Left\in\scrS^{(k_\Left)}_\Left$ and $\us_\LR\in\scrS^{(k_\LR)}$ be handed signature strings. The following holds true:
    \begin{enumerate}[label = \textup{(\arabic*)}]
        \item\label{item-SignComp1} $\us_\Right\circ \us_\LR\in\scrS^{(k_\Right+k_\LR)}_\Right$ and $\us_\LR\circ \us_\Left \in \scrS^{(k_\LR+k_\Left)}_\Left$.
        \item\label{item-SignComp2}  $\us = \us_\Right\circ\us_\Left\in \scrS^{(k)}$ with $k= k_\Right+k_\Left$. More precisely, $n_\ab(1,k;\us) \in\{-1,0,1\}$ and
        \begin{enumerate}[label = \textup{(\arabic{enumi}\alph*)}]
            \item if $n_\ab(1,k;\us) = 1$, or $n_\ab(1,k;\us) = 0$ and $n_\af(1,k;\us) >0$,  then $\us \in\scrS^{(k)}_\Right$,
            \item if $n_\ab(1,k;\us) = -1$,  or $n_\ab(1,k;\us) = 0$ and $n_\af(1,k;\us) <0$,then $\us \in\scrS^{(k)}_\Left$,
            \item if $n_\ab(1,k;\us) = n_\af(1,k;\us) = 0$, then $\us\in\scrS^{(k)}_\LR$.
        \end{enumerate}
    \end{enumerate}
\end{Prop}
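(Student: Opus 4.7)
The plan is to verify the defining conditions of Definition~\ref{def-handedsignatures} and Definition~\ref{def-lra} directly, by exploiting the additivity of the counting functions under composition: for $\us=\us'\circ\us''$ with factor lengths $k'$ and $k''$, one has $n_\ab(1,i;\us)=n_\ab(1,i;\us')$ when $i\leq k'$ and $n_\ab(1,k'+j;\us)=n_\ab(1,k';\us')+n_\ab(1,j;\us'')$ when $j\in\llbracket 1,k''\rrbracket$, with the analogous identities for $n_\af$ and for the tail sums $n_\ab(i,k;\us)$ and $n_\af(i,k;\us)$. With these decompositions, every inequality appearing in \eqref{def-handedbasic} for a composite string reduces to a sum of inequalities already available from the two factors.

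For part~\ref{item-SignComp1}, I would first show $\us:=\us_\Right\circ\us_\LR\in\scrS^{(k_\Right+k_\LR)}_\Right$. Indices inside $\us_\Right$ are controlled by the handedness of $\us_\Right$ alone, and indices inside $\us_\LR$ use the ambidextrous vanishings $n_\ab(1,k_\LR;\us_\LR)=n_\af(1,k_\LR;\us_\LR)=0$ together with the handedness of $\us_\LR$. The only delicate case is the strict positivity of $n_\af(1,i;\us)$ when $n_\ab(1,i;\us)=0$ and $i>k_\Right$: nonnegativity of the two summands forces both to vanish, and then the right-handed property of $\us_\Right$ yields $n_\af(1,k_\Right;\us_\Right)>0$, while the sharper ambidextrous clause \eqref{def-ambi-right} applied to $\us_\LR$ yields $n_\af(1,i-k_\Right;\us_\LR)>0$, whose sum is strictly positive. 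Finally, $n_\ab(1,k;\us)=n_\ab(1,k_\Right;\us_\Right)\in\{0,1\}$, and in the zero subcase $n_\af(1,k;\us)=n_\af(1,k_\Right;\us_\Right)>0$, so $\us$ is right-handed. The statement $\us_\LR\circ\us_\Left\in\scrS_\Left$ then follows by adjoint duality, since $(\us_\LR\circ\us_\Left)^*=\us_\Left^*\circ\us_\LR^*$ is a composition of the type just handled, by Remark~\ref{rem-signs}\ref{rem-signleftandright} and~\ref{rem-signambi}.

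For part~\ref{item-SignComp2}, the handedness of $\us=\us_\Right\circ\us_\Left$ is verified in the same style: inside $\us_\Right$ the head partial sums remain nonnegative by $\us_\Right$'s handedness, at positions $i=k_\Right+j$ with $j\in\llbracket 1,k_\Left-1\rrbracket$ one adds the nonnegative contributions $n_\ab(1,k_\Right;\us_\Right)\geq 0$ and $n_\ab(1,j;\us_\Left)\geq 0$, and the tail-sum inequalities are symmetric using $n_\ab(1,k_\Left;\us_\Left)\in\{-1,0\}$ together with the handedness of $\us_\Left$. The total sum lies in $\{0,1\}+\{-1,0\}=\{-1,0,1\}$, and the trichotomy (2a)--(2c) then follows from Definition~\ref{def-lra} together with the mutual disjointness of $\scrS_\Right^{(k)}$, $\scrS_\Left^{(k)}$ and $\scrS_\LR^{(k)}$ recorded in Remark~\ref{rem-signs}\ref{rem-signsdisjoint}. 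The step I expect to require the most care is verifying the stronger ambidextrous clause \eqref{def-ambi-right} in the ambidextrous subcase: assuming $n_\ab(1,k;\us)=n_\af(1,k;\us)=0$ and $n_\ab(1,i;\us)=0$ for some $i\in\llbracket 2,k-1\rrbracket$, I will split on $i<k_\Right$, $i=k_\Right$, and $i>k_\Right$. The first two are reduced to \eqref{def-ambi-right} for $\us_\Right$, available via Remark~\ref{rem-signs}\ref{rem-signnovacuumR}, and the third combines the forced identity $n_\af(1,k_\Right;\us_\Right)>0$ with the basic nonnegativity clause in the handedness of $\us_\Left$ to produce $n_\af(1,i;\us)>0$.
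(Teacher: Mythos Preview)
Your proposal is correct and follows essentially the same route as the paper: both verify \eqref{def-handedbasic} by additively decomposing the counting functions along the composition, use adjoint duality to halve the work, and invoke Remark~\ref{rem-signs}\ref{rem-signnovacuumR} for the strict clause \eqref{def-ambi-right} at indices $i\leq k_\Right$. The only substantive difference is at indices $i>k_\Right$ in the ambidextrous subcase of~\ref{item-SignComp2}: the paper passes to the tail sum $n_\ab(i+1,k;\us)=0$ and invokes \eqref{def-ambi-left} for $\us_\Left$ via Remark~\ref{rem-signs}\ref{rem-signnovacuumL}, whereas you stay with head sums and combine $n_\af(1,k_\Right;\us_\Right)>0$ with $n_\af(1,i-k_\Right;\us_\Left)\geq 0$ from \eqref{def-handedbasic}. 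Both arguments are valid; yours is arguably more direct since it avoids the detour through the dual property.

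One small remark on your treatment of~\ref{item-SignComp1}: you over-prove. For $\us_\Right\circ\us_\LR$ the clause \eqref{def-ambi-right} is vacuous (since $n_\ab(1,k;\us)=n_\ab(1,k_\Right;\us_\Right)$ and right-handedness forces $n_\af(1,k;\us)>0$ whenever that vanishes), so in \eqref{def-handedbasic} you only need $n_\af(1,i;\us)\geq 0$, not strict positivity. The appeal to \eqref{def-ambi-right} for $\us_\LR$ is therefore unnecessary---the strict contribution $n_\af(1,k_\Right;\us_\Right)>0$ from the right-handed factor already suffices. This is harmless, but you should also note that the edge case $i-k_\Right=1$ (where \eqref{def-ambi-right} is not stated) cannot occur because $n_\ab(s_1)=1$ for any handed string of length $\geq 2$, cf.~Remark~\ref{rem-handedsigns}\ref{item-handed-endpoints}.
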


\begin{proof} We divide the proof into three steps.

 \emph{Step I:} As a first step, let $\us'\in \scrS^{(k')}_\Right\cup\scrS^{(k')}_\LR$ and $\us''\in \scrS^{(k'')}_\Left\cup\scrS^{(k'')}_\LR$  be two handed signature strings of lengths $k'$ and $k''$. Let $\us = \us'\circ \us''$ and $k=k'+k''$. We aim to show that $\us$ satisfies \eqref{def-handedbasic}.
  
   First, for any $i \in \llbracket 1, k' \rrbracket$, we have $n_\ab(1,i;\us) = n_\ab(1,i;\us') \geq 0$ and $n_\ab(1,i;\us)=0$ implies $n_\ab(1,i;\us')=0$ and therefore $n_\af(1,i;\us)=n_\af(1,i;\us')\geq 0$. Here we used that $\us'$ satisfies \eqref{def-handedbasic} if $i<k'$ and that $\us'$ is either right-handed or ambidexstrous if $i=k'$, cf. Definition~\ref{def-lra}.    

   If now  $i \in \llbracket  k' +1, k-1\rrbracket $, then $n_\ab(1,i;\us) = n_\ab(1,k'; \us')+n_\ab(1,i-k';\us'')\geq 0$. Moreover, if $n_\ab(1,i;\us)=0$ then $n_\ab(1,k';\us') = n_\ab(1,i-k';\us'') = 0$ and therefore $n_\af(1,k'; \us'),n_\af(1,i-k';\us'') \geq 0$. Consequently, $n_\af(1,i;\us)\geq 0$.

  Summing up, we have established the first line in \eqref{def-handedbasic}. That the rest of \eqref{def-handedbasic} also holds now follows by passing to the adjoint $\us^* = (\us'\circ\us'')^* = (\us'')^*\circ (\us')^*$, which is again a composition of two handed signature strings as considered above. See Remark~\ref{rem-signs}~\ref{rem-signleftandright} and~\ref{rem-signambi}, as well as Remark~\ref{rem-signComp}. Invoking the part of \eqref{def-handedbasic} that was just proved for $\us^*$ and going back with the involution completes the argument. See also \eqref{sign-adjointcount}.

 \emph{Step II:} We now establish \ref{item-SignComp1}. It suffices to consider the case $\us = \us_\Right\circ \us_\LR$ and $k = k_\Right + k_\LR$. The other case again follows by passing to the adjoint.
    
 Recalling that $n_\ab(1,k_\LR;\us_\LR)= n_\af(1,k_\LR;\us_\LR)=0$, we have
        \[
        n_\ab(1,k;\us) = n_\ab(1,k_\Right;\us_\Right) \qquad \textup{and}\qquad n_\af(1,k;\us) = n_\af(1,k_\Right;\us_\Right).
        \]
        Since $\us_\Right\in\scrS^{(k_\Right)}_\Right$, we conclude that
        $n_\ab(1,k;\us)\in\{0,1\}$ and if $n_\ab(1,k;\us) = 0$, we have $n_\af(1,k;\us)>0$. This proves that $\us\in \scrS^{(k_\Right+k_\LR)}_\Right$.

\emph{Step III:}        We finally turn to \ref{item-SignComp2}. Let $\us_\Right\in\scrS^{(k_\Right)}_\Right$, $\us_\Left\in\scrS^{(k_\Left)}_\Left$ and abbreviate  $k=k_\Right+k_\Left$ and $\us = \us_\Right \circ \us_\Left$. 

        We begin by ensuring that $\us$ is a handed signature string. For this we still need to establish \eqref{def-ambi-right} and, hence, we may
        suppose $n_\ab(1,k;\us) = n_\af(1,k;\us) = 0$,  $k\geq 3$ and that we have an $i\in \llbracket 2,k-1\rrbracket$ with $n_\ab(1,i;\us)=0$. 
        
        If $i\leq k_\Right$. Then $n_\ab(1,i;\us_\Right) =0$. Since $\us_\Right$ is right handed, we conclude that $n_\af(1,i;\us)= n_\af(1,i;\us_\Right)>0$. See Remark~\ref{rem-signs}~\ref{rem-signnovacuumR} for $i< k_\Right$ and Definition~\ref{def-lra}~\ref{item-def-rh} if $i = k_\Right$.

         If $i>k_\Right$, then $n_\ab(i+1-k_\Right,k_\Left;\us_\Left) = n_\ab(i+1,k;\us)=0$, we may conclude that $n_\af(i+1-k_\Right,k_\Left;\us_\Left)<0$. Here we used Remark~\ref{rem-signs}~\ref{rem-signnovacuumL}.
        Hence, $n_\af(1,i;\us) = - n_\af(i+1,k;\us) = - n_\af(i+1-k_\Right,k_\Left;\us_\Left)>0$, which completes the proof of \eqref{def-ambi-right}.

        Having established that $\us$ is a handed signature string, it must fall into one of the three possible categories, left-handed, right-handed or ambidexstrous.
\end{proof}

 Note that the compositions considered in Proposition~\ref{prop-signcomp} are the only possible. To make this precise, we have the following.

\begin{lem}\label{lem-badcomp} Let $k',k''\in\NN$ and $\us'\in\scrS^{(k')}_0$ and $\us''\in\scrS^{(k'')}_0$. Set $\us= \us'\circ\us''$ and $k= k'+k''$. We have
\begin{enumerate}[label = \textup{(\arabic*)}]
    \item\label{item-badcomp-right} Suppose $\us''\in\scrS^{(k'')}_\Right$. Then $\us\not\in\scrS^{(k)}$. 
\item\label{item-badcomp-left} Suppose $\us'\in\scrS^{(k')}_\Left$. Then
$\us\not\in\scrS^{(k)}$. 
\item\label{item-badcomp-ambiR} Suppose $\us'\not\in\scrS^{(k')}_\Right$ and $\us''\in\scrS^{(k'')}_\LR$. Then
$\us\not\in\scrS^{(k)}$.
\item\label{item-badcomp-ambiL} Finally, suppose $\us'\in\scrS^{(k')}_\LR$ and $\us''\not\in\scrS^{(k'')}_\Left$. Then
$\us\not\in\scrS^{(k)}$.
\end{enumerate}
\end{lem}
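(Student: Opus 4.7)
The plan is to show in each of the four cases that at least one handedness condition from \eqref{def-handedbasic} or \eqref{def-ambi-right} fails at a specific index. Parts (1) and (2) will be direct and dual to each other, and parts (3) and (4) will reduce to the first two via a case analysis of the induced handedness of the ``other'' factor.

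For part \ref{item-badcomp-right}, I would test the second line of \eqref{def-handedbasic} at the index $i = k' + 1$, which lies in $\llbracket 2, k\rrbracket$ since $k', k'' \geq 1$. Positions $k'+1$ through $k$ in $\us$ correspond exactly to $\us''$, so $n_\ab(k'+1, k; \us) = n_\ab(1, k''; \us'')$ and similarly for $n_\af$. Since $\us'' \in \scrS^{(k'')}_\Right$, either $n_\ab(1,k'';\us'') = 1$, which violates the inequality $n_\ab(k'+1,k;\us) \leq 0$, or $n_\ab(1,k'';\us'') = 0$ together with $n_\af(1,k'';\us'') > 0$, which violates the implication in \eqref{def-handedbasic}. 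Either way $\us \notin \scrS^{(k)}$. Part \ref{item-badcomp-left} follows by taking adjoints: $\us' \in \scrS^{(k')}_\Left$ gives $(\us')^* \in \scrS^{(k')}_\Right$ by Remark~\ref{rem-signs}~\ref{rem-signleftandright}, and $\us^* = (\us'')^* \circ (\us')^*$ then falls under \ref{item-badcomp-right}; since handedness is preserved by taking adjoints (last item of Remark~\ref{rem-handedsigns}), we conclude $\us \notin \scrS^{(k)}$.

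For part \ref{item-badcomp-ambiR}, suppose for contradiction that $\us \in \scrS^{(k)}$. Since $\us'' \in \scrS^{(k'')}_\LR$ satisfies $n_\ab(1, k''; \us'') = n_\af(1, k''; \us'') = 0$, we have $n_\ab(1, i; \us) = n_\ab(1, i; \us')$ for $i \leq k'$ and $n_\ab(i, k; \us) = n_\ab(i, k'; \us')$ for $2 \leq i \leq k'$, and analogously for $n_\af$. Restricting \eqref{def-handedbasic} for $\us$ to these indices shows that $\us'$ also satisfies \eqref{def-handedbasic}, so $\us'$ is handed. By hypothesis $\us' \notin \scrS^{(k')}_\Right$, hence $\us'$ is either left-handed or ambidextrous. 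In the first case, part \ref{item-badcomp-left} applies and gives $\us \notin \scrS^{(k)}$, a contradiction. In the second case, $\us$ itself is ambidextrous and must therefore satisfy \eqref{def-ambi-right}. Since $\scrS^{(1)}_\LR = \emptyset$ we have $k' \geq 2$, and $k'' \geq 1$ ensures that $i = k'$ lies in $\llbracket 2, k-1\rrbracket$. At this index $n_\ab(1, k'; \us) = n_\ab(1, k'; \us') = 0$, while $n_\af(1, k'; \us) = n_\af(1, k'; \us') = 0$ fails to be strictly positive, contradicting \eqref{def-ambi-right}. Part \ref{item-badcomp-ambiL} is then obtained from \ref{item-badcomp-ambiR} by taking adjoints exactly as \ref{item-badcomp-left} was obtained from \ref{item-badcomp-right}: $\us' \in \scrS^{(k')}_\LR$ gives $(\us')^* \in \scrS^{(k')}_\LR$, $\us'' \notin \scrS^{(k'')}_\Left$ gives $(\us'')^* \notin \scrS^{(k'')}_\Right$, and $\us^* = (\us'')^* \circ (\us')^*$ then falls under \ref{item-badcomp-ambiR}.

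The main obstacle will be the careful bookkeeping in part \ref{item-badcomp-ambiR}: one must propagate the handedness of $\us$ back to $\us'$, split into the two residual cases, and isolate the single index $i = k'$ at which \eqref{def-ambi-right} fails in the ambidextrous subcase. Verifying that this index lies in the allowed range $\llbracket 2, k-1\rrbracket$ rests on the small-length cases being either vacuous ($\scrS^{(1)}_\LR = \emptyset$) or excluded by the standing assumption $k', k'' \geq 1$ — a subtle but routine point.
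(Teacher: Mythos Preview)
Your proof is correct, and your treatment of parts \ref{item-badcomp-right}, \ref{item-badcomp-left}, \ref{item-badcomp-ambiL} matches the paper's. For part \ref{item-badcomp-ambiR} you take a different but valid route.

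One imprecision in part \ref{item-badcomp-ambiR}: you write ``$\us'$ also satisfies \eqref{def-handedbasic}, so $\us'$ is handed''. Satisfying \eqref{def-handedbasic} alone does not make $\us'$ handed; when $k' \geq 3$ and $n_\ab(1,k';\us') = n_\af(1,k';\us') = 0$ one must also verify \eqref{def-ambi-right}. This is not a genuine gap, since the conclusion is true: in that subcase $\us$ itself is ambidextrous and hence satisfies \eqref{def-ambi-right}, and restricting to indices $i \in \llbracket 2, k'-1 \rrbracket$ gives \eqref{def-ambi-right} for $\us'$. Alternatively, you could split directly on whether $n_\ab(1,k';\us')$ and $n_\af(1,k';\us')$ both vanish (then run your ambidextrous argument, which does not use that $\us'$ is handed) or not (then \eqref{def-ambi-right} for $\us'$ is vacuous, so $\us'$ is handed, not right-handed, not ambidextrous, hence left-handed, and part \ref{item-badcomp-left} applies).

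The paper's route in part \ref{item-badcomp-ambiR} is different: after showing that $\us'$ satisfies \eqref{def-handedbasic}, it argues first that $\us$ itself must be \emph{right}-handed, by observing that \eqref{def-ambi-left} fails at $i = k'+1$ (where $n_\ab(k'+1,k;\us) = n_\af(k'+1,k;\us) = 0$), which rules out $\us$ being left-handed or ambidextrous via Remarks~\ref{rem-handedsigns}~\ref{item-handedsymmetric} and~\ref{rem-signs}~\ref{rem-signnovacuumL}. Then Remark~\ref{rem-signs}~\ref{rem-signnovacuumR} applied to the right-handed $\us$ at $i = k'$ forces $\us'$ to meet the right-handedness conditions, contradicting $\us' \notin \scrS^{(k')}_\Right$. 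This avoids your case split and the appeal to part \ref{item-badcomp-left}, at the price of invoking the sharper properties \eqref{def-ambi-right}/\eqref{def-ambi-left} for left- and right-handed strings; your argument is more self-contained, recycling part \ref{item-badcomp-left} and the bare definition.
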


\begin{proof} We begin with \ref{item-badcomp-right}.
Suppose towards a contradiction that $\us\in\scrS^{(k)}$. 
Since $\us$ satisfies \eqref{def-handedbasic}, we have $n_\ab(1,k'';\us'') = n_\ab(k'+1,k;\us)\leq 0$ and since $\us''$ is right-handed, we must have $n_\ab(1,k'';\us'')=0$ and $n_\af(1,k'';\us'')>0$. But since we may now also observe that $n_\ab(k'+1,k;\us) = 0$, we may use again that $\us$ is assumed handed to conclude that $n_\af(1,k'';\us'')=n_\af(k'+1,k;\us)\leq 0$. This establishes a contradiction. 

The claim \ref{item-badcomp-left} follows from \ref{item-badcomp-right} by passing to adjoints, cf.~Remark~\ref{rem-signs}~\ref{rem-signleftandright} and Remark~\ref{rem-signComp}.

To see \ref{item-badcomp-ambiR},  assume towards a contradiction that $\us\in\scrS^{(k)}$. From the observation that $n_\ab(k'+1,k;\us)= n_\ab(1,k'';\us'')=0$ and $n_\af(k'+1,k;\us)=n_\af(1,k'';\us'')=0$, we conclude that $\us'$ must 
satisfy \eqref{def-handedbasic}. Indeed, for $j\in\llbracket 1,k'-1\rrbracket$, we have $n_\ab(1,j;\us')=n_\ab(1,j;\us)\geq 0$ and
$n_\ab(1,j;\us')=0\Rightarrow n_\ab(1,j;\us)=0 \Rightarrow n_\af(1,j;\us) = n_\af(1,j;\us')\geq 0$. Similarly, for $j\in \llbracket 2,k'\rrbracket$, we have $n_\ab(j,k';\us') = n_\ab(j,k;\us) \leq 0$ and
$n_\ab(j,k';\us')=0 \Rightarrow n_\ab(j,k;\us)=0\Rightarrow n_\af(j,k';\us')= n_\af(j,k;\us)\leq 0$. 

Since $n_\ab(k'+1,k;\us)= n_\af(k'+1,k;\us)=0$, we see that $k'+1< k$, cf. Remark~\ref{rem-handedsigns}~\ref{item-handed-endpoints}, and therefore that $\us$ does not have the property \eqref{def-ambi-left}. Hence it follows that $\us$ must be right-handed. See Remark~\ref{rem-handedsigns}~\ref{item-handedsymmetric} and Remark~\ref{rem-signs}~\ref{rem-signnovacuumL}. But this implies that $n_\ab(1,k';\us')=n_\ab(1,k';\us)\geq 0$ and if $n_\ab(1,k';\us')=0$, then we have $n_\af(1,k';\us')= n_\ab(1,k';\us)>0$. Here we used  Remark~\ref{rem-signs}~\ref{rem-signnovacuumR}. But then $\us'$ must be a right-handed signature string, which is a contradiction.

 Finally, \ref{item-badcomp-ambiL} follows from \ref{item-badcomp-ambiR} by passing to adjoints, where we again make use of Remark~\ref{rem-signs}~\ref{rem-signleftandright} and~\ref{rem-signambi} and Remark~\ref{rem-signComp}.
\end{proof}
 
When we recursively construct renormalized operator blocks affiliated with handed signature strings, cf. Definition~\ref{def-remblocks}, we need to decompose a handed signature string into a concatenation of two shorter handed signature strings. We then use the renormalized operator blocks pertaining to the two shorter strings to recursively build an operator affiliated with the original string. In Remark~\ref{rem-SplitIndependence}, we show that this construction does not depend on the choice of decomposition of the handed signature string. For this purpose, we need to classify the ways in which we may decompose a handed signature string into two shorter handed signature strings. This is the purpose of the following proposition.

\begin{Prop}[Decomposition of handed signature strings]\label{propo-signdecomp} Let $k\in\NN$ with $k\geq 2$ and $\us\in\scrS^{(k)}$. Let $\Split(\us)$ denote the set of $j\in\llbracket 1,k-1\rrbracket$, such that $(s_1,\dotsc,s_{j})\in\scrS^{(j)}_\Right\cup\scrS^{(j)}_\LR$ and
$(s_{j+1},\dotsc,s_k)\in\scrS^{(k-j)}_\LR\cup \scrS^{(k-j)}_\Left$. Then $\Split(\us)\neq \emptyset$ and we furthermore have
\begin{enumerate}[label = \textup{(\arabic*)}]
\item\label{item-sd-right} Suppose $\us\in\scrS^{(k)}_\Right$ and $j\in \Split(\us)$, then $(s_1,\dotsc,s_{j})\in\scrS^{(j)}_\Right$. 
\item\label{item-sd-left} Suppose $\us\in\scrS^{(k)}_\Left$ and $j\in\Split(\us)$, then
$(s_{j+1},\dotsc,s_k)\in\scrS^{(k-j)}_\Left$.
\item\label{item-sd-ambi} Suppose $\us\in\scrS^{(k)}_\LR$ and $j\in \Split(\us)$, then we have both $(s_1,\dotsc,s_{j})\in\scrS^{(j)}_\Right$ and $(s_{j+1},\dotsc, s_k)\in\scrS^{(k-j)}_\Left$.
\item\label{item-sd-deg} Write $\Split(\us) = \{j_1,\dots,j_u\}$ with $j_1<j_2<\dotsc,<j_u$. If $u\geq 2$, we furthermore have:
\begin{enumerate}[label = \textup{(\arabic{enumi}\alph*)}]
    \item\label{item-sd-dega} If $\us\in \scrS^{(k)}_\Right$, then $(s_{j_u+1},\dotsc,s_k)\in\scrS^{(k-j_u)}_\Left$.
    \item\label{item-sd-degb} If $\us\in \scrS^{(k)}_\Left$, then $(s_1,\dotsc,s_{j_1})\in\scrS^{(k-j_1+1)}_\Right$.
    \item\label{item-sd-degc} For any $i\in\llbracket 1,u-1\rrbracket$, we have
$(s_{j_i+1},\dotsc,s_{j_i})\in\scrS^{(j_{i+1}-j_i)}_\LR$.
\end{enumerate}
\end{enumerate}
\end{Prop}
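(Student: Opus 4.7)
My plan is to establish the three parts of the proposition in sequence: non-emptiness of $\Split(\us)$, the structural claims (1), (2), (3), and the degeneracy claims (4). For non-emptiness, I would proceed by induction on $k$. The base case $k = 2$ is immediate: $j = 1$ works since by Remark~\ref{rem-handedsigns}~\ref{item-handed-endpoints}, $(s_1) \in \scrS^{(1)}_\Right$ and $(s_2) \in \scrS^{(1)}_\Left$. For $k \geq 3$, I would explicitly exhibit a valid $j^* \in \llbracket 1, k-1\rrbracket$ by inspecting the lattice path $P(i) := n_\ab(1, i; \us)$, which starts at $0$, equals $1$ at $i = 1$, remains non-negative on $\llbracket 1, k-1 \rrbracket$, and terminates in $\{-1, 0, 1\}$. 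The candidate $j^*$ should be an argmin of $P$ on $\llbracket 1, k-1 \rrbracket$, with ties broken according to the sign of $n_\af$ along internal zero-$n_\ab$ subwords of $\us$: a subword with positive $n_\af$-content must sit on the suffix side (where $n_\af \geq 0$ is required at internal zeros), while one with negative $n_\af$-content must sit on the prefix side (where $n_\af \leq 0$ is required). When $\us$ itself is ambidextrous, the strong ambi condition \eqref{def-ambi-right} furnishes the extra information needed to ensure both prefix and suffix at $j^*$ satisfy the required handedness and sign conditions.

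For claims (1), (2), (3), I would argue by contradiction using Proposition~\ref{prop-signcomp} and Lemma~\ref{lem-badcomp}. For (1): if $\us \in \scrS^{(k)}_\Right$ and $j \in \Split(\us)$ has an ambi prefix, then combining with a left suffix forces $\us \in \scrS^{(k)}_\Left$ by Proposition~\ref{prop-signcomp}~\ref{item-SignComp1}, contradicting right-handedness, while combining with an ambi suffix falls under Lemma~\ref{lem-badcomp}~\ref{item-badcomp-ambiR} so that $\us$ would not even be handed. Claim (2) follows symmetrically, most cleanly by passing to the adjoint via Remark~\ref{rem-signs}~\ref{rem-signleftandright} and Remark~\ref{rem-signComp} and applying (1) to $\us^*$. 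For claim (3), an ambi prefix of length $j \in \llbracket 2, k-1\rrbracket$ would give $n_\ab(1, j; \us) = n_\af(1, j; \us) = 0$, directly violating the strong ambidextrous condition \eqref{def-ambi-right} on $\us$; an ambi suffix is excluded by the dual property \eqref{def-ambi-left}, noted in Remark~\ref{rem-handedsigns}~\ref{item-handedsymmetric}.

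For the degeneracy claims (4), I would examine consecutive splits $j_i < j_{i+1}$. The middle substring $\us_m := (s_{j_i+1}, \dots, s_{j_{i+1}})$ is handed, its two-sided conditions being inherited from the decompositions of $\us$ at $j_i$ and at $j_{i+1}$. Applying (1), (2), (3) and Lemma~\ref{lem-badcomp} then rules out $\us_m$ being either right- or left-handed: e.g., a right-handed $\us_m$ composed with the left-or-ambi suffix at $j_{i+1}$ would produce a suffix at $j_i$ whose type is incompatible with what (1)-(3) forces at the split $j_i$. Hence $\us_m$ must be ambidextrous, which is (4c). Claims (4a) and (4b) follow analogously by applying (1)-(3) to the extremal split points $j_1$ and $j_u$ and examining the pieces $(s_{j_u+1}, \dots, s_k)$ and $(s_1, \dots, s_{j_1})$ respectively.

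The principal obstacle is the non-emptiness argument: internal zero-$n_\ab$ subwords of $\us$ impose opposite sign requirements on $n_\af$ depending on which side of $j^*$ they fall, so the correct $j^*$ cannot be read off from the $n_\ab$-path alone but must be coordinated with the $n_\af$-sign pattern along these internal loops. This coordination, together with verifying that all handedness and sign conditions simultaneously hold for both the prefix and the suffix at $j^*$ (invoking the strong ambidextrous condition when $\us$ is ambi), is the technical heart of the proof.
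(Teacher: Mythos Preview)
Your overall strategy is sound and largely parallels the paper's: the paper also constructs an explicit split point by locating the minimal level of $i\mapsto n_\ab(1,i;\us)$ and then breaking ties via $n_\af$ (specifically: take the largest $i_0$ at the minimal level that minimizes $n_\af(1,i_0;\us)$). Your treatment of \ref{item-sd-right}--\ref{item-sd-ambi} via Proposition~\ref{prop-signcomp} and Lemma~\ref{lem-badcomp} is in fact tidier than the paper's, which interleaves these with the non-emptiness construction. One small correction: to rule out a right-handed $\us_m$ in \ref{item-sd-degc} you should compose on the \emph{prefix} side, not the suffix side: if $\us_m\in\scrS_\Right$ then Lemma~\ref{lem-badcomp}~\ref{item-badcomp-right} forces $(s_1,\dotsc,s_{j_i})\circ\us_m=(s_1,\dotsc,s_{j_{i+1}})\notin\scrS$, contradicting $j_{i+1}\in\Split(\us)$.

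There is, however, a genuine gap in your argument for \ref{item-sd-degc}. You assert that $\us_m=(s_{j_i+1},\dotsc,s_{j_{i+1}})$ is handed because its ``two-sided conditions'' are inherited. Only the basic condition \eqref{def-handedbasic} is inherited this way, and from it together with Lemma~\ref{lem-badcomp} one can indeed force $n_\ab(\us_m)=n_\af(\us_m)=0$. But membership in $\scrS^{(j_{i+1}-j_i)}_\LR$ then additionally requires the strict condition \eqref{def-ambi-right}, and this is \emph{not} inherited from the prefix at $j_{i+1}$ or the suffix at $j_i$. The missing argument (which the paper supplies in its Step~VII) is: if \eqref{def-ambi-right} failed for $\us_m$ at some internal index $\ell$, i.e.\ $n_\ab(j_i+1,j_i+\ell;\us)=n_\af(j_i+1,j_i+\ell;\us)=0$, then one checks that $(s_1,\dotsc,s_{j_i+\ell})$ and $(s_{j_i+\ell+1},\dotsc,s_k)$ are themselves handed of the appropriate types, so that $j_i+\ell\in\Split(\us)$ lies strictly between the \emph{consecutive} elements $j_i$ and $j_{i+1}$---a contradiction. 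Without this step your chain of implications for \ref{item-sd-degc} does not close, and since \ref{item-sd-dega}--\ref{item-sd-degb} lean on \ref{item-sd-degc} (via Lemma~\ref{lem-badcomp}~\ref{item-badcomp-ambiL} applied to the ambidextrous $\us_m$ at $i=u-1$), those are affected too.
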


\begin{proof} Let $k\in\NN$ with $k\geq 2$ and $\us\in\scrS^{(k)}$. 

 The key task is to prove that $\Split(\us)\neq\emptyset$. 
In the process we establish the properties \ref{item-sd-right}--\ref{item-sd-ambi} of the two factors, in the case where $\Split(\us)$ is a singleton.

    Consider $\us \in \scrS^{(k)} $. We may assume that $\us \in \scrS^{(k)}_\Right\cup \scrS^{(k)}_\LR$, the case $\us \in \scrS^{(k)}_\Left $ being the adjoint case of $\us \in \scrS^{(k)}_\Right $ can be deduced from the case treated. See Remark~\ref{rem-signs}~\ref{rem-signleftandright} and~\ref{rem-signambi} as well as Remark~\ref{rem-signComp}.
    
    For $N=0,1$, abbreviate
    \[
     \Split_N(\us) = \bigset{i\in\llbracket 1,k-1\rrbracket }{n_\ab(1,i;\us)=N}.
    \]
    Since $n_\ab(1,1;\us)=1$, we observe that $\Split_1(\us)\neq\emptyset$. 
    Fix $N=0$ if $\Split_0(\us)\neq\emptyset$, otherwise set $N=1$.

    We separate the argument into several steps.

\emph{Step I:}   
   Fix $i_0$ to be the \emph{largest} $i_0\in \Split_N(\us)$, such that we have 
    \begin{equation}\label{choice-of-i0}
    n_\af(1,i_0;\us) =\min\bigset{n_\af(1,i;\us)}{i\in \Split_N(\us)}.
    \end{equation}
  Note that 
  \begin{equation}\label{decomp-strictlypos}
   N=0 \quad \Rightarrow  \quad n_\af(1,i_0;\us) >0. 
  \end{equation}
  This holds since \eqref{def-ambi-right} is applicable both for right-handed and ambidexstrous signature strings, cf. Remark~\ref{rem-signs}~\ref{rem-signnovacuumR}. (Observe that if $N=0$ then $k\geq 3$, since -- by Remark~\ref{rem-signs}~\ref{item-handed-endpoints} -- we have  $n_\ab(1,1;\us)=1$).
   We proceed to argue that $i_0\in \Split(\us)$.  
  Put $\us' = (s_1,\dotsc,s_{i_0})$ and $\us'' = (s_{i_0+1},\dotsc, s_k)$.
  
   \emph{Step II:}   In this step, we argue that $\us'\in\scrS^{(i_0)}_\Right$. In fact, due to \eqref{decomp-strictlypos}, it only remains to prove that
     \begin{equation}\label{decompsign-need1}
\forall j\in \llbracket 2,i_0\rrbracket:\quad \begin{cases}  n_\ab(j,i_0;\us') \leq 0\\
 n_\ab(j,i_0;\us')=0 \Rightarrow n_\af(j,i_0;\us')\leq 0.
\end{cases}
     \end{equation}
 First note that there is nothing to prove if $i_0=1$, so we may assume that $i_0\geq 2$. Then for $j\in\llbracket 2,i_0\rrbracket$, compute
 $N = n_\ab(1,i_0;\us) = n_\ab(1,j-1;\us') + n_\ab(j,i_0;\us') \geq N + n_\ab(j,i_0;\us')$. Here we used that $N$ was chosen to be the smallest $N$ with $\Split_N(\us)\neq\emptyset$.  This implies that $n_\ab(j,i_0;\us)\leq 0$ and we conclude the first part of \eqref{decompsign-need1}. 

    Now suppose that $n_\ab(j,i_0;\us') =0$. Then, from the computation, $n_\ab(1,j-1;\us)=  n_\ab(1,j-1;\us') +n_\ab(j,i_0;\us') = n_\ab(1,i_0;\us) = N$, we conclude that $j-1\in \Split_N(\us)$ as well. Compute $n_\af(1,i_0;\us) = n_\af(1,j-1;\us)+n_\af(j,i_0;\us')$. Due to the choice \eqref{choice-of-i0} of $i_0$, we conclude that  $n_\af(j,i_0;\us')\leq 0$. This completes the verification of \eqref{decompsign-need1}.

\emph{Step III:} We now turn to proving that $\us''\in\scrS^{(k-i_0)}$. We first prove that
   \begin{equation}\label{decompsign-need2}
\forall j\in \llbracket 1,k-i_0-1\rrbracket:\quad \begin{cases} n_\ab(1,j;\us'') \geq 0 \\
  n_\ab(1,j;\us'')=0 \Rightarrow n_\af(1,j;\us'') > 0.
\end{cases}
     \end{equation}
 Note that there is nothing to prove here if $i_0 = k-1$. So for the verification of \eqref{decompsign-need2}, we assume that $i_0\leq k-2$.

 Let $j\in \llbracket 1, k-i_0-1\rrbracket$. From the computation \begin{equation}\label{decomp-stepIIhelp}
     n_\ab(1,i_0+j;\us) =n_\ab(1,i_0;\us) + n_\ab(i_0+1,i_0+j;\us) = N + n_\ab(1,j;\us''),
     \end{equation}
    and the inequality $n_\ab(1,i_0+j;\us)\geq N$, coming from the choice of $N$, the first part of \eqref{decompsign-need2} follows.

 Now suppose that $n_\ab(1,j;\us'') = 0$. Then the equation \eqref{decomp-stepIIhelp}
  implies that $i_0+j\in \Split_N(\us)$. Note that we have the fermionic analogue of the computation \eqref{decomp-stepIIhelp}:
  \[
  n_\af(1,i_0+j;\us) = n_\af(1,i_0;\us)+ n_\af(i_0+1,i_0+j;\us) = n_\af(1,i_0;\us)+ n_\af(1,j;\us'').
  \]
  This equation and the choice of $i_0$ as the largest element in $\Split_N(\us)$ satisfying \eqref{choice-of-i0}, leads us to conclude that $n_\af(1,j;\us'') > 0$. This completes the proof of \eqref{decompsign-need2}.

  Note that \eqref{decompsign-need2} also covers \eqref{def-ambi-right}, such that we have completed the proof of $\us''\in\scrS^{(k'')}$.

 \emph{Step IV:} In Step III, we proved that $\us''$ is a handed signature string. In order to conclude that $\us''\in\scrS^{(k'')}_\Left\cup \scrS^{(k'')}_\LR$, it suffices to argue that $\us''\not\in \scrS^{(k'')}_\Right$. See Remark~\ref{rem-signs}~\ref{rem-signsdisjoint}.

 But this follows from Lemma~\ref{lem-badcomp}~\ref{item-badcomp-right}, since $\us = \us'\circ\us''$ is a handed signature string.

 It follows from the composition rules in Proposition~\ref{prop-signcomp}~\ref{item-SignComp1} that if $\us$ is ambidexstrous and $\us'$ is right-handed, then $\us''$ is forced to be left-handed. 

 This completes the proof that $\Split(\us)\neq \emptyset$ and that the properties in \ref{item-sd-right}--\ref{item-sd-ambi} holds, if $\Split(\us)$ is a singleton.

\emph{Step V:}
 If $\Split(\us)$  is a singleton, the properties of the factors in \ref{item-sd-right}, \ref{item-sd-left} and \ref{item-sd-ambi} were established above. We therefore turn our attention to the case when $\Split(\us)$ has at least two elements. 

  Assume that $\us$ is right-handed or ambidexstrous. The left-handed case will as usual follow by passing to the adjoint $\us^*$.

Let $j,j'\in \Split(\us)$ with $j<j'$. Since $(s_1,\dotsc,s_{j'})$ and $(s_{j+1},\dotsc,s_k)$ both satisfy \eqref{def-handedbasic}, we conclude that $n_\ab(j+1,j';\us)$ is both non-negative and non-positive, hence we must have $n_\ab(j+1,j';\us)=0$. Appealing to \eqref{def-handedbasic} yet again, we may now conclude that $n_\af(j+1,j';\us)=0$ as well. Hence, we have established that
  \begin{equation}\label{sd-preamb}
 \forall j,j'\in \Split(\us), j<j':\quad    \begin{cases} (s_{j+1},\dotsc,s_{j'}) \quad \textup{satisfies \eqref{def-handedbasic}},\\
 n_\ab(j+1,j';\us) = n_\af(j+1,j';\us)=0.
 \end{cases}
 \end{equation}

\emph{Step VI:} We are now in a position to establish \ref{item-sd-right}, \ref{item-sd-ambi} and \ref{item-sd-dega}. Note again that \ref{item-sd-left} and \ref{item-sd-degb} follow from \ref{item-sd-right} and \ref{item-sd-dega} by taking adjoints. 

 Let $j\in \Split(\us)$ and abbreviate $\us'=(s_1,\dotsc,s_j)$ and $\us''= (s_{j+1},\dotsc,s_k)$. By Remark~\ref{lem-badcomp}, we know that $\us'$ cannot be left-handed and $\us''$ cannot be right-handed. Hence, it suffices to show that neither $\us'$ not $\us''$ can be ambidexstrous. Note that we have either $n_\ab(1,j;\us)>0$ or $n_\ab(1,j;\us)=0$ and $n_\af(1,j;\us)>0$, since $\us$ is assumed either right-handed or ambidexstrous. This implies that $\us'$ cannot be ambidexstrous and therefore is right-handed as claimed.

 Assume towards a contradiction that $\us''$ is ambidexstrous. Let $j'\in \Split(\us)$ with $j'\neq j$. 
 
 If $j'>j$, we would by \eqref{sd-preamb} have that $n_\ab(1,j'-j;\us'')= n_\ab(j+1,j';\us)=0$ and $n_\af(1,j'-j;\us'') = n_\af(j+1,j';\us)=0$, which contradicts \eqref{def-ambi-right}, which $\us''$ should satisfy. 
 
 Hence $j'<j$. But then, by a similar argument, the signature string $\us'''=(s_{j'},\dotsc,s_k)$ cannot be ambidexstrous, and therefore must be left-handed. But this is contradicted by the computation $n_\ab(1,k-j';\us''') = n_\ab(j'+1,k;\us) = n_\ab(j'+1,j;\us) + n_\ab(j+1,k;\us)= 0$, which uses \eqref{sd-preamb} and the assumption that $\us''$ is ambidexstrous. Similarly, we have $n_\af(1,k-j';\us''')=0$. But this is impossible, since $\us'''$ is left-handed.

This completes the proof of \ref{item-sd-right}--\ref{item-sd-ambi} as well as \ref{item-sd-dega} and \ref{item-sd-degb}.

 \emph{Step VII:} It remains to establish \ref{item-sd-degc}.
 Write $\Split(\us) = \{j_1,\dotsc,j_u\}$ with $j_1<j_2<\dots<j_u$ and $u\geq 2$. Set $\us_i = (s_{j_i+1},\dotsc, s_{j_{i+1}})$, for $i\in\llbracket 1,u-1\rrbracket$. Keeping \eqref{sd-preamb} in mind, we observe that in order to conclude that $\us_i$ is ambidexstrous, it suffices to verify the property \eqref{def-ambi-right}. In particular, we observe that $j_{i+1}-j_i \geq 2$ for all $i\in\llbracket 1,u-1\rrbracket$.

 Fix an $i\in\llbracket 1,u-1\rrbracket$ and $\ell\in \llbracket 1, j_{i+1}-j_i -1\rrbracket$.
 We must argue that $n_\ab(1,\ell;\us_i) = 0$ implies $n_\af(1,\ell;\us_i) >0$.  
Assume towards a contradiction that this is false and $n_\ab(1,\ell;\us_i) = n_\af(1,\ell;\us_i)=0$. The strategy is to prove that this assumption will imply that $j_i+\ell\in\Split(\us)$, which would be a contradiction.

Observe first that
\begin{align*}
n_\ab(j_i+\ell+1,j_{i+1};\us) & =  n_\ab(1,\ell;\us_i) + n_\ab(j_i+\ell+1,j_{i+1};\us) \\
 & =  n_\ab(j_i+1,j_i+\ell;\us) + n_\ab(j_i+\ell+1,j_{i+1};\us) \\
 & = n_\ab(j_i+1,j_{i+1};\us) = 0.
\end{align*}
Similarly, we may compute
\[
n_\af(j_i+\ell+1,j_{i+1};\us)  = 0.
\]
Set $\usigma' = (s_1,\dotsc,s_{j_i+\ell})$ and $\usigma'' = (s_{j_i+\ell+1},\dotsc, s_k)$. Given the two computations above, together with \eqref{sd-preamb}, 
it is now straightforward to check that both $\usigma'$ and $\usigma''$ satisfy \eqref{def-handedbasic}. Furthermore, since $\us'= (s_1,\dotsc,s_{j_{i}})$ is right-handed and $\us'' = (s_{j_{i+1}+1},\cdots,s_k)$ is left-handed, we use the two identities above to verify that $\usigma'$ is right-handed and $\usigma''$ is left-handed. Note that the property \eqref{def-ambi-right} is not present in this case.
This shows that $j_i+\ell\in \Split(\us)$ and establishes a contradiction, since there are no elements from $\Split(\us)$ between $j_{i}$ and $j_{i+1}$.
\end{proof}

\subsection{Tuples of handed signature strings}

In this section, we will be discussing minimal splits of arbitrary signature strings into handed signature strings of a given maximal length. 
Recall that in a Neumann expansion of the resolvent we get terms of the form \eqref{NeumannTerms} with all possible signature strings appearing as summands. In order to estimate such summands we will resum the Neumann expansion order by order arriving at new summands that are products of longer and longer blocks of renormalized operators, corresponding to a decomposition of an arbitrary signature string into successive handed signature strings of longer and longer length. 
We will be using the notation
\[
\ulj = (j_1,\dotsc,j_\ell) \quad \textup{and}\quad |\ulj| = j_1+\cdots +j_\ell
\]
for multiindices $\ulj$ with integer entries $j_i\in\NN$ and length $|\ulj|$. The number $\ell\in\NN$ counting the number of entries in $\ulj$ is suppressed from the notation.

\begin{Def}\label{def-tuples} Let $k,n \in \NN$. We introduce the set  $\scrT^{(n,k)}$  of tuples of handed signature strings of total length $k$ and maximal block length $n$ as follows:
\begin{align*}
\scrT^{(n,k)}  & = \Bigset{ \uut = (\us_1, \dots, \us_\ell) \in  \scrS^{(j_1)} \times \dots \times  \scrS^{(j_\ell)} }{  \ell \in \NN, \ulj \in \llbracket 1,n\rrbracket^\ell, \,|\ulj| = k, \\
 &  \qquad \forall i \in \llbracket 1, \ell-1\rrbracket: \, j_i+j_{i+1}\leq n \Rightarrow \us_i\circ \us_{i+1} \notin  \scrS^{(j_i+j_{i+1})} }.
\end{align*}
\end{Def} 

\begin{rk}\label{rem-tuples} 
If $n = 1$ and $k\in\NN$, the set of tuples $\scrT^{(1,k)}$ is the same as the set of all strings of length $k$: $\scrT^{(1,k)}\simeq \scrS^{(k)}_0$ through the trivial identification
\[
\bigl((s_1),(s_2),\dotsc,(s_k)\bigr) \leftrightarrow (s_1,s_2,\dotsc,s_k).
\]

In the above example with $n=1$ there is only one tuple corresponding to the same string of signatures. This may not be the case if $n>1$. If $n=3$ and $k=4$, the following two tuples from $\scrT^{(3,4)}$ illustrates this:
\[
\bigl((\ab\af,\ab\af,\cb\cf), (\cb\cf)\bigr) \quad \textup{and} \quad 
\bigl((\ab\af),(\ab\af,\cb\cf, \cb\cf)\bigr).
\]
The underlying signature string $(\ab\af,\ab\af,\cb\cf,\cb\cf)$ is the same for both tuples.
\end{rk}

The tuples from $\scrT^{(k,n)}$ will be used to index the summands in the renormalized Neumann expansion of the resolvent. The redundancy in Remark~\ref{rem-tuples} above gives rise to an over-counting issue. One may resolve this by adding more constraints in Definition~\ref{def-tuples} in order to remove the redundancy. We proceed differently, and lift the redundancy using an equivalence relation on $\scrT^{(k,n)}$.

\begin{Def}
Let $k,n \in \NN$. We introduce the following equivalence relation $\sim$ defined on $\scrT^{(n,k)}$ as follows
\[
\forall \uut, \uut' \in \scrT^{(n,k)}:\qquad  \uut \sim \uut' \quad \Leftrightarrow \quad \us_1\circ \dots \circ \us_\ell = \us'_1\circ \dots \circ \us'_{\ell'}.
\]
\end{Def}

\begin{Prop}\label{prop-tuple-bijection} Let $k,n\in\NN$ the following map from $\scrT^{(n,k)}/\!\sim$ to $\scrS^{(k)}_0$ is a bijection:
\[
\scrT^{(n,k)}/\!\sim\, \ni \, [\uut] \,\mapsto\, \us_1\circ\us_2\cdots\circ\us_\ell.
\]
Write $\varphi^{(n,k)}\colon \scrS^{(k)}_0\to \scrT^{(n,k)}/\!\sim$ for the inverse bijection.
\end{Prop}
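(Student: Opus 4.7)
The plan is to show that the stated map $\Phi\colon\scrT^{(n,k)}/\!\sim\,\to\scrS^{(k)}_0$ is well-defined and bijective. Well-definedness and injectivity both come for free from the definition of $\sim$: equivalence is by definition equality of compositions, so $\Phi$ descends to and is injective on the quotient with no work. The real content lies entirely in surjectivity, i.e., exhibiting a canonical tuple in $\scrT^{(n,k)}$ whose composition is any prescribed $\us\in\scrS^{(k)}_0$.

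For surjectivity I would build the tuple greedily from left to right. Fix $\us=(s_1,\dotsc,s_k)$. Starting from $m=0$, at each step set
\[
j(m)=\max\bigset{j\in\llbracket 1,\min(n,k-m)\rrbracket}{(s_{m+1},\dotsc,s_{m+j})\in\scrS^{(j)}}.
\]
This maximum exists because $j=1$ always qualifies: every singleton string is handed by Remark~\ref{rem-handedsigns}~\ref{rem-signslength1}. Putting $j_1=j(0)$, $j_{i+1}=j(J_i)$ with $J_i=j_1+\dotsb+j_i$, and $\us_i=(s_{J_{i-1}+1},\dotsc,s_{J_i})$, the procedure terminates after finitely many steps $\ell$ with $J_\ell=k$, producing a tuple $\uut=(\us_1,\dotsc,\us_\ell)$ of handed blocks of lengths $j_i\in\llbracket 1,n\rrbracket$ summing to $k$.

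The only condition from Definition~\ref{def-tuples} not immediately visible from the construction is the minimality clause, and I would verify it as follows. Suppose $j_i+j_{i+1}\leq n$. Since automatically $j_i+j_{i+1}\leq k-J_{i-1}$, we have $j_i+j_{i+1}\leq\min(n,k-J_{i-1})$, and by the maximality defining $j_i=j(J_{i-1})$ the longer prefix $(s_{J_{i-1}+1},\dotsc,s_{J_{i-1}+j_i+j_{i+1}})$ cannot be handed. But this prefix equals $\us_i\circ\us_{i+1}$ as a signature string, hence $\us_i\circ\us_{i+1}\notin\scrS^{(j_i+j_{i+1})}$, as required. Then $\Phi([\uut])=\us$ by construction, completing surjectivity.

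I do not anticipate a genuine obstacle here: the main idea is simply that a greedy maximal-length partition automatically fulfils the minimality requirement, because any legal enlargement of a block is directly forbidden by the greedy choice at that step. No heavier combinatorics from Section~\ref{Sec-Signs} (Propositions~\ref{prop-signcomp} and~\ref{propo-signdecomp}) needs to be invoked.
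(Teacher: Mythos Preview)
Your proof is correct, and it takes a genuinely different route from the paper's. The paper argues surjectivity by induction on $n$: given a tuple in $\scrT^{(n,k)}$ covering $\us$ (from the inductive hypothesis), it introduces an auxiliary set $\scrT^{(n+1,k)}_{\mathrm{A}}$ and repeatedly merges adjacent blocks whose composition happens to be handed of length $n+1$, shrinking a defect set $I(\uut)$ until it is empty and the result lands in $\scrT^{(n+1,k)}$. Your approach instead builds the tuple directly in one left-to-right pass, choosing at each step the longest handed prefix of length at most $n$; the minimality clause then falls out immediately from greediness, since any composable adjacent pair would contradict the maximality of the left block. Your argument is shorter, avoids the auxiliary set and the induction on $n$, and produces a canonical (left-greedy) representative of each equivalence class. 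The paper's inductive framework, on the other hand, meshes with the recursive structure used later (e.g.\ in Lemma~\ref{lem-equivalent-tuples} and Proposition~\ref{prop-equivalent-tuples}), where one passes between $\scrT^{(n,k)}$ and $\scrT^{(n+1,k)}$ by splitting or merging blocks of maximal length; but for the bijection itself your direct construction is the cleaner argument.
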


\begin{proof}
 It is clear from the choice of equivalence relation that the map is well-defined and injective. It remains to show that the map is surjective. 
 We do this by induction after $n$. For $n=1$ it follows from Remark~\ref{rem-tuples}.

 Let $n\in\NN$ and assume that the proposition has been proved for $\scrT^{(n,k)}$. Let $\us = (s_1,s_2,\dotsc,s_k)\in\scrS^{(k)}_0$. By the induction hypothesis, there exists a $\uut = (\us_1,\dotsc,\us_\ell)\in \scrT^{(n,k)}$ with
 $\us_1\circ\cdots\circ \us_\ell  = \us$. 

 We introduce an auxiliary set of tuples
 \begin{align}\label{auxtuples}
\nonumber \scrT^{(n+1,k)}_\mathrm{A}  & = \Bigset{(\us_1, \dots, \us_{\ell}) \in  \scrS^{(j_1)} \times \dots \times  \scrS^{(j_{\ell})} }{  \ell \in \NN, \ulj \in \llbracket 1,n+1\rrbracket^{\ell}, \,|\ulj| = k, \\
 &  \qquad \forall i \in \llbracket 1, \ell-1\rrbracket: \, j_i+j_{i+1}\leq n \Rightarrow \us_i\circ \us_{i+1} \notin  \scrS^{(j_i+j_{i+1})} }.
\end{align}
Clearly $\scrT^{(n,k)}\subset \scrT^{(n+1,k)}_\mathrm{A}$.

Let us return to $\uut\in \scrT^{(n,k)}$ and let $I(\uut) = \set{i\in\llbracket 1,\ell-1\rrbracket}{j_i+j_{i+1} = n+1, \us_i\circ\us_{i+i}\in\scrS^{(n+1)}}$. If $I(\uut)=\emptyset$, then
$\uut\in\scrT^{(n+1,k)}$ and we are done. We may therefore assume that $I(\uut)\neq \emptyset$. Let $i\in I(\uut)$. Define a new tuple $\uut' = (\us'_1,\dotsc,\us'_{\ell-1})$ by setting
\[
\forall k\in \llbracket 1,\ell-1\rrbracket:\quad \us'_k = \begin{cases} 
\us_k, & k< i\\
\us_i\circ\us_{i+1}, & k=i\\
\us_{k+1}, & k> i.
\end{cases}
\]
Note that $\uut'\in \scrT^{(n+1,k)}_\mathrm{A}$ and the cardinality of $I(\uut')$ is strictly less than that of $I(\uut)$. By repeating the procedure, removing one more $i\in I(\uut')$, one eventually reaches a $\uut''=(\us''_1,\dotsc,\us''_{\ell''})\in\scrT^{(n+1,k)}_\mathrm{A}$ with $I(\uut'') = \emptyset$. But we then have $\uut''\in\scrT^{(n+1,k)}$. Since the underlying signature does not change in the recursive procedure, we find that
$\us''_1\circ \cdots \circ\us''_{\ell''} = \us$ and we are done.
\end{proof}

For convenience, we introduce functions that read off the starting index $b$ and ending index $e$ of each $\us_i$ inside a tuple $\uut$.

\begin{Def} Let $n,k\in\NN$ and $\uut = (\us_1,\dotsc,\us_\ell)\in\scrT^{(n,k)}$. We define
\[
\forall i\in\llbracket 1,\ell\rrbracket:\quad \begin{cases}
b(i;\uut) = k-(j_1+\cdots+j_i)+1\\
e(i;\uut) = j_1+\cdots+ j_i.
\end{cases}
\]
Here the $j_i$'s are the lengths of the $\us_i$'s.
We furthermore set: 
\begin{align*}
\scrB(\uut) & =  \bigset{b(i;\uut)}{i\in\llbracket 1,\ell\rrbracket, \us_i\in \scrS^{(j_i)}_\Right},\\
\scrE(\uut) & =  \bigset{e(i;\uut)}{i\in\llbracket 1,\ell\rrbracket, \us_i\in \scrS^{(j_i)}_\Left},\\
\scrA(\uut) &= \bigset{\bigl(b(i,\uut),e(i,\uut)\bigr)}{i\in\llbracket 1,\ell\rrbracket, \us_i\in \scrS^{(j_i)}_\LR}.
\end{align*}
\end{Def}

\begin{rk} The terms \eqref{NeumannTerms} in the Neumann expansion of the interacting resolvent are naturally indexed by elements of $\scrS_0^{(k)}$, whereas terms in the resummed -- to order $n$ -- Neumann expansion will be indexed by elements of $\scrT^{(n,k)}/\!\sim$. See Proposition~\ref{prop-tuple-bijection} above and Theorem~\ref{thm-reordering} below. The summands in the resummed Neumann expansion are constructed in Section~\ref{subsec-RenSummands}, a priori for each tuple $\uut\in\scrT^{(n,k)}$. A key observation is made in  Proposition~\ref{prop-EquivSummands}, where we establish that if $\uut\sim\uut'$, then the associated renormalized summands are the same. The arguments we make go via induction after $n$, the order of resummation. For the induction arguments to work, we need to understand what features of a tuple are the same for all equivalent tuples. This is the content of Proposition~\ref{prop-equivalent-tuples} below.

However, before we get to Proposition~\ref{prop-equivalent-tuples}, we pause to address an
issue with Definition~\ref{def-tuples} that we need to handle first. The issue is exemplified by the following tuple $\uut\in\scrT^{(4,5)}$:
\[
\uut = \bigl((\ab\af),(\ab\af,\cb\cf,\ab\cf,\cb\cf)\bigr).
\]
The second handed signature string $(\ab\af,\cb\cf,\ab\cf,\cb\cf)$ of maximal length $4$ naturally splits into a composition of two shorter handed signature strings
$(\ab\af,\cb\cf)$ and $(\ab\cf,\cb\cf)$. However, if we replace the handed string of length $4$ with the two shorter handed signature strings of length $2$, we get
\[
\uut' = \bigl((\ab\af),(\ab\af,\cb\cf), (\ab\cf,\cb\cf)\bigr).
\]
But this is not an element of $\scrT^{(3,5)}$, because the composition of the first two handed signature strings  $(\ab\af)\circ(\ab\af,\cb\cf) = (\ab\af,\ab\af,\cb\cf)$ is an element of $\scrS^{(3)}$. This will be an issue in Section~\ref{subsec-RenSummands}, when we wish to argue by induction that the renormalized summands attached to a tuple $\uut$ in a Neumann expansion is independent of the representative $\uut$. 

The tool we use to overcome the obstruction outlined in this remark is Lemma~\ref{lem-equivalent-tuples} just below, which ensures that given a tuple $\uut$,  we can always find a representative $\uut'$ from the same equivalence class as $\uut$ that does not have the splitting issue that $\uut$ may have. In the example above, we can choose
\[
\uut' = \bigl((\ab\af,\ab\af,\cb\cf),(\ab\cf,\cb\cf)\bigr)\in\scrT^{(4,5)}.
\]
In this case, there are no handed signature strings of maximal length $4$ in $\uut'$. 
\end{rk}

\begin{lem}\label{lem-equivalent-tuples}  Let $n,k\in\NN$ and $\uut = (\us_1,\dotsc,\us_\ell)\in\scrT^{(n+1,k)}$. There exists a $\uut' = (\us'_1,\dotsc,\us'_{\ell'})\in\scrT^{(n+1,k)}$ with $\uut\sim\uut'$, such that
\[
\scrB(\uut) = \scrB(\uut'),\quad \scrE(\uut) = \scrE(\uut'), \quad \scrA(\uut) = \scrA(\uut')
\]
and such that for any $i\in\llbracket 1,\ell'\rrbracket$ with $\us'_i\in\scrS^{(n+1)}$, there exists a split $\us'_i = \us'_{i;1}\circ\us'_{i;2}$ with $\us'_{i;u}\in \scrS^{(j_{i;u})}$, $u=1,2$, and $j_{i;1} + j_{i;2} = j_i$ satisfying
\begin{align}
\label{tuplehelp1}
&\textup{If } i\geq 2, & & j_{i-1}+j_{i;1}\leq n & & \textup{and } \us'_{i;1}\in\scrS^{(j_{i;1})}_\LR, & &\textup{then } \us'_{i-1}\not\in \scrS^{(j_{i-1})}_\Right,\\
\label{tuplehelp2}
&\textup{If } i\leq \ell'-1, & &j_{i;2}+j_{i+1}\leq n & & \textup{and } \us'_{i;2}\in\scrS^{(j_{i;2})}_\LR, & &\textup{then } \us'_{i+1}\not\in \scrS^{(j_{i+1})}_\Left.
\end{align}
\end{lem}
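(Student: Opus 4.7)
The plan is a minimality argument. Among the tuples $\uut' \in \scrT^{(n+1,k)}$ with $\uut' \sim \uut$ and $\scrB(\uut') = \scrB(\uut)$, $\scrE(\uut') = \scrE(\uut)$, $\scrA(\uut') = \scrA(\uut)$, I select one minimizing the number of blocks of length exactly $n+1$. Such a minimizer exists since the set is nonempty (it contains $\uut$) and the quantity is a nonnegative integer. I then argue that every length-$(n+1)$ block of this minimizer admits a split satisfying \eqref{tuplehelp1} and \eqref{tuplehelp2}.

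\textbf{Exhibiting splits for most cases.} For each $\us'_i \in \scrS^{(n+1)}$, I appeal to Proposition~\ref{propo-signdecomp} with $\Split(\us'_i) = \{j_1,\dotsc,j_u\}$. If $\us'_i$ is ambidextrous, part~\ref{item-sd-ambi} guarantees that any split has right-handed first factor and left-handed second factor, so both \eqref{tuplehelp1} and \eqref{tuplehelp2} are vacuous. If $\us'_i$ is right-handed with $u \geq 2$, I choose $j = j_u$: part~\ref{item-sd-right} gives the first factor right-handed and part~\ref{item-sd-dega} gives the second factor left-handed, so again both conditions are vacuous. The case of left-handed $\us'_i$ with $u \geq 2$ is symmetric via parts~\ref{item-sd-left} and~\ref{item-sd-degb}.

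\textbf{The obstruction and rebracketing.} The only potentially problematic configuration is $\us'_i$ right-handed with $\Split(\us'_i) = \{j_1\}$ a singleton and the forced split giving $\us'_{i;2} \in \scrS^{(n+1-j_1)}_\LR$, together with $j_{i;2} + j_{i+1} \leq n$ and $\us'_{i+1} \in \scrS_\Left$. In this case I rebracket $\uut'$ by replacing the pair $(\us'_i, \us'_{i+1})$ with $(\us'_{i;1}, \us'_{i;2} \circ \us'_{i+1})$. Here $\us'_{i;1} \in \scrS^{(j_1)}_\Right$ by Proposition~\ref{propo-signdecomp}\ref{item-sd-right}, and $\us'_{i;2} \circ \us'_{i+1} \in \scrS_\Left$ by Proposition~\ref{prop-signcomp}\ref{item-SignComp1}; both have length at most $n$. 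The new adjacent pairs $(\us'_{i-1}, \us'_{i;1})$ and $(\us'_{i;2} \circ \us'_{i+1}, \us'_{i+2})$ satisfy the non-combinability condition by Lemma~\ref{lem-badcomp}\ref{item-badcomp-right} and~\ref{item-badcomp-left}, respectively, since a right-handed second factor or a left-handed first factor can never complete a handed composition. The invariants are preserved: $\us'_i$ and $\us'_{i;1}$ share their start-index (and both are right-handed), $\us'_{i+1}$ and $\us'_{i;2} \circ \us'_{i+1}$ share their end-index (and both are left-handed), and no ambidextrous block is created or destroyed. But the new tuple has strictly fewer blocks of length $n+1$, contradicting the minimality of $\uut'$. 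The mirror problematic case, $\us'_i$ left-handed with $\us'_{i;1}$ ambidextrous, is resolved analogously by merging $\us'_{i-1}$ with $\us'_{i;1}$.

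\textbf{Main obstacle.} The delicate point is isolating the unique bad configuration via Proposition~\ref{propo-signdecomp} and then verifying that the local rebracketing preserves simultaneously membership in $\scrT^{(n+1,k)}$, the equivalence class under $\sim$, and the full triple $(\scrB, \scrE, \scrA)$; this requires tight bookkeeping of the handedness of every block involved, combining Propositions~\ref{prop-signcomp} and~\ref{propo-signdecomp} with Lemma~\ref{lem-badcomp}.
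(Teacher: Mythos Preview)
Your argument is correct and takes a genuinely different route from the paper. The paper proceeds by direct construction: it fixes an arbitrary split of every length-$(n+1)$ block, collects the indices where \eqref{tuplehelp1} or \eqref{tuplehelp2} would fail into sets $I_\Left$ and $I_\Right$, and then rebrackets all bad blocks simultaneously (transferring the ambidextrous factor to the offending neighbour), keeping the number $\ell$ of blocks unchanged. It then relies on the observations that $i\in I_\Left\Rightarrow i-1\notin I$ and that one cannot have $i-1\in I_\Right$ and $i+1\in I_\Left$ at once, to ensure the simultaneous rebracketings do not interfere.

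You instead run a minimality argument on the number of length-$(n+1)$ blocks, and sharpen the case analysis with Proposition~\ref{propo-signdecomp}\ref{item-sd-dega}--\ref{item-sd-degb}: whenever $|\Split(\us'_i)|\geq 2$ (or $\us'_i$ is ambidextrous), one can always choose a right--left split, making both conditions vacuous; only the singleton-$\Split$ case with an ambidextrous factor can be bad, and then a single local rebracketing strictly decreases the count. This buys you a cleaner argument, since the contradiction-with-minimality framework means you never have to worry about interference between multiple rebracketings. The paper's approach, on the other hand, is more explicitly constructive and terminates in one pass.
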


\begin{proof} We divide the proof into two steps. 

\emph{Step I:} Setting the stage. Let us introduce
    \[
    I=\bigset{ i \in\llbracket 1, \ell \rrbracket}{ \us_i \in \scrS^{(n+1)} }.
    \]
  Observe that from Proposition~\ref{propo-signdecomp}, we know that for any $i \in I$,  there exists $j_{i;1},j_{i;2}\in\NN$ with $j_{i;1}+j_{i;2} = n+1$ as well as $\us_{i;1} \in  \scrS^{(j_{i;1})}$ and $\us_{i;2} \in  \scrS^{(j_{i;2})}$, such that
  \[
   \us_i = \us_{i;1}\circ\us_{i;2}
  \]
  and:
  \begin{itemize}[left=0pt .. \parindent]
      \item if $\us_i \in \scrS^{(n+1)}_\Right$, then we may choose  $\us_{i;1} \in  \scrS^{(j_{i;1})}_\Right$ and $\us_{i;2} \in  \scrS^{(j_{i;2})}_\Left\cup \scrS^{(j_{i;2})}_\LR$,
       \item if $\us_i \in \scrS^{(n+1)}_\Left$, then we may choose $\us_{i;1} \in  \scrS^{(j_{i;1})}_\Right\cup \scrS^{(j_{i;1})}_\LR$ and $\us_{i;2} \in  \scrS^{(j_{i;2})}_\Left$,
       \item if $\us_i \in \scrS^{(n+1)}_\LR$, then we may choose $\us_{i,1} \in  \scrS^{(j_{i;1})}_\Right$ and $\us_{i;2} \in  \scrS^{(j_{i;2})}_\Left$.
  \end{itemize}

  We now define two mutually disjoint subsets of $I$
  \begin{align*}
   I_\Left &= \bigset{i\in I}{i\geq 2,\, \us_{i;1}\in\scrS^{(j_{i;1})}_\LR \textup{ and } \us_{i-1}\in\scrS^{(j_{i-1})}_\Right \textup{ with } j_{i-1}+j_{i;1}\leq n},\\
   I_\Right &= \bigset{i\in I}{i\leq \ell-1,\, \us_{i;2}\in\scrS^{(j_{i;2})}_\LR \textup{ and } \us_{i+1}\in\scrS^{(j_{i-1})}_\Left \textup{ with } j_{i;2}+j_{i+1}\leq n}.
  \end{align*}
  Note that if $I_\Left= I_\Right = \emptyset$, we may simply pick $\uut' = \uut$ and be done with the proof. Note also that if $i\in I_\Left$, then $\us_i$ and $\us_{i;2}$ are left-handed and if $i\in I_\Right$, then $\us_i$ and $\us_{i;1}$ are right-handed. See Lemma ~\ref{lem-badcomp}.
  
  Observe that for $i\in I_\Left$, we must have $i-1\not\in I$, and for $i\in I_\Right$, we must have $i+1\not\in I$. Furthermore, for any $i\in\llbracket 1,\ell\rrbracket$, we cannot have both $i-1\in I_\Right$ and $i+1\in I_\Left$.

\emph{Step II:} We are now in a position to construct $\uut'$.  We define a new element of $\scrT^{(n+1,k)}$,
$\uut' = (\us'_1,\dotsc,\us'_\ell)$, with the same number of elements $\ell$ as $\uut$, by setting
\[
\forall i\in\llbracket 1,\ell\rrbracket:\quad \us'_i = \begin{cases} 
\us_{i;2}, & i \in I_\Left\\
\us_{i}\circ\us_{i+1;1}, & i+1\in I_\Left\\
\us_{i;1}, & i \in I_\Right\\
\us_{i-1;2}\circ\us_{i}, & i-1\in I_\Right\\
\us_i, & \textup{otherwise}.
\end{cases}
\]
Clearly, $\uut\sim\uut'$. Observe that for any $i$, the handedness of $\us_i$ and $\us'_i$ is the same. Furthermore, if $\us_i$ and $\us'_i$ are right-handed, then $b(i;\uut) = b(i,\uut')$, if $\us_i$ and $\us'_i$ are left-handed, then $e(u;\uut) = e(i,\uut')$, and finally if $\us_i$ and $\us'_i$ are ambidexstrous, then nothing has been altered and we have  $\us_i = \us_i'$ and hence,  $(b(i;\uut),e(i;\uut)) = (b(i;\uut'),e(i;\uut'))$. 

This completes the proof, since by construction, $\uut'$ satisfies both \eqref{tuplehelp1} and \eqref{tuplehelp2}.
\end{proof}

\begin{Prop}\label{prop-equivalent-tuples} Let $k,n,n'\in\NN$,
     $\uut\in\scrT^{(n,k)}$ and $\uut' \in \scrT^{(n',k)}$. Suppose that $\uut \sim \uut'$.\footnote{The equivalence relation is meaningful also if $n\neq n'$.} We have the following
    \begin{enumerate}[label = \textup{(\arabic*)}]
        \item\label{item-equivtuples1}  If $n=n'$, then
        \[
        \scrB(\uut) = \scrB(\uut'), \quad \scrE(\uut) = \scrE(\uut'), \quad
        \scrA(\uut) = \scrA(\uut').
        \]
        \item\label{item-equivtuples2} If $n'<n$, then
         \[
        \scrB(\uut) \subset \scrB(\uut') \quad \textup{and} \quad \scrE(\uut) \subset \scrE(\uut').
        \]
    \end{enumerate}
\end{Prop}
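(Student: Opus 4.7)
My plan is to prove Proposition~\ref{prop-equivalent-tuples} by a structural induction that compares the leftmost blocks of the two equivalent tuples, relying on the decomposition structure of Proposition~\ref{propo-signdecomp}, the composition rules of Proposition~\ref{prop-signcomp}, and the obstructions of Lemma~\ref{lem-badcomp}. Throughout, the guiding intuition is that once two equivalent tuples are put side by side, any local discrepancy between their block structures is confined to a region where a handed substring admits multiple splits, and Proposition~\ref{propo-signdecomp} forces all such splits to share the handedness pattern $(\Right,\Left)$ at the endpoints (with the same start of right-handed block and the same end of left-handed block), leaving $\scrB$, $\scrE$, $\scrA$ unaltered.

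For Part~(1), fix $\uut = (\us_1,\dotsc,\us_\ell)$ and $\uut' = (\us'_1,\dotsc,\us'_{\ell'})$ in $\scrT^{(n,k)}$ with $\uut\sim\uut'$, and argue by induction on $\ell+\ell'$. In the inductive step I compare $\us_1$ and $\us'_1$. If they coincide, the induction hypothesis applied to the truncated tuples $(\us_2,\dotsc,\us_\ell)$ and $(\us'_2,\dotsc,\us'_{\ell'})$, regarded as elements of $\scrT^{(n,k-j_1)}$ acting on the suffix, yields the claim after re-indexing positions. Otherwise, WLOG $j_1<j'_1$, so $\us_1$ is a proper prefix of the handed block $\us'_1$; hence $j_1\in\Split(\us'_1)$ and the case-specific consequences of Proposition~\ref{propo-signdecomp} (items~\ref{item-sd-right}, \ref{item-sd-left}, \ref{item-sd-ambi}), combined with Lemma~\ref{lem-badcomp}, which forbids impermissible handedness continuations in $\uut$, let me locate a minimal sub-tuple $(\us_1,\dotsc,\us_m)$ of $\uut$ whose concatenation equals $\us'_1$ and contributes to $\scrB$, $\scrE$, $\scrA$ in exactly the same way as $\us'_1$ alone does in $\uut'$. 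The induction then closes by applying the hypothesis to the remaining suffix beyond position $j'_1$.

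For Part~(2), with $n'<n$, the strategy is to construct a refinement of $\uut$ that lives in $\scrT^{(n',k)}$. Each block $\us_i$ of length $>n'$ admits an internal split by Proposition~\ref{propo-signdecomp}; iterating, I obtain handed sub-blocks of length $\leq n'$, and by items~\ref{item-sd-right} and~\ref{item-sd-left} of that proposition, if $\us_i$ is right-handed (respectively left-handed) the iterated decomposition begins with a right-handed sub-block sharing its starting position (respectively ends with a left-handed sub-block sharing its ending position). After pruning adjacent pairs that violate the $\scrT^{(n',k)}$-constraint by absorbing them into longer handed blocks, I obtain a tuple $\tuut'\in\scrT^{(n',k)}$ with $\tuut'\sim\uut$ satisfying $\scrB(\uut)\subset\scrB(\tuut')$ and $\scrE(\uut)\subset\scrE(\tuut')$. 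Part~(1) applied to $\uut'$ and $\tuut'$, both in $\scrT^{(n',k)}$, then transfers these inclusions to $\uut'$, completing the argument.

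The main obstacle is the inductive case of Part~(1) when the first blocks of $\uut$ and $\uut'$ have very different lengths, which forces me to track how several blocks on one side cover a single block on the other. A clean bookkeeping relies on Proposition~\ref{propo-signdecomp}\ref{item-sd-deg}, describing the degenerate splits that occur when $\Split(\us'_1)$ contains more than one element, to ensure that the intermediate ambidexstrous pieces in the sub-tuple of $\uut$ that covers $\us'_1$ match precisely with the internal structure of $\us'_1$ and do not produce spurious entries in $\scrA(\uut)$ that are absent from $\scrA(\uut')$.
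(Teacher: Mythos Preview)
Your induction scheme for Part~(1) has a structural gap: the alignment claim that some initial sub-tuple $(\us_1,\dotsc,\us_m)$ of $\uut$ has concatenation exactly equal to $\us'_1$ is false in general. The paper's own example in Remark~\ref{rem-tuples} with $n=3$ and underlying string $(\ab\af,\ab\af,\cb\cf,\cb\cf)$ already shows this. With your WLOG convention $j_1<j'_1$, take $\uut = ((\ab\af),(\ab\af,\cb\cf,\cb\cf))$ and $\uut' = ((\ab\af,\ab\af,\cb\cf),(\cb\cf))$, so $j_1=1$ and $j'_1=3$. The cumulative block lengths in $\uut$ are $1$ and $4$, neither equal to $3$, so no $m$ with $\us_1\circ\cdots\circ\us_m = \us'_1$ exists and you cannot peel off $\us'_1$ and recurse on aligned suffixes. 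Nor can you simply split $\us'_1$ at $j_1$ and work with the refined tuple: in the example the suffix $(\ab\af,\cb\cf)$ composes with $\us'_2=(\cb\cf)$ to a handed string of length $3\le n$, so the refinement falls out of $\scrT^{(n,k)}$. Proposition~\ref{propo-signdecomp} and Lemma~\ref{lem-badcomp} control how a single handed block decomposes, but they do not force the block boundaries of two equivalent tuples in $\scrT^{(n,k)}$ to coincide anywhere between the global endpoints.

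The paper circumvents this by inducting on $n$ rather than on $\ell+\ell'$. Given $\uut,\uut'\in\scrT^{(n+1,k)}$, it first invokes Lemma~\ref{lem-equivalent-tuples} to replace each by an equivalent tuple with the same $\scrB,\scrE,\scrA$ whose length-$(n{+}1)$ blocks admit splits satisfying an extra compatibility condition; those blocks are then split to produce $\tuut,\tuut'\in\scrT^{(n,k)}$, where the induction hypothesis gives $\scrB(\tuut)=\scrB(\tuut')$, etc. One then chases a putative $b\in\scrB(\uut)\setminus\scrB(\uut')$ through $\tuut$ and $\tuut'$ to a contradiction with the tuple constraint. This reduction never assumes any alignment of block boundaries between the two tuples. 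Your Part~(2) sketch is close in spirit to the paper's Step~I of this reduction, and the construction there does yield the inclusions~\eqref{recursiontotau2}; but your final transfer step depends on Part~(1), so the gap propagates.
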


\begin{proof} 
We begin with \ref{item-equivtuples1}. In fact \ref{item-equivtuples2} will then follow from the proof of \ref{item-equivtuples1} for $n'<n$.

The proof goes by induction after $n$. If $n=1$, the statement is obvious. Hence we assume that \ref{item-equivtuples1} is true for some $n\geq 1$ and we must then verify  \ref{item-equivtuples1} with $n$ replaced by $n+1$.

Let $\uut,\uut'\in \scrT^{(n+1,k)}$ with $\uut\sim\uut'$. 
By Lemma~\ref{lem-equivalent-tuples}, we may assume that both $\uut$ and $\uut'$ satisfy the properties \eqref{tuplehelp1} and \eqref{tuplehelp2}.

\emph{Step I:} In this step, we produce $\tuut = (\tus_1,\dotsc,\tus_{\tell}),\tuut'= (\tus'_1,\dotsc,\tus'_{\tell'})\in \scrT^{(n,k)}$ with $\tuut\sim \uut, \tuut'\sim\uut'$ and hence $\tuut\sim\tuut'$.

It suffices to discuss how to generate $\tuut$, since the same construction will apply to obtain $\tuut'$.

We will be working in the auxiliary set of tuples \eqref{auxtuples}, denoted by $\scrT^{(n+1,k)}_\mathrm{A}$, which appeared in the proof of Proposition~\ref{prop-tuple-bijection}. Note that
\begin{align*}
\scrT^{(n+1,k)} & \subset \scrT_\mathrm{A}^{(n+1,k)} \\
\scrT^{(n,k)} & = \Bigset{\uut\in\scrT^{(n+1,k)}_\mathrm{A}}{\forall i\in\llbracket 1,\ell\rrbracket:\, j_i\leq n}.
\end{align*}

Set
\[
I(\uut)  = \bigset{i\in\llbracket 1,\ell\rrbracket}{j_i = n+1}.
\]
For $i\in I(\uut)$, we use the splitting of $\us_i = \us_{i;1}\circ \us_{i;2}$ coming from Lemma~\ref{lem-equivalent-tuples}.

Our strategy is to replace $\uut$ by another tuple $\uutau\in\scrT^{(n+1,k)}_\mathrm{A}$ satisfying  
 \begin{equation}\label{recursiontotau}
    \uut\sim\uutau,\quad     \scrB(\uut) \subset \scrB(\uutau) \quad \textup{and} \quad \scrE(\uut) \subset \scrE(\uutau),
        \end{equation}
        such that $I(\uutau)\subsetneq I(\uut)$. 
Indeed, we define $\uutau= (\usigma_1,\dotsc,\usigma_{\ell+1})$ by the reverse procedure as in the proof of Proposition~\ref{prop-tuple-bijection}.
Fix an $i_0\in I(\uut)$ and define:
\[
\forall i\in\llbracket 1,\ell+1\rrbracket:\quad \usigma_i = \begin{cases} 
\us_{i_0;1}, & i = i_0\\
\us_{i_0:2}, & i = i_0+1\\
\us_{i}, & i <i_0\\
\us_{i-1}, & i> i_0+1.
\end{cases}
\]
Then indeed, $\uutau\in\scrT^{(n+1,k)}_\mathrm{A}$ by Lemma~\ref{lem-equivalent-tuples}, the cardinality of $I(\uutau)$ is one less than that of $I(\uut)$ and \eqref{recursiontotau} holds true.

This sets up a recursive procedure, we can use to reach, in finitely many steps, a $\tuut\in\scrT^{(n+1,k)}_\mathrm{A}$ with $I(\tuut) = \emptyset$, and hence a $\tuut\in\scrT^{(n,k)}$ satisfying
 \begin{equation}\label{recursiontotau2}
    \uut\sim \tuut,\quad    \scrB(\uut) \subset \scrB(\tuut) \quad \textup{and} \quad \scrE(\uut) \subset \scrE(\tuut).
        \end{equation}        
The same procedure can be run on $\uut'$ to obtain a $\tuut'$ with the same properties.

\emph{Step II:} Invoking the induction hypothesis. Let $b\in \scrB(\uut)\subset \scrB(\tuut)$. 
We proceed -- via $\tuut$ and $\tuut'$ -- to argue that $b\in\scrB(\uut')$ as well.

Since $b\in\scrB(\uut)$, there exists $i\in\llbracket 1,\ell\rrbracket$ such that
$\us_i$ is right-handed with $b = b(i;\uut)$. Similarly, since $b\in\scrB(\tuut)$, there exists $\ti\in\llbracket 1,\tell\rrbracket$ such that $\tus_{\ti}$ is right-handed with $b = b(\ti;\tuut)$.

By the induction hypothesis $b\in \scrB(\tuut')$ and there exists $\ti'\in \llbracket 1,\tell'\rrbracket$, such that $\tus'_{\ti'}$ is right-handed with $b = b(\ti';\tuut')$.
If $b\in \scrB(\uut')$ there is nothing to prove, so we assume towards a contradiction that $b\not\in \scrB(\uut')$. But then, since $b\in\scrB(\tuut')$, the index $b$ must be the start of a left-handed or ambidexstrous string of length $n+1$ that is split, when passing from $\uut'$ to $\tuut'$ during the construction of $\tuut$ and $\tuut'$ in Step I. 
That is, there must exist an $i'\in\llbracket 1,\ell'\rrbracket$, such that $\us'_{i'}\in \scrS^{(n+1)}_{\Left}\cup \scrS^{(n+1)}_\LR$ and some splitting $\us'_{i'} = \us'_{i';1}\circ \us'_{i';2}$ with $\us'_{i';1} = \tus'_{\ti'}$ right-handed and $\us'_{\ti';2} = \tus'_{\ti'+1}$ left-handed. Here we used Proposition~\ref{propo-signdecomp}~\ref{item-sd-left} and~\ref{item-sd-ambi} to argue that $\us'_{\ti',2}$ must be left-handed. 


Since $\tus'_{\ti'+1}$ is left-handed, we have $b+n = e(\ti'+1;\tuut')\in\scrE(\tuut')$. Again, invoking the induction hypothesis, there exists a $\tj \in \llbracket \ti+1,\tell\rrbracket$, such that $\tus_{\tj}$ is left-handed and
$e(\tj;\tuut) = b+n$.  

We observe that $\tj = \ti+1$, i.e. the left-handed $\tus_{\tj}$ comes immediately after the right-handed $\tus_{\ti}$. Indeed, if this was not the case, then the sequence $\tus_{\ti}, \tus_{\ti+1},\dotsc,\tus_{\tj}$ would contain at least three elements. Since the sequence begins with a right-handed string and ends with a left-handed string, there must at least be one consecutive pair of string $\tus_{u},\tus_{u+1}$ that can be composed following one the composition rules in Proposition~\ref{prop-signcomp}. But then $\tus_{u}\circ\tus_{u+1}$ is a handed signature of length at most $n$, which would be a contradiction with $\tuut\in\scrT^{(n,k)}$.

In conclusion $\tus_{\ti}\circ \tus_{\ti+1} = (s_b\dotsc,s_{b+n})$, which was not a right-handed string. This means that the right-handed string $\us_{i}$ that began at $b$ does not have length $n+1$. This implies that it has not been split during the passage from $\uut$ to $\tuut$ and we therefore must have $\us_i = \tus_{\ti}$. 

But this would absurdly imply that $\us_{i+1} = \tus_{\ti+1}$ 
and $\us_i\circ\us_{i+1}\in\scrS^{(n+1)}$. We have arrived at a contradiction and can conclude that $\scrB(\uut)\subset \scrB(\uut')$. But we could have started with $\uut'$ as well, and hence $\scrB(\uut) = \scrB(\uut')$.

If $i\in \scrE(\uut)$, we can similarly argue that $i\in \scrE(\uut')$ and hence, $\scrE(\uut) = \scrE(\uut')$. In the process we have also established
\ref{item-equivtuples2}.

\emph{Step III:} It remains to prove that $\scrA(\uut) = \scrA(\uut')$. Note that this is still part of the induction step.

Let $(b,e) \in\scrA(\tuut)$. That is, there exists $i\in \llbracket 1,\ell\rrbracket$, such that $\us_i\in \scrS^{(e-b+1)}_\LR$ and
$e-b+1\leq n+1$.

Assume first that $e-b+1 \leq n$. Then $\us_i$ is not split in the passage from $\uut$ to $\tuut$ and there exists $\ti\in \llbracket 1,\tell\rrbracket$, such that $\us_i = \tus_\ti$. Hence $(b,e) \in\scrA(\tuut)$. By the induction hypothesis, we must have $(b,e)\in\scrA(\tuut')$ and there exists $\ti'\in\llbracket 1,\tell'\rrbracket$, such that $\tus'_{\ti'} = \tus_\ti$, $b(\ti';\tuut') = b$ and $e(\ti';\tuut') = e$.

If $(b,e) \in\scrA(\uut')$ there is nothing to prove, so we may assume towards a contradiction that $(b,e)\not\in\scrA(\uut')$.
This means that $\tus'_{\ti'}$ has come from $\uut'$ by splitting a handed signature of length $n+1$. That is, there exists $i'\in\llbracket 1,\ell'\rrbracket$, such that $\us'_{i'}\in \scrS^{(n+1)}$ and one may split
$\us'_{i'} = \us'_{i';1}\circ \us'_{i';2}$, using the split we know exists from Lemma~\ref{lem-equivalent-tuples}. One of the two components is our ambidexstrous $\tus'_{\ti'}$. We assume that $\tus'_{\ti'} = \us'_{i';1}$, with the other case being similar. But then $\us'_{i';2}$ must be left-handed and hence $b+n\in\scrE(\uut')$. By what has already been established, we conclude that $b+n\in \scrE(\uut)$ as well. Hence there exists $j\in\llbracket i+1,\ell\rrbracket$, such that $\us_j$ is left-handed with $e(j;\uut) = b+n$.

We have reached a contradiction, since it has to be possible to compose two consecutive signature strings between $\us_i$ and $\us_j$ to obtain a handed string of length at most $n+1$, which would be absurd.

Next assume that $e-b+1 = n+1$, such that $e = b+n$. Then the ambidexstrous signature string $\us_i$ is split in two, using Lemma~\ref{lem-equivalent-tuples}, when passing from $\uut$ to $\tuut$. This means that there exists 
$\ti\in\llbracket 1,\tell\rrbracket$, such that $\us_i = \tus_{\ti}\circ \tus_{\ti+1}$, $\tus_{\ti}$ is right-handed and $\tus_{\ti+1}$ is left-handed.

But this again means that $b = b(\ti;\tuut)\in\scrB(\tuut)$ and $e = e(\ti+1;\tuut)\in\scrE(\tuut)$.
By the induction hypothesis, we have $b\in \scrB(\tuut')$ and $e\in\scrE(\tuut')$. Let $\ti'<\tj'$ be such that $\tus'_{\ti'}$ is right-handed, $\tus'_{j'}$ is left-handed, $b(\ti';\tuut')=b$ and $e(\tj';\tuut')=e$.

As in Step III, we may argue that $\tj' = \ti'+1$, since it would otherwise be possible to compose two consecutive strings between $\tus'_{\ti'}$ and
$\tus'_{\tj'}$ and get a handed string of length at most $n$.

By what was just proved, we know that $b\not\in \scrB(\uut')$ and $b+n\not\in \scrE(\uut')$. But this implies that $\tus'_{\ti'}$ and $\tus_{\ti'+1}$ must have been produced by splitting a signature string of length $n+1$, when passing from $\uut'$ to $\tuut'$. Since $\tuut'_{\ti'}$ is right-handed, it cannot be the second component in a split and $\tuut'_{\ti'+1}$ is right-handed and cannot be the first component in a split, cf.~Lemma \ref{lem-badcomp}. 

The only remaining option is that there exits an $i'\in\llbracket 1,\ell'\rrbracket$, such that $\us'_{i'} = \tus'_{\ti'}\circ \tus'_{\ti'+1} = \us_i$ and is ambidexstrous. Hence $(b,e)\in \scrA(\uut')$ and we are done.
\end{proof}

\section{The Renormalized Resolvent Expansion}\label{Sec-ResExp}

 We write 
 \[
 \Hfin = \bigset{\psi\in\scrH}{\exists n\in\NN:\, \one_{[N\geq n]}\psi = 0}
 \]
 for the 
 vector-space of finite particle states. Finally, let
\[
\scrL_\fin = \scrL (\Hfin;\Hfin)
\]
denote the complex algebra of linear operators $T\colon \Hfin \mapsto \Hfin$. 
We use the following notation for the closed complex left half-plane, with and without zero,
\[
\CC_- =\bigset{z\in \CC}{ \re(z)\leq 0 } \qquad \textup{and}\qquad \CC_-^* = \CC_-\setminus \{0\}.
\]

We will be using the notation
\begin{equation}\label{F-tuples}
    \uF^{(k)} = (F_1,\dotsc,F_k) \in \bigl( L^2(\RR^d\times\RR^d)\bigr)^k 
\end{equation}
for $k$-tuples of square-integrable functions of fermion and boson momenta.

\begin{Def} For $k\in\NN$, we define $\scrM^{(k)}$ to be the vector-space of complex multi-linear functions $M$ of $k$ variables 
\[
\bigl( L^2(\RR^d\times\RR^d)\bigr)^k \ni \uF^{(k)}\mapsto M(\uF^{(k)})\in\scrL_\fin.
\]    
\end{Def}

\subsection{Renormalized handed blocks of operators}\label{subsec-RenBlocks}

Let $\us = (s_1,\dotsc,s_k)\in\scrS^{(k)}$, a handed signature string of length $k$. 
The purpose of this subsection is to recursively define, for each such handed signature string, renormalized versions of the operator product
\[
H^{s_1}(F_1) R_0(z) H^{s_2}(F_2) \cdots R_0(z) H^{s_k}(F_k),
\]
where $z\in\CC_-$, $F_i\in L^2(\RR^d\times\RR^d)$ and the $H^{s_i}$'s are the interaction terms from \eqref{interactionterms}, depending on which of the four signatures $s_i\in\{\ab\af,\ab\cf,\cb\af, \cb\cf\}$ we have.
Note that for any signature $s$ and $F\in L^2(\RR^d\times\RR^d)$, we have 
$H^s(F)\in\scrL_\fin$. Likewise, if $z\in\CC_-^*$, then $R_0(z)\in\scrL_\fin$ and even for $z\in\CC_-$, we have $\one_{[N\geq 1]}R_0(z)\in\scrL_\fin$.

\begin{Def}[Renormalized handed blocks of operators]\label{def-remblocks} For $s\in\{\ab\af,\ab\cf,\allowbreak\cb\af,\cb\cf\}$, $z\in\CC_-$ and $F\in L^2(\RR^d\times\RR^d)$, we set
\[
T^{(1)}_{(s)}(z;F) = -H^s(F).
\]
For $k\in\NN$ and $z\in\CC_-$, We recursively define operators $T^{(k)}_\us(z;\cdot)\in\scrM^{(k)}$ associated with handed strings $\us\in\scrS^{(k)}$ as follows. Assume 
$T^{(k')}_{\us'}(z;\cdot)$ has been constructed for all $k'\leq k$ and $\us'\in\scrS^{(k')}$. Let $\us\in\scrS^{(k+1)}$. Pick a split $j\in\Split(\us)$ and write $\us = \us'\circ\us''$ with 
$\us'\in\scrS^{(k')}_\Right\cup \scrS^{(k')}_\LR$ and $\us''\in\scrS^{(k'')}_\Left\cup \scrS^{(k'')}_\LR$, $k'+k'' = k+1$ and not both of them ambidextrous. See Proposition~\ref{propo-signdecomp}.
For $\us\in\scrS^{(k+1)}_\Right\cup\scrS^{(k+1)}_\Left$, we define
\[
T^{(k+1)}_{\us}(z;F_1,\dotsc,F_{k+1})= T^{(k')}_{\us'}(z;F_1,\dotsc,F_{k'}) R_0(z) T^{(k')}_{\us'}(z;F_{k'+1},\dotsc,F_{k+1}).
\]
For $\us\in \scrS^{(k+1)}_\LR$ ($\us$ is ambidexstrous), then we define
\begin{equation*}
T^{(k+1)}_{\us,\bare}(z;F_1,\dotsc,F_{k+1})= T^{(k')}_{\us'}(z;F_1,\dotsc,F_{k'}) R_0(z) T^{(k')}_{\us'}(z;F_{k'+1},\dotsc,F_{k+1})
\end{equation*}
and finally:
\begin{equation}\label{handedrenormblock}
T^{(k+1)}_{\us}(z;\uF^{(k+1)})= T^{(k+1)}_{\us,\bare}(z;\uF^{(k+1)}) -\bigl\langle \Omega \,\big\vert\, T^{(k+1)}_{\us,\bare}(0;\uF^{(k+1)})\Omega\bigr\rangle.
\end{equation}
\end{Def} 

There are two issues to consider regarding Definition~\ref{def-remblocks}. 
The first is to observe that the operators constructed permit one to take $z=0$. A priori, they only makes sense for $z\in \CC_-^*$.
The second issue is independence of the choice of splitting $j\in\Split(\us)$ in the recursive construction, to ensure that the objects are canonical. We address these issues in the following two remarks.

\begin{rk}\label{rem-nosing}  Observe that one may readily establish the following identities as one proceeds with the recursive construction
\begin{align*}
 N_\boson T^{(k)}_\us(z;\uF^{(k)})& = T^{(k)}_\us(z;\uF^{(k)})\bigl(N_\boson + n_\ab(1,k;\us)\one\bigr)\\
N_\fermion T^{(k)}_\us(z;\uF^{(k)}) & =  T^{(k)}_\us(z;\uF^{(k)})\bigl(N_\fermion + n_\af(1,k;\us)\one\bigr).
\end{align*}
Here the counting functions $n_\ab$ and $n_\af$ were defined in Definition~\ref{def-countingfunctions}.

Recalling Definition~\ref{def-handedsignatures}, we may conclude from the intertwining relations above that for $k\in\NN$
\[
\begin{aligned}
&\forall \us\in\scrS^{(k)}_\Left: & &  \vert \Omega\rangle\langle \Omega\vert T^{(k)}_\us(z;\uF^{(k)}) = 0,\\
&\forall \us\in\scrS^{(k)}_\Right: & &    T^{(k)}_\us(z;\uF^{(k)})\vert \Omega\rangle\langle \Omega\vert = 0.
\end{aligned}
\]
From this it now follows, as part of the recursive construction, that one may indeed take $z=0$ in $T^{(k)}_{\us}(z;\uF^{(k)})$.
\end{rk}

\begin{rk}\label{rem-SplitIndependence} That the recursive definition of the $T^{(k)}_\us$'s is independent on the choice of split $j\in\Split(\us)$, may be seen as follows. Assume that for some $k\in\NN$ and all $k'\leq k$ and $\us'\in \scrS^{(k')}$, the definition of $T^{(k')}_{\us'}$ is independent of the choice of split $j'\in\Split(\us')$. (For $k=1$ this is trivially satisfied.)

Let now $\us\in\scrS^{(k+1)}$. If $\Split(\us)$ is a singleton, then we do not have much choice and there is nothing to prove. (Always the case if $k+1=2$.) Hence we may assume that $\Split(\us) = \{j_1,\dotsc,j_u\}$ with $u\geq 2$ and $1\leq j_1<j_2\cdots <j_u\leq k$. 

For $v\in\llbracket 1, u\rrbracket$, write $\us_v' = (s_1,\dotsc,s_{j_v})\in\scrS^{(j_v)}_\Right$ and $\us_v'' = (s_{j_v+1},\dotsc,s_{k+1})\in\scrS^{(k+1-j_v)}_\Left$ (cf. Proposition~\ref{propo-signdecomp}). Furthermore, if $v\leq u-1$, write $\usigma_v = (s_{j_v+1},\dotsc,s_{j_{v+1}})\in\scrS^{(j_{v+1}-j_v)}_\LR$.
Using the induction hypothesis, we may compute for $1\leq v\leq u-1$:
\begin{align*}
& T^{(j_{v})}_{\us'_v}(z,F_1,\dotsc,F_{j_v})R_0(z)T^{(k+1-j_v)}_{\us''_v}(z;F_{j_{v}+1},\dotsc,F_{k+1})\\
 & \quad  = T^{(j_{v})}_{\us'_v}(z,F_1,\dotsc,F_{j_v})R_0(z)\\
 & \qquad \Bigl\{T^{(j_{v+1}-j_v+1)}_{\usigma_v}(z;F_{j_v+1},\dotsc,F_{j_{v+1}}) R_0(z) T^{(k+1-j_{v+1})}_{\us''_{v+1}}(z;F_{j_{v+1}+1},\dotsc,F_{k+1})\Bigr\}\\
  & \quad  = \Bigl\{ T^{(j_{v})}_{\us'_v}(z,F_1,\dotsc,F_{j_v})R_0(z)T^{(j_{v+1}-j_v+1)}_{\usigma_v}(z;F_{j_v+1},\dotsc,F_{j_{v+1}})\Bigr\} \\
  &  \qquad R_0(z) T^{(k+1-j_{v+1})}_{\us''_{v+1}}(z;F_{j_{v+1}+1},\dotsc,F_{k+1})\\
  &  \quad = T^{(j_{v+1})}_{\us'_{v+1}}(z,F_1,\dotsc,F_{j_{v+1}})R_0(z) T^{(k+1-j_{v+1})}_{\us''_{v+1}}(z;F_{j_{v+1}+1},\dotsc,F_{k+1}).
\end{align*}
From this identity, it readily follows that $T^{(k+1)}_\us$ does not depend on the choice $j_v\in\Split(\us)$ of split.  
\end{rk}

\begin{Ex} As an illustration, recalling Remark~\ref{rem-signs}~\ref{rem-signslength2}, the recursive construction for $k=2$ yields the following operators:
\[
\begin{aligned}
&(\rightarrow) &  T^{(2)}_{(\ab\af,\cb\af)}(z;F_1,F_2) & = H^{\ab \af}(F_1) R_0(z)H^{\cb \af}(F_2), \\
&(\leftarrow) &  T^{(2)}_{(\ab\cf,\cb\cf)}(z;F_1,F_2)& =  H^{\ab \cf}(F_1) R_0(z) H^{\cb \cf}(F_2),\\
&(\leftrightarrow,\bare) &  T^{(2)}_{(\ab\af,\cb\cf),\bare}(z;F_1,F_2) & = H^{\ab \af}(F_1)  R_0(z) H^{\cb \cf}(F_2), \\
&(\leftrightarrow,\bare) &  T^{(2)}_{(\ab\cf,\cb\af),\bare}(z;F_1,F_2) & = H^{\ab \cf}(F_1)R_0(z) H^{\cb \af}(F_2), \\
&(\leftrightarrow) &   T^{(2)}_{(\ab\af,\cb\cf)}(z;F_1,F_2) & =  H^{\ab \af}(F_1)  R_0(z) H^{\cb \cf}(F_2)\\
& & &\ - \bigl\langle \Omega\,\big\vert\, H^{\ab \af}(F_1)  R_0(0) H^{\cb \cf}(F_2)  \Omega\bigr\rangle\cdot\one, \\
&(\leftrightarrow) &   T^{(2)}_{(\ab\cf,\cb\af)}(z;F_1,F_2) &=  H^{\ab \cf}(F_1)R_0(z) H^{\cb \af}(F_2).
\end{aligned}
\]
The arrows on the far left above, indicate the handedness of the operator. The notation $(\LR,\bare)$, indicates an ambidexstrous operator before subtracting counter-term. The counter-term associated with the ambidexstrous signature string $(\ab\cf,\cb\af)$ is equal to zero.
Note that 
\begin{equation}
   -\bigl\langle \Omega, H^{\ab \af}(\overline{F_1})  R_0(0) H^{\cb \cf}(F_2) \Omega\bigr\rangle =  -\int \frac{\overline{F_1(k,q)} F_2(k,q)}{\wf(k) + \wb(q)} dk dq
\end{equation}
is the counter-term $E(F_1,F_2)$ from \cite{AlvaMoll2021} and
$E\bigl(G^{(2)}_\Lambda,G^{(2)}_\Lambda\bigr) = E^{(2)}_\Lambda$, the counter-term from \eqref{E2Lambda}.
\end{Ex}

\begin{lem}\label{lem-basicblockbound} Let $k\in\NN$, There exists a constant $C$ 
only depending on $k$ and the masses $\mb$ and $\mf$, such that for any $z\in \CC_-$ and $\uF^{(k)}\in (L^2(\RR^d\times\RR^d))^k$, we have the following
\begin{enumerate}[label = \textup{(\arabic*)}]
    \item\label{item-bb-k1}  If $\us\in\scrS^{(1)}_\Right$, then $T^{(1)}_\us(z;F_1)(N_\boson+1)^{-\frac12}$ and $(N_\boson+1)^{-\frac12} T^{(1)}_{\us^*}(z;F_1)$ extend from $\Hfin$ to a bounded operators on $\scrH$ with
    \[
    \bigl\| T^{(1)}_\us(z;F_1)(N_\boson+1)^{-\frac12}\bigr\|\leq C\|F_1\|, \quad
    \bigl\|(N_\boson+1)^{-\frac12} T^{(1)}_{\us^*}(z;F_1)  \bigr\|\leq C\|F_1\|
    \]
    \item\label{item-bb-kgeneral} For $k'\in\llbracket 2,k\rrbracket$ and $\us\in\scrS^{(k')}$, the operators $T^{(k')}_\us(z;\uF^{(k')})$ extend from $\Hfin$ to bounded operators on $\scrH$ and
    \[
    \bigl\|T^{(k')}_\us(z;\uF^{(k')})\bigr\|\leq C\|F_1\|\cdots\|F_{k'}\|.
    \] 
    \item\label{item-bb-bare} For $k'\in\llbracket 2,k\rrbracket$ and $\us\in\scrS^{(k')}_\LR$, the operators $T^{(k')}_{\us,\bare}(z;\uF^{(k')})$ extend from $\Hfin$ to bounded operators on $\scrH$ and
    \[
    \bigl\|T^{(k')}_{\us,\bare}(z;\uF^{(k')})\bigr\|\leq C\|F_1\|\cdots\|F_{k'}\|.
    \] 
    \end{enumerate}
\end{lem}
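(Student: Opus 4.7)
The plan is to prove \ref{item-bb-k1} directly using standard Fock-space bounds, and then establish \ref{item-bb-kgeneral} and \ref{item-bb-bare} jointly by induction on the block length $k'$, exploiting the recursive Definition~\ref{def-remblocks} together with the vanishing-on-vacuum properties recorded in Remark~\ref{rem-nosing}.

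For \ref{item-bb-k1}, if $\us\in\scrS^{(1)}_\Right$ then the signature is $\ab\af$ or $\ab\cf$, so $T^{(1)}_\us(z;F)=-H^s(F)$ carries exactly one boson annihilation against one bounded smeared fermion operator. Writing $H^s(F)=\int dq\,\ab(q)\,o(F(\cdot,q))$ with $o\in\{\af,\cf\}$ the relevant smeared fermion operator, I would combine the standard pull-through estimate $\|\ab(g)(N_\boson+1)^{-\frac12}\|\leq\|g\|_{L^2}$ with the uniform bound $\|\af(f)\|,\|\cf(f)\|\leq\|f\|_{L^2}$ and Fubini/Cauchy--Schwarz in the $q$-integration to conclude $\|T^{(1)}_\us(z;F)(N_\boson+1)^{-\frac12}\|\leq\|F\|$; the second inequality is the adjoint statement applied to $\us^*\in\scrS^{(1)}_\Left$.

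For the inductive step, suppose \ref{item-bb-kgeneral} and \ref{item-bb-bare} are established at all block lengths below $k'$. Given $\us\in\scrS^{(k')}$, pick any $j\in\Split(\us)$ (non-empty by Proposition~\ref{propo-signdecomp}) and write $\us=\us'\circ\us''$. The recursive definition gives
\[
T^{(k')}_{\us,\bare}(z;\uF^{(k')})=T^{(j)}_{\us'}(z;F_1,\dots,F_j)\,R_0(z)\,T^{(k'-j)}_{\us''}(z;F_{j+1},\dots,F_{k'}),
\]
with $T^{(k')}_\us=T^{(k')}_{\us,\bare}$ outside the ambidextrous case. By Proposition~\ref{propo-signdecomp}, at least one of the two factors is handed in the required direction: if $\us'$ is right-handed then $T^{(j)}_{\us'}(z;\cdot)\Omega=0$, while if $\us''$ is left-handed then $\langle\Omega,T^{(k'-j)}_{\us''}(z;\cdot)\psi\rangle=0$ for every $\psi$, both by Remark~\ref{rem-nosing}. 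Either way the central resolvent may be replaced by $R_0(z)\one_{[N\geq 1]}$, whose norm is bounded by $1/m$, with $m:=\min(\mb,\mf)$, uniformly in $z\in\CC_-$, using $H_0\geq m$ on $\{N\geq 1\}$ and $|\lambda-z|\geq\lambda$ for $\lambda\geq 0$ and $\re z\leq 0$. When $j\geq 2$ and $k'-j\geq 2$, the induction hypothesis bounds both outer factors and delivers the required estimate on $T^{(k')}_{\us,\bare}$; the scalar counter-term in the ambidextrous case of \ref{item-bb-kgeneral} is in turn controlled by the same estimate at $z=0$.

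The boundary cases $j=1$ (or symmetrically $k'-j=1$), in which a single-vertex factor is not bounded on its own, are handled by inserting matched powers of the number operator. One factors
\[
T^{(1)}_{\us'}(z;F_1)R_0(z)T^{(k'-1)}_{\us''}(z;\cdot)=\bigl[T^{(1)}_{\us'}(z;F_1)(N_\boson+1)^{-\frac12}\bigr]\bigl[(N_\boson+1)^{\frac12}R_0(z)\one_{[N\geq 1]}\bigr]T^{(k'-1)}_{\us''}(z;\cdot),
\]
controlling the first bracket by \ref{item-bb-k1}, the middle bracket by noting that $N_\boson$ commutes with $R_0(z)$ and $H_0\geq\mb N_\boson$ together yield $\|(N_\boson+1)^{\frac12}R_0(z)\one_{[N\geq 1]}\|\leq C(\mb,\mf)$ via the elementary scalar estimate $(n+1)^{\frac12}/|\lambda-z|\leq\max(1/\mf,\sqrt{2}/\mb)$ at each joint eigenvalue $(n,\lambda)$ of $(N_\boson,H_0)$ in $\{N\geq 1\}$, and the last factor by the induction hypothesis (or again by \ref{item-bb-k1} when $k'-1=1$, using a symmetric $(N_\boson+1)^{\pm\frac12}$ insertion on the right). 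The main obstacle throughout is the behaviour of $R_0(z)$ at $z=0$, where it is globally unbounded; it is precisely the handedness conditions, through Remark~\ref{rem-nosing} and Proposition~\ref{propo-signdecomp}, that project out the vacuum on one side of every occurrence of $R_0(z)$ and keep all the estimates uniform on the closed half-plane $\CC_-$.
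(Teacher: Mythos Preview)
Your proposal is correct and follows essentially the same approach as the paper's proof: direct Fock-space estimates for the length-$1$ blocks, followed by induction on $k'$ using the recursive Definition~\ref{def-remblocks}, the vacuum-annihilation properties from Remark~\ref{rem-nosing}, and extraction of $(N_\boson+1)^{-1/2}$ from the sandwiched resolvent whenever a length-$1$ factor appears. Your write-up is in fact more detailed than the paper's terse sketch, making explicit the role of Proposition~\ref{propo-signdecomp} in guaranteeing that at least one factor in every split is suitably handed, and spelling out the uniform bound on $(N_\boson+1)^{1/2}R_0(z)\one_{[N\geq 1]}$.
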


\begin{proof}  The claim \ref{item-bb-k1} with $\us\in\scrS^{(1)}$ follows from the following computation for $\psi,\varphi\in\Hfin$ and $F\in L^2(\RR^d,\RR^d)$:
\begin{align*}
\bigl|\bigl\langle \psi, H^{\ab\af^*}(F)\psi\bigr\rangle \bigr| & \leq 
 \int \bigl|\bigl\langle \psi, \af^*(F(\cdot,q))\ab(q)\varphi\bigr\rangle \bigr|dq\\
 & \leq  \int \|F(\cdot,q)\| \|\ab(q)\varphi\|dq \|\psi\|\\
 &\leq \|F\| \|N^{\frac12} \varphi\|\|\psi\|.
\end{align*}
Here we used that $\|\af^*(h)\|\leq \|h\|$ for any $h\in L^2(\RR^d)$.
For $H^{\ab\af}(F)$, the computation is the same, except for keeping track of complex conjugations: $\int F(k,q) \af(k)dk = \af( \overline{F(\cdot,q)})$, following standard convention for smeared annihilation operators.
The remaining two cases follows from passing to adjoints under the integral sign and thereby switching the roles of $\psi$ and $\varphi$.

 For $k\geq 2$, the claims \ref{item-bb-kgeneral} and \ref{item-bb-bare} follow easily from the recursive definition of the $T^{(k)}_\us$'s by induction after $k$. Observe that if one should need a factor of $(N_\ab+1)^{-\frac12}$, in case one of the factors in a splitting $\us = \us'\circ\us''$ has length $1$, then it can be extracted from the free resolvent $R_0(z)$ sandwiched in the middle at the cost of a factor $\mb^{-1/2}$. Recall Remark~\ref{rem-nosing}.
\end{proof} 

\begin{Cor} Let $k\in\NN$ and $\us\in\scrS^{(k)}$. Then for any $z\in\CC_-$ and $\uF^{(k)}\in (L^2(\RR^d\times\RR^d))^k$, we have
\[
\bigl(T^{(k)}_{\us}(z;F_1,\dotsc,F_k)\bigr)^*_{\vert \Hfin} = 
T^{(k)}_{\us^*}(\bar{z};\overline{F}_k,\dotsc,\overline{F}_1).
\]
\end{Cor}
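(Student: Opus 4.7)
\smallskip

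The plan is to proceed by strong induction on $k$, mirroring the recursive structure of Definition~\ref{def-remblocks}.

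\emph{Base case $k=1$.} Here $T^{(1)}_{(s)}(z;F) = -H^s(F)$, so the claim reduces to verifying $H^s(F)^* = H^{s^*}(\overline{F})$ on $\Hfin$ for each of the four signatures $s$. Using the involution from Definition~\ref{def-stringinvolution}, the canonical anticommutation relations, and the fact that bosonic operators commute with fermionic ones across the tensor factors, a direct computation against vectors of $\Hfin$ gives, e.g., for $s=\ab\af$,
\[
H^{\ab\af}(F)^* = \int \overline{F(k,q)}\,\cb(q)\cf(k)\,dk\,dq = \int \overline{F(k,q)}\,\cf(k)\cb(q)\,dk\,dq = H^{\cb\cf}(\overline{F}),
\]
and analogously for the three remaining signatures. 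This matches the involution $s\mapsto s^*$.

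\emph{Induction step.} Assume the identity holds for all lengths strictly less than $k\geq 2$ and let $\us\in\scrS^{(k)}$. By Proposition~\ref{propo-signdecomp}, the set $\Split(\us)$ is nonempty; fix any $j\in\Split(\us)$ and write $\us = \us'\circ\us''$ with $\us'\in\scrS^{(j)}_{\Right}\cup\scrS^{(j)}_{\LR}$ and $\us''\in\scrS^{(k-j)}_{\Left}\cup\scrS^{(k-j)}_{\LR}$. By Remark~\ref{rem-signComp} together with Remark~\ref{rem-signs}~\ref{rem-signleftandright},~\ref{rem-signambi}, we have $\us^* = (\us'')^*\circ(\us')^*$, and $k-j\in\Split(\us^*)$ is then a valid split for $\us^*$ yielding the same handedness pattern.

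For $\us\in\scrS^{(k)}_{\Right}\cup\scrS^{(k)}_{\Left}$, Definition~\ref{def-remblocks} gives
\[
T^{(k)}_\us(z;\uF^{(k)}) = T^{(j)}_{\us'}(z;F_1,\dotsc,F_j)\,R_0(z)\,T^{(k-j)}_{\us''}(z;F_{j+1},\dotsc,F_k).
\]
Since $R_0(z)$ is normal with $R_0(z)^*=R_0(\bar z)$ and all operators involved preserve $\Hfin$ (after multiplying with appropriate spectral cutoffs in $N$, cf.~Remark~\ref{rem-nosing}), taking the adjoint against $\Hfin$-vectors and applying the induction hypothesis to each factor produces exactly $T^{(k)}_{\us^*}(\bar z;\overline{F}_k,\dotsc,\overline{F}_1)$, using the split $k-j\in\Split(\us^*)$. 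The legitimacy of choosing this particular split is ensured by Remark~\ref{rem-SplitIndependence}.

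For $\us\in\scrS^{(k)}_{\LR}$, we have $T^{(k)}_\us = T^{(k)}_{\us,\bare} - \langle\Omega\,|\,T^{(k)}_{\us,\bare}(0;\uF^{(k)})\Omega\rangle\cdot\one$. The bare part is handled exactly as in the previous case, since $T^{(k)}_{\us,\bare}$ has the same recursive form. For the scalar counter-term, conjugation gives
\[
\overline{\bigl\langle\Omega\,|\,T^{(k)}_{\us,\bare}(0;\uF^{(k)})\Omega\bigr\rangle} = \bigl\langle\Omega\,|\,T^{(k)}_{\us,\bare}(0;\uF^{(k)})^*\Omega\bigr\rangle = \bigl\langle\Omega\,|\,T^{(k)}_{\us^*,\bare}(0;\overline{F}_k,\dotsc,\overline{F}_1)\Omega\bigr\rangle,
\]
using the identity just established for the bare operator at $z=0$ (legitimate by Remark~\ref{rem-nosing}). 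Since $\us^*$ is again ambidexstrous, combining the two pieces yields $T^{(k)}_{\us^*}(\bar z;\overline{F}_k,\dotsc,\overline{F}_1)$, as required.

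\emph{Main obstacle.} The substantive content lies entirely in the base case, where one must correctly track the conjugation conventions for smeared fermionic creation/annihilation operators against the involution $s\mapsto s^*$ of Definition~\ref{def-stringinvolution}. Once that is settled, the induction is essentially bookkeeping: the key structural inputs are the factorisation $(\us'\circ\us'')^*=(\us'')^*\circ(\us')^*$, the independence of the recursive construction from the choice of split (Remark~\ref{rem-SplitIndependence}), and the fact that the counter-term subtraction commutes with the conjugation operation.
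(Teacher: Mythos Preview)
Your proof is correct and follows exactly the approach the paper has in mind: induction on $k$ using the recursive structure of Definition~\ref{def-remblocks}, with Remark~\ref{rem-SplitIndependence} guaranteeing that one may choose the split $k-j\in\Split(\us^*)$ dual to the chosen $j\in\Split(\us)$. Your treatment is in fact more detailed than the paper's own two-line proof, which simply says the result ``follows easily by induction, keeping Remark~\ref{rem-SplitIndependence} in mind.''
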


\begin{proof} The corollary follows easily by induction, keeping Remark~\ref{rem-SplitIndependence} in mind. Note that for $k=1$, the adjoint operator $(T^{(1)}(z;F_1))^*$ is densely defined on a domain that contains $\Hfin$. 
\end{proof}

\subsection{Renormalized summands}\label{subsec-RenSummands}

Recall that our goal is to resum the Neumann series into a new series with renormalized summands that allow for better estimates. See Theorem~\ref{thm-reordering} below. In this section, we construct and estimate the renormalized summands.

\begin{Def}\label{def-RenSummands}
    For any $n,k\in\NN$, $\uut=(\us_1,\dotsc,\us_\ell) \in\scrT^{(n,k)}$ and $z\in\CC_-^*$ we define $S_\uut^{(n,k)}\in\scrM^{(k)}$. For $\uF^{(k)} = (F_1,\dotsc,F_k)\in (L^2(\RR^d\times\RR^d))^k$, we set
    \[
      S_\uut^{(n,k)}(z;\uF^{(k)}) = R_0(z) \prod^{\ell}_{i=1} \Bigl[T^{(j_i)}_{\us_i}(z;F_{b(i;\uut)},\dotsc,F_{e(i;\uut)}) R_0(z)\Bigr]\in\scrL_\fin.
    \]
    Here $j_i$ is the length of the handed signature string $\us_i$.
\end{Def}

The following lemma follows directly from the definition above together with Lemma~\ref{lem-basicblockbound}.

\begin{lem}\label{lem-basicsummandbound}  For any $n,k\in\NN$, there exists a constant $C$ that only depends on $n$ and the masses $\mb,\mf$, such that for any $\uut=(\us_1,\dotsc,\us_\ell)\in\scrT^{(n,k)}$, $z\in\CC_-^*$ and $\uF^{(k)}\in (L^2(\RR^d\times\RR^d))^k$, the operator $S_\uut^{(n,k)}(z;\uF^{(k)})$ extends from $\Hfin$ to a bounded operator on $\Hfin$ and
\[
\bigl\|S_\uut^{(n,k)}(z;\uF^{(k)})\bigr\| \leq \frac{C^\ell}{|\re z|^{1+\frac{\ell}{2}}} \|F_1\|\cdots\|F_k\|.
\]
\end{lem}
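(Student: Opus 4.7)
The estimate follows from combining Lemma~\ref{lem-basicblockbound} with the elementary resolvent bound $\|R_0(z)\|\leq 1/|\re z|$ for $z\in\CC_-^*$, handling the unbounded length-1 blocks by absorbing them into adjacent resolvents. I would first apply Lemma~\ref{lem-basicblockbound} to bound each of the $\ell$ blocks $T^{(j_i)}_{\us_i}(z;\cdot)$ appearing in $S^{(n,k)}_\uut$: uniformly in $z$ by $C\prod_{m=b(i;\uut)}^{e(i;\uut)}\|F_m\|$ when $j_i\geq 2$, via part~\ref{item-bb-kgeneral}; and by $C\|F_m\|$ after multiplication by $(N_\boson+1)^{-1/2}$ on the appropriate side when $j_i=1$, via part~\ref{item-bb-k1}. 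Note that every length-1 block is necessarily right- or left-handed since $\scrS^{(1)}_\LR=\emptyset$ by Remark~\ref{rem-signs}.

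I would then pair each of the $\ell$ blocks with one adjacent resolvent, leaving exactly one of the $\ell+1$ resolvents unpaired. Using the spectral theorem together with $H_0\geq \mb N_\boson$ and the elementary inequality $|\omega-z|\geq\omega+|\re z|$ for $z\in\CC_-^*$, $\omega\geq 0$, one obtains the bound $\|(N_\boson+1)^{1/2}R_0(z)\|\leq C/\sqrt{|\re z|}$, and the analogous estimate with $(H_0+\min(\mb,\mf))^{1/2}$ in place of $(N_\boson+1)^{1/2}$, which is the version applicable to length-$\geq 2$ blocks after introducing $(H_0+\min(\mb,\mf))^{\pm 1/2}$ on each side. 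Each block--resolvent pair is thus bounded by $C\prod\|F\|/\sqrt{|\re z|}$, while the single unpaired resolvent contributes the factor $1/|\re z|$, yielding
\[
\bigl\|S_\uut^{(n,k)}(z;\uF^{(k)})\bigr\|\leq \frac{1}{|\re z|}\cdot\Bigl(\frac{C}{\sqrt{|\re z|}}\Bigr)^\ell\prod_{m=1}^k\|F_m\| = \frac{C^\ell}{|\re z|^{1+\frac{\ell}{2}}}\prod_{m=1}^k\|F_m\|
\]
after adjusting the constant.

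The main technicality lies in the combinatorial consistency of the pairing: a conflict arises when two consecutive blocks both require the same adjacent resolvent -- for instance, a right-handed block immediately followed by a left-handed block, both of which need to absorb a $(N_\boson+1)^{1/2}$ from the shared resolvent between them. I would resolve this by splitting that shared resolvent as $R_0(z) = R_0(z)^{1/2}R_0(z)^{1/2}$ and using the uniform bound $\|(N_\boson+1)^{1/2} R_0(z)^{1/2}\|\leq C$ (which follows from the same spectral-theorem computation) to distribute one half-resolvent to each competing block. Any residual dependence on a lower threshold for $|\re z|$ is absorbed into the constant $C=C(n,\mb,\mf)$.
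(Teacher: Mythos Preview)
Your overall strategy—pairing each block $T^{(j_i)}_{\us_i}$ with adjacent resolvent mass and extracting a factor $|\re z|^{-1/2}$ per block—is the right one, and it matches what the paper has in mind. However, two of the operator bounds you invoke are false as stated, and this is a genuine gap.

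The bound $\|(N_\boson+1)^{1/2}R_0(z)\|\leq C/\sqrt{|\re z|}$ fails on the vacuum sector: at $N_\boson=0$ and $H_0=0$ one gets exactly $1/|\re z|$, which cannot be dominated by $C/\sqrt{|\re z|}$ as $|\re z|\to 0$. The same objection applies to your ``uniform bound'' $\|(N_\boson+1)^{1/2}R_0(z)^{1/2}\|\leq C$: on the vacuum it equals $|z|^{-1/2}$. Your final sentence about absorbing a lower threshold into $C$ does not repair this, since the defect is a \emph{worse power} of $|\re z|$, not a multiplicative constant. What saves the length-$1$ case is something you do not mention: a right-handed length-$1$ block contains $\ab(q)$ and therefore satisfies $T_i\one_{[N_\boson=0]}=0$ (and dually for left-handed blocks); this lets you insert $\one_{[N_\boson\geq 1]}$ for free, after which the AM--GM bound $\sqrt{n+1}/(\mb n+|\re z|)\leq 1/\sqrt{2\mb|\re z|}$ for $n\geq 1$ is valid. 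This is exactly the content of Remark~\ref{rem-nosing}, which the paper invokes in the proof of Lemma~\ref{lem-basicblockbound}.

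For blocks of length $\geq 2$ your treatment is also incomplete. Inserting $(H_0+m)^{\pm 1/2}$ on the side of a bounded block buys nothing: you still only get $\|T_iR_0(z)\|\leq C/|\re z|$, which for small $|\re z|$ is strictly worse than the $C/\sqrt{|\re z|}$ you need. The cleaner route is to split every resolvent symmetrically as $R_0^{1/2}R_0^{1/2}$ and show $\|R_0^{1/2}T_iR_0^{1/2}\|\leq C\prod\|F\|/\sqrt{|\re z|}$ for each block. For right- or left-handed blocks of any length, Remark~\ref{rem-nosing} again supplies a projection $\one_{[N\geq 1]}$ on one side, after which one factor $\|R_0^{1/2}\one_{[N\geq 1]}\|\leq 1/\sqrt{m+|\re z|}$ combines with the other factor $\|R_0^{1/2}\|\leq 1/\sqrt{|\re z|}$ to give the required power. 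For ambidextrous blocks (which preserve each $N$-sector) one treats $N\geq 1$ the same way and handles the one-dimensional vacuum contribution separately, using that the counter-term subtraction in Definition~\ref{def-remblocks} forces $\langle\Omega,T_i(z)\Omega\rangle=\langle\Omega,T_{i,\bare}(z)\Omega\rangle-\langle\Omega,T_{i,\bare}(0)\Omega\rangle$ to vanish at $z=0$.
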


Our remaining task is to establish the following proposition, which shows that the renormalized summands only depend on the tuples $\uut$ through $\uut$'s equivalence class.

\begin{Prop}\label{prop-EquivSummands} Let $n,k\in\NN$. 
    For $\uut, \uut' \in \scrT^{(n,k)}$ with $\uut \sim \uut'$, we have for any $z\in\CC_-^*$ and $\uF^{(k)}\in (L^2(\RR^d\times\RR^d))^k$:
    \[
    S_\uut^{(n,k)}(z;\uF^{(k)}) = S_{\uut'}^{(n,k)}(z;\uF^{(k)}).
    \]
\end{Prop}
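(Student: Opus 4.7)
The plan is to argue by strong induction on the total length $k$, combining Proposition~\ref{prop-equivalent-tuples}~\ref{item-equivtuples1} with the split-independence of $T^{(k)}_\us$ established in Remark~\ref{rem-SplitIndependence}. For the base case $k=1$, every equivalence class in $\scrT^{(n,1)}$ is a singleton, so there is nothing to prove. For the inductive step, given $\uut \neq \uut'$ in $\scrT^{(n,k)}$ with $\uut \sim \uut'$, I would distinguish two cases according to whether $\uut$ contains an ambidextrous block.

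First I would handle the case $\scrA(\uut) \neq \emptyset$. By Proposition~\ref{prop-equivalent-tuples}, any $(a,b) \in \scrA(\uut) = \scrA(\uut')$ corresponds to an ambidextrous sub-string $(s_a,\ldots,s_b)$ that must appear as a single block in both $\uut$ and $\uut'$. Factoring the product expression for $S^{(n,k)}_\uut(z)$ across this common block yields
\[
S^{(n,k)}_\uut(z) = S^{(n,a-1)}_{\uut_{<}}(z)\, T^{(b-a+1)}_{(s_a,\ldots,s_b)}(z)\, S^{(n,k-b)}_{\uut_{>}}(z),
\]
and analogously for $\uut'$, where $\uut_{<}, \uut'_{<}$ are equivalent sub-tuples in $\scrT^{(n,a-1)}$ and $\uut_{>}, \uut'_{>}$ are equivalent sub-tuples in $\scrT^{(n,k-b)}$. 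The induction hypothesis concludes this case.

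The hard part is the remaining case $\scrA(\uut) = \scrA(\uut') = \emptyset$, where every block is right- or left-handed. Here the invariants $\scrB(\uut) = \scrB(\uut')$ and $\scrE(\uut) = \scrE(\uut')$ pin down all starts of right-handed blocks and all ends of left-handed blocks, so the only boundaries at which $\uut$ and $\uut'$ may disagree are those separating an adjacent right-handed and left-handed block. At such a boundary one has $\us_R \circ \us_L = \us_R' \circ \us_L'$ with $\us_R, \us_R'$ right-handed, $\us_L, \us_L'$ left-handed, and (say) $|\us_R| < |\us_R'|$. The key structural claim I would establish is that the intermediate string $\usigma = (s_{|\us_R|+1},\ldots,s_{|\us_R'|})$ is necessarily ambidextrous: Lemma~\ref{lem-badcomp} excludes $\usigma$ from $\scrS_\Right \cup \scrS_\Left$ (else $\us_R' = \us_R \circ \usigma$ or $\us_L = \usigma \circ \us_L'$ would fail to be handed), while the left-handedness conditions imposed on $\us_L$ at its initial segment $\usigma$, combined with the right-handedness conditions imposed on $\us_R'$ at its final segment $\usigma$, together supply the inequalities needed to conclude that $\usigma$ itself is handed. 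With $\usigma \in \scrS_\LR$, one has $|\us_R| \in \Split(\us_R')$ and $|\usigma| \in \Split(\us_L)$, so Remark~\ref{rem-SplitIndependence} applied to both splittings gives the local identity
\[
T^{(|\us_R|)}_{\us_R}(z)\, R_0(z)\, T^{(|\us_L|)}_{\us_L}(z) \;=\; T^{(|\us_R'|)}_{\us_R'}(z)\, R_0(z)\, T^{(|\us_L'|)}_{\us_L'}(z).
\]
Iterating this local move at every R--L boundary where $\uut$ and $\uut'$ disagree transforms one tuple into the other while preserving $S^{(n,k)}$.

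The main obstacle I anticipate is in the second case: carefully verifying that the intermediate $\usigma$ is genuinely ambidextrous (not merely handed), including the strict inequality in condition~\eqref{def-ambi-right} of Definition~\ref{def-handedsignatures}, and organising the iteration of local moves so that every intermediate tuple remains in $\scrT^{(n,k)}$ and the process terminates at $\uut'$ after finitely many steps. This combinatorial bookkeeping is what the invariants $\scrB, \scrE, \scrA$ from Proposition~\ref{prop-equivalent-tuples}~\ref{item-equivtuples1} are designed to make tractable.
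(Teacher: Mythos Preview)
Your worry in Case~2 is well-founded and the plan breaks there: the intermediate string $\usigma$ need not be handed. Take $n=5$, $\us=(\ab\af,\ab\af,\cb\cf,\ab\af,\cb\cf,\cb\cf)$, and the two equivalent tuples
\[
\uut=\bigl((\ab\af),\,(\ab\af,\cb\cf,\ab\af,\cb\cf,\cb\cf)\bigr),\qquad
\uut'=\bigl((\ab\af,\ab\af,\cb\cf,\ab\af,\cb\cf),\,(\cb\cf)\bigr)
\]
in $\scrT^{(5,6)}$, both with $\scrA=\emptyset$. The intermediate $\usigma=(\ab\af,\cb\cf,\ab\af,\cb\cf)$ satisfies \eqref{def-handedbasic} and has $n_\ab(\usigma)=n_\af(\usigma)=0$, but at $i=2$ one finds $n_\ab(1,2;\usigma)=n_\af(1,2;\usigma)=0$, violating \eqref{def-ambi-right}. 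Hence $\usigma\notin\scrS^{(4)}$, and correspondingly $1=|\us_R|\notin\Split(\us_R')=\{3\}$: you cannot invoke Remark~\ref{rem-SplitIndependence} on $\us_R'$ and $\us_L$ separately as you propose. The remedy is to apply split-independence not to $\us_R'$ and $\us_L$ but to the concatenation $\us_0:=\us_R\circ\us_L=\us_R'\circ\us_L'$, which is handed by Proposition~\ref{prop-signcomp}~\ref{item-SignComp2} (the recursive construction in Definition~\ref{def-remblocks} and Remark~\ref{rem-SplitIndependence} make sense for all lengths, not only $\leq n$). Both $|\us_R|$ and $|\us_R'|$ visibly lie in $\Split(\us_0)$, and split-independence of $T^{(|\us_0|)}_{\us_0}$ (or of $T^{(|\us_0|)}_{\us_0,\bare}$ when $\us_0\in\scrS_\LR$) gives $T_{\us_R}R_0T_{\us_L}=T_{\us_R'}R_0T_{\us_L'}$ directly, with no claim about $\usigma$.

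The paper organises the induction differently and avoids local moves altogether. Instead of casing on whether $\scrA(\uut)$ is empty, it looks for a right-handed block at some index $i\geq 2$ in $\uut$ (or symmetrically a left-handed block at $i\leq\ell-1$); its start $b=b(i;\uut)\in\scrB(\uut)=\scrB(\uut')$ then satisfies $b\geq 2$, and both tuples can be cut at $b$ into shorter equivalent sub-tuples in $\scrT^{(n,b-1)}$ and $\scrT^{(n,k-b+1)}$ to which the induction hypothesis applies directly. The residual case, where every interior block is ambidextrous and the end blocks are not of the ``wrong'' handedness, is handled by a separate reduction argument. This bypasses any analysis of intermediate strings like your $\usigma$.
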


\begin{proof} 
    We will proceed by induction on $k$. If $k=1$ and $\uut,\uut'\in \scrT^{(n,1)}$ with $\uut\sim\uut'$, then $\uut = \uut'$ and there is nothing to prove. 
    Assume that the proposition is correct for $k'\leq k\in\NN$. 
   
    Let now $\uut,\uut'\in\scrT^{(n,k+1)}$ with $\uut\sim\uut'$ and
write
    \begin{equation*}
        \uut  = (\us_1, \dotsc, \us_\ell)\quad \textup{and}\quad     \uut'  = (\us'_1, \dotsc, \us'_{\ell'}),
    \end{equation*}  
    where $\us_i\in\scrS^{(j_i)}$, $\us'_{i'}\in\scrS^{(j'_{i'})}$,
    $j_1+\cdots+ j_\ell = k+1$ and $j'_1+\cdots +j'_{\ell'} = k+1$.

 \emph{Step I:} Reduction of the problem.
    Suppose $\us_i$ is ambidexstrous for all $i\in\llbracket 2,\ell-1\rrbracket$, $\us_1$ is not left-handed and $\us_\ell$ is not right-handed.
    Then by Proposition~\ref{prop-equivalent-tuples}, there exists $i'\in\llbracket 2,\ell'\rrbracket$ with $i'+\ell-2 \leq \ell'$, such that $b(2;\uut) = b(i';\uut')$ and
    $\us_i = \us'_{i'+i-2}$  for all $i\in\llbracket 2,\ell-1\rrbracket$.

    We now claim that $i'=2$ and $\us'_1 = \us_1$ as well. 
    If either $\us_1$ is ambidexstrous, or one of the signature strings $\us'_1,\dotsc\us'_{i'-1}$ are ambidexstrous, then we are done by Proposition~\ref{prop-equivalent-tuples}. 
    
    Assume towards a contradiction that $i'>2$. 
    By Lemma~\ref{lem-badcomp}, we know that $\us'_{i'-1}$ cannot be right-handed, since we then would have
    $(\us'_1\circ\cdots\circ \us'_{i'-2})\circ \us'_{i'-1} = \us_1$, which is a handed signature string. Similarly, $\us'_1$ cannot be left-handed. 

     Summing up $\us'_1$ must be right-handed and $\us'_{i'-1}$ must be left-handed. But then there are at least two consecutive $\us'_{j}$, $\us'_{j+1}$ with $1\leq j<j+1\leq i'-1$ that can be composed to form a handed signature string of length at most $n$. This is not allowed in a tuple and we may conclude that $i'=2$ and, hence, $\us'_1=\us_1$.

    Similarly, we may argue that $\ell  = \ell'$ and $\us_\ell = \us'_{\ell}$ and conclude that $\uut = \uut'$.

    In conclusion, we may without loss of generality assume that either there exists an $i\geq 2$, such that $\us_i$ is right-handed, or there exists an $i\leq \ell-1$ with $\us_i$ left-handed.
    
\emph{Step II:} Assume that we have an $i\geq 2$ with $\us_i\in\scrS^{(j_i)}_\Right$ being a right-handed signature string. The other case with a left-handed $\us_i$ and $i\leq \ell-1$ is completely symmetric.

     Let $ b= b(i;\uut)\in\scrB(\uut)$. 
    From Proposition~\ref{prop-equivalent-tuples}, we know that $b\in\scrB(\uut')$ as well, and hence there exists $i'\in \llbracket 1, \ell'\rrbracket$ such that $\us'_{i'}\in\scrS^{(j'_{i'})}_\Right$ and  $b(i';\uut') = b(i;\uut)$. Note that $i'\geq 2$.

    We may define four new tuples $\uutau_\mathrm{l},\uutau'_\mathrm{l}\in \scrT^{(n,b-1)}$ and $\uutau_\mathrm{r},\uutau'_\mathrm{r}\in \scrT^{(n,k-b+1)}$, by splitting $\uut$ and $\uut'$ at the index $b$.
     \[
 \begin{aligned}
  & \uutau_\mathrm{l} = (\us_1,\dotsc,\us_{i-1}), & & \uutau_\mathrm{r} = (\us_i,\dotsc,\us_{\ell}),\\
   & \uutau'_\mathrm{l} = (\us'_1,\dotsc,\us'_{i'-1}), & & \uutau'_\mathrm{r} = (\us'_{i'},\dotsc,\us'_{\ell'}).
 \end{aligned}
     \]  
 Note that $b-1<k$ and $k-b+1<k$ (since $b\geq 2$).
 Since $b = b(i;\uut) = b(i';\uut')$, we must have
 \[
 \uutau_\mathrm{l}\sim \uutau'_{\mathrm{l}} \quad \uutau_\mathrm{r} \sim \uutau'_\mathrm{r}.
 \]

  We are now in a position to use the induction hypothesis to compute
  {\allowdisplaybreaks
  \begin{align*}
& S^{(n,k)}_\uut(z;\uF^{(k)})  = R_0(z)
       \prod^{\ell}_{\nu=1} \Bigl\{T^{(j_i)}_{\us_\nu}(z;F_{b(\nu;\uut)},\dotsc,F_{e(\nu;\uut)}) R_0(z)\Bigr\}\\
       &\quad  = S^{(n,b-1)}_{\uutau_\mathrm{l}}(z;F_1,\dotsc, F_{b-1}) 
        \prod^{\ell}_{\nu= i}\Bigl\{ T^{(j_\nu)}_{\us_\nu}(z;F_{b(\nu;\uut)},\dotsc,F_{e(\nu;\uut)}) R_0(z)\Bigr\}\\
        & \quad =  S^{(n,b-1)}_{\uutau'_\mathrm{l}}(z;F_1,\dotsc, F_{b-1}) 
        \prod^{\ell}_{\nu= i} \Bigl\{T^{(j_\nu)}_{\us_\nu}(z;F_{b(\nu;\uut)},\dotsc,F_{e(\nu;\uut)}) R_0(z)\Bigr\}\\
        & \quad = \prod_{\nu' = 1}^{i'-1} \Bigl\{R_0(z) T^{(j'_{\nu'})}_{\us'_{\nu'}}(z;F_{b(\nu';\uut')},\dotsc, F_{e(\nu',\uut')}) \Bigr\} 
        S^{(n,k-b+1)}_{\uutau_\mathrm{r}}(z;F_b,\dotsc,F_k)\\
        & \quad = \prod_{\nu' = 1}^{i'-1} \Bigl\{ R_0(z) T^{(j'_{\nu'})}_{\us'_{\nu'}}(z;F_{b(\nu';\uut')},\dotsc, F_{e(\nu',\uut')})\Bigr\}
        S^{(n,k-b+1)}_{\uutau'_\mathrm{r}}(z;F_b,\dotsc,F_k)\\
        & \quad = \prod_{\nu' = 1}^{\ell'}\Bigl\{ R_0(z) T^{(j'_{\nu'})}_{\us'_{\nu'}}(z;F_1,\dotsc,F_k) \Bigr\} R_0(z)\\
        & \quad = S^{(n,k)}_{\uut'}(z;\uF^{(k)}).        
  \end{align*}
  }
  This completes the proof.
\end{proof}

\subsection{Reordering theorem}

 In this subsection, we formulate and verify our formula for the renormalized Neumann expansion. The rest of the paper will then be concerned with getting estimates on the summands that are uniform in the ultraviolet cutoff.

 For $s\in \{\ab\af,\cb\cf,\ab\cf,\cb\af\}$, we set
  \[
    G_{s,\Lambda} = \begin{cases}
      G^{(1)}_\Lambda, & \textup{if } s = \ab\cf\\
      \overline{G}_\Lambda^{(1)},  & \textup{if } s = \cb\af\\
      G^{(2)}_\Lambda, &  \textup{if } s =\cb\cf\\
      \overline{G}_\Lambda^{(2)}, & \textup{if } s = \ab\af.
    \end{cases}
  \]

  For $N\in\NN$, we may now define the self-energy counter-term at order $N$ to be

  \begin{Def}[Self-energy]\label{def-selfenergy} For $\ell\in\NN$ and $\us\in\scrS^{(2\ell)}_\LR$, we define the associated self-energy contribution to be
  \[
    E_{\us,\Lambda}^{(2\ell)} = -  \bigl\langle \Omega\,\big\vert\, T^{(2\ell)}_{\us,\bare}\bigl(0;G_{s_1,\Lambda},\dotsc, G_{s_k,\Lambda}\bigr)\Omega \bigr\rangle.
  \]
  For $N\in\NN$, the total self-energy up to order $N$ is
  \[
    E^{(N)}_\Lambda =   \sum_{\ell\in\NN, 2\ell\leq N} \sum_{\us\in\scrS^{(2\ell)}_\LR}E_{\us,\Lambda}^{(2\ell)}.
  \]
\end{Def}
 We are now in position to formulate the reordering theorem.

  \begin{Th}[Reordering theorem]\label{thm-reordering} Let $N\in\NN$ and $\Lambda>0$. There exists a $C_N(\Lambda)>0$, which depends only on $N,\Lambda$ and the masses $\mb,\mf$, such that for any $z\in\CC$ with $\re(z)\leq -C_N(\Lambda)$, we have
  \[
    \bigl(H_\Lambda - E_\Lambda^{(N)} - z \bigr)^{-1} = R_0(z) + \sum_{k=1}^\infty
    \sum_{[\uut]\in\scrT^{(N,k)}/\!\sim} S^{(N,k)}_\uut\bigl(z;G_{s_1,\Lambda},\dotsc, G_{s_k,\Lambda}\bigr),
  \]
  with the right-hand side being absolutely convergent in operator norm.
  \end{Th}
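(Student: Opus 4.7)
The plan is to prove the identity by induction on $N$, after first establishing absolute convergence. Denote the right-hand side of the claimed identity by $\Sigma_N(z) := R_0(z) + \sum_{k\geq 1}\sum_{[\uut]\in\scrT^{(N,k)}/\sim} S^{(N,k)}_\uut(z)$. By Lemma~\ref{lem-basicsummandbound}, each summand is bounded in operator norm by $C^\ell\,|\re(z)|^{-1-\ell/2}\prod_i \|G_{s_i,\Lambda}\|$, where $\ell$ is the number of blocks in $\uut$. Since $|\scrT^{(N,k)}/\!\sim| = 4^k$ by Proposition~\ref{prop-tuple-bijection}, and $\ell \geq \lceil k/N\rceil$, the double sum is dominated by a convergent geometric series once $|\re(z)|$ exceeds a constant $C_N(\Lambda)$ depending on $N$, $\Lambda$, and the masses.

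For the base case $N=1$, we have $E_\Lambda^{(1)} = 0$ (no ambidextrous strings of length $\leq 1$) and $\scrT^{(1,k)}/\!\sim$ identifies canonically with $\scrS^{(k)}_0$ via trivial tuples, so $\Sigma_1(z)$ collapses to the usual Neumann expansion of $(H_\Lambda-z)^{-1}$ in the interaction $H_{\mathrm{I}}$, valid for $\re(z)$ sufficiently negative.

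In the induction step, if $N+1$ is odd, then $\scrS^{(N+1)}_\LR = \emptyset$ (Remark~\ref{rem-signs}\ref{rem-sign-evenodd}) and $E_\Lambda^{(N+1)} = E_\Lambda^{(N)}$, so it suffices to show $\Sigma_{N+1} = \Sigma_N$. For each $\us \in \scrS^{(k)}_0$, I compare its representatives in $\scrT^{(N+1,k)}/\!\sim$ and $\scrT^{(N,k)}/\!\sim$: any block of length $N+1$ in the former is non-ambidextrous (carrying no counter-term) and splits as $T^{(N+1)}_{\us'\circ\us''} = T^{(j')}_{\us'}R_0(z) T^{(j'')}_{\us''}$ by Remark~\ref{rem-SplitIndependence} for any valid split (provided by Proposition~\ref{propo-signdecomp}), reducing to an equivalent tuple in $\scrT^{(N,k)}$; Proposition~\ref{prop-EquivSummands} ensures summand equality. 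If $N+1$ is even, set $\Delta E := E_\Lambda^{(N+1)} - E_\Lambda^{(N)} = \sum_{\us\in\scrS^{(N+1)}_\LR} E^{(N+1)}_{\us,\Lambda}$, a scalar. The second resolvent identity and the induction hypothesis reduce the claim to
\[
\Sigma_{N+1}(z) - \Sigma_N(z) = \Sigma_{N+1}(z)\,\Delta E\,\Sigma_N(z).
\]
The left side is exactly the ``new'' contribution coming from tuples in $\scrT^{(N+1,k)}$ containing at least one ambidextrous block of length $N+1$; extracting the associated counter-term $E^{(N+1)}_{\us,\Lambda}$ as a scalar and invoking Proposition~\ref{propo-signdecomp} factors each such term into a right-handed left-piece times $E^{(N+1)}_{\us,\Lambda}$ times a left-handed right-piece. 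Summing over the insertion position, over $\us \in \scrS^{(N+1)}_\LR$, and over the outer tuples reconstructs the right-hand side.

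The main obstacle is the even case of the induction. Matching each new counter-term arising from a length-$(N+1)$ ambidextrous block in some $\uut\in\scrT^{(N+1,k)}$ to a unique term in the product $\Sigma_{N+1}(z)\,\Delta E\,\Sigma_N(z)$ requires the full signature calculus of Section~\ref{Sec-Signs}: Proposition~\ref{prop-signcomp} certifies that a right-handed piece followed by an ambidextrous block followed by a left-handed piece assembles into a valid handed tuple in $\scrT^{(N+1,\cdot)}$; Lemma~\ref{lem-equivalent-tuples} together with Proposition~\ref{prop-EquivSummands} guarantees that the correspondence is well-defined on equivalence classes without double-counting; and Proposition~\ref{propo-signdecomp} supplies the inverse decomposition that exhausts all contributing terms.
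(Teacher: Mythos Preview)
Your strategy (induction on $N$ combined with the second resolvent identity) is genuinely different from the paper's argument, which works at fixed $N$ and introduces auxiliary index sets $\scrP^{(n,N,k)}_\us$ of disjoint ambidextrous substrings to track counter-term insertions; the key step there is Lemma~\ref{lem-CritResumStep}, which recursively absorbs one layer of counter-terms at a time (from level $n$ to level $n+1$) via a subordination relation $U\preceq U_0$, after which the Neumann expansion collapses directly to $\Sigma_N$. Your approach trades this bookkeeping for an inductive use of the identity $A^{-1}-B^{-1}=A^{-1}(B-A)B^{-1}$, which is conceptually appealing and avoids introducing the $\scrP$-sets.

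The odd step is fine. The even step, however, has a real gap. A tuple $\uut\in\scrT^{(N+1,k)}$ may contain \emph{several} ambidextrous blocks of length $N+1$; expanding each as $T^{(N+1)}_{\usigma,\bare}+E^{(N+1)}_{\usigma}$ produces $2^{|A|}$ terms indexed by subsets $B\subseteq A$, only one of which (the $B=\emptyset$ term) equals $S^{(N,k)}_{\uut'}$. Your description ``factors each such term into a right-handed left-piece times $E^{(N+1)}_{\us,\Lambda}$ times a left-handed right-piece'' treats only the case $|B|=1$ and conflates block-level splitting (Proposition~\ref{propo-signdecomp}) with summand-level factoring: the left and right pieces are full products $R_0\prod[T\,R_0]$, not single handed blocks, and for $|B|\geq 2$ no such three-factor decomposition exists. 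To match $\Sigma_{N+1}\,\Delta E\,\Sigma_N$ you must fix $i_0=\max B$, observe that the segment to the right of $i_0$ is entirely bare (hence level-$N$), and then \emph{sum over all $B$ with $\max B=i_0$} so that the left segment reproduces a full level-$(N{+}1)$ summand $S^{(N+1,\cdot)}$. That resummation is the crux, and it is neither stated nor supplied by the propositions you cite: Proposition~\ref{prop-signcomp} composes handed \emph{strings}, not tuples of summands, and Lemma~\ref{lem-equivalent-tuples} concerns representatives within a fixed level, not the passage between levels $N$ and $N+1$. The argument can be completed along these lines, but as written the bijection is not established.
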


  The rest of this section is devoted to proving the reordering theorem. The objects and notation introduced in the remainder of this subsection will not be used elsewhere. We apologize to the reader for what is essentially a nested application of the distributive law, being somewhat obscured by notation.

We begin with a lemma about the location of ambidexstrous sub-strings of a signature string $\us$. 
Let us fix $k\in\NN$ and $\us\in\scrS^{(k)}_0$. Define the set of ambidexstrous sub-strings of $\us$ to be
\[
\scrA_\us^{(k)} = \Bigset{(j,j')\in \llbracket 1,k\rrbracket^2}{j<j', \ (s_j,\dotsc,s_{j'})\in \scrS^{(j'-j+1)}_\LR}.
\]
For $(j,j')\in\scrA^{(k)}_\us$, we write $\us_{(j,j')} = (s_j,\dotsc,s_{j'})$.

\begin{lem}\label{lem-disjointambi} Let $(i,i'),(j,j')\in\scrA^{(k)}_\us$ with $(i,i')\neq (j,j')$ be two distinct ambidexstrous sub-strings. Then either $\llbracket i,i'\rrbracket\cap \llbracket j,j'\rrbracket = \emptyset$,  $\llbracket i,i'\rrbracket \subsetneq \llbracket j,j'\rrbracket$ or $\llbracket j,j'\rrbracket \subsetneq \llbracket i,i'\rrbracket$.
\end{lem}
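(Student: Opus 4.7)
By symmetry, I may assume $i\leq j$. I will break into cases on how the two intervals $\llbracket i,i'\rrbracket$ and $\llbracket j,j'\rrbracket$ sit with respect to each other, and show that the only non-trivial case is a proper partial overlap, which I will then rule out by a careful bookkeeping argument using the handed/ambidextrous conditions from Definition~\ref{def-handedsignatures}.

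First, if $i'<j$, the two intervals are disjoint and we are done. If $i=j$, then since $(i,i')\neq (j,j')$ we must have $i'\neq j'$, giving strict containment of one interval in the other. If $i<j$ and $j'\leq i'$, we have $\llbracket j,j'\rrbracket\subsetneq \llbracket i,i'\rrbracket$. The only remaining situation is
\[
i<j\leq i'<j',
\]
which I claim is impossible.

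Assume towards a contradiction that $i<j\leq i'<j'$. Ambidextrousness of $\us_{(i,i')}$ and $\us_{(j,j')}$ means $n_\ab(i,i';\us)=n_\af(i,i';\us)=n_\ab(j,j';\us)=n_\af(j,j';\us)=0$. Splitting these sums at $j$ and at $i'$ respectively, one sees that
\[
a:=n_\ab(i,j-1;\us)=-n_\ab(j,i';\us)=n_\ab(i'+1,j';\us),
\]
and likewise for $b:=n_\af(i,j-1;\us)=-n_\af(j,i';\us)=n_\af(i'+1,j';\us)$. Now I apply \eqref{def-handedbasic} for the handed string $\us_{(i,i')}$ at position $j-i\in\llbracket 1,i'-i\rrbracket$ to get $a\geq 0$, and I apply the mirror half of \eqref{def-handedbasic} for $\us_{(j,j')}$ at position $i'-j+2\in\llbracket 2, j'-j+1\rrbracket$ to get $a\leq 0$, hence $a=0$. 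With $a=0$, the same two applications of \eqref{def-handedbasic} force $b\geq 0$ and $b\leq 0$, so $b=0$.

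To reach the contradiction, I use the defining ambidextrous property \eqref{def-ambi-right} of $\us_{(i,i')}$. If $j-i\geq 2$, then $j-i\in\llbracket 2,i'-i\rrbracket$ (using $j\leq i'$ and $i'-i\geq j-i\geq 2$, in particular $i'-i+1\geq 3$ so that \eqref{def-ambi-right} is applicable), and $n_\ab(1,j-i;\us_{(i,i')})=a=0$ then yields $b=n_\af(1,j-i;\us_{(i,i')})>0$, contradicting $b=0$. If instead $j-i=1$, then $a=n_\ab(s_i)\in\{-1,+1\}$, which contradicts $a=0$. Either way we have the desired contradiction, completing the proof.

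The main obstacle here is purely notational: keeping straight which sub-string we are looking at and which position inside it corresponds to the cut indices $j-1$, $j$, $i'$, $i'+1$ in the parent string $\us$. Once the translation between global and local positions is set up, every step is a direct appeal to the definitions in Definition~\ref{def-handedsignatures} and the dual property \eqref{def-ambi-left} recorded in Remark~\ref{rem-handedsigns}~\ref{item-handedsymmetric}.
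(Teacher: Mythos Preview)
Your proof is correct and follows essentially the same route as the paper's: assume a proper partial overlap, use the two halves of \eqref{def-handedbasic} on the two ambidextrous sub-strings to pin the boson count of the initial segment to zero, and then invoke \eqref{def-ambi-right} to force a strictly positive fermion count, contradicting the companion identity. The only organizational differences are that you treat the boundary case $j-i=1$ explicitly (the paper absorbs it into the ``$n_\ab(j,i-1;\us)>0$'' branch via Remark~\ref{rem-handedsigns}~\ref{item-handed-endpoints}) and that you first derive $b=0$ symmetrically before applying \eqref{def-ambi-right}, whereas the paper applies \eqref{def-ambi-right} directly and combines it with the non-negativity from \eqref{def-handedbasic} to get $n_\af(j,j';\us)>0$.
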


 \begin{proof} Let $(i,i'),(j,j')\in\scrA^{(k)}_\us$ with $(i,i')\neq (j,j')$. Assume towards a contradiction that
 $j \leq  i \leq j' \leq  i'$ with either $j<i$ or $j' < i'$. Suppose without loss of generality that $j<i$. (If we have $j'<i'$, we can consider $(k-j'+1,k-j+1)$ and $(k-i'+1,k-i+1)$ as ambidexstrous sub-strings of $\us^*$ and thereby reduce to the case handled.) In particular, it implies that $j'-j+1\geq 3$ and $i'-i+1\geq 3$.

 Suppose $n_\ab(j,i-1;\us) >0$. Since $0 = n_\ab(j,j';\us) = n_\ab(j,i-1;\us) + n_{\ab}(i,j';\us)$, we conclude that $n_\ab(i,j';\us)<0$, which is not possible, since $\us_{(i,i')}$ is ambidexstrous. 

 Since $n_\ab(j,i-1;\us) \geq 0$, due to $\us_{(j,j')}$ being a handed signature string, we therefore must have  
 $n_\ab(j,i-1;\us) = 0$. But then the computation above gives us that $n_{\ab}(i,j';\us)=0$. 
  Since this implies, from \eqref{def-ambi-right}, that $0 = n_\af(j,j';\us) =n_\af(j,i-1;\us) + n_\af(i,j';\us)  >0$, we arrive at a contradiction.
 \end{proof}

For $n\in\NN$, we write
\[
\scrA_\us^{(n,k)} = \Bigset{(j,j')\in \scrA^{(k)}_\us }{j'-j+1 = n},
\]
for the ambidexstrous sub-strings of length $n$. Note that $\scrA^{(n,k)}_\us =\emptyset$ if $n$ is odd or if $n>k$.

We will move recursively from a naive Neumann expansion, cf.~\eqref{naiveneumann} below, of the resolvent $(H_\Lambda - E_\Lambda^{(N)} - z )^{-1}$ to the reordered sum in Theorem~\ref{thm-reordering}. In the $n$'th recursive step ($n < N$), we will match ambidexstrous blocks of operators of length $n+1$ with matching counterterms. For this purpose we need intermediate reordering formulas, where we have only renormalized handed blocks of operators up to length $n$.
In order to label summands in the intermediate reordering expansions, we introduce the following set of all possible locations of remaining conterterms.
For $n,N\in\NN$ with $n\leq N$ we set
\begin{equation}\label{Ps-nNk}
\begin{aligned}
\scrP_\us^{(n,N,k)} & =  \Bigset{U\subset \scrA^{(k)}_\us}{\forall (j,j')\in U:\ n < j'-j+1 \leq N,\\
& \forall (j,j'),(i,i')\in U\textup{ with }(j,j')\neq (i,i'):\ \llbracket j,j'\rrbracket \cap \llbracket i,i'\rrbracket = \emptyset}.
\end{aligned}
\end{equation}
The elements of $\scrP_\us^{(n,N,k)}$ are collections of pairwise disjoint ambidexstrous sub-strings of lengths between $n$ and $N$ (with $n$ not included). 

\begin{rk} Let $n,N,k\in\NN$ with $n<N$. We remark the following:
\begin{enumerate}[label = \textup{(\roman*)}]
\item We have $\emptyset\in \scrP^{(n,N,k)}_\us$ and
 $\scrP_\us^{(N,N,k)}= \{\emptyset\}$.
\item For $n'\in\NN$ with $n<n'\leq N$, and any $U\subset\scrA^{(n',k)}_\us$, it follows from Lemma~\ref{lem-disjointambi} that we have $U\in \scrP^{(n,N,k)}_\us$.
\item For $n'\in\NN$ with $n<n'\leq N$, we have $\scrP^{(n',N,k)}_\us \subset \scrP^{(n,N,k)}_\us$.
\end{enumerate}
\end{rk}

Suppose now that $n<N$. Let $U_0\in \scrP_\us^{(n+1,N,k)}$. We say that $U\in \scrP^{(n,N,k)}_\us$ is \emph{subordinate} to $U_0$, written $U\preceq U_0$, if 
\[
U_0\subset U \quad \textup{and} \quad U\setminus U_0 \subset \scrA^{(n+1,k)}_\us.
\]
\begin{rk} Let $n,N,k\in\NN$ with $n<N$. We remark the following:
\begin{enumerate}[label = \textup{(\roman*)}]
\item Let $U_0\in \scrP_\us^{(n+1,N,k)}$. Then $U_0\in \scrP^{(n,N,k)}_\us$ as well and $U_0\preceq U_0$. 
\item\label{item-rem-Psets2} Let $U\in \scrP_\us^{(n,N,k)}$ and set $U_0 = U\setminus \scrA^{(n+1,k)}_\us$. Then $U_0\in\scrP^{(n+1,N,k)}_\us$ and $U\preceq U_0$.
\item\label{item-rem-Psets3} Let $U_0,U_0'\in \scrP_\us^{(n+1,N,k)}$ and $U,U'\in \scrP_\us^{(n,N,k)}$. If $U_0\neq U_0'$, $U\preceq U_0$ and $U'\preceq U_0'$, then $U \neq U'$. 
\item  From \ref{item-rem-Psets2} and \ref{item-rem-Psets3} it follows that we have the disjoint union
\[
\scrP^{(n,N,k)}_\us = \bigcup_{U_0\in\scrP^{(n+1,N,k)}_\us} \Bigset{U\in\scrP^{(n,N,k)}_\us}{ U\preceq U_0}.
\]
\end{enumerate}
\end{rk}

Let $n,N,k\in\NN$ with $n\leq N$ and take a $U\in\scrP_\us^{(n,N,k)}$. Let $u=\Card(U)$ and enumerate the elements
$U = \{(j_1,j_1'),\dotsc,(j_u,j_u')\}$, such that $\mu\leq \nu\Rightarrow j_\mu\leq j_\nu$. (We would get same ordering of the pairs if we had used the second coordinate instead of the first.) 

Define $U^\comp =\{(\ell_1,\ell'_1),\dotsc,(\ell_{u+1},\ell'_{u+1})) \subset \NN_0^2$ with $\Card(U^\comp)= u+1$ as follows. If $u=0$, we set $(\ell_1,\ell_1') = (1,k)$. If $u>0$, we set first $(\ell_1,\ell_1') = (1,j_1-1)$ and  $(\ell_{u+1},\ell'_{u+1}) = (j_u'+1,k)$. Finally, for $1<i<u+1$, we set $(\ell_i,\ell_i') = (j'_{i-1}+1,j_i-1)$. For the lengths $L_i$, $i=1,2,\dotsc,u+1$, of $(\ell_i,\ell'_i)$, we write $L_i = \ell'_i-\ell_i +1 \in\NN_0$.

For each $i\in \llbracket 1,u+1\rrbracket$, we associate an operator. If $L_i\geq 1$, select a tuple
\begin{equation}\label{TupleChoice}
\uut_i \in \varphi^{(n,L_i)}\bigl((s_{\ell_1},\dotsc,s_{\ell'_i})\bigr)\in
\scrT^{(n,L_i)}/\!\sim,
\end{equation}
where $\varphi^{(n,L_i)} \colon \scrS_0^{(L_i)} \to \scrT^{(n,L_i)}/\!\sim$ is the (inverse) bijection from Proposition~\ref{prop-tuple-bijection}. If $L_i=0$, we write $\uut_i=()$ for a "tuple" with zero elements.

With the above in place, we may now associate to $U\in \scrP^{(n,N,k)}_\us$ the operator
\begin{equation}\label{IntSummands}
S_{\us,U,\Lambda}^{(n,N,k)}(z) = \biggl(\prod_{(j,j')\in U} E^{(j'-j+1)}_{\us_{(j,j')},\Lambda} \biggr)\prod_{i=1}^{u+1} S_{\uut_i}^{(n,N,L_i)}(z;G_{s_{\ell_i},\Lambda},\dotsc, G_{s_{\ell'_i},\Lambda})
\end{equation}
with the convention that if $L_i=0$, then $S_{()}^{(n,N,0)}(z)= R_0(z)$.
This operator does not depend on the choice of representative \eqref{TupleChoice}.

The following lemma is the key step in the recursion. It takes a summand in the partially reordered, up to length $n+1$, expansion and multiplies out, using the distributive law, each renormalized ambidexstrous handed block \eqref{handedrenormblock} of (even) length $n+1$. 

\begin{lem}\label{lem-CritResumStep} Let $n,N,k\in\NN$ with $n<N$ and let $U_0\in \scrP^{(n+1,N,k)}_\us$. Then
\begin{equation}\label{KeyRenFormStep}
S_{\us,U_0,\Lambda}^{(n+1,N,k)}(z)=
\sum_{U\in \scrP^{(n,N,k)}_\us, U\preceq U_0} S_{\us,U,\Lambda}^{(n,N,k)}(z).
\end{equation}  
\end{lem}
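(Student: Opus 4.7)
The idea is to expand the handed signature strings of length exactly $n+1$ appearing inside the canonical level-$(n+1)$ gap tuples of $U_0$; these are the only obstructions to viewing $S^{(n+1,N,k)}_{\us,U_0,\Lambda}(z)$ as a sum of level-$n$ quantities. Right- and left-handed length-$(n+1)$ blocks factorize silently through the recursive definition of $T^{(k)}_\us$, while each ambidexstrous length-$(n+1)$ block produces an additional scalar counter-term by construction. The positions of the latter blocks, which sit inside the gaps of $U_0$, are precisely the elements of
\[
V(U_0):=\bigset{(j,j')\in\scrA^{(n+1,k)}_{\us}}{\llbracket j,j'\rrbracket\cap \llbracket i,i'\rrbracket=\emptyset\text{ for every }(i,i')\in U_0},
\]
and the map $V'\mapsto U_0\cup V'$ gives a natural bijection from $2^{V(U_0)}$ onto $\{U\in\scrP^{(n,N,k)}_{\us}: U\preceq U_0\}$, indexing the sum on the right-hand side.

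Concretely, I would fix canonical representatives $\uut_i\in\varphi^{(n+1,L_i)}((s_{\ell_i},\dotsc,s_{\ell_i'}))$ for the $u+1$ gap tuples of $U_0$, and for each length-$(n+1)$ block $\us_b$ occurring in some $\uut_i$ apply Lemma~\ref{lem-equivalent-tuples} (after replacing $\uut_i$ by an equivalent representative if needed, which is harmless by Proposition~\ref{prop-EquivSummands}) to pick a splitting $\us_b=\us_{b,1}\circ\us_{b,2}$ into handed strings of length at most $n$ that is compatible with the tuple constraints of $\scrT^{(n,L_i)}$. Definition~\ref{def-remblocks} then yields the key identities
\[
T^{(n+1)}_{\us_b}(z;\cdot)=T^{(|\us_{b,1}|)}_{\us_{b,1}}(z;\cdot)\,R_0(z)\,T^{(|\us_{b,2}|)}_{\us_{b,2}}(z;\cdot)
\]
for right- or left-handed $\us_b$, and the same identity together with an additional additive term $E^{(n+1)}_{\us_b,\Lambda}\,\one$ for ambidexstrous $\us_b$.

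Substituting these identities into each factor of the product in \eqref{IntSummands} defining $S^{(n+1,N,k)}_{\us,U_0,\Lambda}(z)$ and distributing, the right- and left-handed length-$(n+1)$ blocks produce a single refinement each (silently converting the level-$(n+1)$ structure to level $n$), while each ambidexstrous length-$(n+1)$ block $\us_b$ at position $(j,j')\in V(U_0)$ splits the sum in two. Regrouping as a sum indexed by $V'\subseteq V(U_0)$, the $V'$-indexed term carries the scalar prefactor $\prod_{(j,j')\in V'}E^{(n+1)}_{\us_{(j,j')},\Lambda}$; extracting each such scalar from $\cdots R_0\,T^{(n+1)}_{\us_b}\,R_0\cdots$ leaves a resolvent pattern $\cdots R_0\cdot R_0\cdots$ that cleanly severs the containing gap of $U_0$ into two smaller gaps, matching $(U_0\cup V')^\comp$. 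The remaining refined bare factorizations present the signature string on each new gap as a level-$n$ tuple, which by Proposition~\ref{prop-EquivSummands} agrees with the canonical $\varphi^{(n,\cdot)}$-representative used in the definition of $S^{(n,N,k)}_{\us,U_0\cup V',\Lambda}(z)$.

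The main obstacle is the combinatorial bookkeeping of this last identification: verifying that the refined tuples produced by the splittings genuinely lie in $\scrT^{(n,\cdot)}$ on each new gap (ruling out an accidental recombination of two consecutive blocks of total length $\leq n$ into a handed string), and that splits or extractions at different positions inside one $\uut_i$ do not interfere. The first point is precisely the content of the no-recombination clauses \eqref{tuplehelp1}--\eqref{tuplehelp2} of Lemma~\ref{lem-equivalent-tuples}, reinforced by the handed-composition prohibitions of Lemma~\ref{lem-badcomp}, while invariance under the auxiliary choices is guaranteed by Proposition~\ref{prop-equivalent-tuples} and Proposition~\ref{prop-EquivSummands}.
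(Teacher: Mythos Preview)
Your approach is correct and takes a more direct route than the paper's. The paper proceeds via two nested inductions: Step~I treats the special case $U_0=\emptyset$ (equivalently $n=N-1$) by induction on $\Card(\scrA^{(n+1,k)}_\us)$, peeling off one ambidextrous length-$(n+1)$ substring at a time and explicitly reassembling the gap tuples on either side; Step~II then inducts on $\Card(U_0)$, removing one element of $U_0$ to reduce to shorter strings covered by induction and by Step~I. Your argument instead expands all length-$(n+1)$ blocks in the gap tuples simultaneously and distributes, indexing the result by subsets $V'\subseteq V(U_0)$.

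What your approach buys is conceptual clarity—the identity becomes a visible binomial-type expansion over the ambidextrous blocks—at the price of a one-shot verification that the refined sequences on each new gap genuinely lie in $\scrT^{(n,\cdot)}$. You correctly flag this as the crux, and the tools you name do suffice: by Proposition~\ref{propo-signdecomp} every first split half $\us_{b,1}$ is right-handed or ambidextrous and every second half $\us_{b,2}$ is left-handed or ambidextrous (in fact strictly right- resp.\ left-handed when $\us_b$ is ambidextrous), so adjacent split halves never recombine by Lemma~\ref{lem-badcomp}; a split half adjacent to an unsplit short block is then covered by \eqref{tuplehelp1}--\eqref{tuplehelp2} of Lemma~\ref{lem-equivalent-tuples} combined with Lemma~\ref{lem-badcomp}. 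One point you should make explicit is why the ambidextrous length-$(n+1)$ blocks occurring in the $\uut_i$ are \emph{exactly} the elements of $V(U_0)$: this needs the fact that every ambidextrous length-$(n+1)$ substring of a gap string appears as a block in \emph{any} level-$(n+1)$ tuple representative, which follows from the $\sim$-invariance of $\scrA(\uut)$ (Proposition~\ref{prop-equivalent-tuples}) together with a one-off construction of a representative containing it (as in the proof of Proposition~\ref{prop-tuple-bijection}). The paper isolates and proves this same fact at the start of its Step~I.
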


\begin{proof} Recall from \eqref{Ps-nNk} and \eqref{IntSummands} the definitions of the sets $\scrP^{(n,N,k)}_\us$ and the operators $S_{\us,U,\Lambda}^{(n,N,k)}(z)$. We divide the proof into two steps.

\emph{Step I:} The case $n=N-1$. In this case, we must have $U_0 = \emptyset$. Note also that $\scrP^{(N-1,N,k)}_\us = 2^{\scrA^{(N,k)}_\us}$, the set whose elements are the subsets of $\scrA^{(N,k)}_\us$. The formula \eqref{KeyRenFormStep} therefore reduces to
\begin{equation}\label{KeyRenFormStep0}
   S_{\us,\emptyset,\Lambda}^{(N,N,k)}(z)=
\sum_{ U\subset \scrA^{(N,k)}_\us} S_{\us,U,\Lambda}^{(N-1,N,k)}(z). 
\end{equation}

Let $\uut$ be a representative for the equivalence class $\varphi^{(N,k)}(\us)\in\scrT^{(N,k)}/\!\sim$, where -- again -- the bijection $\varphi^{(N,k)}$ comes from Proposition~\ref{prop-tuple-bijection}. Then 
\begin{equation}\label{ResumTopLevel}
S^{(N,N,k)}_{\us,\emptyset,\Lambda}(z) = S^{(N,k)}_{\uut}(z;G_{s_1,\Lambda},\dotsc,G_{s_k,\Lambda}).
\end{equation}
Write $\uut = (\usigma_1,\dotsc,\usigma_\nu)$ and recall that the $\usigma_i$'s have length at most $N$.

Note that for any $(j,j')\in \scrA^{(N,k)}_\us$, there exists
$i\in\llbracket 1,\nu\rrbracket$, such that $\usigma_i = (s_j,\dotsc,s_{j'})$. To see this, it suffices -- by Proposition~\ref{prop-equivalent-tuples}~\ref{item-equivtuples1} -- to observe that we only need to produce one $\uutau$ with this property and $\uutau\sim \uut$. Indeed, we may form (at most) two tuples $\uut' =(\us'_1,\dotsc,\us'_{\nu'}) \in \varphi^{-1}((s_1,\dotsc,s_{j-1}))$ (if $j>1$), $\uut'' = (\us''_1,\dotsc,\us''_{\nu''}) \in \varphi^{-1}((s_{j'+1},\dotsc,s_\nu))$ (if $j'<\nu$). Then
$\uutau = (\us'_{1},\dotsc,\us'_{\nu'},\usigma_i,\us''_{1},\dotsc,\us''_{\nu''})\in\scrT^{(N,k)}$ and $\uutau\sim\uut$.

We proceed by induction after $c=\Card(\scrA^{(N,k)}_\us)$. We start with the case $c=0$. In this case $\scrA^{(N,k)}_\us = \emptyset$ and hence $\scrP^{(N-1,N,k)} = \{\emptyset\}$. So the sum on the right-hand side of \eqref{KeyRenFormStep0} has only one term coming from $U = U_0 = \emptyset$. Since $S^{(n,N,k)}_{\us,\emptyset,\Lambda}(z) = S^{(n,k)}_{\uut'}(z;G_{s_1,\Lambda},\dotsc,G_{s_k,\Lambda})$, where $\uut'\in\varphi^{(N-1,k)}(\us)\in\scrT^{(N-1,k)}/\!\sim$, it suffices by \eqref{ResumTopLevel} and Proposition~\ref{prop-EquivSummands} to produce one $\uut'\in\scrT^{(N-1,k)}$ with $\uut'\sim\uut$ and   $S^{(N,k)}_\uut = S^{(N-1,k)}_{\uut'}$. But such a $\uut'$ may easily be produced from $\uut=(\us_1,\dotsc,\us_\ell)$ by choosing a split in $\Split(\us_i)$ for each of the $\us_i$ that are in $\scrS^{(N)}$. Here it is crucial that none of the $\us_i$ of length $N$ are ambidexstrous, so the recursive definition of the blocks $T^{(N)}_{\us_i}$ do not introduce a counter-term. See Definition~\ref{def-remblocks}.

 We now assume that \eqref{KeyRenFormStep0} holds if $c\leq c_0\in\NN_0$. Suppose that $c = c_0+1$. Let $(j,j+N-1) \in\scrA^{(N,k)}_\us$ be such that for any $(i,i+N-1)\in\scrA^{(N,k)}_\us$, we have $i\leq j$. That is, we pick the last ambidexstrous sub-string of length $N$. 
 
 Split  $\us$ into three strings
 \[
 \us' = (s_1,\dotsc,s_{j-1})\in\scrS^{(j-1)}_0 \quad \textup{and}\quad \us'' = (s_{j+N},\dotsc,s_k)\in\scrS^{(k-N-j+1)}_0,
 \]
 as well as $\usigma = (s_j,\dotsc,s_{j+N-1})\in\scrS^{(N)}_\LR$.
Note that $\us'$ has $c_0$ ambidexstrous sub-strings of length $N$, whereas $\us''$ has none. If $j=1$, we use the convention that $\us' = ()$ is a string of length $0$, and likewise, if $j+N = k+1$, then $\us'' = ()$ is a string of length $0$. 

We may therefore use the induction hypothesis on $\us'$ and $\us''$ and conclude from \eqref{KeyRenFormStep0} that
\[
S^{(N,N,j-1)}_{\us',\emptyset,\Lambda}(z) = \sum_{U'\subset \scrA^{(N,j-1)}_{\us'}} S^{(N-1,N,j-1)}_{\us',U',\Lambda}(z)
\]
and
\[
S^{(N,N,k-N- j+1)}_{\us'',\emptyset,\Lambda}(z) = S^{(N-1,N,k-N-j+1)}_{\us'',\emptyset,\Lambda}(z).
\]
If $j=1$ and $\us'= ()$, then we must have $U'=\emptyset$ and $S^{(N,N,0)}_{\us',\emptyset} = S^{(N-1,N,0)}_{\us',\emptyset} = R_0(z)$. Likewise, if $j+N = k+1$, then
$\us'' = ()$ and 
\[
S^{(N,N,0)}_{\us'',\emptyset,\Lambda}(z) = S^{(N-1,N,0)}_{\us'',\emptyset,\Lambda}(z) = R_0(z).
\]

 Split $\usigma = \usigma_\Right\circ\usigma_\Left$, where $\usigma_\Right\in\scrS^{(N_\Right)}_\Right$, $\usigma_\Left\in\scrS^{(N_\Left)}_\Left$ and $N_\Right,N_\Left\in\NN$ with $N_\Right+N_\Left = N$.
 Here we used Proposition~\ref{propo-signdecomp}~\ref{item-sd-ambi}.
 Compute, using Definitions~\ref{def-remblocks} and~\ref{def-selfenergy},
 \[
T^{(N)}_\usigma = T^{(N)}_{\usigma,\bare} + E^{(N)}_{\usigma,\Lambda}
=T^{(N_\Right)}_{\usigma_\Right}R_0(z)T^{(N_\Left)}_{\usigma_\Left}+E^{(N)}_{\usigma,\Lambda}.
 \]

 Let $U'\subset \scrA^{(N,j-1)}_{\us'}$. Compute first
 \[
 S^{(N-1,N,j-1)}_{\us',U',\Lambda}(z) E^{(N)}_{\usigma,\Lambda}S^{(N-1,N,k-j-N+1)}_{\us'',\emptyset,\Lambda}(z) = S^{(N-1,N,k)}_{\us,U'\cup\{(j,j+N-1)\},\Lambda}(z),
 \]
 where we used that $(U'\cup \{(j,j+N-1)\})^\comp = (U')^\comp\cup \{(j+N,k)\}$ keeping in mind that the complement is computed relative to the relevant ambient interval, $\llbracket 1, j-1\rrbracket$ for $U'$ and $\llbracket 1,k\rrbracket$ for $U'\cup\{(j,j+N-1)\}$.

Next we verify the identity
 \begin{equation}\label{DiffPartOfDistri}
S^{(N-1,N,j-1)}_{\us',U',\Lambda}(z) T^{(N_\Right)}_{\usigma_\Right}R_0(z)T^{(N_\Left)}_{\usigma_\Left}S^{(N-1,N,k-j-N)}_{\us'',\emptyset,\Lambda} = S^{(N-1,N,k)}_{\us,U',\Lambda}(z).
 \end{equation}
To see this, select the last element $(\ell,\ell')\in (U')^\comp$ (computed with respect to $\llbracket 1,j-1\rrbracket$.) Then, either $(\ell,\ell') = (v,j-1)$ for some $v\leq j-1$, in the case $(j-N,j-1)\not\in U'$, and $(\ell,\ell')=(j,j-1)$, if $(j-N,j-1)\in U'$ (or $j=1$).

Observe next that $(U')^\comp$ computed with reference to $\llbracket 1,k\rrbracket$ on the right-hand side of \eqref{DiffPartOfDistri}, equals $(U'\setminus \{(\ell,\ell')\})^\comp\cup \{(\ell,k)\}$, where $(U')^\comp$ here is computed in the context of the left-hand side, namely with respect to $\llbracket 1,j-1\rrbracket$.

We may now see the identity \eqref{DiffPartOfDistri} as follows. With $\ell$ being the first component of $(\ell,\ell')$ picked above,
we construct $\uutau \in\varphi^{(N-1,k-\ell+1)}((s_{\ell},\dotsc,s_k))$ as follows. Select tuples $\uutau'=(\us'_1\dotsc,\us'_{u'}) \in\varphi^{(N-1,j-\ell)}((s_\ell,\dotsc,s_{j-1}))$ and
  $\uutau'' = (\us'_1\dotsc,\us'_{u''}) \in\varphi^{(N-1,k-j-N+1)}((s_{j+N},\dotsc,s_k))$. We may now construct $\uutau$ by setting
  \[
\uutau = (\us'_1,\dotsc,\us'_{u'},\usigma_\Right,\usigma_\Left,\us''_1,\dotsc,\us''_{u''}).
  \]
  That this is indeed a tuple in $\scrT^{(N-1,k-\ell+1)}$ follows from $\usigma_\Right$ being right-handed and $\usigma_\Left$ being left-handed, such that it is immaterial what type of strings we have as $\us'_{u'}$ and $\us''_1$. We may therefore conclude that
  \[
     S^{(N-1,k-\ell+1)}_{\uutau} = S^{(N-1,j-\ell)}_{\uut'} T^{(N_\Right)}_{\usigma_\Right} R_0(z) T^{(N_\Left)}_{\usigma_\Left}  S^{(N-1,k-j-N+1)}_{\uut''},
  \]
  which implies \eqref{DiffPartOfDistri} as desired.

\emph{Step II:} The general case. Let $n,N,k\in\NN$ with $n<N$.
Fix a $U_0\in\scrP^{(n+1,N,k)}_\us$. The argument is by induction after $c = \Card(U_0)$. We begin with $c=0$, which is almost what was covered above. In this case, using \eqref{KeyRenFormStep0} with $N = n+1$,
\begin{align*}
S^{(n+1,N,k)}_{\us,\emptyset,\Lambda}(z) & = S^{(n+1,n+1,k)}_{\us,\emptyset,\Lambda}(z)
= \sum_{U\subset \scrA^{(n+1,k)}_{\us}} S^{(n,n+1,k)}_{\us,U,\Lambda}(z)\\
&=\sum_{U\in \scrP^{(n,N,k)}_{\us}, U\preceq \emptyset} S^{(n,N,k)}_{\us,U,\Lambda}(z).
\end{align*}

Now let $c_0\in\NN_0$ and assume that the identity \eqref{KeyRenFormStep0} holds true for $U_0$ with $\Card(U_0)\leq c_0$. Assume that $c = \Card(U_0) = c_0+1$, which forces us to have $n+1<N$.  Pick $(j,j')\in U_0$, such that for any $(i,i')\in U_0$, we have
$i\leq j$. This corresponds to the last ambidexstrous sub-string  indexed by $U_0$. Note that $j'-j +1 > n+1$.

Put $U_0' = U_0 \setminus \{(j,j')\}$, which has $\Card(U_0') =c_0$. Split $\us$ into three parts, $\us' = (s_1,\dotsc, s_{j-1})$ , $\us'' = (s_{j'+1},\dotsc,s_k)$ and
$\usigma = \usigma_{(j,j')}=(s_j,\dotsc,s_{j'})$. We again allow for the case $j=1$, where $\us' = ()$ is a string of length $0$, and the case $j'+1 = k+1$, where $\us'' = ()$.

Then $U_0' \in\scrP^{(n+1,N,j-1)}_{\us'}$, so we can use the induction hypothesis to compute
\[
S^{(n+1,N,j-1)}_{\us',U_0'} = \sum_{U'\in\scrP^{(n,N,j-1)}_{\us'}, U'\preceq U_0'} S^{(n,N,j-1)}_{\us',U'}.
\]
Also, using the induction start,
\[
S^{(n+1,N,k-j')}_{\us'',\emptyset} = \sum_{U''\in \scrP^{(n,N,k-j')}_{\us''}, U''\preceq \emptyset} S^{(n,N,k-j')}_{\us'',U''}. 
\]
Note that for $U'\in \scrP^{(n,N,j-1)}_{\us'}$ and $U''\in \scrP^{(n,N,k-j')}_{\us''}$ with $U'\preceq U_0'$ and $U''\preceq \emptyset$, we have
\[
E_{\usigma,\Lambda}^{(j'-j+1)} S^{(n,N,j-1)}_{\us',U'} S^{(n,N,k-j')}_{\us'',U''} = S^{(n,N,k)}_{\us, U'\cup \{(j,j')\}\cup U''}
\]
and that $U=  U'\cup \{(j,j')\}\cup U''\in\scrP^{(n,N,k)}_\us$ with $U\preceq U_0$. Here we used the computation $U^\comp = (U')^\comp\cup (U'')^\comp$. 

Conversely, any $U\in\scrP^{(n,N,k)}_\us$ with $U\preceq U_0$ contains $(j,j')$ and can be written as above with $U' \in\scrP^{(n,N,j-1)}_{\us'}$, $U''\in\scrS^{(n,N,k-j')}_{\us''}$ and $U'\preceq U_0'$ as well as $U''\preceq \emptyset$. This completes the proof. 
\end{proof}

We are now finally in a position to end with: 

\begin{proof}[Proof of Theorem~\ref{thm-reordering}]
Let $N\in\NN$.
Using the notation from above, we do a Neumann expansion of the resolvent
\begin{align}\label{naiveneumann}
\nonumber \bigl(H_\Lambda - E^{(N)}_\Lambda-z\bigr)^{-1}
& = R_0(z)+ \sum_{n=1}^\infty R_0(z)\Bigl\{\Bigl(-H_\mathrm{I}\bigl(G^{(1)}_\Lambda,G^{(2)}_\Lambda\bigr)+E^{(N)}_\Lambda\Bigr)R_0(z)\Bigr\}^n \\
& = R_0(z) + \sum_{k=1}^\infty\sum_{\us\in\scrS^{(k)}_0} \sum_{U\in \scrP^{(1,N,k)}_\us} S^{(1,N,k)}_{\us,U,\Lambda}(z).
\end{align}
For the last equality, we -- for each $n$ -- simply multiplied out all the $n$ brackets, each containing $4$ terms from $H_\mathrm{I}$, cf~\eqref{CutoffInteraction}, and for $E^{(N)}_\Lambda$, we get a term for each $\usigma\in\scrS^{(2i)}_\LR$ with $2i\leq N$ (see Definition~\ref{def-selfenergy}). Observe that it follows from Lemma~\ref{lem-basicblockbound}~\ref{item-bb-k1} and~\ref{item-bb-bare} that there exists a $C_N(\Lambda)>0$ (as in the formulation of Theorem~\ref{thm-reordering}), such that the two infinite series above are absolutely convergent for $z$ with $\re z\leq -C_N(\Lambda)$. Note that $n$ and $k$ in the sums above do not count the same thing. Whereas $n$ counts the number of factors in the first Neumann expansion, $k$ counts the number of coupling functions, including those inside the counter-terms.

Fix $k\in\NN$ and $\us\in\scrS^{(k)}_0$. We may now use Lemma~\ref{lem-CritResumStep} recursively to conclude that
\begin{align*}
& \sum_{U\in \scrP^{(1,N,k)}_\us} S^{(1,N,k)}_{\us,U,\Lambda}(z)\\
 & =  \sum_{U_0\in \scrP^{(2,N,k)}_\us}\sum_{U\in \scrP^{(1,N,k)}_\us, U\preceq U_0} S^{(1,N,k)}_{\us,U,\Lambda}(z)\\
 & =  \sum_{U\in \scrP^{(2,N,k)}_\us}S^{(2,N,k)}_{\us,U,\Lambda}(z)\\
 &\qquad \vdots\\
 &= \sum_{U\in \scrP^{(N-1,N,k)}_\us}S^{(N-1,N,k)}_{\us,U,\Lambda}(z)\\
 &= S^{(N,N,k)}_{\us,\emptyset,\Lambda}(z)
 = S^{(N,k)}_\uut\bigl(z;G_{s_1,\Lambda},\dotsc,G_{s_k,\Lambda}\bigr),
\end{align*}
where $\uut\in\scrS^{(N,k)}$ is a representative of the equivalence class $\varphi^{(N,k)}(\us)\in\scrT^{(N,k)}/\!\sim$.
Inserting back into the Neumann expansion, we conclude that
\begin{equation*}
\bigl(H_\Lambda - E^{(N)}_\Lambda-z\bigr)^{-1}
 =R_0(z) + \sum_{k=1}^\infty
\sum_{[\uut]\in\scrT^{(N,k)}/\!\sim} S^{(N,k)}_\uut\bigl(z;G_{s_1,\Lambda},\dotsc,G_{s_k,\Lambda}\bigr),
\end{equation*}
which concludes the proof. Recall that the right-hand side is absolutely convergent for $z$ with $\re(z)\leq -C_N(\Lambda)$. One may also appeal to Lemma~\ref{lem-basicsummandbound} to get absolute convergence (for a possibly different $C_N(\Lambda)$).
\end{proof}

\section{Regular Wick Monomials}\label{Sec-RegOp}

The renormalized handed blocks of operators, $T^{(k)}_\us(z;\uF^{(k)})$ for $k\in\NN$ and $\us\in\scrS^{(k)}$, cf. Definition~\ref{def-remblocks}, are difficult to estimate directly. Recalling the main ideas of \cite{AlvaMoll2021}, the strategy begins with computing normal ordered expressions for these operators. The terms in the normal ordered expressions will be the regular Wick monomials of this section. 

\subsection{The definition of regular Wick monomials}

\begin{Def}\label{Notation}
Let $n \in \NN$. We introduce the following general notations: 
\begin{itemize}[left=0pt .. \parindent]
\item $\Jab \subset \llbracket 1, n \rrbracket$, respectively $\Jcb \subset \llbracket 1, n \rrbracket$, will label a set of indices associated to bosonic annihilation operators, respectively creation operators. We impose that $\Jab \cap \Jcb = \emptyset$.
\item $\Jaf \subset \llbracket 1, n \rrbracket$, respectively $\Jcf \subset  \llbracket 1, n \rrbracket$, will label a set of indices associated to fermionic annihilation operators, respectively creation operators.  We impose that $\Jaf \cap \Jcf = \emptyset$.
\item Abbreviate $\scrJ = J_\ab\cup J_\cb\cup J_\af\cup J_\cf$. 
\item  $I_\ab \subset \llbracket 1, n \rrbracket\setminus (\Jab\cup \Jcb) $ is the subset of indices labelling bosonic annihilation operators that have been contracted,  and $f_{\ab}\colon I_\ab\to \llbracket 1, n \rrbracket\setminus (\Jab\cup \Jcb\cup I_\ab)$ a bijection with $f_\ab(i) > i$, for all $i\in I_\ab$ encodes which bosonic creation operator that was involved in the contraction.
\item $I_\af \subset \llbracket 1, n \rrbracket\setminus (\Jaf\cup \Jcf) $ is a subset of indices labelling fermionic annihilation operators that have been contracted, and $f_{\af}\colon I_\af \to \llbracket 1, n \rrbracket\setminus (\Jaf\cup \Jcf\cup I_\af)$ a bijection with $f_\af(i) >i$, for all $i\in I_\af$ encodes which fermionic creation operator that was involved in the contraction.
\end{itemize}
\end{Def}

\begin{rk}\label{FromSetsToSignature} Let  $J_\ab,J_\cb,J_\af,J_\cf$ and $I_\ab,I_\af$ be subsets of $\llbracket 1,n\rrbracket$, together with the functions $f_\ab$ and $f_\af$, be given as in Definition~\ref{Notation}. Then there is a unique signature string $\us\in\scrS^{(n)}_0$ associated with the sets.
Noting that $J_\ab\cup J_\cb\cup I_\ab\cup f_\ab(I_\ab)=
J_\af\cup J_\cf\cup I_\af\cup f_\ab(I_\af)= \llbracket 1,n\rrbracket$, we set 
\[
s_i = \begin{cases}
    \ab\af, & \textup{if } i\in (J_\ab\cup I_\ab)\cap (J_\af\cup I_\af)\\
 \ab\cf, & \textup{if } i\in (J_\ab\cup I_\ab)\cap (J_\cb\cup f_\af(I_\af))\\
 \cb\af, & \textup{if } i\in (J_\cb\cup f_\ab(I_\ab))\cap (J_\af\cup I_\af)\\
 \cb\cf, & \textup{if } i\in (J_\cb\cup f_\cb(I_\ab))\cap (J_\cf\cup f_\af(I_\af)).
\end{cases}
\]
Conversely, given a signature string $\us\in\scrS^{(n)}_0$ one may reconstruct the sets $J_\ab\cup I_\af,J_\cb\cup f_\ab(I_\ab),J_\af\cup I_\af,J_\cf\cup f_\af(I_\af)$, but the signature string does not contain any information about contractions.
\end{rk}

We aim to normal order the renormalized handed blocks of operators, introduced in Subsect.~\ref{subsec-RenBlocks}, with a view towards obtaining improved norm estimates that are uniform in the ultraviolet cutoff $\Lambda$ (when the $F_i$'s are replaced by appropriate coupling functions). 
We introduce a class of Wick monomials that will appear when we normal order renormalized blocks. As we shall see in Lemma~\ref{HNFCT}, the product of two Wick monomials can be written as a finite sum of Wick monomials.

\begin{Def}[Wick Monomial]
\label{WickMonomial} Let $n\in\NN$. We say that $T$ is a \emph{Wick monomial} of length $n$, if $T\colon \CC_-^*\to \scrM^{(n)}$ and there exist sets $\Jab$, $\Jcb$, $\Jaf$, $\Jcf$, $I_\af$, $I_\ab$ and functions $f_\ab,f_\af$,  defined as in Definition~\ref{Notation}, together with 
$\scrA \subset \llbracket 1,n-1 \rrbracket$ ($\scrA=\emptyset$ if $n=1$) and 
a continuous function $\scrL(\{k_j\}_{j \in I_\af}, \{ q_j\}_{j \in I_\ab})$ such that for $z\in\CC_-^*$ and $\uF^{(n)} = (F_1,\dotsc,F_n)\in (L^2(\RR^d\times\RR^d))^n$:
\begin{align}
\label{GeneralFormRegularOperator}
 \nonumber T(z;\uF^{(n)}) & =    \int \prod^n_{i = 1} F_i(k_i, q_i) \scrL\bigl(\{k_j\}_{j \in I_\af}, \{ q_j\}_{j \in I_\ab}\bigr) \\
\nonumber & \quad \prod_{i \in I_\af}\kdelta(k_i-k_{f_{\af}(i)})  \prod_{i \in I_\ab}\kdelta(q_i-q_{f_{\ab}(i)})   \prod_{j \in  \Jcf} \cf(k_j)\prod_{j \in \Jcb } \cb(q_j)\\
 & \quad \prod_{i \in \scrA} R_0(z-C_{i}-R_i)  \prod_{j \in \Jab } \ab(q_j) \prod_{j \in  \Jaf} \af(k_j) \prod_{i=1}^n dq_i dk_i,
\end{align}
where: 
\begin{equation}\label{Ccomponents}
C_i =  \sum_{\underset{j \in \Jab }{j \leq i}} \wb(q_j) +  \sum_{\underset{j \in  \Jaf}{j \leq i}} \wf(k_j)  + \sum_{\underset{j \in \Jcb}{i < j}} \wb(q_j) + \sum_{\underset{j \in  \Jcf}{i < j}} \wf(k_j),
\end{equation}
and 
\begin{equation}
    \label{Rcomponents}
    R_i = \sum_{\underset{ j \leq i <  f_{\ab}(j)}{j\in I_\ab}}\wb(q_j) + \sum_{\underset{ j\leq i <  f_{\af}(j)}{j\in I_\af}}\wf(k_j).
\end{equation}
\end{Def}

\begin{Def}[Indices of a Wick monomial]\label{def-Tnumbers} Remark~\ref{FromSetsToSignature} leads us to introduce two indices that are associated with a Wick monomial $T$ as introduced in Definition~\ref{WickMonomial}. Let $\us\in\scrS^{(n)}_0$ be the signature string from Remark~\ref{FromSetsToSignature} associated with $T$. Then we set
\[
n_\ab(T) = n_\ab(1,n;\us) \quad \textup{and} \quad n_\af(T) = n_\af(1,n;\us).
\]
\end{Def}

\begin{rk}
Note that the indices $n_\ab(T)$ and $n_\af(T)$ only depend on $T$ through its signature $\us$. That is, they are independent of the contractions in the Wick monomial. 
\end{rk}

When attempting to estimate Wick monomials, some particle channels have to be studied in a specific way, for example, we will have to exploit boundedness of smeared fermionic annihilation and creation operators, or proceed differently when operators have been all contracted. Therefore, the sets $\Jab$, $\Jaf$, $\Jcb$ and $\Jcf$ are not enough to state our regularity conditions and that is why we need to introduce the concept of covers: 

\begin{Def}[Cover]
\label{DefCover}
Let the sets $\Jaf, \Jcf, \Jab$ and $\Jcb$ be as in Definition~\ref{Notation}. A \emph{cover} is a partition $(P_{\ab}, P_{\af}, P_{\cb}, P_{\cf})$ of $\scrJ= \Jaf \cup \Jcf \cup \Jab \cup \Jcb$ into pairwise disjoint sets, fulfilling the following properties:  $P_\ab\cup P_\cb \cup P_\af \cup P_\cf = \scrJ$ and
\begin{equation*}
    P_{\ab}  \subseteq \Jab, \quad 
    P_{\af}  \subseteq \Jaf, \quad
    P_{\cb}  \subseteq \Jcb \quad\textup{and} \quad 
     P_{\cf}  \subseteq \Jcf.
\end{equation*}
\end{Def}

In preparing to estimate Wick monomials, we will have to set the stage by placing the creation and annihilation operators in a particular order. More precisely, the order of the fermionic operators whose indices correspond to non-contracted bosonic operators will have to be treated with caution in order to use boundedness of smeared fermionic operators. This will be the subject of Section~\ref{Sec-OrdOp} on Ordered Wick Monomials. We introduce admissible maps that will take care of this ordering: 

\begin{Def}[Admissible maps]
\label{Admissiblemaps}
Let $\Jaf, \Jcf, \Jab, \Jcb\subset \llbracket 1, n \rrbracket$ with a cover $(P_{\cb},P_{\ab},P_{\cf},P_{\af})$, be as in Definitions~\ref{Notation} and~\ref{DefCover}. Then
\begin{itemize}[left=0pt .. \parindent]
\item A function $\sigma\colon J_\cf\cup J_\af\to \NN_{0,\infty}$ is called \emph{admissible} w.r.t. the cover if:
\begin{itemize}
\item For any  $j \in P_{\cf}\cup P_\af$, we have $\sigma(j) = j $.
\item For any $j \in \Jcf \setminus P_{\cf}$, $\sigma(j) \leq j$.
\item For any $j \in \Jaf \setminus P_{\af}$, $\sigma(j) \geq j$.
\item The range of $\sigma$ is included in $\{0\} \cup \llbracket 1, n \rrbracket \cup \{\infty\}$.
\end{itemize}
\item  
$\psigma = \set{j\in J_\cf\cup J_\af}{\sigma(j)=0 \textup{ or } \sigma(j) = +\infty} = \set{j\in J_\cf}{\sigma(j)=0}\cup \set{j\in J_\af}{\sigma(j)=+\infty}$.
\end{itemize}
\end{Def}

\begin{rk}\label{rem-id-admissible} Note that the identity map $\id\colon J_\af\cup J_\cf\to \llbracket 1,n\rrbracket$, defined by $\id(j) = j$ is always admissible with $\partial\id = \emptyset$.
\end{rk}

To estimate the uncontracted annihilation and creation operators in a Wick monomial using kinetic energy bounds, we will distribute fractional powers of the available resolvents onto the annihilation and creation operators. To do so, it is useful to introduce the following two definitions:

\begin{Def}
\label{SumsinRes}
Let us consider $\Jaf, \Jcf, \Jab, \Jcb$  defined as in Definition \ref{Notation}, a cover $(P_{\ab}, P_{\af}, P_{\cb}, P_{\cf})$ defined as in Definition~\ref{DefCover}, together with an admissible map $\sigma\colon \llbracket 1,n\rrbracket\to \NN_{0,\infty}$. For any $i \in \llbracket 1, n-1 \rrbracket$ we define
\[
C_i(\sigma) = \sum_{\underset{j\in  \Jcb }{j > i}} \wb(q_j) + \sum_{\underset{j\in  \Jcf }{\sigma(j)>i}} \wf(k_j)+\sum_{\underset{j\in \Jab }{j \leq i}} \wb(q_j) +\sum_{\underset{j\in \Jaf }{\sigma(j) \leq i}}\wf(k_j).
\]
Note that $C_i(\id) = C_i$, cf. \eqref{Ccomponents}.
\end{Def}

\begin{Def}[Admissible exponents]\label{def-admexp} Let $\Jaf, \Jcf, \Jab, \Jcb$ be as in Definition~\ref{Notation}, and a cover $(P_{\ab}, P_{\af}, P_{\cb}, P_{\cf})$ be as in Definition~\ref{DefCover}. Let $\scrA\subset \llbracket 1,n-1\rrbracket$ and suppose $\{\alpha_i\}_{i\in\scrJ}$ is given with $\alpha_i\geq 0$ for all $i\in\scrJ$. A  collection of exponents $\{\gamma_{i;j}\}_{i\in\scrA,j\in \scrJ}$ is called \emph{admissible exponents} if  for all $j\in\scrJ$ and $i\in\scrA$: $\gamma_{i,j}\geq 0$ and :
 \begin{equation}
 \label{ConstraintOnGamma}
\begin{aligned}
& \textup{if } j\in  P_{\cb}\cup \bigl(P_\cf\cap (J_\cb \cup f_\ab(I_\ab))\bigr)\textup{ and } i\geq j:  & & \gamma_{i;j}  = 0, \\
&  \textup{if } j\in P_{\cf}\cap I_\ab\textup{ and } i\geq f_\ab(j): & & \gamma_{i;j}  = 0,\\
& \textup{if }  j\in  P_{\ab}\cup \bigl(P_\af\cup (J_\ab\cup I_\af)\bigr)\textup{ and } i <j: & &  \gamma_{i;j}  = 0,\\
&   \textup{if }  j\in P_{\af}\cap f_\ab(I_\ab)\textup{ and }  i< f_\ab^{-1}(j): & & \gamma_{i;j}  = 0, 
\end{aligned}
\end{equation}
together with
\begin{equation}\label{SumOfGammas}
\forall j \in \scrJ : \quad\alpha_j = \sum_{i\in \scrA} \gamma_{i;j}, \qquad
\forall i \in \scrA:  \quad
\overline{\gamma}_i := 
  \sum_{j\in \scrJ  } \gamma_{i;j} \leq 1.
\end{equation}
\end{Def}

We are now ready, keeping the definitions above in mind, to introduce the notion of regular Wick monomial:

\begin{Def}[Regular Wick monomial]
\label{RegularOperator}
Let $T$ be a Wick monomial of length $n$. We say that $T$ is \emph{regular},
if there exists a positive constant $c_T$ such that for any cover $(P_{\ab},P_{\af}, P_{\cb},P_{\cf})$, any admissible map $\sigma\colon J_\af\cup J_\cf\to \NN_{0,\infty}$, any collection $\{\alpha_i\}_{i =1}^n$ with
\begin{equation}
\label{Caraalpha}
 \forall i \in \llbracket 1, n \rrbracket: \quad 0 \leq \alpha_i \leq 1 \qquad \textup{and} \qquad
\sum^n_{i =1} \alpha_i  =  n-1,
\end{equation}
 there exist $\{\beta_{i}\}_{i=1}^n$ with $0\leq \beta_i\leq \alpha_i$ and $\beta_{i}=0$ for $i\in\psigma$, and a collection of admissible exponents 
 $\{\gamma_{i;j}\}_{i\in\scrA,j\in \scrJ}$,
such that for any $z\in \CC_-^*$, the following estimate holds:
\begin{align}
\label{regularityproperty}
\nonumber & \Bigl|\scrL\bigl( \{ k_j\}_{j \in I_\af}, \{ q_j\}_{j \in I_\ab}\bigr)\Bigr|  \prod_{j \in P_{\ab}\cup P_{\cb}} \wb(q_j)^{-\alpha_j} \prod_{ j \in P_{\af}\cup P_{\cf}} \wf(k_j)^{-\alpha_j} \\ 
\nonumber & \quad \qquad \prod_{i\in I_\ab}\kdelta\bigl(q_{i} - q_{f_\ab(i)}\bigr) \prod_{j\in I_\af}\kdelta\bigl(k_{j} - k_{f_\af(j)}\bigr) \biggl\|  \prod_{i\in \scrA} R_0\bigl(z-C_i(\sigma) -R_i\bigr)^{1-\overline{\gamma}_i}  \biggr\|\\
&\qquad \leq c_T\frac{ \prod_{i\in I_\ab}\kdelta(q_{i} - q_{f_\ab(i)}) \prod_{j\in I_\af}\kdelta(k_{j} - k_{f_\af(j)})}{\prod^n_{i=1}[\wb(q_i)]^{\alpha_i-\beta_{i}}[\wf(k_i)]^{\beta_{i}}}.
\end{align}
\end{Def}

Estimate \eqref{regularityproperty} is an important tool to estimate renormalized handed blocks of operators. The definitions of coverings, admissible exponents and the estimate \eqref{regularityproperty} ensures that the product of Regular Wick Monomials are finite sums of Regular Wick Monomials. See Lemma~\ref{HBFCTANFCO}. 

\begin{rk}\label{rem-Wick}
Observe that, pointwise in the $q_j$'s and $k_j$'s, the right-hand side of \eqref{regularityproperty} is bounded uniformly in $z\in\CC_-^*$. 
Since, for $z\in \CC_-^*$, we have  $\|R_0(z-C_i(\sigma)-R_i)\|^2 = ((C_i(\sigma)+R_i+|\re(z)|)^2 + \im(z)^2)^{-1}$ , we conclude that for any $i\in\scrA$, the function $C_i(\sigma)+R_i$ is non-zero, i.e., there is at least one dispersion relation present in the total sum, either in $C_i(\sigma)$ or in $R_i$. Hence, choosing $\sigma=\id$, regular Wick monomials are defined for all $z\in \CC_-$ and the estimate in \eqref{regularityproperty} holds also for $z=0$.
\end{rk}

When estimating regular Wick monomials, one can observe three types of behaviors. Either the monomial can be controlled by a resolvent at its right, or by a resolvent at its left, or alternatively, we have contracted all the annihilation and creation operators and we need to add a counterterm. The precise definition of these three types of monomials are stated in the following definition. 

\begin{Def}[Handed Wick monomials]\label{def-handedWick} Let $T$ be a regular Wick monomial. We say that $T$ is:
\begin{enumerate}[label = \textup{(\arabic*)}]
    \item\label{item-RHWick}  \emph{right-handed} if  $\Jab \cup (\Jaf\setminus  \Jcb ) \neq \emptyset$. Right-handed Wick monomials will be denoted by
$\rightT$.
\item\label{item-LHWick}  \emph{left-handed} if  $\Jcb \cup(\Jcf\setminus \Jab  ) \neq \emptyset$. Left-handed Wick monomials will be denoted $\leftT$.
\item\label{item-FCWick}  \emph{fully contracted} if  $\Jcb =\Jcf =  \Jab = \Jaf  =\emptyset$. 
\end{enumerate}
\end{Def}

\begin{rk}
Note that if 
\begin{equation*}
\Jab \cup (\Jaf\backslash \Jcb )  = \emptyset\quad\textup{and}\quad
\Jcb \cup (\Jcf\backslash \Jab)  =  \emptyset,
\end{equation*}
then 
\[
\Jab=\Jcb=\Jaf=\Jcf=\emptyset.
\]
In other words, if a regular Wick monomial is neither right-handed nor left-handed, then it is fully contracted.
\end{rk}

\begin{rk}\label{rem-naToHanded} Let $T$ be a regular Wick monomial of length $n$. Recall the indices $n_\ab(T)$ and $n_\af(T)$ from Definition~\ref {def-Tnumbers}. We have the following simple observations:
\begin{enumerate}[label = \textup{(\roman*)}]
\item If we have either $n_\ab(T)>0$ or we have $n_\ab(T) = 0$ and $n_\af(T)>0$, then $T$ is right-handed.
\item If we have either $n_\ab(T)<0$ or we have $n_\ab(T) = 0$ and $n_\af(T)<0$, then $T$ is left-handed.
\item If both $n_\ab(T) = n_\af(T) = 0$, then $T$ is either fully contracted or both left- and right-handed at the same time.
\end{enumerate}
\end{rk}

We finally introduce the concept of adjoint of a Wick monomial, which will enable us to simplify many of our proofs.

\begin{Def}[Adjoint of  a Wick monomial] Let $T$ be a Wick monomial. We define the adjoint of $T$, denoted by $T^*$, as the map $T^*\colon \CC^*_- \times 
L^2(\RR^d\times \RR^d)^{n}\to \scrL_\fin$ defined by setting 
\[
T^*(z;F_1, \dotsc, F_n)= T(\overline{z};\overline{F}_n,\dotsc, \overline{F}_1)^*_{\vert\Hfin}.
\]
\end{Def}

\begin{lem}
\label{LemAdjoint}
    Let $T$ be a Wick monomial of length $n$. Then its adjoint $T^*$ is also a Wick monomial of length $n$. Furthermore, if $T$ is regular, so is $T^*$ and we may choose the bounding constant $c_{T^*} = c_T$.
\end{lem}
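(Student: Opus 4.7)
The plan is to compute the formal adjoint of the integrand in \eqref{GeneralFormRegularOperator} and then reindex by $i\mapsto n-i+1$ to put the result back into the form of Definition~\ref{WickMonomial}. First, I would formally take the adjoint of $T(\bar z;\bar F_n,\dotsc,\bar F_1)$ on $\Hfin$. Scalar conjugation gives $\overline{\scrL}$ and $z$ in place of $\bar z$; the operator product reverses and each factor is replaced by its adjoint, using $\ab(q)^*=\cb(q)$ and $\cf(k)^*=\af(k)$ (and their reverses); each resolvent $R_0(\bar z - C_i - R_i)$ becomes $R_0(z - C_i - R_i)$, and since all these resolvents are mutually commuting functions of $H_0$ plus scalar shifts, their reversed product equals the natural one.

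Next, the change of integration variables $i\mapsto n-i+1$ both puts $F_i$ back in the $i$-th slot and induces the structural swaps
\[
J_\ab^{(*)} = \{n-j+1 : j\in J_\cb\}, \quad J_\cb^{(*)} = \{n-j+1 : j\in J_\ab\},
\]
and analogously $J_\af^{(*)} \leftrightarrow J_\cf$. The contraction data transform as $I_\ab^{(*)} = \{n - f_\ab(i)+1 : i\in I_\ab\}$ with $f_\ab^{(*)}(n - f_\ab(i)+1) = n - i + 1$, which automatically satisfies the monotonicity $f_\ab^{(*)}(j) > j$ since $f_\ab(i) > i$; the fermionic case is identical. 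The resolvent index set becomes $\scrA^{(*)} = \{n-i : i\in\scrA\}$, and a direct check of \eqref{Ccomponents}--\eqref{Rcomponents} shows $C_i + R_i$ of $T$ transforms into $C_{n-i}^{(*)} + R_{n-i}^{(*)}$ of $T^*$. With the natural increasing-index convention for the fermionic products within each of $J_\cf$ and $J_\af$, the combined reversal-plus-relabeling returns exactly increasing order in the new indices, so no fermionic signs appear. Continuity of $\scrL^{(*)} = \overline{\scrL}$ is immediate, so $T^*$ is a Wick monomial of length $n$.

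For the regularity claim, given a cover $(P_\ab^{(*)},P_\af^{(*)},P_\cb^{(*)},P_\cf^{(*)})$, an admissible map $\sigma^*$, and $\{\alpha_i^*\}$ satisfying \eqref{Caraalpha} for $T^*$, I would pull them back via $j\mapsto n-j+1$ to data for $T$: set $P_\ab = \{n-j+1 : j\in P_\cb^{(*)}\}$ and analogously for the other three, $\alpha_i = \alpha^*_{n-i+1}$, and define $\sigma$ by reversing $\sigma^*(n-j+1)$ through $n-\cdot+1$, with $0\leftrightarrow\infty$. The admissibility inequalities $\sigma^*(j)\leq j$ on $J_\cf^{(*)}\setminus P_\cf^{(*)}$ then become $\sigma(j)\geq j$ on $J_\af\setminus P_\af$, and symmetrically. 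Applying regularity of $T$ yields $\{\beta_i\}$ and admissible exponents $\{\gamma_{i;j}\}$; setting $\beta_i^* = \beta_{n-i+1}$ and $\gamma^*_{i;j} = \gamma_{n-i;\,n-j+1}$, one verifies that the four cases of \eqref{ConstraintOnGamma} and the constraint \eqref{SumOfGammas} for $T^*$ are exactly the $\ab\leftrightarrow\cb$, $\af\leftrightarrow\cf$ duals of those for $T$. After the change of integration variables, the estimate \eqref{regularityproperty} for $T^*$ reduces identically to that for $T$, since $|\scrL^{(*)}| = |\scrL|$, the Kronecker deltas are symmetric, the $\wb^{\alpha}$ and $\wf^{\alpha}$ products are invariant under relabeling, and the resolvent norms agree. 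Hence one may take $c_{T^*} = c_T$.

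The main obstacle is purely notational bookkeeping: checking that each of the four admissibility inequalities on $\sigma$ and each of the four cases of \eqref{ConstraintOnGamma} (together with the direction constraints on the contractions $f_\ab$, $f_\af$) is mapped to its dual under the involution. No analytic estimate is needed beyond what is already supplied by the regularity of $T$.
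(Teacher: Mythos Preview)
Your proposal is correct and follows essentially the same route as the paper: both use the involution $i\mapsto n+1-i$ (the paper calls it $\varphi$, extended to swap $0\leftrightarrow\infty$), transform all index sets, contraction maps, $\scrA$, cover, $\sigma$, $\alpha_i$, and admissible exponents through this involution, and then verify case by case that the constraints of Definitions~\ref{Admissiblemaps} and~\ref{def-admexp} are mapped to their $\ab\leftrightarrow\cb$, $\af\leftrightarrow\cf$ duals so that \eqref{regularityproperty} for $T^*$ reduces to that for $T$. Your formulas $\scrA^{(*)}=\{n-i:i\in\scrA\}$ and $\gamma^*_{i;j}=\gamma_{n-i;\,n-j+1}$ coincide with the paper's $\scrA'=\varphi(\scrA)-1$ and $\gamma'_{i;j}=\gamma_{\varphi^{-1}(i+1);\varphi^{-1}(j)}$.
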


\begin{rk}\label{rem-TAdjoint}
 Let $\us\in\scrS^{(n)}_0$ be the signature string associated with $T$ (cf.~Remark~\ref{FromSetsToSignature}). Then $\us^*$ is the signature string associated with $T^*$. Hence, we have $n_\ab(T^*) = -n_\ab(T)$ and $n_\af(T^*) = -n_\af(T)$. See Definition~\ref{def-Tnumbers} for the indices $n_\ab(T)$ and $n_\af(T)$.
\end{rk}

\begin{proof}
    Since $T$ is a Wick monomial it has the form \eqref{GeneralFormRegularOperator}, and we have
  \begin{align}\label{FormOfAdjoint}
 \nonumber \bigl(T(z;\overline{F_1}, &\dots \overline{F_n})\bigr)^*  =   \int \prod^n_{j = 1} F_j(k_j, q_j) \ \overline{ \scrL\bigl(\{k_j\}_{j \in I_\af}, \{ q_j\}_{j \in I_\ab}\bigr)} \\
\nonumber  & \quad \prod_{j \in I_\af}\kdelta(k_j-k_{f_{\af}(j)})  \prod_{j \in I_\ab}\kdelta(q_j-q_{f_{\ab}(j)}) 
  \prod_{j \in  \Jaf} \cf(k_j)\prod_{j \in \Jab } \cb(q_j)\\
  & \quad \prod_{i \in \scrA} R_0(\overline{z}-C_{i}-R_i)  \prod_{j \in \Jcb } \ab(q_j) \prod_{j \in  \Jcf} \af(k_j) \prod_{j=1}^n dq_j dk_j.
\end{align}
Let us define a bijection $\varphi\colon\{0\}\cup \llbracket 1 , n \rrbracket\cup \{\infty\} \to\{0\}\cup \llbracket 1 , n \rrbracket\cup \{\infty\}$ by
\begin{equation}\label{varphi}
\varphi(i) = \begin{cases}
n+1-i & \textup{ if } i \in \llbracket 1 , n \rrbracket\\
0 & \textup{ if } i=\infty\\
\infty & \textup{ if } i=0. 
\end{cases}
\end{equation}
Although $\varphi^{-1} = \varphi$, we write $\varphi^{-1}$ below, to better track where objects live. 
We may now introduce the following objects
\begin{align}\label{adjoint-primeobjects}
\nonumber   \Jcb' & = \varphi(\Jab)    & \Jab' & = \varphi(\Jcb) & I_\ab' & = \varphi(f_{\ab}(I_\ab))\\
\nonumber    \Jcf' & = \varphi(\Jaf) & \Jaf' & = \varphi(\Jaf)   & I_\af' & = \varphi(f_{\af}(I_\af))\\
\nonumber   f_\ab' & = \varphi\circ f_\ab^{-1} \circ \varphi^{-1} & \nonumber f_\af' & = \varphi\circ f_\af^{-1} \circ \varphi^{-1} &  &\\
   \scrA' & = \varphi(\scrA) - 1   & F'_i & = F_{\varphi(i)}. & & 
\end{align}

Let now $i\in \scrA\subset \llbracket 1,n-1\rrbracket$ and set $\nu = \varphi(i)-1\in\scrA'$. Recall the form of $C_i$ from \eqref{Ccomponents}. We compute 
 {\allowdisplaybreaks
\begin{align}\label{CiAdjoint}
\nonumber C_i & = \sum_{\underset{j\in  J_\cb }{j > i}} \wb(q_j) + \sum_{\underset{j\in  \Jcf }{j>i}} \wf(k_j)+\sum_{\underset{j\in \Jab }{j \leq i}} \wb(q_j) +\sum_{\underset{j\in \Jaf }{j \leq i}}\wf(k_j)\\
\nonumber & = \sum_{\underset{j\in J_\cb  }{\varphi(j) < \varphi(i)}} \wb(q_{j}) + \sum_{\underset{j\in  J_\cf }{\varphi(j) < \varphi(i)}} \wf(k_{j})+\sum_{\underset{j\in J_\ab }{\varphi(j) \geq \varphi(i)}} \wb(q_{j}) +\sum_{\underset{j\in J_\af }{\varphi(j)\geq \varphi(i)}}\wf(k_{j})\\
\nonumber & = \sum_{\underset{j\in J_\cb  }{\varphi(j)\leq \nu}} \wb(q_{j}) + \sum_{\underset{j\in  J_\cf }{\varphi(j) \leq \nu}} \wf(k_{j})+\sum_{\underset{j\in J_\ab }{\varphi(j) > \nu}} \wb(q_{j}) +\sum_{\underset{j\in J_\af }{\varphi(j) > \nu}}\wf(k_{j})\\
\nonumber & = \sum_{\underset{\ell\in J'_\ab }{\ell \leq \nu}} \wb(q_{\varphi^{-1}(\ell)}) + \sum_{\underset{\ell\in  J'_\af }{\ell\leq \nu}} \wf(k_{\varphi^{-1}(\ell)})\\
\nonumber & \quad +\sum_{\underset{\ell\in J'_\cb }{\ell > \nu}} \wb(q_{\varphi^{-1}(\ell)}) +\sum_{\underset{\ell\in J'_\cf }{\ell > k}}\wf(k_{\varphi^{-1}(\ell)})\\
& =: \tC'_\nu.
\end{align}
}
 Note that the momentum variables entering into $\tC'_\nu$ has been inverted by $\varphi$ as compared to $C'_\nu$ (defined with the new sets $(J'_\ab,J'_\cb,J'_\af,J'_\cf)$ instead of the original sets).

 Similarly, we may for $i\in\scrA$ compute, up the identification $q_j = q_{f_\ab(j)}$, for $i\in I_\ab$, and 
$k_j = f_\af(k_j)$, for $j\in I_\af$, coming from the delta functions. Recalling the expression of $R_i$ from \eqref{Rcomponents}, we have:
\begin{align*}
R_i & =\sum_{\underset{ j \leq i <  f_{\ab}(j)}{j\in I_\ab}}\wb(q_{f_\ab(j)}) + \sum_{\underset{ j\leq i <  f_{\af}(j)}{j\in I_\af}}\wf(k_{f_\af(j)})\\
    & = \sum_{\underset{ f_\ab^{-1}(\ell) \leq i <  \ell}{\ell \in f_\ab(I_\ab)}}\wb(q_\ell) + \sum_{\underset{ f_\af^{-1}(\ell)\leq i <  \ell}{\ell\in f_\af(I_\af)}}\wf(k_\ell)\\
    & = \sum_{\underset{ f_\ab^{-1}(\varphi^{-1}(u)) \leq  i <  \varphi^{-1}(u)}{u \in I'_\ab}}\wb(q_{\varphi^{-1}(u)}) + \sum_{\underset{ f_\af^{-1}(\varphi^{-1}(u)) \leq  i < \varphi^{-1}(u)}{u\in I'_\af}}\wf(k_{\varphi^{-1}(u)})\\
     & = \sum_{\underset{ u < \varphi(i) \leq  f'_\ab(u)}{u \in I'_\ab}}\wb(q_{\varphi^{-1}(u)}) + \sum_{\underset{ u < \varphi(i) \leq f'_\af(u) }{u\in I'_\af}}\wf(k_{\varphi^{-1}(u)})\\
   & = \sum_{\underset{  u \leq \nu < f'_\af(u) }{u \in I'_\ab}}\wb(q_{\varphi^{-1}(u)}) + \sum_{\underset{ u \leq \nu < f_{\af}(u) }{u\in I'_\af}}\wf(k_{\varphi^{-1}(u)})\\
    & =: \tR'_\nu
\end{align*}
where again $\nu = \varphi(i)-1\in\scrA'$. Note that the momentum variables entering into $\tR'_\nu$ has been inverted by $\varphi$ as compared to $R'_\nu$.

Therefore, for any $i\in \scrA'$, we have 
\[
R_0(\overline{z} - C_i-R_i) = R_0(\overline{z} - \tC'_\nu-\tR'_\nu),
\]
where $\nu= \varphi(i)-1$,
from which it can be deduced by relabeling the integration variables $\nu = \varphi(i)-1$ and $\mu = \varphi(j)$ (comparing with the indices in \eqref{FormOfAdjoint}) that 
  \begin{align*}
 \nonumber &(T(z;\overline{F_1},\dots \overline{F_n}))^*  =   \int \prod^n_{\mu = 1} F_\mu(k_\mu, q_\mu) \ \overline{ \scrL\bigl(\{k_\mu\}_{\mu \in I'_\af}, \{ q_\mu\}_{\mu \in I'_\ab}\bigr)} \\
\nonumber  & \qquad \prod_{\mu \in I_\af'}\kdelta(k_\mu-k_{f'_{\af}(\mu)})  \prod_{\mu \in I_\ab'}\kdelta(q_\mu-q_{f'_{\ab}(\mu)}) 
   \prod_{\mu \in  \Jcf'} \cf(k_\mu)\prod_{\mu \in \Jcb' } \cb(q_\mu) \\
   & \qquad  \prod_{\nu \in \scrA'} R_0(\overline{z}-C'_{\nu}-R'_\nu)  \prod_{\mu \in \Jab' } \ab(q_\mu) \prod_{\mu \in  \Jaf'} \af(k_\mu)\prod_{\mu=1}^n dq_\mu dk_\mu,
\end{align*}
which is of the form of a Wick monomial, as formulated in 
Definition~\ref{WickMonomial}.

It remains to prove that $T^*$ is regular. Let $(P'_\ab, P'_\cb, P'_\af,P'_\cf)$ be a cover for $(J_\ab',J_\ab',J_\af',J_\cf')$, $\sigma'\colon J_\af'\cup J_\cf'\to \NN_{0,\infty}$  be an admissible function and suppose $\{\alpha'_i\}_{i=1}^n$ fulfills the conditions \eqref{Caraalpha}.  Define 
\begin{align*}
     P_\ab & = \varphi^{-1}(P'_\cb) & P_\cb & = \varphi^{-1}(P'_\ab)\\
   P_\af & = \varphi^{-1}(P'_\cf) & P_\cf & = \varphi^{-1}(P'_\af),
\end{align*}
which is cover for $(J_\ab,J_\cb,J_\af,J_\cf)$, and
\begin{equation}\label{alphasigmaprime}
     \alpha_i   = \alpha'_{\varphi(i)},
     \quad  \sigma = \varphi^{-1} \circ \sigma' \circ \varphi.
\end{equation}
Note that $\sigma\colon J_\af\cup J_\cf\to \NN_{0,\infty}$ is admissible.
Furthermore, we observe that $\psigma = \varphi^{-1}(\psigma')$.

Since, $T$ is a regular Wick monomial, there exist $\{\beta_{i}\}_{i=1}^n$ with $0\leq \beta_i\leq \alpha_i$ and $\beta_i=0$ for $i\in\psigma$, and a collection of admissible exponents, cf. Definition~\ref{def-admexp}, 
$\{\gamma_{i;j}\}_{i\in\scrA,j\in \scrJ}$, 
 such that for any $z\in \CC_-^*$, the estimate \eqref{regularityproperty} from Definition~\ref{RegularOperator} holds.
 
Before continuing, we pause to relate $C_i(\sigma)$, $i\in\scrA$,  to $C'_\nu(\sigma')$, where
$\nu = \varphi(i)-1\in\scrA'$. Recall that $C_i(\sigma)$ is defined in Definition~\ref{SumsinRes}. For the two summands involving boson momenta in the slightly simpler computation \eqref{CiAdjoint}, there is no change. As for the two sums over fermion momenta, we start with $\sigma(j)$ in place of simply $j$ in \eqref{CiAdjoint}. Through the steps in the computation \eqref{CiAdjoint}, the index $j$ undergoes the changes
\[
j \to \varphi(j) \to \varphi(\varphi^{-1}(\ell)) = \ell.
\]
Repeating the computation for $C_i(\sigma)$, we would instead have 
\[
\sigma(j) \to \varphi(\sigma(j)) \to \varphi(\sigma(\varphi^{-1}(\ell))) = \sigma'(j).
\]
With this in mind, one may follow the same computation as in \eqref{CiAdjoint} to conclude that
\begin{align*}
C_i(\sigma)  = \tC'_\nu(\sigma') & = \sum_{\underset{\ell > \nu}{\ell\in J'_\cb}} \wb(q_{\varphi^{-1}(\ell)}) + \sum_{\underset{\sigma'(\ell)>\nu}{\ell\in  J'_\cf }} \wf(k_{\varphi^{-1}(\ell)})\\
& \quad +\sum_{\underset{\ell \leq \nu}{\ell\in \Jab' }} \wb(q_{\varphi^{-1}(\ell)}) +\sum_{\underset{\sigma'(\ell) \leq \nu}{\ell\in \Jaf' }}\wf(k_{\varphi^{-1}(\ell)}),
\end{align*}
which equals $C'_\nu(\sigma')$ up to relabeling of the boson and fermion momenta.

Define $\{\beta'_{i}\}_{i=1}^n$  by setting $ \beta'_{i} = \beta_{\varphi^{-1}(i)}$, for $i\in\llbracket 1,n\rrbracket$. Clearly, $0\leq \beta'_i\leq \alpha'_i$ and $\beta'_i = 0$ for $i\in\varphi(\psigma) = \psigma'$. 

Define $\{\gamma'_{i;j}\}_{i\in\scrA',j\in \scrJ'}$ as follows
\[
\forall i\in\scrA', \forall j\in \scrJ', \gamma'_{i;j} = \gamma_{\varphi^{-1}(i+1);\varphi^{-1}(j)}.
\]
We should check that the $\gamma'_{i;j}$'s are admissible exponents for the primed objects, cf.~Definition~\ref{def-admexp}. Let $i\in\scrA'$ and $j\in\scrJ'$, such that $\varphi^{-1}(i+1)\in\scrA$ and $\varphi^{-1}(j) \in\scrJ$. Recalling \eqref{adjoint-primeobjects}, we observe that
$\varphi^{-1}(f'_\ab(I'_\ab)) = f_\ab^{-1}(\varphi^{-1}(I'_\ab)) = I_\ab$ and $\varphi^{-1}(I'_\ab) = f_\ab(I_\ab)$. Hence we may check the constraints \eqref{ConstraintOnGamma}: 
\begin{itemize}[left=0pt .. \parindent]
    \item If $j\in P_\cb'\cup \bigl(P'_\cf\cap (J'_\cb \cup f'_\ab(I'_\ab))\bigr)$  and  $i\geq j$, then $\varphi^{-1}(j) \in P_\ab\cup \bigl(P_\af\cap (J_\ab \cup I_\ab)\bigr)$ and  $\varphi^{-1}(i+1)< \varphi^{-1}(i)\leq \varphi^{-1}(j)$. Hence $\gamma'_{i;j} = \gamma_{\varphi^{-1}(i+1);\varphi^{-1}(j)}=0$. 
    \item If $j\in P_\ab'\cup \bigl(P'_\af\cup (J'_\ab\cup I'_\af)\bigr)$ and  $i <j$,
    then $\varphi^{-1}(j) \in P_\cb\cup \bigl(P_\cf\cup (J_\cb\cup f_\ab(I_\af))\bigr)$ and 
    $\varphi^{-1}(i+1)= \varphi^{-1}(i)-1>\varphi^{-1}(j)-1$, implying that $\varphi^{-1}(i+1)\geq \varphi^{-1}(j)$.
  Therefore $\gamma'_{i;j} = \gamma_{\varphi^{-1}(i+1);\varphi^{-1}(j)}=0$.
    \item If $j\in P'_\cf\cap I'_\ab$ and $i \geq f'_\ab(j)$, then $\varphi^{-1}(j) \in P_{\af}\cap f_\ab(I_\ab)$ and $\varphi^{-1}(i+1) < \varphi^{-1}(i) \leq \varphi^{-1}(f'_\ab(j)) = f_\ab^{-1}(\varphi^{-1}(j))$. Therefore $\gamma'_{i;j} = \gamma_{\varphi^{-1}(i+1);\varphi^{-1}(j)}=0$.
    \item If $j\in P'_\af\cap f'_\ab(I'_\ab)$ and $i < {f'}^{-1}_\ab(j)$, then $\varphi^{-1}(j) \in P_\cb\cap I_\ab$ and
    $\varphi^{-1}(i+1)= \varphi^{-1}(i)-1> \varphi^{-1}({f'}^{-1}_\ab(j))-1=f_\ab(\varphi^{-1}(j))-1$, implying that $\varphi^{-1}(i+1)\geq f_\ab(\varphi^{-1}(j))$.
    Therefore $\gamma'_{i;j} = \gamma_{\varphi^{-1}(i+1);\varphi^{-1}(j)}=0$.
\end{itemize}
In addition 
\begin{itemize}[left=0pt .. \parindent]
    \item For any $j \in \scrJ'$, 
    \begin{equation*}
        \alpha'_j  = \alpha_{\varphi^{-1}(j)} 
     = \sum_{i \in \scrA} \gamma_{i,\varphi^{-1}(j)} 
        = \sum_{i \in \scrA'} \gamma_{\varphi^{-1}(i+1),\varphi^{-1}(j)} 
          = \sum_{i \in \scrA'} \gamma'_{i,j}.
    \end{equation*}
    \item For any $i \in \scrA'$, 
    \begin{equation*}
        \overline{\gamma}'_i  = \sum_{j \in \scrJ'} \gamma'_{i , j}
        = \sum_{j \in \scrJ'} \gamma_{\varphi^{-1}(i+1) , \varphi^{-1}(j)}
        = \sum_{j \in \scrJ} \gamma_{\varphi^{-1}(i+1) , j}
        = \overline{\gamma}_{\varphi^{-1}(i+1)}
        \leq 1.
    \end{equation*}
\end{itemize}

Finally, by relabelling the momenta $j = \varphi^{-1}(\nu)$ and the index $i = \varphi^{-1}(\mu+1)$, we find that
\begin{equation*}
\begin{aligned}
& \Bigl|\overline{\scrL\bigl( \{ k_\mu\}_{\mu \in I'_\af}, \{ q_\mu\}_{\mu \in I'_\ab}\bigr)} \Bigr|  \prod_{\mu \in P'_{\ab}\cup P'_{\cb}} \wb(q_\mu)^{-\alpha'_\mu} \prod_{ \mu \in P'_{\af}\cup P'_{\cf}} \wf(k_\mu)^{-\alpha'_\mu} \\ 
& \qquad \prod_{\mu\in I'_\ab}\kdelta\bigl(q_{\mu} - q_{f'_\ab(\mu)}\bigr) \prod_{\mu\in I'_\af}\kdelta\bigl(k_{\mu} - k_{f'_\af(\mu)}\bigr) \biggl\|  \prod_{\nu\in \scrA'} R_0\bigl(\overline{z}-C'_\nu(\sigma') -R'_\nu\bigr)^{1-\overline{\gamma'_\nu}}  \biggr\|
\\
& \quad =  \Bigl|\scrL\bigl( \{ k_j\}_{j \in I_\af}, \{ q_j\}_{j \in I_\ab}\bigr) \Bigr|  \prod_{j \in P_{\cb}\cup P_{\ab}} \wb(q_{j})^{-\alpha'_{\varphi(j)}} \prod_{ j \in P_{\cf}\cup P_{\af}} \wf(k_j)^{-\alpha'_{\varphi(j)}} \\ 
& \qquad \prod_{i\in I_\ab}\kdelta\bigl(q_{i} - q_{f_\ab(i)}\bigr) \prod_{j\in I_\af}\kdelta\bigl(k_{j} - k_{f_\af(j)}\bigr)\\
& \qquad  \biggl\|  \prod_{i\in \scrA} R_0\bigl(\overline{z}-\tC'_{\varphi(i)-1}(\sigma') -\tR'_{\varphi(i)-1}\bigr)^{1-\overline{\gamma}'_{\varphi(i)-1}}  \biggr\|
\\
& \quad = \Bigl|\scrL\bigl( \{ k_j\}_{j \in I_\af}, \{ q_j\}_{j \in I_\ab}\bigr)\Bigr|  \prod_{j \in P_{\ab}\cup P_{\cb}} \wb(q_j)^{-\alpha_j} \prod_{ j \in P_{\af}\cup P_{\cf}} \wf(k_j)^{-\alpha_j} \\ 
& \qquad \prod_{i\in I_\ab}\kdelta\bigl(q_{i} - q_{f_\ab(i)}\bigr) \prod_{j\in I_\af}\kdelta\bigl(k_{j} - k_{f_\af(j)}\bigr) \biggl\|  \prod_{i\in \scrA} R_0\bigl(z-C_i(\sigma) -R_i\bigr)^{1-\overline{\gamma}_i}  \biggr\| \\
&\quad \leq c_T\frac{ \prod_{i\in I_\ab}\kdelta(q_{i} - q_{f_\ab(i)}) \prod_{j\in I_\af}\kdelta(k_{j} - k_{f_\af(j)})}{\prod^n_{i=m}[\wb(q_i)]^{\alpha_i-\beta_{i}}[\wf(k_i)]^{\beta_{i}}},
\end{aligned}
\end{equation*}
where we used \eqref{regularityproperty} in the last step. This concludes the proof. 
\end{proof}

The following result is a straightforward consequence of Lemma~\ref{LemAdjoint}.
\begin{Cor}
\begin{enumerate}[label = \textup{(\arabic*)}]
    \item The adjoint of  a right-handed Wick monomial is left-handed.
    \item The adjoint of a left-handed Wick monomial is right-handed.
    \item The adjoint of a fully contracted Wick monomial is fully contracted.
\end{enumerate}
\end{Cor}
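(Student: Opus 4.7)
The plan is to read off the corollary directly from the bookkeeping already carried out during the proof of Lemma~\ref{LemAdjoint}. Recall from that proof that, with the order-reversing bijection $\varphi\colon\{0\}\cup\llbracket 1,n\rrbracket\cup\{\infty\}\to\{0\}\cup\llbracket 1,n\rrbracket\cup\{\infty\}$ given by $\varphi(i)=n+1-i$ on $\llbracket 1,n\rrbracket$, the index sets attached to the Wick monomial $T^*$ are
\[
J'_\cb=\varphi(J_\ab),\quad J'_\ab=\varphi(J_\cb),\quad J'_\cf=\varphi(J_\af),\quad J'_\af=\varphi(J_\cf).
\]
So taking adjoints simply swaps the rôles of creation and annihilation on each field, modulo the relabeling by $\varphi$.

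For item (1), suppose $T$ is right-handed, meaning $J_\ab\cup(J_\af\setminus J_\cb)\neq\emptyset$. Since $\varphi$ is a bijection, it commutes with unions and set differences, and hence
\[
J'_\cb\cup(J'_\cf\setminus J'_\ab)=\varphi(J_\ab)\cup\bigl(\varphi(J_\af)\setminus\varphi(J_\cb)\bigr)=\varphi\bigl(J_\ab\cup(J_\af\setminus J_\cb)\bigr)\neq\emptyset,
\]
which is exactly the defining condition in Definition~\ref{def-handedWick}~\ref{item-LHWick} for $T^*$ to be left-handed. Item (2) follows from exactly the same computation with the roles of the $\ab$/$\cb$ and $\af$/$\cf$ sets interchanged, using $J_\cb\cup(J_\cf\setminus J_\ab)\neq\emptyset$ to conclude $J'_\ab\cup(J'_\af\setminus J'_\cb)\neq\emptyset$.

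For item (3), if $T$ is fully contracted, i.e., $J_\ab=J_\cb=J_\af=J_\cf=\emptyset$, then since $\varphi(\emptyset)=\emptyset$ we immediately get $J'_\ab=J'_\cb=J'_\af=J'_\cf=\emptyset$, so $T^*$ is fully contracted. Regularity of $T^*$ is already established in Lemma~\ref{LemAdjoint}, so there is nothing further to check. There is no real obstacle here; the corollary is a purely formal bookkeeping consequence of the explicit formulas for $J'_\bullet$ recorded during the proof of Lemma~\ref{LemAdjoint}, combined with the fact that the handedness conditions in Definition~\ref{def-handedWick} are stable under the bijection $\varphi$.
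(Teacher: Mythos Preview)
Your proof is correct and matches the paper's approach: the paper simply states that the corollary is ``a straightforward consequence of Lemma~\ref{LemAdjoint}'' without giving further details, and what you have written is exactly the bookkeeping verification one would supply, reading off the identities $J'_\cb=\varphi(J_\ab)$, $J'_\ab=\varphi(J_\cb)$, $J'_\cf=\varphi(J_\af)$, $J'_\af=\varphi(J_\cf)$ from the proof of that lemma and checking that the handedness conditions in Definition~\ref{def-handedWick} swap as claimed.
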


\subsection{A calculus for regular Wick monomials}\label{subsec-WickCalc}

This goal of this section is to establish Lemma~\ref{InductionLemma}. This lemma states that for $\us\in\scrS^{(n)}$, the corresponding renormalized block $T^{(n)}_\us(z,\cdot)$, can be written as a sum of right-handed Wick monomials if $\us\in\scrS^{(n)}_\Right$, as sum of left-handed Wick monomials, if $\us\in\scrS^{(n)}_\Left$ and finally, if $\us\in\scrS^{(n)}_\LR$,  as a sum of Wick monomials that are both left- and right-handed, as well as fully contracted Wick monomials from which counter-terms have been removed.  The bulk of the section is concerned with the stability of the regularity property for Wick monimials under multiplication, which is studied in Lemma~\ref{HNFCT} and Lemma~\ref{HBFCTANFCO}.

\begin{lem}[Products of Wick monomials]
\label{HNFCT} Let $m,n\in \NN$ with $m< n$. There exists a constant $M = M(n)$, such that the following holds. Let $T_1$ be a Wick monomial of length $m$ and $T_2$ a Wick monomial of length $n-m$.
\begin{enumerate}[label = \textup{(\arabic*)}]
\item\label{item-Her1-Basic} There is a collection of Wick monomials $\{T'_{j}\}_{j=1}^{M'}$ with $M'\leq M$, all of length $n$, such that for any $z\in\CC_-^*$ and $F_1,\dotsc, F_m,F_{m+1},\dotsc, F_n\in L^2(\RR^d\times\RR^d)$, we have 
\[
T_1(z;F_1, \dotsc, F_m) R_0(z)T_2(z;F_{m+1}, \dotsc, F_n)
=\sum_{j=1}^{M'}T'_j(z;F_1, \dotsc, F_n).
\]
\item\label{item-Her1-signs} If $\us_1$ is the signature string affiliated with $T_1$ and $\us_2$ is the signature string affiliated with $T_2$ (see Remark \ref{FromSetsToSignature}), then $\us = \us_1\circ\us_2$ is the signature string affiliated with each of the $T'_j$'s.
\end{enumerate}
From now on, we assume that $T_1=\rightT_1$ is right-handed  and that $T_2=\leftT_2$ is left-handed.
\begin{enumerate}[resume*]
 \item\label{item-Her1-RH} If   $n_{\ab}(\rightT_1)+ n_{\ab}(\leftT_2)  > 0$, or
$n_{\ab}(\rightT_1)+ n_{\ab}(\leftT_2) =  0$ and $n_{\af}(\rightT_1) + n_{\af}(\leftT_2) \geq 0$, then the Wick monomials $T'_j$ from \ref{item-Her1-Basic} are   
either right-handed or fully contracted.
\item\label{item-Her1-LH} If  $n_{\ab}(\rightT_1)+ n_{\ab}(\leftT_2)  <  0$, or
$n_{\ab}(\rightT_1)+ n_{\ab}(\leftT_2) =  0$ and $n_{\af}(\rightT_1)+ n_{\af}(\leftT_2) \leq 0$,  then the Wick monomials $T'_j$ from \ref{item-Her1-Basic} are  either
left-handed or fully contracted.
\end{enumerate} 
Moreover, in \ref{item-Her1-RH} and \ref{item-Her1-LH}, the bounding constant for all the $T'_j$'s can be taken to be $c_{\rightT_1}\cdot c_{\leftT_2}$.
\end{lem}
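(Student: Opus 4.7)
The proof proceeds by explicitly normal ordering the product. Writing $T_1$ and $T_2$ in the form \eqref{GeneralFormRegularOperator}, the product $T_1(z;\cdot) R_0(z) T_2(z;\cdot)$ has the schematic shape (creators of $T_1$)(resolvents of $T_1$)(annihilators of $T_1$)$R_0(z)$(creators of $T_2$)(resolvents of $T_2$)(annihilators of $T_2$). The first step is to commute/anti-commute each $\ab(q_j)$ and $\af(k_j)$ with $j\in J_\ab(T_1)\cup J_\af(T_1)$ through the middle resolvent $R_0(z)$ and past the creators $\prod_{j\in J_\cb(T_2)}\cb(q_j)\prod_{j\in J_\cf(T_2)}\cf(k_j)$. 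Moving an annihilator past $R_0(z)$ only shifts its scalar argument by the dispersion relation, which is precisely what is needed for the middle resolvent to fit into the shape of a resolvent $R_0(z-C_m-R_m)$ in a length-$n$ Wick monomial. Commuting a $J_\ab(T_1)$-annihilator past a $J_\cb(T_2)$-creator, and similarly for fermions via the CAR, produces a sum of terms indexed by choices of pairings: each annihilator from $T_1$ is either contracted with a matching creator from $T_2$ (generating a $\kdelta$-factor) or not. Since the number of pairings is bounded by a combinatorial factor depending only on $n$, one obtains a finite sum $\sum_{j=1}^{M'} T'_j$ of length-$n$ Wick monomials, proving \ref{item-Her1-Basic}. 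The data of each $T'_j$ is given by the disjoint unions $J'_{\#} = J_{\#}(T_1) \cup J_{\#}(T_2)$ minus the indices that were newly contracted, with $I'_\ab = I_\ab(T_1)\cup I_\ab(T_2)\cup (\text{new boson contractions})$ (and similarly for fermions), new contraction maps $f'_\ab,f'_\af$ inheriting from $T_1,T_2$ plus the new pairings, resolvent index set $\scrA' = \scrA_1\cup\{m\}\cup (\scrA_2+m)$, and scalar factor $\scrL' = \scrL_1 \cdot \scrL_2$. Since the signature of each index $i\in\llbracket 1,n\rrbracket$ is not altered by the commutations, Remark~\ref{FromSetsToSignature} immediately yields \ref{item-Her1-signs}.

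The handedness statements \ref{item-Her1-RH} and \ref{item-Her1-LH} follow from the conservation identities $n_\ab(T'_j) = n_\ab(T_1)+n_\ab(T_2)$ and $n_\af(T'_j) = n_\af(T_1)+n_\af(T_2)$, obvious from the definitions since normal ordering preserves the net creation/annihilation content. In case \ref{item-Her1-RH}, if $n_\ab(T'_j)>0$ then $|J'_\ab|>|J'_\cb|$, forcing $J'_\ab\neq\emptyset$ and so $T'_j$ is right-handed. If $n_\ab(T'_j)=0$ and $n_\af(T'_j)\geq 0$, then either $J'_\ab=J'_\cb=\emptyset$ and $|J'_\af|\geq|J'_\cf|$ (so either $J'_\af\setminus J'_\cb = J'_\af\neq\emptyset$, giving right-handedness, or everything is empty, giving full contraction), or $J'_\ab\neq\emptyset$ (right-handed). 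Case \ref{item-Her1-LH} is symmetric.

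The main obstacle is the regularity estimate with bounding constant $c_{T_1}c_{T_2}$. Given a cover $(P'_\ab,P'_\cb,P'_\af,P'_\cf)$ and an admissible $\sigma'$ for $T'_j$, plus exponents $\{\alpha'_i\}_{i=1}^n$ satisfying \eqref{Caraalpha}, the plan is to restrict $(P'_\#,\sigma')$ to the indices $\llbracket 1,m\rrbracket$ and $\llbracket m+1,n\rrbracket$ to produce candidate covers and admissible maps for $T_1$ and $T_2$ (the newly contracted indices must first be relocated from $P'_\#$ into the appropriate $I_\ab,I_\af,f_\ab(I_\ab),f_\af(I_\af)$ slots of $T_1,T_2$ before restricting). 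One then splits the scalar weights: pick $\alpha^{(1)}_i,\alpha^{(2)}_i\geq 0$ with $\alpha^{(1)}_i+\alpha^{(2)}_i = \alpha'_i$ for each $i$, arranged so that $\sum_{i=1}^m \alpha^{(1)}_i = m-1$ and $\sum_{i=m+1}^n\alpha^{(2)}_i = (n-m)-1$; this is feasible because the total budget on the two sides is $(n-1)-1 = n-2$, precisely one less than $(m-1)+((n-m)-1)$, with the deficit of $1$ absorbed into the middle resolvent $R_0(z-C_m(\sigma')-R_m)$ at index $m\in\scrA'$. Applying the regularity property \eqref{regularityproperty} to $T_1$ (resp.\ $T_2$) then produces a bound with admissible exponents $\{\gamma^{(1)}_{i;j}\}$ for $i\in\scrA_1$, $j\in\scrJ_1$ (resp.\ on the $T_2$-side). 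The full collection $\{\gamma'_{i;j}\}_{i\in\scrA',j\in\scrJ'}$ is defined by inheriting $\gamma^{(1)}$ and $\gamma^{(2)}$ on their respective blocks, and placing $\gamma'_{m;j}$ at the middle resolvent so as to distribute the unit budget $\overline\gamma_m\leq 1$ across those $j\in\scrJ'$ whose weight on the two sides did not fully match --- the constraints \eqref{ConstraintOnGamma} at index $m$ are precisely compatible with both the right-placement constraints for $T_2$-creators and the left-placement constraints for $T_1$-annihilators, which is why the middle resolvent can serve both purposes. The parameters $\{\beta_i\}$ are taken to be the concatenation of the $\beta$'s from the two sides, which vanish on $\partial\sigma'$ because these are inherited restrictions of $\partial\sigma'$ to each side. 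Finally, the norm of the full resolvent product splits into the norms on the $T_1$- and $T_2$-blocks plus the middle resolvent (bounded by $1$ in modulus on the relevant subspace, since $\overline\gamma_m\leq 1$), and the product of the two regularity estimates yields the claim with $c_{T'_j}=c_{T_1}c_{T_2}$.
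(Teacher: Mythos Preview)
Your outline follows the paper's proof closely, and the argument for \ref{item-Her1-Basic}, \ref{item-Her1-signs}, and the handedness dichotomy in \ref{item-Her1-RH}--\ref{item-Her1-LH} is correct. The regularity part, however, has several imprecisions that hide the one genuinely nontrivial point.

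First, your description of the $\alpha$-splitting is garbled: the $\alpha^{(1)}_i$ live on $\llbracket 1,m\rrbracket$ and the $\alpha^{(2)}_i$ on $\llbracket m+1,n\rrbracket$, so ``$\alpha^{(1)}_i+\alpha^{(2)}_i=\alpha'_i$'' is meaningless. What the paper does---and what makes the argument work---is pick a \emph{single} index $j_1\in J_\ab^{(1)}\cup(J_\af^{(1)}\setminus J_\cb^{(1)})$ and a single $j_2\in J_\cb^{(2)}\cup(J_\cf^{(2)}\setminus J_\ab^{(2)})$, set $\alpha^{(1)}_i=\alpha'_i$ for $i\neq j_1$ and lower $\alpha^{(1)}_{j_1}$ so that $\sum\alpha^{(1)}_i=m-1$ (similarly for $j_2$), and then load the two residuals $\alpha'_{j_\ell}-\alpha^{(\ell)}_{j_\ell}$ onto $\gamma_{m;j_1},\gamma_{m;j_2}$; their sum is exactly $1$. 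The specific membership of $j_1,j_2$ is precisely what guarantees, via the constraints \eqref{ConstraintOnGamma} at $i=m$, that $\gamma_{m;j_1},\gamma_{m;j_2}$ are not forced to vanish, and the nonemptiness of those sets is exactly right-handedness of $T_1$ and left-handedness of $T_2$. Your sentence about compatibility gestures at this but does not make it explicit.

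Two further corrections. The cover adjustment runs the opposite way to what you wrote: the newly contracted indices are \emph{absent} from $\scrJ'$ (hence from the given cover) but \emph{present} in $\scrJ^{(1)},\scrJ^{(2)}$, so when restricting you must \emph{add} them back to the blockwise covers, choosing for each whether it goes into $P^{(\ell)}_\ab$ or $P^{(\ell)}_\af$ (resp.\ $P^{(\ell)}_\cb,P^{(\ell)}_\cf$). And you omit the final step that actually extracts the missing $\wb$- or $\wf$-power from the middle resolvent: one must verify, case by case on which $P$-set $j_1,j_2$ lie in, that $C_m(\sigma)+R_m$ dominates $\wb(q_{j_1})$ or $\wf(k_{j_1})$ (and similarly for $j_2$). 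This is again where the choice of $j_1,j_2$ matters. The $\beta_i$'s are accordingly not a bare concatenation but must be shifted at $i=j_1,j_2$ to track whether the extracted power is bosonic or fermionic.
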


\begin{proof}
We will use superscripts $(1)$ on objects affiliated with the Wick monomial $T_1$ and, likewise, the superscript $(2)$ for objects affiliated with $T_2$. For the sets affiliated with $T_2$, it is convenient to shift the labeling from $\llbracket 1,n-m\rrbracket$ to $\llbracket m+1,n\rrbracket$, such that we get the right labeling, from $1$ to $n$, for the composition.
For the purpose of this proof, as well as the next, we make use of the abbreviations for $\ell=1,2$
\begin{equation}\label{abbrev-LDelta}
\begin{aligned}
    &\scrL^{(\ell)} := \scrL^{(\ell)}\bigl( \{ k_j\}_{j \in I^{(\ell)}_\af}, \{ q_j\}_{j \in I^{(\ell)}_\ab}\bigr)\\
    & \Delta^{(\ell)}_\ab := \prod_{i\in I_\ab}\kdelta\bigl(q_i-q_{f^{(1)}_{\ab}(i)}\bigr), \qquad \Delta^{(\ell)}_\af =  \prod_{i \in I_\af^{(1)}}\kdelta\bigl(k_i-k_{f^{(1)}_{\af}(i)}\bigr).
\end{aligned}
\end{equation} 
 We begin with \ref{item-Her1-Basic}, so let us consider $T_1(z;F_1, \dots, F_m)$ of the form: 
\begin{align*}
\nonumber &   \int \prod^m_{i = 1} F_i(k_i, q_i) \scrL^{(1)}\Delta^{(1)}_\ab\Delta^{(1)}_\af
  \prod_{j \in  \Jcf^{(1)}} \cf(k_j)\prod_{j \in \Jcb^{(1)} } \cb(q_j)\\
  & \quad \prod_{i \in \scrA^{(1)}} R_0\bigl(z-C^{(1)}_{i}-R^{(1)}_i\bigr)  \prod_{j \in \Jab^{(1)} } \ab(q_j) \prod_{j \in  \Jaf^{(1)}} \af(k_j) \prod_{j=1}^m dk_j dq_j,
\end{align*}
keeping \eqref{abbrev-LDelta} in mind, and   $T_2(z;F_{m+1}, \dots, F_n)$ of the form: 
\begin{align*}
\nonumber
&\int \prod^n_{i = m+1} F_i(k_i, q_i) \scrL^{(2)}\Delta^{(2)}_{\ab}\Delta^{(2)}_{\af} 
 \prod_{j \in  \Jcf^{(2)}} \cf(k_j)\prod_{j \in \Jcb^{(2)} } \cb(q_j) \\
 & \quad \prod_{i \in \scrA^{(2)}} R_0\bigl(z-C^{(2)}_{i}-R^{(2)}_i\bigr)  \prod_{j \in \Jab^{(2)} } \ab(q_j) \prod_{j \in  \Jaf^{(2)}} \af(k_j)\prod_{j=m+1}^n dk_j dq_j,
\end{align*}
where we made use of the relabelling from $\llbracket 1,n-m\rrbracket$ to $\llbracket m+1,n\rrbracket$.

\emph{Step I:} Normal ordering the product. We may now compute: 
\begin{align} \label{Eq-HerComp}
& \nonumber T_1(z;F_1, \dotsc, F_m) R_0(z) T_2(z;F_{m+1}, \dots, F_n)\\
\nonumber
& \quad = \int \prod^n_{i = 1} F_i(k_i, q_i) \scrL^{(1)} \scrL^{(2)}\Delta^{(1)}_\ab\Delta^{(1)}_\af\Delta^{(2)}_{\ab}\Delta^{(2)}_{\af} \\
\nonumber & \qquad
 \prod_{j \in  \Jcf^{(1)}} \cf(k_j)\prod_{j \in \Jcb^{(1)} } \cb(q_j) \prod_{i \in \scrA^{(1)}} R_0\bigl(z-C^{(1)}_{i}-R^{(1)}_i\bigr) \\
\nonumber &\qquad \biggl[ \prod_{j \in \Jab^{(1)} } \ab(q_j) \prod_{j \in  \Jaf^{(1)}} \af(k_j) R_0(z)\prod_{j \in  \Jcf^{(2)}} \cf(k_j)\prod_{j \in \Jcb^{(2)} } \cb(q_j)\biggr]\\
 &  \qquad  \prod_{i \in \scrA^{(2)}} R_0\bigl(z-C^{(2)}_{i}-R^{(2)}_i\bigr)  \prod_{j \in \Jab^{(2)} } \ab(q_j) \prod_{j \in  \Jaf^{(2)}} \af(k_j)\prod_{j=1}^n dk_j dq_j,
\end{align}
as an identity in $\scrL_\fin$.
In this first step we normal order the term in the square brackets $[\cdots]$, 
using the pull through formula. We obtain: 
\begin{align*}
&\prod_{j \in \Jab^{(1)} } \ab(q_j) \prod_{j \in  \Jaf^{(1)}} \af(k_j) \ R_0(z)\prod_{j \in  \Jcf^{(2)}} \cf(k_j)\prod_{j \in \Jcb^{(2)} } \cb(q_j)\\ 
& \qquad  =  \biggl\{\prod_{j \in \Jab^{(1)} } \ab(q_j) \prod_{j \in  \Jaf^{(1)}} \af(k_j) \prod_{j \in  \Jcf^{(2)}} \cf(k_j)\prod_{j \in \Jcb^{(2)} } \cb(q_j) \biggr\}\\
& \qquad \quad R_0\bigl(z - \sum_{j \in  \Jcf^{(2)}}\wf(k_j)-\sum_{j \in  \Jcb^{(2)}}\wb(q_j)\bigr).
\end{align*}
We can then normal order the expression in the curly brackets $\{\cdots\}$, 
leading to a representation of the contents in the square brackets $[\cdots]$ of \eqref{Eq-HerComp} as a sum of terms of the form: 
\begin{equation}\label{Eq-HerNO}
\begin{aligned}
& \prod_{i \in I'_\af} \kdelta\bigl(k_i-k_{f'_{\af}(i)}\bigr)\prod_{i \in I'_\ab} \kdelta\bigl(q_i-q_{f'_{\ab}(i)}\bigr)\prod_{j \in \Jcf'}\cf(k_j) \prod_{j \in \Jcb'}\cb(q_j)\\
&\qquad \prod_{j \in \Jaf'}\af(k_j)\prod_{j \in \Jab'}\ab(k_j),
\end{aligned}
\end{equation}
up to a sign arising from the anti-commutation relations for fermionic annihilation and creation operators. (Note that if $T$ is a regular Wick monomial, then $-T$ is also a regular Wick monomial with $\scrL$ replaced by $-\scrL$.)  In \eqref{Eq-HerNO}, the primed objects satisfy
\begin{equation}\label{J-I-primes}
\begin{aligned}
&\Jcb' \subset \Jcb^{(2)}, & & \Jab' \subset \Jab^{(1)},\\
&\Jcf' \subset \Jcf^{(2)}, & & \Jaf' \subset \Jaf^{(1)},\\
&I_\af' = \Jaf^{(1)}\setminus  \Jaf', & &   f'_{\af} \colon I_\af' \to\Jcf^{(2)}\setminus \Jcf'\textup{ is a bijection}, \\
&I_\ab' = \Jab^{(1)}\setminus \Jab', & &  f'_{\ab}\colon I_\ab'\to\Jcb^{(2)}\setminus \Jcb' \textup{ is a bijection}.
\end{aligned}
\end{equation}  
\emph{Step II:} 
Before we reinsert the expression \eqref{Eq-HerNO} into the square brackets $[\cdots]$ of \eqref{Eq-HerComp}, we need to 
define some objects that will enable us to recognize the resulting expression as a Wick monomial. We write
\begin{equation}\label{New-J-I}
\begin{aligned}
\Jab & =  \Jab' \cup \Jab^{(2)}& 
\Jaf & =  \Jaf' \cup \Jaf^{(2)} \\
\Jcb & =  \Jcb^{(1)} \cup \Jcb' & 
\Jcf & =  \Jcf^{(1)} \cup \Jcf' \\
    I_\ab & =  I_\ab^{(1)} \cup I_\ab^{(2)}\cup I'_\ab  &
I_\af  & =  I_\af^{(1)}\cup I_\af^{(2)}\cup I'_\af.
\end{aligned}
\end{equation}
Observe that
\[
\scrJ = J_\ab\cup J_\cb\cup J_\af\cup J_\cf = \bigl(\scrJ^{(1)}\cup \scrJ^{(2)}\bigr)\setminus \bigl(I_\ab'\cup f_\ab'(I_\ab')\cup I_\af'\cup f_\af'(I_\af')\bigr).
\]
We finally introduce functions $f_\ab\colon I_\ab\to \llbracket 1,n\rrbracket$ and $f_\af\colon I_\af\to \llbracket 1,n\rrbracket$ by setting
\begin{equation}\label{New-f}
    f_\ab(i) = \begin{cases} f_\ab^{(1)}(i),& i\in I_\ab^{(1)} \\f_\ab^{(2)}(i), & i\in I_\ab^{(2)} \\ f_\ab'(i), & i \in I_\ab' \end{cases}  \qquad \textup{and} \qquad
     f_\af(i) = \begin{cases} f_\af^{(1)}(i), & i\in I_\af^{(1)} \\f_\af^{(2)}(i), & i\in I_\af^{(2)} \\ f_\af'(i), & i \in I_\af' \end{cases}.
\end{equation}
Observe the identities

\begin{equation}\label{Prep-For-gs}
    J_\ab \cup I_\ab =   J_\ab^{(1)} \cup I_\ab^{(1)} \cup J_\ab^{(2)} \cup I_\ab^{(2)}. \quad J_\cb\cup f_\ab(I_\ab) = J_\cb^{(1)}\cup f_\ab^{(1)}(I_\ab^{(1)})\cup J_\cb^{(2)}\cup f_\ab^{(2)}(I_\ab^{(2)}),
\end{equation}
where we used \eqref{J-I-primes} and \eqref{New-J-I}.
Note also that $I_\ab \cap (\Jab\cup \Jcb) =I_\af \cap (\Jaf\cup \Jcf) = \emptyset$, $f_\ab\colon I_\ab \to \llbracket 1,n\rrbracket \setminus (\Jab\cup \Jcb\cup I_\ab)$ and $f_\af\colon I_\af\to \llbracket 1,n\rrbracket\setminus (\Jaf\cup \Jcf\cup I_\af)$ are bijections with $f_\ab(i)>i$, for all $i\in I_\ab$, and $f_\af(i)>i$, for all $i\in I_\af$, such that the subsets $\Jab$, $\Jcb$, $\Jaf$, $\Jcf$, $I_\ab$, $I_\af$ of $\llbracket 1, n\rrbracket$,  together with the functions $f_\ab$ and $f_\af$, satisfy Definition~\ref{Notation}.
Furthermore, we abbreviate
\[
\scrL\bigl(\{ k_j\}_{j \in I_\af}, \{ q_j\}_{j \in I_\ab}\bigr) = \scrL^{(1)} \scrL^{(2)},
\]
which does not depend on the extra variables indexed by $I_\af'$ and $I_\ab'$.

With the above notation, we can express \eqref{Eq-HerComp} as a sum of terms of the form
{\allowdisplaybreaks
 \begin{align}\label{Stepone}
\nonumber  &  \int \prod_{i=1}^n F_i(k_i,q_i) \scrL^{(1)} \scrL^{(2)}\Delta^{(1)}_\ab\Delta^{(1)}_\af\Delta^{(2)}_{\ab}\Delta^{(2)}_{\af}\\ 
  \nonumber &\qquad
 \prod_{j \in  \Jcf^{(1)}} \cf(k_j)\prod_{j \in \Jcb^{(1)} } \cb(q_j) \prod_{j \in \Jcf'}\cf(k_j) \prod_{j \in \Jcb'}\cb(q_j)\\
 \nonumber &\qquad
  \prod_{i \in \scrA^{(1)}} R_0\Bigl(z-C^{(1)}_{i}-R^{(1)}_i- \sum_{j \in \Jcf'} \wf(k_j)-\sum_{j \in \Jcb'} \wb(q_j)\Bigr)\\
   \nonumber &\qquad
  R_0\Bigl(z - \sum_{j \in  \Jcf^{(2)}}\wf(k_j)-\sum_{j \in  \Jcb^{(2)}}\wb(q_j) - \sum_{j \in  \Jaf'}\wf(k_j) - \sum_{j \in  \Jab'}\wb(q_j) \Bigr)  \\
  \nonumber & \qquad  
  \prod_{i \in \scrA^{(2)}} R_0\Bigl(z-C^{(2)}_{i}-R^{(2)}_i - \sum_{j \in \Jaf'}\wf(k_j)-\sum_{j \in \Jab'} \wb(q_j) \Bigr) \\
  \nonumber & \qquad \prod_{j \in \Jaf'}\af(k_j)\prod_{j \in \Jab'}\ab(k_j) \prod_{j \in \Jab^{(2)} } \ab(q_j) \prod_{j \in  \Jaf^{(2)}} \af(k_j) \prod_{j=1}^n dq_j dk_j\\
  \nonumber  & \quad =  \int \prod_{i=1}^n F_i(k_i,q_i) \scrL\bigl(\{k_j\}_{j\in I_\af}, \{q_j\}_{j\in I_\ab}\bigr)\\
 \nonumber & \qquad \prod_{i \in I_\af} \kdelta\bigl(k_i-k_{f_{\af}(i)}\bigr)\prod_{i \in I_\ab }\kdelta\bigl(q_i-q_{f_{\ab}(i)}\bigr) \prod_{j \in  \Jcf} \cf(k_j)\prod_{j \in \Jcb } \cb(q_j)  \\
 \nonumber &\qquad
  \prod_{i \in \scrA^{(1)}} R_0\Bigl(z-C^{(1)}_{i}-R^{(1)}_i- \sum_{j \in \Jcf'} \wf(k_j)-\sum_{j \in \Jcb'} \wb(q_j)\Bigr) \\
  \nonumber &\qquad
  R_0\Bigl(z - \sum_{j \in  \Jcf^{(2)}}\wf(k_j)-\sum_{j \in  \Jcb^{(2)}}\wb(q_j)
   - \sum_{j \in  \Jaf'}\wf(k_j) - \sum_{j \in  \Jab'}\wb(q_j) \Bigr)  \\
    \nonumber & \qquad
   \prod_{i \in \scrA^{(2)}} R_0\Bigl(z-C^{(2)}_{i}-R^{(2)}_i - \sum_{j \in \Jaf'}\wf(k_j)-\sum_{j \in \Jab'} \wb(q_j) \Bigr) \\
  & \qquad \prod_{j \in \Jaf}\af(k_j)\prod_{j \in \Jab}\ab(k_j) \prod_{j=1}^n dq_j dk_j.
 \end{align}
 }
 The right-hand side of \eqref{Stepone} brings us closer to being able to recognize the form a Wick monomial \eqref{GeneralFormRegularOperator}.  

 \emph{Step III:} The spectral shifts.
Let us define:
\begin{equation}
 \scrA  =  \scrA^{(1)}\cup\{m\}\cup \scrA^{(2)} \subseteq \llbracket 1, n -1\rrbracket
\end{equation}
and abbreviate for $i\in\scrA$
\begin{equation}\label{Ci-inproof}
   C_i = \sum_{\underset{j \leq i}{j \in \Jab }} \wb(q_j) +  \sum_{\underset{j \leq i}{j \in  \Jaf}} \wf(k_j)  + \sum_{\underset{j >i}{j \in \Jcb}} \wb(q_j) + \sum_{\underset{j>i}{j \in  \Jcf}} \wf(k_j).
  \end{equation}
  For $i\in\scrA$, the spectral shift coming from pull-through, in the corresponding resolvents on the right hand side of \eqref{Stepone}, should be written on the form $C_i+S_i$, where we proceed to compute the remainders $S_i$. 
  
  Recall the definition of $R_i^{(1)}$ and $R_i^{(2)}$ from \eqref{Rcomponents}.
  We compute for any $i\in\scrA^{(1)}$, using \eqref{J-I-primes}, \eqref{New-J-I} and \eqref{New-f}:
 \begin{equation}\label{AfterPullThrough1}
  \begin{aligned}
  S_i &:= R^{(1)}_i + C^{(1)}_i+\sum_{j \in \Jcb'} \wb(q_j) + \sum_{j \in \Jcf'} \wf(k_j) - C_i\\
    & =  R^{(1)}_i + \sum_{\underset{j\leq i}{j\in \Jab^{(1)}\setminus \Jab'}}\wb(q_j)
    + \sum_{\underset{j\leq i}{j\in \Jaf^{(1)}\setminus \Jaf'}}\wf(k_j)\\
    &=  \sum_{\underset{ j \leq i <  f^{(1)}_{\ab}(j)}{j\in I_\ab^{(1)}}}\wb(q_j) + \sum_{\underset{ j\leq i < f^{(1)}_{\af}(j)}{j\in I_\af^{(1)}}}\wf(k_j)  + \sum_{\underset{j\leq i}{j\in I_\ab'}}\wb(q_j)
    + \sum_{\underset{j\leq i}{j\in I_\af'}}\wf(k_j)\\
    & =  \sum_{\underset{ j \leq i <  f_{\ab}(j)}{j\in I_\ab}}\wb(q_j) + \sum_{\underset{ j\leq i <  f_{\af}(j)}{j\in I_\af}}\wf(k_j).
     \end{aligned}
\end{equation}
For $i\in\llbracket m,n \rrbracket$, we compute first 
\begin{align*}
   \sum_{\underset{j > i}{j\in \Jcb^{(2)}\setminus \Jcb'}}\wb(q_j) + \sum_{\underset{j > i}{j\in \Jcf^{(2)}\setminus \Jcf'}}\wf(k_j)
   &= \sum_{\underset{j > i}{j\in f'_\ab(I_\ab')}}\wb(q_j)
    + \sum_{\underset{j > i}{j\in f_\af'(I_\af')}}\wf(k_j)\\
    &= \sum_{\underset{f_\ab'(j) > i}{j\in I_\ab'}}\wb(q_{f'_\ab(j)})
    + \sum_{\underset{f_\af'(j) > i}{j\in I_\af'}}\wf(k_{f_\af'(j)}).
\end{align*}
Using this we can now compute for any $i \in  \scrA^{(2)}$:
   \begin{align}\label{AfterPullThrough2}
 \nonumber   S_i &:= R^{(2)}_i + C^{(2)}_i +\sum_{j \in \Jab'} \wb(q_j)+ \sum_{j \in \Jaf'} \wf(k_j) - C_i\\
\nonumber    & =  R^{(2)}_i + \sum_{\underset{j > i}{j\in \Jcb^{(2)}\setminus \Jcb'}}\wb(q_j)
    + \sum_{\underset{j > i}{j\in \Jcf^{(2)}\setminus \Jcf'}}\wf(k_j)\\
\nonumber &    = \sum_{\underset{ j \leq i <  f^{(2)}_{\ab}(j)}{j\in I_\ab^{(2)}}}\wb(q_j) + \sum_{\underset{ j\leq i <  f^{(2)}_{\af}(j)}{j\in I_\af^{(2)}}}\wf(k_j)\\
& \quad +  \sum_{\underset{f_\ab'(j) > i}{j\in I_\ab'}}\wb(q_{f'_\ab(j)})
    + \sum_{\underset{f_\af'(j) > i}{j\in I_\af'}}\wf(k_{f_\af'(j)}).
      \end{align}
and
   \begin{align}\label{AfterPullThroughm}    
 \nonumber   S_m & := \sum_{j \in  \Jcf^{(2)}}\wf(k_j)+\sum_{j \in  \Jcb^{(2)}}\wb(q_j)  + \sum_{j \in  \Jaf'}\wf(k_j) + \sum_{j \in  \Jab'}\wb(q_j) - C_m\\
 \nonumber  & = \sum_{j\in \Jcb^{(2)}\setminus \Jcb'}\wb(q_j)+ \sum_{j\in \Jcf^{(2)}\setminus \Jcf'}\wf(k_j)\\
 \nonumber   & = \sum_{j\in I_\ab'}\wb(q_{f_\ab'(j)})+ \sum_{j\in I_\af'}\wf(k_{f_\af'(j)})\\
    & = \sum_{\underset{j\leq m < f_\ab(j)}{j\in I_\ab}}\wb(q_{f_\ab'(j)})+ \sum_{\underset{j\leq m< f_\af(j)}{j\in I_\af}}\wf(k_{f_\af'(j)}).
  \end{align}
 Note that the $S_i$'s are not all of the form $R_i$ required by a Wick monomial, cf.~\eqref{Rcomponents} in Definition~\ref{WickMonomial}. But recalling that we have delta functions in play, we observe that
 if we set $q_j = q_{f_\ab(j)}$, for $j\in I_\ab$, and $k_j = k_{f_\af(j)}$, for $j\in I_\af$, 
then $S_i=R_i$ for all $i\in\scrA$.

The right-hand side of \eqref{Stepone} can then be written as: 
 \begin{align} \label{SteponeCorrectForm}
 & \prod_{i \in I_\af} \kdelta\bigl(k_i-k_{f_{\af}(i)}\bigr)\prod_{i \in I_\ab} \kdelta\bigl(q_i-q_{f_{\ab}(i)}\bigr)\\
 \nonumber & \qquad  \prod_{j \in  \Jcf} \cf(k_j)\prod_{j \in \Jcb } \cb(q_j)
  \prod_{i \in \scrA} R_0\bigl(z-C_{i}-R_i\bigr)  \prod_{j \in \Jaf}\af(k_j)\prod_{j \in \Jab}\ab(k_j),
 \end{align}
and therefore,  $T_1(z;F_1, \dots, F_m) R_0(z) T_2(z;F_{m+1}, \dots, F_n)$ is a sum of Wick monomials of the form \eqref{GeneralFormRegularOperator}. 

\emph{Step IV:} Signature strings. 
Let $\us_1\in\scrS^{(m)}_0$ and $\us_2\in\scrS^{(n-m)}_0$ be the signature strings pertaining to $T_1$ and $T_2$, respectively. Then $\us = \us_1\circ\us_2\in\scrS^{(n)}_0$ is the signature string
affiliated with all the summands of the form \eqref{SteponeCorrectForm}. This proves \ref{item-Her1-signs}.

Calling a summand of the form \eqref{SteponeCorrectForm} $T'$, we may compute
$n_\ab(T') = n_\ab(T_1) + n_\ab(T_2)$ and
$n_\af(T') = n_\af(T_1) + n_\af(T_2)$, which is independent of the choice of summand, since the signature string $\us$ is the same for all the $T'$'s. The indices $n_\ab$ and $n_\af$ were defined, in terms of the underlying $\us$, in Definition~\ref{def-Tnumbers}.

It now follows from the constraints in the formulation of 
\ref{item-Her1-RH} and \ref{item-Her1-LH}, together with Remark~\ref{rem-naToHanded}, that what remains of the proof is to establish that Wick monomials associated with \eqref{SteponeCorrectForm} are regular, cf.~Definition~\ref{RegularOperator}. From now on $T_1 = \rightT_1$ is right-handed and $T_2 = \leftT_2$ is left-handed.

\emph{Step V:} Setting up for regularity.
Let $(P_{\ab},P_{\af}, P_{\cb},P_{\cf})$ be a cover of $J_{\ab}\cup J_{\cb} \cup  J_{\af}\cup J_{\cf}$ (cf.~Definition~\ref{DefCover}). In order to use the hypothesis that $\rightT_1$ and $\leftT_2$ are regular, we must estimate terms in \eqref{regularityproperty} for $i\in \llbracket 1,m\rrbracket$ and for $i\in \llbracket m+1,n\rrbracket$, separately. For this we first need to go from the cover $(P_{\ab},P_{\af}, P_{\cb},P_{\cf})$ to covers
$(P_{\ab}^{(j)},P_{\af}^{(j)}, P_{\cb}^{(j)},P_{\cf}^{(j)})$ of $\scrJ^{(j)}= J_{\ab}^{(j)}\cup J_{\cb}^{(j)} \cup  J_{\af}^{(j)}\cup J_{\cf}^{(j)}$, for $j=1,2$. Let
\begin{equation}\label{Simple-Ps}
\begin{aligned}
  P_\cb^{(1)} & := P_\cb\cap \llbracket 1,m\rrbracket,  &  P_\cf^{(1)} &:= P_\cf\cap \llbracket 1,m\rrbracket, \\
  P_\ab^{(2)}&:= P_\ab\cap \llbracket m+1,n\rrbracket, &  P_\af^{(2)} &:= P_\af\cap \llbracket m+1,n\rrbracket.
 \end{aligned}
\end{equation}
For the remaining four sets, we need to take contractions into account. Choose a partition $P_\ab'\subset J_\ab^{(1)}\setminus J_\ab'$ and $P_\af'\subset J_\af^{(1)}\setminus J_\af'$, with $P_\ab'\cap P_\af' = \emptyset$ and
$P_\ab'\cup P_\af' = (J_\ab^{(1)}\setminus J_\ab')\cup (J_\af^{(1)}\setminus J_\af')$, and a partition $P_\cb'\subset J_\cb^{(2)}\setminus J_\cb'$ and $P_\cf'\subset J_\cf^{(2)}\setminus J_\cf'$, with $P_\cb'\cap P_\cf' = \emptyset$ and
$P_\cb'\cup P_\cf' = (J_\cb^{(2)}\setminus J_\cb')\cup (J_\cf^{(2)}\setminus J_\cf')$. Then we may define the last four sets to be
\begin{equation}\label{NotSimple-Ps}
\begin{aligned}
P_\ab^{(1)} &:= (P_\ab\cap \llbracket 1,m\rrbracket) \cup P_\ab',  & P_\af^{(1)}&:= (P_\af\cap \llbracket 1,m\rrbracket) \cup P_\af',\\
P_\cb^{(2)} &:= (P_\cb\cap \llbracket m+1,n\rrbracket)\cup P_\cb', & P_\cf^{(2)}&:= (P_\cf\cap \llbracket m+1,n\rrbracket)\cup P_\cf'.
\end{aligned}
\end{equation}

Next step is to introduce $\sigma^{(1)}$ and $\sigma^{(2)}$, starting from $\sigma$. Recalling \eqref{New-J-I}, we observe that $(J_{\cf}\cup J_{\af})\cap \llbracket 1,m\rrbracket= J_{\cf}^{(1)}\cup J_\af'$ and $(J_\cf\cup J_{\af})\cap \llbracket m+1,n\rrbracket= J_\cf'\cup J_{\af}^{(2)}$. With this in mind we define $\sigma^{(i)}\colon J_\cf^{(j)}\cup J_\af^{(j)}\to\NN_\infty$, for $j=1,2$, by:
\begin{equation}
\begin{aligned}
\sigma^{(1)}(i) &= \begin{cases} {\sigma}(i) & \textup{if }  i\in J_\cf^{(1)}\cup J_{\af}' \textup{ and } \sigma(i) \in \{0\} \cup \llbracket 1, m \rrbracket\\
\infty & \textup{if }  i\in J_\cf^{(1)}\cup J_{\af}' \textup{ and } \sigma(i) > m\\
i & \textup{if }  i\in J_\af^{(1)} \setminus J_{\af}'\end{cases}  \\
\sigma^{(2)}(i) & = \begin{cases} {\sigma}(i) & \textup{if }  i\in J_{\cf}'\cup J_\af^{(2)} \textup{ and } \sigma(i) \in \llbracket m+1, n \rrbracket\cup\{\infty\} \\
 0 & \textup{if }  i\in J_{\cf}'\cup J_\af^{(2)} \textup{ and } \sigma(i) < m+1 \\
i & \textup{if }  i\in J_\cf^{(2)} \setminus J_{\cf}'\end{cases}.
\end{aligned}
\end{equation}
Note that, $\sigma^{(1)}$, respectively $\sigma^{(2)}$, has its range included in $\{0\} \cup \llbracket 1, m \rrbracket \cup \{\infty \}$, respectively in $\{0\} \cup \llbracket m+1, n \rrbracket \cup \{\infty \}$. Note that the relabelling of the $\leftT_2$-indices from $\llbracket 1,n-m\rrbracket$ to $\llbracket m+1,n\rrbracket$ is reflected in the shifted range of $\sigma^{(2)}$.  Both maps $\sigma^{(j)}$, $j=1,2$, are moreover admissible with respect to the choice of partition $P_{\cf}^{(j)}$ and $P_{\af}^{(j)}$ made above, in the sense of Definition~\ref{Admissiblemaps} (and taking the relabelling into account for $\sigma^{(2)}$). Finally, we observe that
\begin{equation}\label{Her-psigma}
\psigma\cap\llbracket 1,m\rrbracket \subset\psigma^{(1)} \quad\textup{and}\quad
\psigma\cap\llbracket m+1,n\rrbracket \subset\psigma^{(2)}. 
\end{equation}

We will extract the improved momentum decay from the extra resolvent $R_0(z - C_m(\sigma)-R_m)$. We now fix two momentum indices, where we know that annihilation/creation operators have been pulled through this central resolvent.
Because $\rightT_1$ is a right-handed Wick monomial, and $\leftT_2$ is left-handed, we know by Definition~\ref{def-handedWick}~\ref{item-RHWick} and~\ref{item-LHWick}, that it is possible to choose
\begin{equation}\label{Her1-Choiceofj1j2}
j_1\in J_\ab^{(1)}\cup \bigl(J_\af^{(1)}\setminus J_\cb^{(1)}\bigr)\neq \emptyset \quad \textup{and} \quad j_2\in J_\cb^{(2)}\cup \bigl( J_\cf^{(2)}\setminus J_\ab^{(2)} \bigr)\neq \emptyset. 
\end{equation}

Consider an arbitrary choice of real numbers $\{\alpha_i\}_{i=1}^n$ fulfilling:
\begin{equation}\label{sum0}
\forall i \in \llbracket 1, n \rrbracket: \ 0 \leq \alpha_i \leq 1 \qquad \textup{and}\qquad
 \sum^n_{i =1} \alpha_i = n-1,
\end{equation}
We define
\begin{equation}\label{Her1-alpha1}
\forall i\in\llbracket 1,m\rrbracket:\qquad     \alpha_i^{(1)} = \begin{cases} \alpha_i, & i \neq j_1 \\
 \displaystyle m-1 - \sum_{j\in\llbracket 1,m\rrbracket \setminus \{j_1\}}\alpha_i,& i = j_1\end{cases}
\end{equation}
and 
\begin{equation}\label{Her1-alpha2}
\forall i\in\llbracket m+1,n\rrbracket:\qquad     \alpha_i^{(2)} = \begin{cases} \alpha_i, & i \neq j_2 \\
\displaystyle n -m -1- \sum_{j\in\llbracket m+1,n\rrbracket \setminus \{j_2\}}\alpha_i,& i = j_2.\end{cases}
\end{equation}
The constraint on the $\alpha_i$'s in \eqref{sum0} ensures that $0\leq \alpha_{i}^{(1)}\leq 1$, for $i\in\llbracket 1,m\rrbracket$, and $\sum_{i=1}^{m} \alpha_i^{(1)} = m-1$. Similarly, $0\leq \alpha_{i}^{(2)}\leq 1$, for $i\in\llbracket m+1,n\rrbracket$, and $\sum_{i=m+1}^{n} \alpha_i^{(2)} = n-m-1$.

Since $\rightT_1$  and $\leftT_2$ are regular, there exists two collections of admissible exponents $\{\gamma_{i;j}^{(1)}\}_{i\in \scrA^{(1)}, j \in  \scrJ^{(1)}}$ and $\{\gamma_{i;j}^{(2)}\}_{i\in \scrA^{(2)}, j \in  \scrJ^{(2)}}$, cf.~Definition~\ref{def-admexp}.

Recalling that $\scrJ\subset \scrJ^{(1)}\cup\scrJ^{(2)}$, we define $\gamma_{i;j}$, for $i\in\scrA$ and $j\in \scrJ$, in the following way: 
\begin{equation}\label{Her1-gamma}
\gamma_{i;j} = \begin{cases}
\gamma^{(1)}_{i;j}, & \textup{if } j\in \scrJ\cap \llbracket 1, m \rrbracket, i \in \scrA^{(1)} \\
\gamma^{(2)}_{i;j}, & \textup{if } j\in \scrJ\cap \llbracket m+1, n \rrbracket, i \in \scrA^{(2)}\\
\kdelta_{j,j_1}\bigl(\alpha_{j_1}-\alpha^{(1)}_{j_1}\bigr) + \kdelta_{j,j_2}\bigl(\alpha_{j_2}-\alpha_{j_2}^{(2)}\bigr)& \textup{if } i=m \\
0 & \text{otherwise}
\end{cases}
\end{equation}
We claim that $\{\gamma_{i;j}\}_{i\in\scrA,j\in\scrJ}$ are admissible exponents for the composed objects. To see this, observe that it follows from \eqref{Prep-For-gs}, \eqref{Simple-Ps} and \eqref{NotSimple-Ps} that
\[
\begin{aligned}
\llbracket 1,m\rrbracket \cap \Bigl(P_\cb\cup \bigl( P_\cf\cap (J_\cb\cup f_\ab(I_\ab)) \bigr)\Bigr) & = P_\cb^{(1)} \cup \bigl( P_\cf^{(1)} \cap (J_\cb^{(1)}\cup f_\ab^{(1)}(I_\ab^{(1)}))\bigr),\\
\llbracket 1,m\rrbracket \cap \Bigl(P_\ab\cup \bigl( P_\af\cap (J_\ab\cup I_\ab) \bigr)\Bigr) & \subset P_\ab^{(1)} \cup \bigl( P_\af^{(1)} \cap (J_\ab^{(1)}\cup I_\ab^{(1)})\bigr),\\
\rrbracket m,n\rrbracket \cap \Bigl(P_\cb\cup \bigl( P_\cf\cap (J_\cb\cup f_\ab(I_\ab)) \bigr)\Bigr) & \subset P_\cb^{(2)} \cup \bigl( P_\cf^{(2)} \cap (J_\cb^{(2)}\cup f_\ab^{(2)}(I_\ab^{(2)}))\bigr),\\
\rrbracket m,n\rrbracket \cap \Bigl(P_\ab\cup \bigl( P_\af\cap (J_\ab\cup I_\ab) \bigr)\Bigr) & = P_\ab^{(2)} \cup \bigl( P_\af^{(2)} \cap (J_\ab^{(2)}\cup I_\ab^{(2)})\bigr),
\end{aligned}
\]
where $\rrbracket m,n\rrbracket := \llbracket m+1,n\rrbracket$.
From this, the first two constraints on the $\gamma_{i;j}$'s in \eqref{ConstraintOnGamma} follow. The last two constraints from \eqref{ConstraintOnGamma}, follows from the observations that
\[
\begin{aligned}
\llbracket 1,m\rrbracket \cap P_\cf \cap I_\ab &=P_\cf^{(1)} \cap (I_\ab^{(1)}\cup I_\ab')= P_\cf^{(1)}\cap I_\ab^{(1)},\\
\llbracket 1,m\rrbracket \cap P_\af \cap f_\ab(I_\ab) &
=\llbracket 1,m\rrbracket \cap P_\af \cap f_\ab^{(1)}(I_\ab^{(1)})\subset P_\af^{(1)}\cap f_\ab^{(1)}(I_\ab^{(1)}),
\end{aligned}
\]
where in the first identity we used that $I'_\ab \subset I'_\ab\cup I'_\af = P'_\ab\cup P'_\af \subset P^{(1)}_\ab\cup P^{(1)}_\af$, which has empty intersection with $P_\cf^{(1)}$. Finally,
\[
\begin{aligned}
\llbracket m+1,n\rrbracket \cap P_\cf \cap I_\ab &= \llbracket m+1,n\rrbracket \cap P_\cf \cap I_\ab^{(2)}\subset P_\cf^{(2)}\cap I_\ab^{(2)},\\
\llbracket m+1,n\rrbracket \cap P_\af \cap f_\ab(I_\ab) &
= P_\af^{(2)} \cap (f_\ab^{(2)}(I_\ab^{(2)})\cup f'_\ab(I'_\ab))= P_\af^{(2)}\cap f_\ab^{(2)}(I_\ab^{(2)}),
\end{aligned}
\]
where, in the last identity, we used that $f'_\ab(I'_\ab) \subset f'_\ab(I'_\ab)\cup f'_\af(I'_\af) = P'_\cb\cup P'_\cf \subset P^{(2)}_\cb\cup P^{(2)}_\cf$, which has empty intersection with $P_\af^{(2)}$. 

 Note that the choice \eqref{Her1-Choiceofj1j2} of $j_1$ and $j_2$ ensures that $\gamma_{m;j_1}$ and $\gamma_{m;j_2}$ are not constrained to be zero. This can be inferred from the above set considerations and concludes the verification of \eqref{ConstraintOnGamma}. 

As for \eqref{SumOfGammas}, we abbreviate
$\overline{\gamma}_i = \sum_{j\in \scrJ} \gamma_{i;j}$, for $i\in\scrA$,
and observe that 
\[
\forall i\in\scrA^{(1)}: \quad \overline{\gamma}_i = \overline{\gamma}_i^{(1)}\leq 1, \qquad \forall i\in\scrA^{(2)}: \quad \overline{\gamma}_i = \overline{\gamma}_i^{(2)}\leq 1,
\]
since the $\gamma^{(1)}_{i;j}$'s and the $\gamma^{(2)}_{i;j}$'s both satisfy \eqref{SumOfGammas}.
Furthermore,  we observe that 
\begin{equation}\label{gamma-bar-m}
\overline{\gamma}_m \leq \bigl(\alpha_{j_1}-\alpha_{j_1}^{(1)}\bigr) + \bigl(\alpha_{j_2}-\alpha_{j_2}^{(2)}\bigr) =  1. 
\end{equation}
 Finally, recalling \eqref{Her1-alpha1}, \eqref{Her1-alpha2} and~\eqref{Her1-gamma}, it is clear by construction that $\sum_{i\in\scrA} \gamma_{i;j} = \alpha_j$ for all $j\in\scrJ$. Hence, we have established that $\{\gamma_{i;j}\}_{i\in\scrA,j\in\scrJ}$ satisfies \eqref{SumOfGammas}.

\emph{Step VI:} Concluding the regularity estimate \eqref{regularityproperty}. Recall the abbreviations \eqref{abbrev-LDelta}.
Finally, recalling \eqref{New-J-I}, \eqref{New-f}, \eqref{Her1-alpha1} and \eqref{Her1-alpha2}, we obtain 
\begin{align}
\nonumber &  \Bigl|\scrL\bigl( \{ k_j\}_{j \in I_\af}, \{ q_j\}_{j \in I_\ab}\bigr)\Bigr|\prod_{i\in I_\ab}\kdelta\bigl(q_{i} - q_{f_\ab(i)}\bigr) \prod_{j\in I_\af}\kdelta\bigl(k_{j} - k_{f_\af(j)}\bigr) \\ 
\nonumber &  \qquad  \prod_{i \in P_{\ab}\cup P_{\cb}} \wb(q_i)^{-\alpha_i} \prod_{ i \in P_{\af}\cup P_{\cf}} \wf(k_j)^{-\alpha_i}  \biggl\|  \prod_{i\in \scrA} R_0\bigl(z_i-C_i(\sigma) -R_i\bigr)^{1-\overline{\gamma}_i}  \biggr\|\\
\nonumber & \quad \leq  \Biggl[\  \bigl|\scrL^{(1)}\bigr| \Delta^{(1)}_\ab\Delta^{(1)}_\af
  \prod_{i \in P^{(1)}_{\ab}\cup P^{(1)}_{\cb}} \wb(q_i)^{-\alpha^{(1)}_i} \prod_{ i \in P^{(1)}_{\af}\cup P^{(1)}_{\cf}} \wf(k_j)^{-\alpha^{(1)}_i} \\
\nonumber & \qquad \qquad \biggl\|  \prod_{i\in \scrA^{(1)}}  R_0\bigl(z-C_i(\sigma) -R_i\bigr)^{1-\overline{\gamma}_i^{(1)}}  \biggr\| \ \Biggr]
\\
\nonumber &  \qquad \Biggl[\ 
  \bigl|\scrL^{(2)}\bigr| \Delta^{(2)}_\ab \Delta^{(2)}_\af 
  \prod_{i \in P^{(2)}_{\ab}\cup P^{(2)}_{\cb}} \wb(q_i)^{-\alpha^{(2)}_i} \prod_{ i \in P^{(2)}_{\af}\cup P^{(2)}_{\cf}} \wf(k_j)^{-\alpha^{(2)}_i}  \\
\nonumber   & \qquad \qquad   \biggl\|\prod_{i\in \scrA^{(2)}} R_0\bigl(z-C_i(\sigma) -R_i\bigr)^{1-\overline{\gamma}_i^{(2)}} \biggr\|\ \Biggr] \\
\nonumber   & \qquad \Biggl[ \  \prod_{i\in I'_\ab}\kdelta\bigl(q_{i} - q_{f'_\ab(i)}\bigr) \prod_{j\in I'_\af}\kdelta\bigl(k_{j} - k_{f'_\af(j)}\bigr) \prod_{\ell=1}^2 \wb(q_{j_\ell})^{-\kdelta_{j_\ell\in P_\ab\cup P_\cb} (\alpha_{j_\ell}-\alpha_{j_\ell}^{(\ell)})}   \\
  & \quad\qquad 
  \wf(k_{j_\ell})^{-\kdelta_{j_\ell\in P_\af\cup P_\cf} (\alpha_{j_\ell}-\alpha_{j_\ell}^{(\ell)})}
  \biggl\| R_0\bigl(z-C_m(\sigma)-R_m\bigr)^{1-\overline{\gamma}_m}\biggr\| \  \Biggr].
\end{align}

Next, up to an identification of $q_j$ with $q_{f_\ab'(j)}$, for $j\in I_\ab'$, and of $k_j$ with $k_{f_\af'(j)}$, for $j\in I_\af'$, we have: 
\[
\begin{aligned}
&\forall i \in \scrA^{(1)}:\quad & \Bigl\|R_0\bigl(z-C_i(\sigma) -R_i\bigr)\Bigr\| \leq\Bigl\|R_0\bigl(z-C^{(1)}_{i}(\sigma^{(1)})-R^{(1)}_i\bigr)\Bigr\|,\\
&\forall i \in \scrA^{(2)}:\quad & \Bigl\|R_0\bigl(z-C_i(\sigma) -R_i\bigr)\Bigr\| \leq\Bigl\|R_0\bigl(z-C^{(2)}_{i}(\sigma^{(2)})-R^{(2)}_i\bigr)\Bigr\|.
\end{aligned}
\]
As a consequence, there exist $\{\beta^{(1)}_{i}\}_{ i=1}^m$ with $0\leq \beta_i^{(1)}\leq \alpha_i^{(1)}$ and
$\beta^{(1)}_i =0$ for $i\in\psigma^{(1)}$,  together with  $\{\beta^{(2)}_{i}\}_{ i=m+1}^n$   with $0\leq \beta_i^{(2)}\leq \alpha_i^{(2)}$ and
$\beta^{(2)}_i =0$ for $i\in\psigma^{(2)}$, such that
\begin{align}\label{Her1AlmostThere}
\nonumber &  \prod_{i\in I_\ab}\kdelta\bigl(q_{i} - q_{f_\ab(i)}\bigr) \prod_{j\in I_\af}\kdelta\bigl(k_{j} - k_{f_\af(j)}\bigr)\Bigl|\scrL\bigl( \{ k_j\}_{j \in I_\af}, \{ q_j\}_{j \in I_\ab}\bigr)\Bigr| \\ 
\nonumber& \qquad  \prod_{i \in P_{\ab}\cup P_{\cb}} \wb(q_i)^{-\delta_i} \prod_{ i \in P_{\af}\cup P_{\cf}} \wf(k_j)^{-\delta_i} \biggl\|  \prod_{i\in \scrA} R_0\bigl(z-C_i(\sigma) -R_i\bigr)^{1-\overline{\gamma}_i}  \biggr\|
\\
\nonumber& \quad \leq \prod_{i\in I_\ab}\kdelta\bigl(q_{i} - q_{f_\ab(i)}\bigr) \prod_{j\in I_\af}\kdelta\bigl(k_{j} - k_{f_\af(j)}\bigr) \\
\nonumber& \qquad \frac{ c_{\rightT_1}}{\prod^m_{i=1}[\wb(q_i)]^{\alpha^{(1)}_i - \beta^{(1)}_{i}}[\wf(k_i)]^{\beta^{(1)}_{i}}}
\cdot  \frac{c_{\leftT_2}}{\prod^n_{i=m+1}[\wb(q_i)]^{\alpha^{(2)}_i - \beta^{(2)}_{i}}[\wf(k_i)]^{\beta^{(2)}_{i}}} \\
\nonumber&\qquad \prod_{\ell=1}^2 \Bigl[\wb(q_{j_\ell})^{-\kdelta_{j_\ell\in P_\ab\cup P_\cb} (\alpha_{j_\ell}-\alpha_{j_\ell}^{(\ell)})} \wf(k_{j_\ell})^{-\kdelta_{j_\ell\in P_\af\cup P_\cf} (\alpha_{j_\ell}-\alpha_{j_\ell}^{(\ell)})} \Bigr]\\
& \qquad 
\Bigl\| R_0\bigl(z-C_m(\sigma)-R_m\bigr)^{1-\overline{\gamma}_m}\Bigr\|.
\end{align}

Note that by \eqref{gamma-bar-m}, we have 
\[
1-\overline{\gamma}_m = 1 -  \kdelta_{j_1\in\scrJ}\bigl(\alpha_{j_1}-\alpha^{(1)}_{j_1}\bigr) - \kdelta_{j_2 \in\scrJ}\bigl(\alpha_{j_2}-\alpha_{j_2}^{(2)}\bigr).
\]
We claim that the following choice of exponents $\{\beta_{i}\}_{i=1}^n$ will work. For $i\in\llbracket 1,n\rrbracket$, we set
\[
 \beta_{i} = \begin{cases}
    \beta^{(1)}_{i} & \textup{if }  i \in \llbracket 1 , m \rrbracket\setminus \{j_1\}  \\
    \beta^{(1)}_{j_1} + \kdelta_{j_1 \in P_{\af}\cup I'_\af} \bigl(\alpha_{j_1} - \alpha^{(1)}_{j_1}\bigr)& \textup{if }  i=j_1\\
     \beta^{(2)}_{i} & \textup{if }  i \in \llbracket m+1 , n \rrbracket\setminus\{j_2\}\\
     \beta^{(2)}_{j_2} + \kdelta_{j_2 \in P_{\cf}\cup f'_\af(I'_\af)} \bigl(\alpha_{j_2} - \alpha^{(2)}_{j_2}\bigr)& \textup{if }  i=j_2.
\end{cases} 
\]
Note that $0\leq \beta_i\leq \alpha_i$ for all $i\in\llbracket 1,n\rrbracket$ and it follows from \eqref{Her-psigma} that $\beta_i = 0$ if $i\in\psigma$.

In order to go from \eqref{Her1AlmostThere} to the desired estimate \eqref{regularityproperty} with bounding constant $c_{\rightT_1}\cdot c_{\leftT_2}$ and the claimed exponents just introduced, we need to establish the following: 
\begin{equation}\label{ReqOnCmRm}
\begin{aligned}
&\textup{if } j_1\in P_\ab\cup P_\cf\cup I'_\ab: & & C_m(\sigma) + R_m\geq \wb(q_{j_1})\\
&\textup{if } j_1\in P_\af\cup I'_\af: & & C_m(\sigma) + R_m\geq \wf(k_{j_1})\\
&\textup{if } j_2 \in P_\cb\cup P_\af: & & C_m(\sigma) + R_m\geq\wb(q_{j_2})\\
&\textup{if } j_2 \in  f'_\ab(I'_\ab): & & C_m(\sigma) + R_m\geq\wb(q_{{f'}_\ab^{-1}(j_2)})\\
&\textup{if } j_2 \in P_\cf: & & C_m(\sigma) + R_m\geq \wf(k_{j_2})\\
&\textup{if } j_2 \in f'_\af(I'_\af): & & C_m(\sigma) + R_m\geq\wf(k_{{f'}_\af^{-1}(j_2)})\\
\end{aligned}
\end{equation}
This can be seen on a case-by-case basis by inspection from the formula
\[
C_m(\sigma)+R_m = \sum_{j \in  \Jcb'}\wb(q_j) +\sum_{\underset{j \in  \Jcf'}{\sigma(j)>m}}\wf(k_j)+ \sum_{j \in  \Jab^{(1)}}\wb(q_j) + \sum_{\underset{j \in  \Jaf^{(1)}}{\sigma(j)\leq m}}\wf(k_j).
\]
\end{proof}

\begin{lem}[Products of regular Wick monomials]
\label{HBFCTANFCO}
Let $m,n\in\NN$ with $m< n$. There exists a constant $M = M(n)$, such that the following holds. Let $T_1$ be a regular Wick monomial of length $m$ that is not fully contracted, with bounding constant $c_{T_1}$ and signature string $\us_1$. Let $T_2$ be a fully contracted Wick monomial of length $n-m$ with bounding constant $c_{T_2}$ and signature string $\us_2$. Finally, let $E_2$ be defined as:
\begin{equation*}
E_2 \colon (F_{m+1},\dotsc, F_n) \to  \bigl\langle \Omega \,\big\vert\, T_2(0;F_{m+1},\dotsc, F_n) \Omega\bigr\rangle.
\end{equation*}
 The following properties hold:
\begin{enumerate}[label = \textup{(\arabic*)}]
\item\label{itemHeredityContracted1} Suppose $T_1=\rightT_1$ is  right-handed. Then the composition $\rightT_1 R_0(z) (T_2 - E_2)$ is a sum of at most $M$ right-handed Wick monomials of length $n$, all with bounding constant $c_{T_1}\cdot c_{T_2}$ and signature string $\us_1\circ\us_2$. 
\item\label{itemHeredityContracted2}  Let $T_1 = \leftT_1$ be left-handed. Then the composition  $(T_2 - E_2 ) R_0(z)\leftT_1$ is a sum of at most $M$ left-handed Wick monomials of length $n$, all with bounding constant $c_{T_1}\cdot c_{T_2}$ and signature string $\us_2\circ\us_1$. 
\end{enumerate}
\end{lem}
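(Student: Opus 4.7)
The plan is as follows. First, item~\ref{itemHeredityContracted2} reduces to item~\ref{itemHeredityContracted1} via Lemma~\ref{LemAdjoint}: since the adjoint of a left-handed (resp.\ fully contracted) Wick monomial is right-handed (resp.\ fully contracted), and $\overline{E_2} = \langle \Omega\,|\, T_2^*(0)\Omega\rangle$, we have $[(T_2-E_2)R_0(z)\leftT_1]^* = (\leftT_1)^* R_0(\overline{z})(T_2^* - \overline{E_2})$ with $(\leftT_1)^*$ right-handed and $T_2^*$ fully contracted. So it suffices to prove \ref{itemHeredityContracted1}.

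Following the normal-ordering scheme of Lemma~\ref{HNFCT}, I observe that since $T_2$ is fully contracted its $J$-sets are empty, so no new contractions between $\rightT_1$ and $T_2$ arise. Pulling $\rightT_1$'s annihilations past $R_0(z)$ and the internal resolvents of $T_2(z)$ produces the spectral shift $C_m = \sum_{j \in J_\ab^{(1)}} \wb(q_j) + \sum_{j \in J_\af^{(1)}} \wf(k_j)$, which is strictly positive by right-handedness of $\rightT_1$ (Remark~\ref{rem-naToHanded}). Writing $E_2 = \int \prod_i F_i \scrL^{(2)} \prod \kdelta^{(2)} \prod_{i\in\scrA^{(2)}}(R_i^{(2)})^{-1}\,dk\,dq$ (using $R_0(-R_i^{(2)})\Omega = (R_i^{(2)})^{-1}\Omega$), the difference of the two normal-ordered integrals factorizes as
\begin{equation*}
\rightT_1 R_0(z)(T_2-E_2) = \int \prod_i F_i \scrL^{(1)}\scrL^{(2)} \Delta^{(1)}\Delta^{(2)} [\mathcal C]\, R_0(z-C_m)\, [\Delta]\, [\mathcal A]\,dk\,dq,
\end{equation*}
where $[\mathcal C]$ collects the creation block and internal resolvents of $\rightT_1$, $[\mathcal A]$ the pulled-through annihilations of $\rightT_1$, and
\begin{equation*}
[\Delta] = \prod_{i\in\scrA^{(2)}}R_0(z-C_m-R_i^{(2)}) - \prod_{i\in\scrA^{(2)}}(R_i^{(2)})^{-1}\, I.
\end{equation*}

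The key algebraic identity is $R_0(w-R) - R^{-1}I = R_0(w-R)(w-H_0)R^{-1}$ (from $A^{-1}-B^{-1}=A^{-1}(B-A)B^{-1}$ with $A=H_0-w+R$, $B=R$). Applied with $w = z-C_m$, telescoped over $\scrA^{(2)}$ (all factors commute as they are functions of $H_0$ times scalars), and then combined with $R_0(w)(w-H_0) = -I$ to absorb the central resolvent, it yields
\begin{equation*}
R_0(z-C_m)[\Delta] = -\sum_{j\in\scrA^{(2)}} \prod_{i\in\scrA^{(2)},\,i\leq j} R_0(z-C_m-R_i^{(2)}) \prod_{i\in\scrA^{(2)},\,i\geq j}(R_i^{(2)})^{-1}.
\end{equation*}
Each of the at most $|\scrA^{(2)}|$ summands now has the form of a Wick monomial of length $n$ with $\scrA_j = \scrA^{(1)} \cup \set{i\in\scrA^{(2)}}{i\leq j}$ (no resolvent at position $m$), signature string $\us_1\circ\us_2$, $J$-sets inherited from $\rightT_1$ (hence right-handed by Definition~\ref{def-handedWick}\ref{item-RHWick}, since $J_\ab^{(2)}=\emptyset$), and the surviving scalars $\prod_{i\geq j}(R_i^{(2)})^{-1}$ absorbed into a modified $\scrL$.

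Regularity with bounding constant $c_{T_1}\cdot c_{T_2}$ will be verified as in Steps~V--VI of Lemma~\ref{HNFCT}: given a cover, an admissible map, and exponents $\{\alpha_i\}$ satisfying \eqref{Caraalpha} for the composed monomial, I restrict them to the blocks $\llbracket 1,m\rrbracket$ and $\llbracket m+1,n\rrbracket$, invoke the regularity of $\rightT_1$ and $T_2$ separately, and concatenate the resulting $\{\beta_i\}$ and $\{\gamma_{i;j}\}$. The absence of the central resolvent at position $m$ is harmless since no $\gamma_{m;j}$ need be assigned, and the surviving scalars $(R_i^{(2)})^{-1}$ are controlled by applying the regularity of $T_2$ with the identity admissible map (Remark~\ref{rem-id-admissible}). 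The main obstacle is the case-by-case bookkeeping of the admissible-exponent constraints \eqref{ConstraintOnGamma}--\eqref{SumOfGammas} for each $j$ with its own $\scrA_j$, but this is significantly simpler than in Lemma~\ref{HNFCT} since no contractions cross the boundary $m$ and the composed $J$-sets coincide with those of $\rightT_1$.
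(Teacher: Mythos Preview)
Your approach is correct and essentially the same as the paper's, with one inessential variation: you telescope in the opposite direction. The paper writes
\[
\prod_{i\in\scrA^{(2)}} R_0(z-R_i^{(2)}) - \prod_{i\in\scrA^{(2)}} (R_i^{(2)})^{-1}
= -\sum_{k\in\scrA^{(2)}}\Bigl\{\prod_{i\leq k}(R_i^{(2)})^{-1}\Bigr\}(H_0-z)\Bigl\{\prod_{i\geq k}R_0(z-R_i^{(2)})\Bigr\},
\]
does the pull-through afterwards, and lets the $(H_0-z)$ cancel against the central $R_0(z)$; you pull through first and telescope with the roles of resolvents and scalars swapped. Both yield a sum over $\scrA^{(2)}$ of Wick monomials with $\scrA$ equal to $\scrA^{(1)}$ together with a one-sided block of $\scrA^{(2)}$, the complementary block being absorbed into $\scrL$.

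One point deserves sharpening. Your remark that ``no $\gamma_{m;j}$ need be assigned'' is literally true, but it sidesteps the real issue: in Lemma~\ref{HNFCT} the central resolvent absorbed the leftover $\alpha_{j_1}-\alpha_{j_1}^{(1)}$ via $\gamma_{m;j_1}$, and you must say where that leftover now goes. The paper assigns it to the boundary resolvent of the surviving $\scrA^{(2)}$-block (their index $k$; in your convention it would be the index $j$), whose $C$-part contains the $j_1$-momentum (since $j_1\leq m<j$) and whose $R$-part contains a $j_2$-momentum chosen from $\{i\in I_\ab^{(2)}\cup I_\af^{(2)}: i\leq j < f(i)\}$, nonempty because $R_j^{(2)}\neq 0$. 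With that made explicit, your regularity sketch goes through exactly as in the paper.
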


\begin{proof} During the proof, we will employ again the abbreviations \eqref{abbrev-LDelta} from the proof of Lemma~\ref{HNFCT}. In addition, we will again employ the convention from the proof of Lemma~\ref{HNFCT} that the sets and indices that label $T_2$-related objects are shifted from $\llbracket 1,n-m\rrbracket$ to $\llbracket m+1,n\rrbracket$.

We will prove only item \ref{itemHeredityContracted1}, item \ref{itemHeredityContracted2} being a consequence of the first case. Indeed, if $\leftT$ is a left-handed Wick monomial then $\bigl((T_2 - E_2 ) R_0(z)\leftT\bigr)^*$ is equal to $\leftT^* R_0(\bar{z})(T^*_2 - E^*_2 )$ which is of the form stated in item \ref{itemHeredityContracted1}. 

Before we begin, let us recall that since $T_2$ is a fully contracted Wick monomial (cf.~Definitions~\ref{Notation} and~\ref{def-handedWick}~\ref{item-FCWick}), we have 
\begin{equation}\label{T2-FC}
\begin{aligned}
& J_\ab^{(2)}=J_\cb^{(2)} = J_\af^{(2)} = J_\cf^{(2)} = \emptyset, \\
& I_\ab^{(2)}\cup f_\ab^{(2)}(I_\ab^{(2)}) = I_\af^{(2)}\cup f_\af^{(2)}(I_\af^{(2)}) = \llbracket m+1,n\rrbracket.
\end{aligned}
\end{equation}
Let us consider $\rightT_1(z;F_1, \dotsc, F_m)$ of the form: 
\begin{align*}
\nonumber &   \int \prod^m_{i = 1} F_i(k_i, q_i) \scrL^{(1)}\Delta^{(1)}_\ab \Delta^{(1)}_\af  \prod_{j \in  \Jcf^{(1)}} \cf(k_j)\prod_{j \in \Jcb^{(1)} } \cb(q_j)\\
 & \qquad   \prod_{i \in \scrA^{(1)}} R_0\bigl(z-C^{(1)}_{i}-R^{(1)}_i\bigr)  \prod_{j \in \Jab^{(1)} } \ab(q_j) \prod_{j \in  \Jaf^{(1)}} \af(k_j)\prod_{j=1}^m dq_j dk_j,
\end{align*}
together with $T_2(z;F_{m+1}, \dotsc, F_n)$ of the form 
\begin{equation*}
   \int \prod^n_{i = m+1} F_i(k_i, q_i) \scrL^{(2)}\Delta^{(2)}_\ab \Delta^{(2)}_\af  \prod_{i \in \scrA^{(2)}} R_0\bigl(z-R^{(2)}_i\bigr)\prod_{j=m+1}^n dq_j dk_j.
\end{equation*}
Recall from Remark~\ref{rem-Wick} that $R^{(2)}_k\neq 0$ for all $k\in\scrA^{(2)}$.
 Consequently, $E_2$ is well-defined
\begin{align*}
    & E_2(F_{m+1}, \dotsc, F_n)  = \bigl\langle \Omega \big| T_2(0;F_{m+1}, \dotsc, F_n) \Omega \bigr\rangle\\
    & \quad = \int \prod^n_{i = m+1} F_i(k_i, q_i) \scrL^{(2)}\Delta^{(2)}_\ab \Delta^{(2)}_\af   \prod_{i \in \scrA^{(2)}} \frac{1}{R^{(2)}_i}\prod_{j=m+1}^n dq_j dk_j.
\end{align*}
Therefore 
\begin{align*}
    &T_2(z;F_{m+1}, \dotsc, F_n) - E_2(F_{m+1}, \dotsc, F_n)\\
     & \qquad  = \int \prod^n_{i = m+1} F_i(k_i, q_i) \scrL^{(2)} \Delta^{(2)}_\ab \Delta^{(2)}_\af\\
     & \qquad \quad  \Biggl[\prod_{i \in \scrA^{(2)}} R_0\bigl(z-R^{(2)}_i\bigr) -  \prod_{i \in \scrA^{(2)}} \frac{1}{R^{(2)}_i}  \Biggr] \prod_{j=m+1}^n dq_j dk_j.
\end{align*}
Moreover,
\begin{align*}
    &\prod_{i \in \scrA^{(2)}} R_0\bigl(z-R^{(2)}_i\bigr) -  \prod_{i \in \scrA^{(2)}} \frac{1}{R^{(2)}_i} \\
    & \  = \sum_{k\in \scrA^{(2)}}  \biggl\{\prod_{i\in \scrA^{(2)}, i<k}\frac{1}{R^{(2)}_i}\biggr\} \biggl( R_0\bigl(z-R^{(2)}_k\bigr) - \frac{1}{R^{(2)}_k}\biggr) \biggl\{\prod_{i \in \scrA^{(2)}, i>k} R_0\bigl(z-R^{(2)}_i\bigr)\biggr\}\\
    & \ = - \sum_{k \in \scrA^{(2)}} \biggl\{\prod_{i\in \scrA^{(2)}, i<k}\frac{1}{R^{(2)}_i} \biggr\} \frac{R_0\bigl(z-R^{(2)}_k\bigr) }{R^{(2)}_k}\bigl( H_0-z\bigr)\biggl\{\prod_{i \in \scrA^{(2)}, i>k} R_0\bigl(z-R^{(2)}_i\bigr)\biggr\}\\
     & \ = - \sum_{k \in \scrA^{(2)}} \biggl\{\prod_{i\in \scrA^{(2)}, i\leq k}\frac{1}{R^{(2)}_i}\biggr\} \bigl( H_0-z\bigr)\biggl\{ \prod_{i \in  \scrA^{(2)}, i\geq k} R_0\bigl(z-R^{(2)}_i\bigr)\biggr\}\\
\end{align*}
and therefore, we may compute
{\allowdisplaybreaks 
\begin{align}\label{NormalOrderingAfterCancellation}
\nonumber &\rightT_1(z;F_1, \dotsc, F_m) R_0(z) \bigl(T_2(z;F_{m+1}, \dotsc, F_n) - E_2(F_{m+1}, \dotsc, F_n)\bigr) \\
 \nonumber & \quad =\sum_{k \in \scrA^{(2)}} \Biggl\{ \Biggl[ \int \prod^m_{i = 1} F_i(k_i, q_i) \scrL^{(1)} \Delta^{(1)}_\ab\Delta^{(1)}_\af\prod_{j \in  \Jcf^{(1)}} \cf(k_j) \prod_{j \in \Jcb^{(1)} } \cb(q_j)  \\
\nonumber &  \qquad  \prod_{i \in \scrA^{(1)}} R_0\bigl(z-C^{(1)}_{i}-R^{(1)}_i\bigr)   \prod_{j \in \Jab^{(1)} } \ab(q_j) \prod_{j \in  \Jaf^{(1)}} \af(k_j) \prod_{j=1}^m dq_j dk_j \Biggr] R_0(z) \\
\nonumber  &  \qquad \Biggl[ -\int \prod^n_{i = m+1} F_i(k_i, q_i) 
  \scrL^{(2)}\Delta^{(2)}_\ab\Delta^{(2)}_\af  \biggl\{
 \prod_{i\in \scrA^{(2)},i\leq k}\frac{1}{R^{(2)}_i}\biggr\} \bigl( H_0-z\bigr) \\
 \nonumber &\qquad\quad
     \biggl\{\prod_{i \in  \scrA^{(2)}, i\geq k} R_0\bigl(z-R^{(2)}_i\bigr) \biggr\} \prod_{j= m+1}^n dq_j dk_j \Biggr] \Biggr\}\\
\nonumber & \quad = - \sum_{k \in \scrA^{(2)}}  \Biggl[\int \prod^n_{i = 1} F_i(k_i, q_i) \frac{\scrL^{(1)} \scrL^{(2)}}{\prod_{i\in \scrA^{(2)}, i\leq k}R^{(2)}_i}\Delta^{(1)}_\ab\Delta^{(1)}_\af \Delta^{(2)}_\ab\Delta^{(2)}_\af \\
\nonumber &  \qquad   \prod_{j \in  \Jcf^{(1)}} \cf(k_j) \prod_{j \in \Jcb^{(1)} } \cb(q_j)   \biggl\{\prod_{i \in \scrA^{(1)}} R_0\bigl(z-C^{(1)}_{i}-R^{(1)}_i\bigr) \biggr\} \\
\nonumber & \qquad \biggl\{\prod_{i \in\scrA^{(2)}, i\geq k} R_0\Bigl(z-R^{(2)}_i-\sum_{j\in \Jab^{(1)}}\wb(q_j) -\sum_{j\in \Jaf^{(1)}}\wf(k_j) \Bigr) \biggr\}   \\
 &  \qquad  \prod_{j \in \Jab^{(1)} } \ab(q_j) \prod_{j \in  \Jaf^{(1)}} \af(k_j) \prod_{j=1}^n dq_j dk_j\Biggr].
\end{align}
}

In light of the above computation, we fix from now on a $k\in\scrA^{(2)}$ and define 
\begin{equation}\label{Her2-sets}
\begin{aligned}
& I_\af  = I_\af^{(2)}\cup I_\af^{(1)}, & &  I_\ab  = I_\ab^{(2)}\cup I_\ab^{(1)},\\
&\Jab  = \Jab^{(1)}, & &\Jcb = \Jcb^{(1)},\\
& \Jaf = \Jaf^{(1)}, & &\Jcf = \Jcf^{(1)},\\
& 
 \scrA = \scrA^{(1)}\cup\bigl(\scrA^{(2)}\cap \llbracket k, n \rrbracket \bigr), & & 
 \end{aligned}
 \end{equation}
 together with
 \begin{equation*}
 \begin{aligned}
& \forall i \in I_\ab: & &  f_{\ab}(i) = \begin{cases}
   f^{(1)}_{\ab}(i), & \textup{if } i \in I_\ab^{(1)} \\ 
    f^{(2)}_{\ab}(i), & \textup{if } i \in I_\ab^{(2)},
 \end{cases} \\
 &\forall i \in I_\af:  & & f_{\af}(i) = \begin{cases}
   f^{(1)}_{\af}(i), & \textup{if } i \in I_\af^{(1)} \\ 
    f^{(2)}_{\af}(i), & \textup{if } i \in I_\af^{(2)},
 \end{cases}
\end{aligned}
\end{equation*}
and
\begin{equation*}
\scrL\bigl( \{ k_j\}_{j \in I_\af}, \{ q_j\}_{j \in I_\ab}\bigr) = 
 - \frac{ \scrL^{(1)} \scrL^{(2)}}{ \prod_{i\in \scrA^{(2)}, i\leq k}R^{(2)}_i }. 
\end{equation*}

With the above definitions, we may now compute $C_i$ and $R_i$ for $i\in\scrA$  as follows:
\begin{equation}\label{FC-terms}
\begin{aligned}
  C_i & =  \begin{cases} C^{(1)}_{i},   &  \textup{if } i \in \scrA^{(1)}\\
\sum_{j\in \Jab^{(1)}}\wb(q_j) + \sum_{j\in \Jaf^{(1)}}\wf(k_j), &
\textup{if }i \in \llbracket k, n\rrbracket \cap \scrA^{(2)},
\end{cases}
\\
R_i &= \begin{cases} R^{(1)}_i,  & \textup{if } i \in \scrA^{(1)}\\ 
R^{(2)}_i, & \textup{if } i \in \llbracket k, n\rrbracket \cap \scrA^{(2)}.
\end{cases}
\end{aligned}
\end{equation}
Then each term involved in the sum on the right-hand side of \eqref{NormalOrderingAfterCancellation} is of the form \eqref{GeneralFormRegularOperator}. 

Let us focus on one specific term (with the chosen $k$) from the right-hand side of \eqref{NormalOrderingAfterCancellation}:
\begin{equation}\label{ToEstimateFCandNFC}
\begin{aligned}
  &\int \prod^n_{i = 1} F_i(k_i, q_i)  \scrL\bigl( \{ k_j\}_{j \in I_\af}, \{ q_j\}_{j \in I_\ab}\bigr) \prod_{i \in I_\af}\kdelta\bigl(k_i-k_{f_{\af}(i)}\bigr)  \\
  & \qquad \prod_{i \in I_\ab}\kdelta\bigl(q_i-q_{f_{\ab}(i)}\bigr)   \prod_{j \in  \Jcf^{(1)}} \cf(k_j) \prod_{j \in \Jcb^{(1)} } \cb(q_j)\\
 & \qquad 
  \prod_{i \in \scrA} R_0\bigl(z-C_{i}-R_i\bigr) \prod_{j \in \Jab^{(1)} } \ab(q_j) \prod_{j \in  \Jaf^{(1)}} \af(k_j)\prod^n_{j = 1} dq_j dk_j 
\end{aligned}
\end{equation}
and prove that it is a regular Wick monomial.

Any cover $(P_\ab,P_\cb,P_\af,P_\cf)$ for \eqref{ToEstimateFCandNFC} is a cover for $\rightT_1(z;F_1, \dotsc, F_m)$. This follows directly from \eqref{T2-FC}. 

Let $\sigma\colon J_\af\cap J_\cf \to \NN_{0,\infty}$ be an admissible map for \eqref{ToEstimateFCandNFC}. Recalling from \eqref{Her2-sets} that $J_\af^{(1)}= J_\af$ and $J_\cf^{(1)} = J_\cf$, we can define another map 
$\sigma^{(1)}\colon J_\af\cap J_\cf \to \NN_{0,\infty}$, by setting $\sigma^{(1)}(i)=\sigma(i)$ if $\sigma(i)\leq m$ and $\sigma^{(1)}(i)=+\infty$ if $\sigma(i)>m$. Then $\sigma^{(1)}$ is admissible 
for $\rightT_1$. Finally, observe that
\begin{equation}\label{Her2-psigma}
    \psigma\subset \psigma^{(1)}.
\end{equation}

Let now $\{\alpha_i\}_{i=1}^n$ be such that $0\leq \alpha_i\leq 1$ and $\alpha_1+\cdots+\alpha_n = n-1$.  Note that $\scrJ\subset \llbracket 1,m\rrbracket$. As in the proof of Lemma~\ref{HNFCT}, we observe that $J^{(1)}_\ab\cup (J^{(1)}_\af\setminus J^{(1)}_\cb)\neq\emptyset$, since $\rightT_1$ is right-handed (cf. Definition~\ref{def-handedWick}~\ref{item-RHWick}), and choose
\begin{equation}\label{Her2-choiceofj1j2}
\begin{aligned}
& j_1 \in J_{\ab}^{(1)} \cup \bigl(J_{\af}^{(1)}\setminus J_\cb^{(1)}\bigr), 
& & j_2 \in I'_\ab\cup I'_\af,\\
&I_\ab' = \bigset{i \in I_\ab^{(2)}}{ i \leq k < f^{(2)}_{\ab}(i)}, & & I'_\af = \bigset{i \in I_\af^{(2)}}{ i \leq k < f_{\af}^{(2)}(i)}.
\end{aligned}
\end{equation}
Note that the set $I_\ab'\cup I_\af'$ that $j_2$ is chosen from is not empty, because we have $R_k\neq 0$. See \eqref{Rcomponents}.

We again define 
\begin{equation}
\forall i\in\llbracket 1,m\rrbracket:\quad     \alpha_i^{(1)} =  \begin{cases} \alpha_i, & i \neq j_1 \\
\displaystyle m -1- \sum_{j\in\llbracket 1,m\rrbracket \setminus \{j_1\}}\alpha_i,& i = j_1\end{cases} 
\end{equation}
and 
\begin{equation}
\forall i\in\llbracket m+1,n\rrbracket:\quad     \alpha_i^{(2)} = \begin{cases} \alpha_i, & i \neq j_2 \\
\displaystyle n -m -1- \sum_{j\in\llbracket m+1,n\rrbracket \setminus \{j_2\}}\alpha_i,& i = j_2.\end{cases}
\end{equation}

Applying Definition~\ref{RegularOperator} to $\rightT_1$ and $T_2$, yields now admissible exponents $\{\gamma^{(1)}_{i;j}\}_{i\in\scrA^{(1)}, j\in \scrJ^{(1)}}$ and exponents
$\{\beta^{(1)}_{i}\}_{i=1}^m$ and $\{ \beta^{(2)}_{i}\}_{i=m+1}^n$ with $0\leq \beta_i^{(1)}\leq \alpha_i^{(1)}$, $0\leq \beta_i^{(2)}\leq \alpha_i^{(2)}$ and $\beta^{(1)}_i=0$ if $i\in\psigma^{(1)}$, such that the estimate \eqref{regularityproperty} holds for each of the two operators.

 Recalling that $\scrJ = \scrJ^{(1)}$, We introduce for $i\in\scrA$ and $j\in\scrJ$
\begin{equation}
\gamma_{i;j} = \begin{cases}
\gamma^{(1)}_{i;j}, & \textup{if } i \in \scrA^{(1)} \\
\kdelta_{j,j_1}\bigl(\alpha_{j_1}-\alpha^{(1)}_{j_1}\bigr)& \textup{if } i=k \\
0 & \text{otherwise.}
\end{cases}
\end{equation}
Note that this time it is easy to see that the $\gamma_{i;j}$'s satisfy the constraints in \eqref{ConstraintOnGamma}, since the $\gamma_{i;j}^{(1)}$'s do and there are no new contractions. We only have to remark that the choice of $j_1$ ensures that no constraint is imposed on $\gamma_{k,j_1}$. 
Then $\sum_{i\in\scrA}\gamma_{i;j} = \alpha_j$ for $j\in\scrJ$
and $\overline{\gamma}_i = \sum_{j\in \scrJ}\gamma_{i;j} = \overline{\gamma}^{(1)}_i\leq 1$ for $i\in\scrA^{(1)}$
and for $i\in \scrA^{(2)}\cap \llbracket m+1,n\rrbracket$, we have
$\overline{\gamma}_i = \kdelta_{i,k}\bigl(\alpha_{j_1}-\alpha^{(1)}_{j_1}\bigr)\leq 1$. In fact, as with \eqref{gamma-bar-m}, we have
\begin{equation}\label{gamma-bar-k}
1-\overline{\gamma}_k=   \alpha_{j_2}-\alpha^{(2)}_{j_2}.
\end{equation}
We have thus also verified \eqref{SumOfGammas} and the collection $\{\gamma_{i;j}\}_{i\in\scrA,j\in \scrJ}$ are admissible exponents, cf. Definition~\ref{def-admexp}.

We compute using \eqref{Her2-sets}, \eqref{FC-terms} and the definition of $\scrL$:
{\allowdisplaybreaks 
\begin{align}\label{Her2-Laststep}
\nonumber &    \Bigl|\scrL\Bigl(\{q_j\}_{j\in I_\ab},\{k_j\}_{j\in I_\af}\Bigr)\Bigr| \prod_{i\in I_\ab}\kdelta\bigl(q_i-q_{f_\ab(i)}\bigr)\prod_{i\in I_\af}\kdelta\bigl(k_i-k_{f_\af(i)}\bigr)\\
\nonumber&\qquad  \prod_{i\in P_\ab\cup P_\cb} \wb(q_i)^{-\alpha_i}
    \prod_{i\in P_\af\cup P_\cf} \wf(q_i)^{-\alpha_i} \prod_{i\in\scrA} \Bigl\| R_0\bigl(z-C_i(\sigma)-R_i\bigr)^{1-\overline{\gamma}_i}\Bigr\| \\
\nonumber    & \quad  = \biggl\{ \  \bigl|\scrL^{(1)}\bigr| \Delta^{(1)}_\ab \Delta^{(1)}_\af\prod_{i\in P^{(1)}_\ab\cup P^{(1)}_\cb} \wb(q_i)^{-\alpha^{(1)}_i}
    \prod_{i\in P^{(1)}_\af\cup P^{(1)}_\cf} \wf(q_i)^{-\alpha^{(1)}_i}  \\
\nonumber    & \qquad \quad \prod_{i\in\scrA^{(1)}}
    \Bigl\| R_0\bigl(z-C_i(\sigma)-R_i^{(1)}\bigr)^{1-\overline{\gamma}^{(1)}_i}\Bigr\| \
    \biggr\}\\
\nonumber  &\qquad  \biggl\{\
\bigl|\scrL^{(2)}\bigr| \Delta^{(2)}_\ab \Delta^{(2)}_\af   \frac{\prod_{i\in\scrA^{(2)}, i>k} \bigl\| R_0\bigl(z-C_i(\sigma)-R_i^{(2)}\bigr)\bigr\|}{\prod_{i\in \scrA^{(2)},i\leq k} R_i^{(2)}} 
 \   \biggr\} \\
\nonumber  &\qquad   \wb(q_{j_1})^{- \kdelta_{j_1\in P_\ab\cup P_\cb}(\alpha_1-\alpha_1^{(1)})}
    \wf(k_{j_1})^{- \kdelta_{j_1\in P_\af\cup P_\cf}(\alpha_1-\alpha_1^{(1)})} \\
    & \qquad\Bigl\| R_0\bigl(z-C_k(\sigma)-R_k^{(2)}\bigr)^{1-\overline{\gamma}_k}\Bigr\|.
\end{align}
}
Note also that
\[
\begin{aligned}
&\forall i\in\scrA^{(1)}:& & \bigl\| R_0\bigl(z-C_i(\sigma)-R_i^{(1)}\bigr)\bigr\|\leq \bigl\| R_0\bigl(z-C^{(1)}_i(\sigma^{(1)})-R_i^{(1)}\bigr)\bigr\|,\\
&\forall i\in\scrA^{(2)},i>k:& & \bigl\| R_0\bigl(z-C_i(\sigma)-R_i^{(2)}\bigr)\bigr\|\leq \bigl\| R_0\bigl(z-R_i^{(2)}\bigr)\bigr\|.
\end{aligned}
\]
With this in mind, one may now use
 \eqref{regularityproperty} on the two brackets $\{\cdots\}$ on the right-hand side of \eqref{Her2-Laststep}, with $z=0$ for the second bracket (cf.~Remark~\ref{rem-Wick}). The proof now concludes as for Lemma~\ref{HNFCT}, using \eqref{gamma-bar-k}
and with exponents $\{\beta_{i}\}_{i=1}^n$ given as at the end of the proof of Lemma~\ref{HNFCT}, by
\[
 \beta_{i} = \begin{cases}
    \beta^{(1)}_{i} & \textup{if }  i \in \llbracket 1 , m \rrbracket\setminus \{j_1\}  \\
    \beta^{(1)}_{j_1} + \kdelta_{j_1 \in P_{\af}} \bigl(\alpha_{j_1} - \alpha^{(1)}_{j_1}\bigr)& \textup{if }  i=j_1\\
     \beta^{(2)}_{i} & \textup{if }  i \in \llbracket m+1 , n \rrbracket\setminus\{j_2\}\\
     \beta^{(2)}_{j_2} + \kdelta_{j_2 \in I'_\af} \bigl(\alpha_{j_2} - \alpha^{(2)}_{j_2}\bigr)& \textup{if }  i=j_2.
\end{cases} 
\]
We have $0\leq \beta_i\leq\alpha_i$ and from \eqref{Her2-psigma}, we see that $\beta_i=0$ if $i\in\psigma$. In order to verify that these exponents work, it remains to observe the estimates:
\begin{equation*}
\begin{aligned}
&\textup{if } j_1\in P_\ab\cup P_\cf: & & C_k(\sigma) + R_k^{(2)}\geq \wb(q_{j_1})\\
&\textup{if } j_1\in P_\af: & & C_k(\sigma) + R_k^{(2)}\geq \wf(k_{j_1})\\
&\textup{if }j_2\in I'_\ab: & & C_k(\sigma)+R_k\geq \wb(q_{j_2}),\\
&\textup{if }j_2\in I'_\af: & & C_k(\sigma)+R_k\geq \wf(k_{j_2}).
\end{aligned}
\end{equation*}
These estimates follow from the choice of $j_1$ and $j_2$, cf.~\eqref{Her2-choiceofj1j2}, and the formulas for $C_k(\sigma)$, cf. Definition~\ref{SumsinRes}, and for $R_k^{(2)}$, cf.~\eqref{Rcomponents}. 
\end{proof}

\subsection{Normal ordering renormalized handed blocks}\label{subsec-NO-blocks}

In this section, we show how any renormalized handed block of operators can be expressed as a finite sum of Handed Wick Monomials.

\begin{lem}
\label{InductionLemma}
 For any $\ell \in \NN$, there exists $M=M(\ell)$ such that the following holds:
 \begin{enumerate}[label = \textup{(\arabic*)}]
 \label{lem1}
 \item\label{itemtotcont1}  for any  $\us \in \scrS^{(\ell)}_\Right$, there exists $N\in \NN$ with $N\leq M$, and a collection of $N$ right-handed Wick monomials of length $\ell$, $\{\rightT_i \}_{i\in \llbracket 1, N \rrbracket}$, such that:
  \begin{equation}\label{lem2}  
 T_\us^{(\ell)}  = \sum^{N}_{i=1}  \rightT_i.
\end{equation} 
Moreover, the signature associated with each operator $\rightT_i $ is $\us$ and the bounding constants are $c_{\rightT_i} = 1$.
 \item\label{itemtotcont2}  for any  $\us \in \scrS^{(\ell)}_\Left$, there exists $N\in\NN$ with $N\leq M$ and a collection of left-handed  Wick monomials of length $\ell$, $\{ \leftT_i \}_{i\in \llbracket 1, N\rrbracket}$ , such that:
  \begin{equation}\label{lem3} 
 T^{(\ell)}_\us  = \sum^{N}_{i=1} \leftT_i. 
\end{equation} 
Moreover, the signature associated with each operator $\leftT_i $ is $\us$ and the bounding constants are $c_{\leftT_i} = 1$.
 \item\label{itemtotcont3}  for any  $\us \in \scrS^{(\ell)}_\LR$, there exist $N_1, N_2 \in \NN$ with $N_1+N_2\leq M$, a collection of $N_1$ Wick monomials of length $\ell$, $\{ \lrT_i \}_{i\in \llbracket 1, N_1 \rrbracket}$ that are both left- and right-handed, and a collection of $N_2$ fully contracted Wick monomials of length $\ell$, $\{ T_i \}_{i\in \llbracket 1, N_2 \rrbracket}$, such that:
  \begin{equation}
  \begin{aligned}
 T_\us^{(\ell)} & = \sum^{N_1}_{i=1}  \lrT_i +  \sum^{N_2}_{i=1}  (T_i - E_i),\\
   E_i \colon &  (F_1,\dots, F_\ell)\to \bigl\langle\Omega \, \big\vert\, T_i(0; F_1, \dots, F_\ell) \Omega \bigr\rangle.
 \end{aligned}
\end{equation}
Moreover, the signature associated with each operator $\lrT_i  $ or $T_i$ is $\us$ and the bounding constants are $c_{\lrT_i} = c_{T_i} =1$.
\end{enumerate}
\end{lem}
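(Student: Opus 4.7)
The plan is to proceed by induction on $\ell$. For the base case $\ell=1$, Remark~\ref{rem-signs}~(i) enumerates the four handed strings of length one, and $T^{(1)}_\us(z;F)=-H^s(F)$ in each case fits Definition~\ref{WickMonomial} with $\scrA=I_\ab=I_\af=\emptyset$ and $|\scrL|=1$. The regularity inequality \eqref{regularityproperty} is trivial here because \eqref{Caraalpha} forces $\alpha_1=0$, so $c_T=1$ works; handedness is read off from inspection of $J_\ab,J_\af,J_\cb,J_\cf$.

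For the inductive step, fix $\ell\geq 2$ and $\us\in\scrS^{(\ell)}$. Pick any $j\in\Split(\us)$ using Proposition~\ref{propo-signdecomp} and split $\us=\us'\circ\us''$. Apply the induction hypothesis to $T^{(j)}_{\us'}$ and $T^{(\ell-j)}_{\us''}$, substitute into the recursive formula of Definition~\ref{def-remblocks} (using $T^{(\ell)}_{\us,\bare}$ for ambidextrous $\us$), and distribute. The task reduces to analyzing single sandwich products of Wick monomials separated by a free resolvent, which are exactly the inputs of Lemma~\ref{HNFCT} and Lemma~\ref{HBFCTANFCO}.

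If $\us\in\scrS^{(\ell)}_\Right$, Proposition~\ref{propo-signdecomp}~\ref{item-sd-right} places $\us'\in\scrS^{(j)}_\Right$ and $\us''\in\scrS^{(\ell-j)}_\Left\cup\scrS^{(\ell-j)}_\LR$. Every resulting product has the shape $\rightT'\,R_0(z)\,\leftT''$, $\rightT'\,R_0(z)\,\lrT''$, or $\rightT'\,R_0(z)\,(T''-E'')$; the first two are handled by Lemma~\ref{HNFCT}~\ref{item-Her1-RH} and the last by Lemma~\ref{HBFCTANFCO}~\ref{itemHeredityContracted1}. Because all outputs inherit the signature $\us$, Remark~\ref{rem-naToHanded} rules out the fully contracted alternative in Lemma~\ref{HNFCT}~\ref{item-Her1-RH} (this would require $n_\ab(\us)=n_\af(\us)=0$), so only right-handed Wick monomials survive, each with bounding constant $1\cdot 1=1$. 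The case $\us\in\scrS^{(\ell)}_\Left$ is the strict mirror image via Proposition~\ref{propo-signdecomp}~\ref{item-sd-left}, Lemma~\ref{HNFCT}~\ref{item-Her1-LH} and Lemma~\ref{HBFCTANFCO}~\ref{itemHeredityContracted2}.

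The delicate step -- the main obstacle -- is $\us\in\scrS^{(\ell)}_\LR$. Here Proposition~\ref{propo-signdecomp}~\ref{item-sd-ambi} forces $\us'\in\scrS^{(j)}_\Right$ and $\us''\in\scrS^{(\ell-j)}_\Left$, so $T^{(\ell)}_{\us,\bare}$ expands into products $\rightT'\,R_0(z)\,\leftT''$ with $n_\ab(\rightT')+n_\ab(\leftT'')=n_\ab(\us)=0$ and $n_\af(\rightT')+n_\af(\leftT'')=n_\af(\us)=0$. Thus both \ref{item-Her1-RH} and \ref{item-Her1-LH} of Lemma~\ref{HNFCT} apply simultaneously, so each such product decomposes as a sum of Wick monomials that are simultaneously right- and left-handed (hence ambidextrous in the sense of Definition~\ref{def-handedWick}) together with fully contracted Wick monomials. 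The key observation is that any such ambidextrous $\lrT_\alpha$ satisfies $J_\ab\cup J_\af\neq\emptyset$ by Definition~\ref{def-handedWick}~\ref{item-RHWick}, so an uncontracted annihilation operator acts directly on $\Omega$ and $\bigl\langle\Omega\,\big\vert\,\lrT_\alpha(0)\Omega\bigr\rangle=0$. Consequently
\[
\bigl\langle\Omega\,\big\vert\,T^{(\ell)}_{\us,\bare}(0)\,\Omega\bigr\rangle=\sum_\beta\bigl\langle\Omega\,\big\vert\,T_\beta(0)\,\Omega\bigr\rangle,
\]
and subtracting this counter-term as prescribed in Definition~\ref{def-remblocks} yields exactly the decomposition $T^{(\ell)}_\us=\sum_\alpha\lrT_\alpha+\sum_\beta(T_\beta-E_\beta)$ required by item~\ref{itemtotcont3}. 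Bounding constants remain $1$ because Lemmas~\ref{HNFCT} and~\ref{HBFCTANFCO} only multiply them, and the cardinality $M(\ell)$ of the resulting sum is finite by the explicit bounds in those two lemmas composed with the induction hypothesis; the real care is required only in matching the prescribed counter-term to the vacuum expectations of the fully contracted pieces, which is where the ambidextrous signature string conventions and the ambidextrous Wick monomial conventions must be reconciled.
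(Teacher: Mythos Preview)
Your proposal is correct and follows essentially the same approach as the paper's proof: induction on $\ell$, a direct base case at $\ell=1$, and an inductive step that splits $\us$ via Proposition~\ref{propo-signdecomp}, expands both factors by the induction hypothesis, and feeds each resulting product into Lemma~\ref{HNFCT} or Lemma~\ref{HBFCTANFCO}. Your treatment of the ambidextrous case---applying both \ref{item-Her1-RH} and \ref{item-Her1-LH} of Lemma~\ref{HNFCT} simultaneously and then matching the subtracted counter-term to the vacuum expectations of the fully contracted pieces---is exactly the mechanism the paper uses, though you spell it out somewhat more explicitly than the paper does.
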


\begin{proof} 
The proof is done by induction and we begin with $\ell=1$.
Here there are four elements of $\scrS^{(1)}$, cf. Remark~\ref{rem-handedsigns}~\ref{rem-signslength1}, two left-handed and two right-handed so only \ref{itemtotcont1} and \ref{itemtotcont2} are involved. We treat only the case $\us = (\ab\af)$, the other three cases being similar.  Recall from Definition~\ref{def-remblocks} that
\[
T^{(1)}_{(\ab\af)}(z;F) = - \int F(k,q) \ab(q) \af(k) dq dk
\]
does not depend on $z$. We argue that this is actually a regular Wick monomial by itself, such that we may choose $N_1=1$. Indeed, we recognize $T^{(1)}_{(\ab\af)}$ as a Wick monomial of length $1$ with $\scrA = I_\af = I_\ab = \Jcf = \Jcb  =   \emptyset$, $\Jaf= \Jab = \{1\}$ and $\scrL=-1$. 

To see that $T^{(1)}_{(\ab\af)}$ is also regular, we observe that there are only two possible covers, one with $P_\ab = \{1\}$ and one with $P_\af = \{1\}$ (the remaining three sets being empty in both cases).
Given one of these two covers and any of the two possible choices of admissible map $\sigma\colon \{1\}\to \{0,1,+\infty\}$, the following holds. There exists only one collection $\{\alpha_i\}$ fulfilling \eqref{Caraalpha}, which is $\alpha_1=0$. Therefore, we are forced to take $\beta_{1} = 0$ and hence $\overline{\gamma}_1 = 0$. As a conclusion, the estimate \eqref{regularityproperty} holds with $c_T = 1$ and $T^{(1)}_{(\ab\af)}$ is a regular Wick monomial.

Now let $\ell \in\NN$ with $\ell\geq 2$. Assume that \ref{itemtotcont1}, \ref{itemtotcont2} and \ref{itemtotcont3} all hold for any $\ell' \leq \ell-1 $. We only have to prove \ref{itemtotcont1} and \ref{itemtotcont3}, since \ref{itemtotcont2} follows from \ref{itemtotcont1} by passing to adjoints. 

If $\us \in \scrS^{(\ell)}_\Right$, then there exists $k\in\Split(\us)$, such that $\us = \us'\circ \us''$,
$\us'\in \scrS^{(k)}_\Right$  and $\us''\in \scrS^{(\ell-k)}_\Left\cup\scrS^{(\ell-k)}_\LR$, such that, for any $\{F_i\}_{i \in \llbracket 1 , \ell \rrbracket} \in L^{2}(\RR^d\times \RR^d)^\ell$:
\begin{equation}
\begin{aligned}
& T^{(\ell)}_\us\bigl(z;F_1, \dotsc, F_\ell\bigr) = T^{(k)}_{\us'}\bigl(z;F_1, \dotsc, F_k\bigr) R_0(z) T^{(\ell-k)}_{\us''}\bigl(z;F_{k+1}, \dotsc, F_\ell\bigr) \\
& \qquad - \bigl\langle\Omega \, \big\vert\, T^{(k)}_{\us'}\bigl(0;F_1, \dotsc, F_k\bigr) R_0(0) T^{(\ell-k)}_{\us''}\bigl(0;F_{k+1}, \dotsc, F_\ell\bigr)\Omega\bigr\rangle
\end{aligned}
\end{equation}
and either 
\begin{equation}
n_\ab(1,\ell;\us) =n_{\ab}(1,k;\us')+ n_{\ab}(1,\ell-k;\us'')  >  0
\end{equation}
or
\begin{equation}\label{condition2}
\begin{cases}  n_\ab(1,\ell;\us) =n_{\ab}(1,k;\us')+ n_{\ab}(1,\ell-k;\us'')  = 0, \\
n_\af(1,\ell;\us)=  n_{\af}(1,k;\us')+n_\af(1,\ell-k;\us'') \geq 0.
\end{cases}
\end{equation}
The induction hypothesis, together with and Lemma~\ref{HNFCT} and Lemma~\ref{HBFCTANFCO} can then be used to conclude that there exist $M$, depending only on $\ell$ and not on the choice of $\us$, as well as $N_1\in\NN, N_2\in\NN_0 $, fulfilling $N_1+N_2\leq M$, $\bigl\{\rightT_i \bigr\}_{i\in \llbracket 1, N_1 \rrbracket}$  a collection of right-handed Wick monomials and $\{ T_i \}_{i\in \llbracket 1, N_2 \rrbracket}$ a collection of fully contracted Wick monomials, all of length $\ell$, such that:
 \begin{equation}
T_\us^{(\ell)} = \sum^{N_1}_{i=1} \rightT_i +  \sum^{N_2}_{i=1}  T_i
 \end{equation}
 and 
 \[
     c_{T_i} =c_{\rightT_i} = c_{T^{(k)}_{\us'}} \cdot c_{T^{(\ell-k)}_{\us''}} = 1.
 \]
In particular, writing $\uF^{(\ell)} = (F_1,\dotsc,F_\ell)$, 
 \begin{align*}
 \bigl\langle\Omega \,\big\vert\, T^{(\ell)}_\us\bigl(0;\uF^{(\ell)}\bigr) \Omega \bigr\rangle  & =   \bigl\langle\Omega \,\big\vert\, \sum^{N_1}_{i=1} \rightT_i\bigl(0;\uF^{(\ell)}\bigr) +  \sum^{N_2}_{i=1}  T_i\bigl(0;\uF^{(\ell)} \bigr)\Omega \bigr\rangle  \\ 
  & =  \sum^{N_2}_{i=1} \bigl\langle\Omega \,\big\vert\,  T_i\bigl(0;\uF^{(\ell)} \bigr)\Omega \bigr\rangle=  \sum^{N_2}_{i=1} E_i\bigl(\uF^{(\ell)} \bigr),
 \end{align*}
which concludes the proof of \ref{itemtotcont1} and \ref{itemtotcont3}, since $N_2=0$ if $\us$ is right-handed. (Recall that the claim \ref{itemtotcont2} is a consequence of claim \ref{itemtotcont1}, since it is the adjoint case.)
\end{proof}

\section{Ordered Wick Monomials}\label{Sec-OrdOp}

As explained earlier, regular Wick monomials are difficult to estimate directly because there may be up to twice as many creation and annihilation operators as there are resolvents. We need to free up as many resolvents as possible, to exploit the momentum decay they provide due to pull-through. The solution is to use boundedness of the smeared fermion annihilation and creation operators $\af(f)$ and $\cf(f)$, for square-integrable $f$, cf.~\eqref{SmearedFermions}. However, in order to arrive at such smeared fermionic operators, we first need to systematically eliminate the relevant explicit fermion momentum dependence introduced into resolvents through normal ordering and pull-through. The procedure we follow is motivated and exemplified in Appendix~\ref{app-oo}.

We are in principle only interested in the starting point, the regular Wick monomials, and the objects we end up with that we estimate in Subsection~\ref{subsec-estregWM}. However, in order to keep track of the terms that appear when we systematically eliminate the relevant fermionic momenta from the resolvents, we introduce a larger class of operators that we call Ordered Wick Monomials that encompass regular Wick monomials and the objects we end up actually estimating, along with everything that appears along the way, when we start to extract the smeared fermionic operators. Figure~\ref{fig-OWM} illustrates the situation with the arrows indicating the momentum elimination procedure that is performed in Subsection~\ref{subsec-RWMtoOWM}.


\begin{figure}
\begin{center}
\begin{tabular}{|c|}
\hline
Wick Monomials, $(P_\af,P_\cf)= (J_\af,J_\cf)$ \\
\hdashline
$\downarrow$
\\
$\downarrow$
\\
$\downarrow$
\\
\hdashline
\ Estimates, $(P_\af,P_\cf)= (J_\af\setminus(J_\ab\cup J_\cb),J_\cf\setminus(J_\ab\cup J_\cb))$ \ \\
\hline
\end{tabular}
\end{center}
\caption{The set of Ordered Operators}\label{fig-OWM}
\end{figure}

\subsection{Definition of ordered Wick monomials}

Before starting, we ask the reader to recall the definitions of a cover, cf.~Definition~\ref{DefCover}, admissible map $\sigma$, cf. Definition~\ref{Admissiblemaps}, as well as the definition $C_i(\sigma)$ from Definition~\ref{SumsinRes}. Note that these objects were introduced with a view towards the analysis of this section.
The following notation will be convenient: 
\begin{equation}\label{tildesigma}
\tilde{\sigma}(i) = \begin{cases}
    \sigma(i), & \textup{ if } i\in \Jcf\\
     \sigma(i)-1, & \textup{ if } i\in \Jaf.
\end{cases}
\end{equation}
Below, $\tilde{\sigma}$ will only be applied to $i\not\in\psigma$.


\begin{Def}[Ordered Wick Monomial]
\label{OrderOperator} 
Let $n\in\NN$ and $\scrA\subset \llbracket 1,n-1\rrbracket$.
Suppose $J_\ab,J_\cb,J_\af,J_\cf,I_\ab,I_\cb$ and $f_\ab,f_\af$ are as in Definition~\ref{Notation}. Let $(P_\ab,P_\cb,P_\af,P_\cf)$ be a cover and $\sigma$ an admissible map.
Abbreviate $\scrB =  (\Jcf\cup \Jaf)\setminus ( P_{\cf}\cup P_{\af})$. Let $z\in\CC_-^*$. An operator-valued function $T_{(P_{\cf},P_{\af})}(z;\cdot)\in\scrM^{(n)}$ is an ordered Wick monomial if there exist two total orders $\preceq$ and $\preceq_*$ on the set $(\Jcf \cup\Jaf) \setminus \psigma$,    such that: 
 \begin{equation}\label{orderprop}
     \forall i,j \in \bigl(\Jcf \cup J_\af\bigr)\setminus \psigma: \qquad \tilde{\sigma}(i) < \tilde{\sigma}(j) \quad  \Rightarrow \quad  i\prec j \textup{ and } i\prec_{*} j
 \end{equation}
 and for $z\in\CC_-^*$ and $\uF^{(n)}\in (L^2(\RR^d\times\RR^d))^n$:
\begin{align}
\label{OrderedOperatorDef}
\nonumber T_{(P_{\cf},P_{\af})}(z;\uF^{(n)})  = &  \int \prod_{i \in \llbracket 1, n \rrbracket \setminus \psigma} F_i(k_i, q_i) \scrL\bigl(\{ k_j\}_{j \in I_\af}, \{ q_j\}_{j \in I_\ab}\bigr) \prod_{i \in \scrB\setminus \psigma}\wf(k_i) \\
\nonumber & \prod_{i \in I_\af}\kdelta\bigl(k_i-k_{f_{\af}(i)}\bigr)  \prod_{i \in I_\ab}\kdelta\bigl(q_i-q_{f_{\ab}(i)}\bigr)  \prod_{i \in \Jcf \cap \psigma} \cf\bigl(F_i(., q_i)\bigr)  \\
\nonumber&  \prod_{j \in \Jcb} \cb(q_j) \prod_{j \in \Jcf\setminus \psigma} \cf(k_j)  \prod_{i \in \scrA} R_0\bigl(z-C_{i}(\sigma) - R_i\bigr)   \\
\nonumber  &  \prod_{i \in \scrB\setminus \psigma} R_0\bigl(z-D_i^{(\preceq, \preceq_*)}(\sigma)\bigr)   \prod_{j \in \Jaf \setminus \psigma} \af(k_j) \prod_{j \in \Jab}\ab(q_j)  \\
& \prod_{i \in \Jaf \cap \psigma}\af\bigl(\overline{F_i(., q_i)}\bigr)\prod_{i \in \llbracket 1, n \rrbracket } dq_i\prod_{i \in \llbracket 1, n \rrbracket \setminus \psigma} dk_i.
\end{align}
Here $\{D_i\}_{i \in \scrB\setminus \psigma}$ is a collection of sums of dispersion relations given by:
\begin{equation}\label{Dis}
\begin{aligned}
    \forall j \in  \scrB\setminus \psigma:\quad    D_j^{(\preceq, \preceq_*)}(\sigma ) & = \sum_{\underset{\ell\in  \Jcb }{\ell > \tilde{\sigma}(j)}} \wb(q_\ell) + \sum_{\underset{\ell\in \Jcf\setminus \psigma}{{j\preceq_* \ell}}}\wf(k_{\ell})\\
    & \quad +  \sum_{\underset{\ell\in \Jab  }{\ell \leq \tilde{\sigma}(j)}} \wb(q_\ell) +\sum_{\underset{\ell\in \Jaf\setminus \psigma }{{\ell \preceq j}}}\wf(k_\ell) +R_{\tilde{\sigma}(j)}.
\end{aligned}
\end{equation}
Moreover, $n$ is called the length of $T_{(P_{\cf},P_{\af})}$. 
\end{Def}

We remind the reader that examples motivating the above definition can be found in Appendix~\ref{app-oo}. The rest of this subsection is devoted to establishing the following lemma.

\begin{Def}[Adjoint of Ordered Operator] Let $T$ be an ordered operator of length $n$. We define the adjoint of $T$, labeled by $T^*$, by setting 
\[
T^*(z; F_1, \dotsc, F_n)= T(\overline{z};\overline{F_n},\dotsc, \overline{F_1})^*_{\vert\Hfin},
\]
with $z\in\CC_-^*$ and $(F_1,\dotsc, F_n)\in (L^2(\RR^d\times \RR^d))^{n}$.
\end{Def}

\begin{lem}\label{lem-OrderedOpAdjoint}
    Let $T_{(P_{\cf},P_{\af})}$ be an ordered Wick monomial. Then its adjoint, $T_{(P_{\cf},P_{\af})}^*$, is also an ordered Wick monomial.
\end{lem}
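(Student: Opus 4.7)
The plan is to adapt the reflection argument from the proof of Lemma~\ref{LemAdjoint}. Apply the involution $\varphi\colon \{0\}\cup\llbracket 1,n\rrbracket\cup\{\infty\}\to \{0\}\cup\llbracket 1,n\rrbracket\cup\{\infty\}$, $\varphi(i)=n+1-i$ for $i\in\llbracket 1,n\rrbracket$, $\varphi(0)=\infty$, $\varphi(\infty)=0$, and introduce primed versions of all the labelling data. The sets $J'_\ab,J'_\cb,J'_\af,J'_\cf,I'_\ab,I'_\af$ and functions $f'_\ab,f'_\af$ are defined exactly as in \eqref{adjoint-primeobjects}. For the cover, set $P'_\ab=\varphi(P_\cb)$, $P'_\cb=\varphi(P_\ab)$, $P'_\af=\varphi(P_\cf)$, $P'_\cf=\varphi(P_\af)$, which is a cover for $(J'_\ab,J'_\cb,J'_\af,J'_\cf)$. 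For the admissible map, set $\sigma'=\varphi\circ\sigma\circ\varphi^{-1}$, with the same reading that values outside $\llbracket 1,n\rrbracket$ transform by $\varphi$ as in \eqref{alphasigmaprime}; this is again admissible for the primed cover and satisfies $\partial\sigma'=\varphi(\partial\sigma)$.

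The key new point, beyond Lemma~\ref{LemAdjoint}, is to transport the total orders $\preceq$ and $\preceq_*$ from $(J_\cf\cup J_\af)\setminus\partial\sigma$ to $(J'_\cf\cup J'_\af)\setminus\partial\sigma'=\varphi((J_\cf\cup J_\af)\setminus\partial\sigma)$. A direct computation from \eqref{tildesigma} and the definition of $\sigma'$ gives $\tilde\sigma'(\varphi(i))=n-\tilde\sigma(i)$ for every $i\in (J_\cf\cup J_\af)\setminus\partial\sigma$, reflecting that $\varphi$ reverses the $\tilde\sigma$-ordering. Since the adjoint swaps the roles of $\Jcf$ and $\Jaf$, we accordingly define
\[
\varphi(i)\preceq'\varphi(j)\ \Longleftrightarrow\ j\preceq_* i,\qquad \varphi(i)\preceq'_*\varphi(j)\ \Longleftrightarrow\ j\preceq i.
\]
The order property \eqref{orderprop} for the primed data is then immediate: if $\tilde\sigma'(\varphi(i))<\tilde\sigma'(\varphi(j))$, then $\tilde\sigma(j)<\tilde\sigma(i)$, so $j\prec i$ and $j\prec_* i$ by \eqref{orderprop} applied to $T_{(P_\cf,P_\af)}$, whence $\varphi(i)\prec'\varphi(j)$ and $\varphi(i)\prec'_*\varphi(j)$.

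The remaining work, which is the main but entirely mechanical obstacle, is to verify that expanding the adjoint of \eqref{OrderedOperatorDef} gives the right-hand side of \eqref{OrderedOperatorDef} written with the primed data. The treatment of the $\scrA$-resolvents goes through as in the proof of Lemma~\ref{LemAdjoint}: setting $\nu=\varphi(i)-1$, one finds $C_i(\sigma)+R_i=\tilde C'_\nu(\sigma')+\tilde R'_\nu$ up to the relabelling $q_l\leftrightarrow q_{\varphi^{-1}(l)}$, $k_l\leftrightarrow k_{\varphi^{-1}(l)}$. For the new $(\scrB\setminus\partial\sigma)$-resolvents, one must check that $D_j^{(\preceq,\preceq_*)}(\sigma)$ becomes $D_{\varphi(j)}^{(\preceq',\preceq'_*)}(\sigma')$. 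This is a term-by-term check of the four sums in \eqref{Dis}: the inequalities $l>\tilde\sigma'(\varphi(j))$ and $l\leq\tilde\sigma'(\varphi(j))$ translate to the opposite inequalities for $\varphi^{-1}(l)$ with respect to $\tilde\sigma(j)$, while the two $(\preceq',\preceq'_*)$-conditions become the corresponding $(\preceq_*,\preceq)$-conditions by construction.

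Finally, the smeared boundary terms $\cf(F_i(\cdot,q_i))$ with $i\in J_\cf\cap\partial\sigma$ and $\af(\overline{F_i(\cdot,q_i)})$ with $i\in J_\af\cap\partial\sigma$ are exchanged by taking adjoints, after relabelling by $\varphi$, matching exactly the smeared terms prescribed by \eqref{OrderedOperatorDef} for the primed indices $J'_\af\cap\partial\sigma'$ and $J'_\cf\cap\partial\sigma'$. Combining all these identifications, together with the substitution $F'_i=F_{\varphi(i)}$ and $\scrL'=\overline{\scrL}$ on the relabelled variables, one obtains $T^*_{(P_\cf,P_\af)}=T'_{(P'_\cf,P'_\af)}$ in the sense of Definition~\ref{OrderOperator}, concluding the proof.
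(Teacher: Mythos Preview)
Your proposal is correct and follows essentially the same approach as the paper: both use the reflection $\varphi$ from Lemma~\ref{LemAdjoint}, define the primed orders by reversing and swapping $\preceq$ with $\preceq_*$, establish the key identity $\tilde\sigma'(\varphi(i))=n-\tilde\sigma(i)$ (equivalently, $\varphi^{-1}\circ\tilde\sigma'\circ\varphi=\tilde\sigma+1$), and then verify term by term that the $D_j^{(\preceq,\preceq_*)}(\sigma)$ expressions transform correctly. The only minor omission in your sketch is that the $R_{\tilde\sigma(j)}$ summand in \eqref{Dis} also needs checking, but this follows from the same computation already done in the proof of Lemma~\ref{LemAdjoint}.
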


\begin{proof}
    Let $T_{(P_{\cf},P_{\af})}$ be an ordered Wick monomial, $z\in\CC_-^*$ and $\uF^{(n)}\in (L^2(\RR^d\times\RR^d))^n$. Then, following the notation of Definition~\ref{OrderOperator}, we have 
   \begin{equation*}
\begin{aligned}
 (T_{(P_{\cf},P_{\af})}(z;\uF^{(n)}))^*  = &  \int \prod_{i \in \llbracket 1, n \rrbracket \setminus \psigma} \overline{F_i(k_i, q_i)} \overline{\scrL\bigl(\{ k_j\}_{j \in I_\af}, \{ q_j\}_{j \in I_\ab}\bigr) }\prod_{i \in \scrB\setminus \psigma}\wf(k_i)\\
 &  \prod_{i \in I_\af}\kdelta\bigl(k_i-k_{f_{\af}(i)}\bigr)  \prod_{i \in I_\ab}\kdelta\bigl(q_i-q_{f_{\ab}(i)}\bigr)  \prod_{i \in \Jaf \cap \psigma} \cf\bigl(\overline{F_i(., q_i)}\bigr)  \\
&  \prod_{j \in \Jab} \cb(q_j) \prod_{j \in \Jaf\setminus \psigma} \cf(k_j)  \prod_{i \in \mathscr{A}} R_0\bigl(\overline{z}-C_{i}(\sigma) - R_i\bigr)  \\
  &  \prod_{i \in \scrB\setminus \psigma} R_0\bigl(\overline{z}-D_i^{(\preceq, \preceq_*)}(\sigma)\bigr)   \prod_{j \in \Jcf \setminus \psigma} \af(k_j) \prod_{j \in \Jcb}\ab(q_j) \\
   & \prod_{i \in \Jcf \cap \psigma} \af\bigl(F_i(., q_i)\bigr)\prod_{i \in \llbracket 1, n \rrbracket } dq_i\prod_{i \in \llbracket 1, n \rrbracket \setminus \psigma} dk_i,
\end{aligned}
\end{equation*} 
where $\sigma\colon J_\af\cup J_\cf\to \NN_{0,\infty}$ is an admissible map. Recall the definition \eqref{tildesigma} of $\tilde{\sigma}$.
We will use the same notation for primed objects, cf.~\eqref{adjoint-primeobjects}, that was introduced in the proof of Lemma~\ref{LemAdjoint}. Recall also the bijection $\varphi$ from \eqref{varphi}. Moreover, let us introduce $\sigma' = \varphi \circ \sigma \circ \varphi^{-1}$ and $\preceq'$ and $\preceq'_*$, two total orders on the set $(\Jcf' \cup\Jaf') \setminus \psigma'$, defined as follows.
\[
i \preceq' j \Leftrightarrow \varphi^{-1}(j) \preceq_* \varphi^{-1}(i)  
\]
and 
\[
i \preceq'_* j \Leftrightarrow \varphi^{-1}(j) \preceq \varphi^{-1}(i).
\] 
We now need to check that \eqref{orderprop} is fulfilled. 
First of all, 
note that $\varphi^{-1}\circ \tilde{\sigma}'\circ \varphi = \tilde{\sigma} +1$. Indeed, let $j\in \Jcf$, then $\varphi(j) \in \Jaf'$ and therefore  
\begin{align*}
(\varphi^{-1}\circ \tilde{\sigma}'\circ \varphi)(j) & = \varphi^{-1}\bigl(\sigma'(\varphi(j))-1 \bigr) = \varphi^{-1}\bigl(\sigma'(\varphi(j))\bigr)+1\\
& = \sigma(j)+1 = \tilde{\sigma}(j)+1.
\end{align*}
Analogously, if $j\in \Jaf$, then $\varphi(j) \in \Jcf'$ and
\begin{equation*}
    (\varphi^{-1}\circ \tilde{\sigma}'\circ \varphi)(j)  =  (\varphi^{-1}\circ \sigma'\circ \varphi)(j)
     = \sigma(j)
     = \tilde{\sigma}(j) +1.
\end{equation*}
We are now ready to prove \eqref{orderprop}. Let $i,j \in (\Jcf' \cup J_\af')\setminus \psigma' $. Assume $\tilde{\sigma}'(i) < \tilde{\sigma}'(j)$ then 
\[
\varphi^{-1}(\tilde{\sigma}'(j)) < \varphi^{-1}(\tilde{\sigma}'(i)) 
\]
implying that
\[
\tilde{\sigma}(\varphi^{-1}(j)) < \tilde{\sigma}(\varphi^{-1}(i)).
\]
Consequently $\varphi^{-1}(j) \prec_{\sharp} \varphi^{-1}(i)$, which implies -- by definition -- that $i\prec_{\sharp}' j$.

We now need to prove that $(T_{(P_{\cf},P_{\af})}(z;F_1,\dots F_n))^*$ is of the form described by \eqref{OrderedOperatorDef} and \eqref{Dis}. In the proof of Lemma~\ref{LemAdjoint}, it was established that for $i\in \scrA'$, we have
\[
R_0(\overline{z} - C'_i(\sigma')-R'_i) = R_0(\overline{z} - C_k(\sigma)-R_k),
\]
where $k=\varphi^{-1}(i+1)\in\scrA$.
Let us then consider, for $j\in (\Jcf' \cup\Jaf') \setminus \psigma'$:
\begin{align*}
{D'}^{(\preceq', \preceq'_*)}_j(\sigma')  =&  \sum_{\underset{\ell\in  \Jcb' }{\ell > \tilde{\sigma}'(j)}} \wb(q_\ell) + \sum_{\underset{\ell\in \Jcf'\setminus \psigma'}{{j\preceq'_* \ell}}}\wf(k_{\ell}) \\
& +  \sum_{\underset{\ell\in \Jab'  }{\ell \leq \tilde{\sigma}'(j)}} \wb(q_\ell)+\sum_{\underset{\ell\in \Jaf'\setminus \psigma' }{{\ell \preceq' j}}}\wf(k_\ell) +R_{\tilde{\sigma}'(j)}.
\end{align*}
First, set $k = \varphi^{-1}(j) \in (\Jaf\cup \Jcf) \setminus \psigma$, so that $j = \varphi(k)$. Let us consider the first term of ${D'}^{(\preceq', \preceq'_*)}_j(\sigma')$. Clearly,
\begin{align*}
    \sum_{\underset{\ell\in  \Jcb' }{\ell > \tilde{\sigma}'(j)}} \wb(q_\ell) & = \sum_{\underset{\ell\in  \Jab }{\varphi(\ell) > \tilde{\sigma}'(\varphi(k))}} \wb(q_{\ell})
     = \sum_{\underset{\ell\in  \Jab }{\ell < \varphi^{-1}(\tilde{\sigma}'(\varphi(k)))}} \wb(q_{\ell})\\
    & = \sum_{\underset{\ell\in  \Jab }{\ell < \tilde{\sigma}(k)+1}} \wb(q_{\ell})
      = \sum_{\underset{\ell\in  \Jab }{\ell \leq \tilde{\sigma}(k)}} \wb(q_{\ell}).
\end{align*}
In the same way 
\begin{align*}
    \sum_{\underset{\ell\in \Jab'  }{\ell \leq \tilde{\sigma}'(j)}} \wb(q_\ell) & = \sum_{\underset{\ell\in \Jcb  }{\varphi(\ell) \leq \sigma'(\varphi(k))}} \wb(q_{\ell})
      = \sum_{\underset{\ell\in \Jcb  }{\ell \geq \varphi^{-1}(\sigma'(\varphi(k)))}} \wb(q_{\ell})\\
      & = \sum_{\underset{\ell\in \Jcb  }{\ell \geq \tilde{\sigma}(k)+1}} \wb(q_{\ell})
       = \sum_{\underset{\ell\in \Jcb  }{\ell > \tilde{\sigma}(k)}} \wb(q_{\ell}).
\end{align*}
Moreover,
\begin{equation*}
 \sum_{\underset{\ell\in \Jcf'\setminus \psigma'}{{j\preceq'_* \ell}}}\wf(k_{\ell})  =   \sum_{\underset{\ell\in \Jaf\setminus \psigma}{{\varphi(k) \preceq'_* \varphi(\ell)}}}\wf(k_{\ell})
  = \sum_{\underset{\ell\in \Jaf\setminus \psigma}{{\ell \preceq k}}}\wf(k_{\ell})
\end{equation*}
and
\[
\sum_{\underset{\ell\in \Jaf'\setminus \psigma' }{{\ell \preceq' j}}}\wf(k_\ell) = \sum_{\underset{\ell\in \Jcf\setminus \psigma }{{k \preceq_* \ell}}}\wf(k_{\ell}).
\]
It remains now to consider $R_{\tilde{\sigma}'(j)}$. 
\begin{align*}
    R_{\tilde{\sigma}'(j)} & = \sum_{\underset{ \ell \leq \tilde{\sigma}'(j) <  f'_{\ab}(\ell)}{\ell \in I_\ab'}}\wb(q_\ell) + \sum_{\underset{ \ell \leq \tilde{\sigma}'(j) <  f'_{\af}(\ell)}{\ell \in I_\af'}}\wf(k_\ell)\\
    & = \sum_{\underset{ \varphi^{-1}(\ell) \geq \varphi^{-1}(\tilde{\sigma}'(\varphi(k))) >  \varphi^{-1}(f'_{\ab}(\ell))}{\ell \in I_\ab'}}\wb(q_\ell) + \sum_{\underset{ \varphi^{-1}(\ell) \geq \varphi^{-1}(\tilde{\sigma}'(j)) >  \varphi^{-1}(f'_{\af}(\ell))}{\ell \in I_\af'}}\wf(k_\ell)\\
    & = \sum_{\underset{ f_{\ab}(\ell) \geq \tilde{\sigma}(k)+1 >  \ell}{\ell \in I_\ab}}\wb(q_\ell) + \sum_{\underset{ f_{\af}(\ell) \geq \tilde{\sigma}(k) +1 >  \ell}{\ell \in I_\af}}\wf(k_\ell)\\
     & = \sum_{\underset{  \ell \leq \tilde{\sigma}(k) <  f_{\ab}(\ell)}{\ell \in I_\ab}}\wb(q_\ell) + \sum_{\underset{ \ell \leq \tilde{\sigma}(k) < f_{\af}(\ell)}{\ell \in I_\af}}\wf(k_\ell).
\end{align*}
Abbreviating $\scrB' =  (\Jcf'\cup \Jaf')\setminus ( P_{\cf}'\cup P_{\af}')$, we conclude  
\[
\prod_{i \in \scrB'\setminus \psigma'} R_0\bigl(\overline{z}-{D'}^{(\preceq', \preceq'_*)}_i(\sigma')\bigr) = \prod_{i \in \scrB\setminus \psigma} R_0\bigl(\overline{z}-D^{(\preceq, \preceq_*)}_i(\sigma)\bigr)  
\]
implying that the adjoint of an ordered Wick monomial is ordered.
 \end{proof}

\subsection{From regular to ordered Wick monomials}\label{subsec-RWMtoOWM}

In this section, we establish the link between regular Wick monomials and ordered Wick monomials. The main result of this section is Lemma \ref{ReorderingFermionLemma} which states that any regular Wick monomial is a finite sum of  ordered Wick monomials of a specific type, which, as will be seen later, are easy to estimate (see Proposition \ref{RNFCT}).

\begin{lem}
\label{InitialisationReorderingFermionLemma}
For any ordered Wick monomial $T_{(P_{\cf},P_{\af})}$ and any $j_0\in (P_\cf\cup P_\af)\cap (\Jcb\cup\Jab) $, there exists $L\leq 2n$, a collection of ordered Wick monomials $\left\{T^{(i)}_{(P_{\cf}\backslash\{j_0\},P_{\af}\backslash\{j_0\})}\right\}_{i=1}^L$ of the same length as $T_{(P_{\cf},P_{\af})}$, such that:
\begin{equation}
T_{(P_{\cf},P_{\af})} = \sum_{i=1}^L T^{(i)}_{(P_{\cf}\setminus\{j_0\},P_{\af}\setminus\{j_0\})}.
\end{equation}
Moreover, the sets $\scrA, \Jcb, \Jab, \Jaf, \Jcf, I_\ab, I_\af$ and the functions $f_\ab,f_\af$ and $\scrL$ are identical for $T_{(P_{\cf},P_{\af})}$ and any $ T^{(i)}_{(P_{\cf}\setminus\{j_0\},P_{\af}\setminus\{j_0\})}$.
 \end{lem}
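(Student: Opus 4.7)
I would first reduce to the case $j_0\in P_{\cf}$; the case $j_0\in P_{\af}$ then follows by applying Lemma~\ref{lem-OrderedOpAdjoint}, whose adjoint operation exchanges creation and annihilation labels. The hypothesis $j_0\in \Jcb\cup \Jab$ is used crucially: it guarantees that $k_{j_0}$ and $q_{j_0}$ are not tied together by any contraction delta, so that the $k_{j_0}$-integral against $F_{j_0}(k_{j_0},q_{j_0})$ can be absorbed into a smeared operator $\cf(F_{j_0}(\cdot,q_{j_0}))$ whenever this case is called for.

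Next I would identify the targets. Since $j_0\in P_{\cf}$ one has $\sigma(j_0)=j_0$ and $\wf(k_{j_0})$ appears in $C_i(\sigma)$ exactly for $i\in\scrA\cap\llbracket 1,j_0-1\rrbracket=:\{i_1<\cdots<i_K\}$. In any target the new admissible map $\sigma^{(\mu)}$ must coincide with $\sigma$ off $j_0$ and obey $\sigma^{(\mu)}(j_0)\in\{0\}\cup\llbracket 1,j_0\rrbracket$. Setting $i_0:=0$ and $\sigma^{(\mu)}(j_0):=i_\mu$ for $\mu\in\llbracket 0,K\rrbracket$, the candidate for $\mu=0$ places $j_0\in\psigma^{(0)}$ (smeared case), while each $\mu\geq 1$ places $j_0\in\scrB'\setminus\psigma^{(\mu)}$ and therefore comes equipped with an explicit $\wf(k_{j_0})$-factor together with an extra resolvent $R_0(z-D_{j_0}^{(\preceq',\preceq'_*)}(\sigma^{(\mu)}))$.

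The decomposition itself rests on a telescoping identity built from the pull-through formula $\cf(k_{j_0})R_0(z-b)=R_0(z-b+\wf(k_{j_0}))\cf(k_{j_0})$ together with the first-order resolvent identity
\[
R_0(z-a)=R_0(z-a+\wf(k_{j_0}))+R_0(z-a+\wf(k_{j_0}))\,\wf(k_{j_0})\,R_0(z-a),
\]
applied at each $i_k$ in turn. Reading left to right, at each step $k$ one either (i) passes $\cf(k_{j_0})$ through $R_0(z-C_{i_k}(\sigma^{(k-1)})-R_{i_k})$ by pull-through, replacing $\sigma^{(k-1)}$ by $\sigma^{(k)}$ in that resolvent's spectral shift, or (ii) terminates by selecting the second summand, producing the required $\wf(k_{j_0})$-factor and an extra resolvent. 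Inverse pull-through is then used to restore $\cf(k_{j_0})$ to its canonical leftmost position (which does not affect the already-processed resolvents, since $\wf(k_{j_0})$ is now absent from them), and at $\mu=0$ one finally folds the $k_{j_0}$-integration into $\cf(F_{j_0}(\cdot,q_{j_0}))$. The process yields exactly $K+1\leq|\scrA|+1\leq n\leq 2n$ summands, one per stopping index $\mu\in\llbracket 0,K\rrbracket$.

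It remains to match each summand with the form of Definition~\ref{OrderOperator} for the cover $(P_{\cf}\setminus\{j_0\},P_{\af}\setminus\{j_0\})$. Admissibility of $\sigma^{(\mu)}$ is immediate; the total orders $\preceq'$ and $\preceq'_*$ are obtained from $\preceq$ and $\preceq_*$ by inserting $j_0$ at the position dictated by $\tilde\sigma^{(\mu)}(j_0)=i_\mu$, and \eqref{orderprop} is preserved because all other indices retain their $\tilde\sigma$-values. The main obstacle, and the only genuinely nontrivial step, is to recognize the spectral shift $C_{i_\mu}(\sigma^{(\mu)})+R_{i_\mu}$ of the newly created resolvent as precisely $D_{j_0}^{(\preceq',\preceq'_*)}(\sigma^{(\mu)})$ from \eqref{Dis}: the $\Jcb$- and $\Jab$-contributions agree directly via $\tilde\sigma^{(\mu)}(j_0)=i_\mu$; the contraction term $R_{i_\mu}$ matches $R_{\tilde\sigma^{(\mu)}(j_0)}$; and the fermionic sums in $C_{i_\mu}$, which use the conditions $\sigma^{(\mu)}(l)>i_\mu$ or $\sigma^{(\mu)}(l)\leq i_\mu$, must be re-expressed via \eqref{orderprop} as the order conditions $j_0\preceq'_* l$ and $l\preceq' j_0$ appearing in $D_{j_0}$, with the $l=j_0$ contribution appearing on the target side as the explicit $\wf(k_{j_0})$ prefactor. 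This bookkeeping is routine but intricate; once done, the required identity $T_{(P_{\cf},P_{\af})}=\sum_{i=1}^{L}T^{(i)}_{(P_{\cf}\setminus\{j_0\},P_{\af}\setminus\{j_0\})}$ with $L=K+1$ follows.
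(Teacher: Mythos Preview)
Your telescoping only runs over the $\scrA$-indexed resolvents $R_0(z-C_i(\sigma)-R_i)$, but an ordered Wick monomial also carries the $\scrB\setminus\psigma$-indexed resolvents $R_0(z-D_i^{(\preceq,\preceq_*)}(\sigma))$, and these too depend on $\wf(k_{j_0})$: by \eqref{Dis}, the summand $\wf(k_{j_0})$ appears in $D_i^{(\preceq,\preceq_*)}(\sigma)$ precisely for those $i\in\scrB\setminus\psigma$ with $i\preceq_* j_0$ (since $j_0\in\Jcf\setminus\psigma$). Your decomposition therefore does not remove $k_{j_0}$ from all the spectral shifts, so for $\mu\geq 1$ the resulting summands are not of the form \eqref{OrderedOperatorDef} for the new cover $(P_\cf\setminus\{j_0\},P_\af\setminus\{j_0\})$ with the map $\sigma^{(\mu)}$: the $D_i$-resolvents for $i\preceq_* j_0$ still carry the old $\sigma$ rather than $\sigma^{(\mu)}$. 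This is exactly why the bound in the statement is $2n$ and not your $K+1\leq |\scrA|+1\leq n$. The paper's proof telescopes over both families simultaneously and, besides the smeared term $T^{(0)}$ and the $\scrA$-stopping terms $T^{(k)}$ for $k\in\scrA\cap\llbracket 1,j_0-1\rrbracket$, obtains additional terms $T^{(k,k')}$ from stopping at a $D_{k'}$-resolvent with $k'\in\scrB\setminus\psigma$ and $k'\preceq_* j_0$; verifying that each such $T^{(k,k')}$ is again an ordered Wick monomial (with $\sigma(j_0)$ reset to $k+1$ and the total orders extended by inserting $j_0$ just before or just after $k'$) is the most delicate part of the argument.

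In the very first application inside the proof of Lemma~\ref{ReorderingFermionLemma} one starts from a regular Wick monomial viewed as an ordered one with $P_\cf=\Jcf$, $P_\af=\Jaf$, hence $\scrB=\emptyset$, and there your scheme would indeed go through. But the lemma is applied iteratively, and after the first removal $\scrB$ becomes nonempty; the general statement you are asked to prove therefore genuinely requires handling the $D_i$-resolvents.
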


 \begin{proof} 
 As we did with \eqref{abbrev-LDelta}, we will in this proof be using the abbreviations:
  \begin{equation}\label{abbrev-LDelta2}
    \begin{aligned}
        &\scrL := \scrL\bigl( \{ k_j\}_{j \in I_\af}, \{ q_j\}_{j \in I_\ab}\bigr),\\
        &\Delta_\ab := \prod_{i \in I_\ab}\kdelta\bigl(q_i-q_{f_{\ab}(i)}\bigr),\quad \Delta_\af :=  \prod_{i \in I_\af}\kdelta\bigl(k_i-k_{f_{\af}(i)}\bigr). 
    \end{aligned}
  \end{equation}
These objects are mostly bystanders during the proof.
 
In the following $z\in\CC_-^*$ and $\uF^{(n)} \in (L^2(\RR^d\times\RR^d))^n$. Let $\sigma\colon J_\af\cup J_\cf\to\NN_{0,\infty}$ be an admissibe map, cf. Definition~\ref{Admissiblemaps}, and recall the notation $\psigma = \set{j\in J_\cf\cup J_\af}{\sigma(j)=0 \textup{ or }\sigma(j)=+\infty}$. Recall also the abbreviation $\scrB = (J_\cf\cup J_\af)\setminus (P_\cf\cup P_\af)$.
We consider an ordered Wick monomial $T_{(P_{\cf},P_{\af})}(z;\uF^{(n)})$ of the form \eqref{OrderedOperatorDef}.

 Let $j_0\in (P_{\cf}\cup P_{\af})\cap (\Jab\cup \Jcb)$ and observe that $j_0\not\in\scrB$.  We will treat the case where $j_0 \in P_{\cf}$ as it implies that the result holds in the other case, $j_0\in P_\af$. Indeed, if one consider $j_0\in P_\af$, one can consider the adjoint Wick monomial which is an ordered Wick monomial too for which $j_0$ becomes a fermionic creation index. See Lemma~\ref{lem-OrderedOpAdjoint}.  Let us define new covers 
 \[
 \begin{aligned}
 P_{\cf}^{(0)}& =P_{\cf}\setminus \{j_0\}, & P_{\af}^{(0)}&=P_{\af},\\
  P_{\cb}^{(0)} &= P_{\cb} \cup ( \Jcb\cap \{j_0\}), &   P_{\ab}^{(0)} &= P_{\ab} \cup ( \Jab\cap \{j_0\}),
 \end{aligned}
 \]
 and a new admissible map $\sigma^{(0)}\colon J_\cf\cup J_\af\to \NN_{0,\infty}$ by setting: 
\begin{equation}
    \sigma^{(0)}(i) = 
    \begin{cases}
       \sigma(i),  & \text{if } i\in (J_\cf\cup J_\af)\setminus \{j_0\} \\
        0, & \text{if } i=j_0. 
    \end{cases}
\end{equation}

 Let us moreover introduce two total orders on  $(\Jcf  \cup \Jaf) \setminus \psigma^{(0)} = [(\Jcf  \cup \Jaf) \setminus \psigma]\setminus\{j_0\}$ as the restriction of  $\preceq$ and $\preceq_*$ onto this set. The restricted orders will be denoted by $\preceq^{(0)}$ and $\preceq^{(0)}_*$.
 Observe that the required properties hold: 
 \begin{align*}
    & \forall i,\ell \in \bigl(\Jcf \cup \Jaf\bigr) \setminus \psigma^{(0)} : \qquad \tilde{\sigma}^{(0)}(i) < \tilde{\sigma}^{(0)}(\ell) \quad  \Rightarrow \quad  i \prec^{(0)} \ell
 \end{align*}
 and 
  \begin{align*}
    & \forall i,\ell \in \bigl(\Jcf \cup \Jaf\bigr) \setminus \psigma^{(0)} : \qquad \tilde{\sigma}^{(0)}(i) < \tilde{\sigma}^{(0)}(\ell) \quad  \Rightarrow \quad  i \prec^{(0)}_* \ell.
 \end{align*}
Set $\scrB^{(0)} = (J_\cf\cup J_\af)\setminus (P_\cf^{(0)}\cup P_\af^{(0)})$ and observe as well that
\begin{equation}
\scrB^{(0)} = \scrB\cup \{j_0\} \qquad \textup{and}\qquad 
\psigma^{(0)} = \psigma\cup \{j_0\}.
\end{equation}
 In particular, we conclude that $\scrB^{(0)} \setminus \psigma^{(0)} = \scrB \setminus \psigma$. Let us introduce another ordered Wick monomial
  \begin{equation*}
\begin{aligned}
 & T^{(0)}_{(P^{(0)}_{\cf},P^{(0)}_{\af})}\bigl(z;\uF^{(n)}\bigr) \\
& \quad =  \int \prod_{i \in \llbracket 1, n \rrbracket \setminus \psigma^{(0)}} F_i(k_i, q_i) \scrL\, \Delta_\ab\Delta_\af \prod_{i \in \scrB^{(0)}\setminus \psigma^{(0)}}\wf(k_i)  \prod_{i \in \Jcf \cap \psigma^{(0)}} \cf\bigl(F_i(., q_i)\bigr)  \\
 & \qquad\prod_{j \in \Jcb} \cb(q_j) \prod_{j \in \Jcf\setminus \psigma^{(0)}} \cf(k_j)  \prod_{i \in \scrA} R_0\bigl(z-C_{i}(\sigma^{(0)}) - R_i\bigr)  \\
  & \qquad \prod_{i \in \scrB^{(0)}\setminus\psigma^{(0)}} R_0\bigl(z-D_i^{(\preceq^{(0)},\preceq_*^{(0)})}(\sigma^{(0)})\bigr)   \prod_{j \in \Jaf \setminus \psigma^{(0)}} \af(k_j) \prod_{j \in \Jab}\ab(q_j)\\
  & \qquad \prod_{i \in \Jaf \cap \psigma^{(0)}} \af\bigl(\overline{F_i(., q_i)}\bigr)\prod_{i \in \llbracket 1, n \rrbracket \setminus \psigma^{(0)}}  dk_i\prod_{i \in \llbracket 1, n \rrbracket } dq_i
  \end{aligned}
\end{equation*}
that can be expressed as follows
  \begin{equation}\label{T0P0}
\begin{aligned}
& T^{(0)}_{(P^{(0)}_{\cf},P^{(0)}_{\af})}\bigl(z;\uF^{(n)}\bigr) \\
&\quad  =   \int \prod_{i \in \llbracket 1, n \rrbracket \setminus\psigma} F_i(k_i, q_i) \scrL\, \Delta_\ab\Delta_\af \prod_{i \in \scrB\setminus \psigma}\wf(k_i)    \prod_{i \in \Jcf \cap \psigma} \cf\bigl(F_i(., q_i)\bigr)  \\
 & \qquad\prod_{j \in \Jcb} \cb(q_j) \prod_{j \in \Jcf\setminus \psigma} \cf(k_j)  \prod_{i \in \scrA} R_0\bigl(z-C_{i}(\sigma^{(0)}) - R_i\bigr)  \\
  & \qquad \prod_{i \in \scrB\setminus \psigma} R_0\bigl(z-D_i^{(\preceq^{(0)},\preceq_*^{(0)})}(\sigma^{(0)})\bigr)   \prod_{j \in \Jaf \setminus \psigma} \af(k_j) \prod_{j \in \Jab}\ab(q_j)\\
  & \qquad  \prod_{i \in \Jaf \cap \psigma} \af\bigl(\overline{F_i(., q_i)}\bigr)\prod_{i \in \llbracket 1, n \rrbracket \setminus \psigma}  dk_i\prod_{i \in \llbracket 1, n \rrbracket } dq_i. 
\end{aligned}
\end{equation}
Note that as part of the computation above, we have -- for the chosen $j_0\in P_\cf$ -- written
$\displaystyle \cf(F_{j_0}(\cdot,q_{j_0})) = \int F_{j_0}(k_{j_0},q_{j_0}) \cf(k_{j_0}) dk_{j_0}$. If $n=1$, we have $T^{(0)}_{(\emptyset,\emptyset)}(F_1) = T_{(\{j_0\},\emptyset)}(F_1)$ and we are done. This follows easily from \eqref{T0P0}, noting that $\scrB\setminus\psigma=\scrA =\emptyset$ in the case $n=1$. In the following, we may therefore assume that $n>1$. 

We are interested in studying $T_{(P_{\cf},P_{\af})}(z;\uF^{(n)}) - T^{(0)}_{(P^{(0)}_{\cf},P^{(0)}_{\af})}(z;\uF^{(n)})$. Our goal is to prove that it is a sum of ordered Wick monomials, which is enough to conclude the proof. First, we have: 
\begin{align}\label{pre-telescopicsum}
\nonumber &    T_{(P_{\cf},P_{\af})}\bigl(z;\uF^{(n)}\bigr) -  T^{(0)}_{(P^{(0)}_{\cf},P^{(0)}_{\af})}\bigl(z;\uF^{(n)}\bigr) \\
 \nonumber     & \quad =  \int \prod_{i \in \llbracket 1, n \rrbracket \setminus \psigma} F_i(k_i, q_i) \scrL\, \Delta_\ab\Delta_\af  \prod_{i \in \scrB\setminus\psigma}\wf(k_i) \\
\nonumber      &\qquad   \prod_{i \in \Jcf \cap \psigma} \cf\bigl(F_i(., q_i)\bigr)\prod_{j \in \Jcb} \cb(q_j) \prod_{j \in \Jcf\setminus \psigma} \cf(k_j) \\
\nonumber  & \qquad \Biggl[  \prod_{i \in \scrA} R_0\bigl(z-C_{i}(\sigma) - R_i\bigr)\prod_{i \in \scrB\setminus \psigma} R_0\bigl(z-D_i^{(\preceq, \preceq_*)}(\sigma)\bigr)  \\
\nonumber  &\qquad \quad -  \prod_{i \in \scrA} R_0\bigl(z-C_{i}(\sigma^{(0)}) - R_i\bigr)\prod_{i \in \scrB\setminus \psigma} R_0\bigl(z-D_i^{(\preceq^{(0)},\preceq_*^{(0)})}(\sigma^{(0)})\bigr)\Biggr] \\
  &\qquad \prod_{j \in \Jaf \setminus \psigma} \af(k_j) \prod_{j \in \Jab}\ab(q_j) \prod_{i \in \Jaf \cap \psigma} \af\bigl(\overline{F_i(., q_i)}\bigr)\prod_{i \in \llbracket 1, n \rrbracket \setminus \psigma}  dk_i\prod_{i \in \llbracket 1, n \rrbracket } dq_i.
\end{align}

 In the rest of this proof, the following convention will be used. For $S\subset\NN_0$,  we  define the characteristic function $\myTheta_S\colon \NN_0 \to \{0,1\}$ by
 \[
\myTheta_S(k) = \begin{cases}
    1, & \text{if }  k \in S  \\
    0, & \text{otherwise}
\end{cases}
 \]
and use the convention that  $\myTheta_{S^\comp} = \myTheta_{\NN_0\setminus S}$.
 We compute the term in the square brackets $[\dots]$ in \eqref{pre-telescopicsum}: 
  {\allowdisplaybreaks
  \begin{align}\label{telescopicsum}
    \nonumber &  \prod_{i \in \scrA} R_0\bigl(z-C_{i}(\sigma) - R_i\bigr)\prod_{i \in \scrB\setminus \psigma} R_0\bigl(z-D_i^{(\preceq, \preceq_*)}(\sigma)\bigr)   \\
 \nonumber& \quad -  \prod_{i \in \scrA} R_0\bigl(z-C_{i}(\sigma^{(0)}) - R_i\bigr)\prod_{i \in \scrB\setminus \psigma} R_0\bigl(z-D_i^{(\preceq^{(0)},\preceq_*^{(0)})}(\sigma^{(0)})\bigr)\\
 \nonumber& \ = \sum^{n-1}_{k=0} \Biggl[ \prod_{\underset{i<k}{i \in \scrA}} R_0\bigl(z-C_{i}(\sigma) - R_i\bigr)\prod_{\underset{\tilde{\sigma}(i) < k+1}{i \in \scrB\setminus \psigma}} R_0\bigl(z-D_i^{(\preceq, \preceq_*)}(\sigma)\bigr)   \\
 \nonumber & \quad \Biggl\{ \Bigl(\myTheta_\scrA(k) R_0\bigl(z-C_{k}(\sigma) - R_{k}\bigr)+\myTheta_{\scrA^\comp}(k)\Bigr)\prod_{\underset{\tilde{\sigma}(i) = k+1}{i \in \scrB\setminus \psigma}} R_0\bigl(z-D_i^{(\preceq, \preceq_*)}(\sigma)\bigr)\\
\nonumber & \quad  -  \Bigl( \myTheta_\scrA(k) R_0\bigl(z-C_{k}(\sigma^{(0)}) - R_{k}\bigr)+\myTheta_{\scrA^\comp}(k)\Bigr) \\
 \nonumber & \quad \prod_{\underset{\tilde{\sigma}(i) = k+1}{i \in \scrB\setminus \psigma}} R_0\bigl(z-D_i^{(\preceq^{(0)},\preceq_*^{(0)})}(\sigma^{(0)})\bigr)\Biggr\}\\
    &\quad \prod_{\underset{i>k}{i \in \scrA}} R_0\bigl(z-C_{i}(\sigma^{(0)}) - R_i\bigr)\prod_{\underset{\tilde{\sigma}(i) > k+1}{i \in \scrB\setminus \psigma}} R_0\bigl(z-D_i^{(\preceq^{(0)},\preceq_*^{(0)})}(\sigma^{(0)})\bigr)\Biggr].
 \end{align}
 }
 
 Let us note that for any $k \in \scrB \setminus\psigma$, we have
 \begin{align*}
     & D_{k}^{(\preceq,\preceq_*)}(\sigma) - D_{k}^{(\preceq^{(0)},\preceq_*^{(0)})}(\sigma^{(0)}) \\
     & \ = \sum_{\underset{l\in  \Jcb }{\ell > \tilde{\sigma}(k)}} \wb(q_\ell) + \sum_{\underset{\ell\in \Jcf\setminus \psigma}{{ k \preceq_* \ell}}}\wf(k_{\ell}) +  \sum_{\underset{\ell\in \Jab  }{\ell \leq \tilde{\sigma}(k)}} \wb(q_\ell)\
  +\sum_{\underset{\ell\in \Jaf\setminus \psigma }{{\ell \preceq k}}}\wf(k_\ell) \\
  &  \quad
   -\sum_{\underset{\ell\in  \Jcb }{\ell > \tilde{\sigma}^{(0)}(k)}} \wb(q_\ell)
  - \sum_{\underset{\ell\in \Jcf\setminus \psigma^{(0)}}{{k \preceq_*^{(0)} \ell}}}\wf(k_{\ell}) - \sum_{\underset{\ell\in \Jab  }{\ell \leq \tilde{\sigma}^{(0)}(k)}} \wb(q_\ell) - \sum_{\underset{\ell\in \Jaf\setminus \psigma^{(0)} }{{\ell \preceq^{(0)} k}}}\wf(k_\ell) \\
  & \quad
  +R_{\tilde{\sigma}(k)}
  -R_{\tilde{\sigma}^{(0)}(k)}.
 \end{align*}
 Most of the terms above cancel. Indeed;
 since $j_0\not\in \scrB$, we must have $k\neq j_0$, which implies that $\sigma(k) = \sigma^{(0)}(k)$. Therefore $R_{\tilde{\sigma}(k)} =  R_{\tilde{\sigma}^{(0)}(k)}$, 
 \begin{equation*}
     \sum_{\underset{\ell\in  \Jcb }{\ell > \tilde{\sigma}(k)}} \wb(q_\ell) = \sum_{\underset{\ell\in  \Jcb }{\ell > \tilde{\sigma}^{(0)}(k)}} \wb(q_\ell) \qquad  \textup{and}\qquad
     \sum_{\underset{\ell\in \Jab }{\ell \leq \tilde{\sigma}(k)}} \wb(q_\ell) = \sum_{\underset{\ell\in \Jab  }{\ell \leq \tilde{\sigma}^{(0)}(k)}} \wb(q_\ell).
      \end{equation*}
Furthermore, by definition of $\preceq^{(0)}$ and because $j_0 \notin \Jaf$ and $k\neq j_0$, we have : 
\[
\sum_{\underset{\ell\in \Jaf\setminus \psigma }{{\ell \preceq k}}}\wf(k_\ell) = \sum_{\underset{\ell\in \Jaf\setminus \psigma^{(0)} }{{  \ell \preceq^{(0)}  k}}}\wf(k_\ell).
\]
Finally, $j_0 \notin \Jcf \setminus \psigma^{(0)}$ but $j_0\in J_\cf \setminus\psigma$, since $\sigma(j_0)=j_0$ (due to the chosen case $j_0\in P_\cf$), and therefore :  
 \begin{align*}
     D_{k}^{(\preceq,\preceq_*)}(\sigma) - D_{k}^{(\preceq^{(0)},\preceq_*^{(0)})}(\sigma^{(0)}) & = \begin{cases}
       \wf(k_{j_0}),   & \text{if }  k \preceq_* j_0 \\
         0, & \text{otherwise}. 
     \end{cases}  
 \end{align*}
 In the same way, cf. Definition~\ref{SumsinRes}, we have
  \begin{align*}
     C_{i}(\sigma)  -C_{i}(\sigma^{(0)}) & = \begin{cases}
       \wf(k_{j_0}),   & \text{if }  i < j_0\\
         0, & \text{otherwise}. 
     \end{cases} 
 \end{align*}
 We can now compute the difference in the curly brackets $\{\dots\}$ in Eq.~\eqref{telescopicsum}: 
 \begin{align}\label{ResDiff}
\nonumber & \Bigl(\myTheta_\scrA(k)
R_0\bigl(z-C_{k}(\sigma) - R_k\bigr)+\myTheta_{\scrA^\comp}(k)\Bigr)\prod_{\underset{ \tilde{\sigma}(i) = k+1}{i \in \scrB\setminus \psigma}} R_0\bigl(z-D_i^{(\preceq,\preceq_*)}(\sigma)\bigr)\\
\nonumber & -  \Bigl(\myTheta_\scrA(k) R_0\bigl(z-C_{k}(\sigma^{(0)}) - R_k\bigr)+\myTheta_{\scrA^\comp}(k)\Bigr)\prod_{\underset{\tilde{\sigma}(i) = k+1}{i \in \scrB\setminus \psigma}} R_0\bigl(z-D_i^{(\preceq^{(0)},\preceq_*^{(0)})}(\sigma^{(0)})\bigr)\\
 \nonumber  & \quad = - \myTheta_\scrA(k) \myTheta_{\llbracket 0, j_0-1\rrbracket}(k) \wf(k_{j_0}) R_0\bigl(z-C_{k}(\sigma^{(0)}) - R_{k}\bigr)R_0\bigl(z-C_{k}(\sigma) - R_{k}\bigr) \\
\nonumber & \qquad  \prod_{\underset{\tilde{\sigma}(i) = k+1}{i \in \scrB\setminus \psigma}} R_0\bigl(z-D_i^{(\preceq^{(0)},\preceq_*^{(0)})}(\sigma^{(0)})\bigr)\\
\nonumber &\qquad  - \Bigl(\myTheta_\scrA(k) R_0\bigl(z-C_{k}(\sigma) - R_k\bigr)
 + \myTheta_{\scrA^\comp}(k)\Bigr) \wf(k_{j_0})\\
 \nonumber & \qquad \sum_{\underset{\tilde{\sigma}(k') = k+1}{k'  \in \scrB\setminus \psigma}} \prod_{\underset{i \prec k'}{\underset{ \tilde{\sigma}(i) = k+1}{i \in \scrB\setminus \psigma}}} R_0\bigl(z-D_i^{(\preceq, \preceq_*)}(\sigma)\bigr) \myTheta_{\{ \preceq j_0\}}(k')   R_0\bigl(z-D_{k'}^{(\preceq,\preceq_*)}(\sigma)\bigr) \\
 &\qquad \quad  R_0\bigl(z-D_{k'}^{(\preceq^{(0)},\preceq_*^{(0)})}(\sigma^{(0)})\bigr)  \prod_{\underset{k' \prec i }{\underset{\tilde{\sigma}(i) = k+1}{i \in \scrB\setminus \psigma}}} R_0\bigl(z-D_i^{(\preceq^{(0)},\preceq_*^{(0)})}(\sigma^{(0)})\bigr),
\end{align}
where  $\{\preceq j_0\}:=\set{k'\in (J_\af\cup J_\cf)\setminus \psigma}{k'\preceq j_0}$.
Inserting back into \eqref{telescopicsum} gives a sum of operators that we now proved to identify as ordered Wick monomials.

First, let $k\in\scrA$ with $k<j_0$ and note that $k\geq 1$. Recall the abbreviations \eqref{abbrev-LDelta2}. The first term, $T^{(k)}$, in the difference $T_{(P_{\cf},P_{\af})} - T^{(0)}_{(P^{(0)}_{\cf},P^{(0)}_{\af})}$ that we focus on corresponds to the first term on the right hand side of  \eqref{ResDiff}:
 {\allowdisplaybreaks
  \begin{align} \label{Changingorder1}
  \nonumber & T^{(k)}\bigl(z;\uF^{(n)}\bigr) \\
\nonumber  &\quad = - \int \prod_{i \in \llbracket 1, n \rrbracket \setminus \psigma} F_i(k_i, q_i)\scrL\, \Delta_\ab\Delta_\af \prod_{i \in \scrB\setminus \psigma}\wf(k_i)   \prod_{i \in \Jcf \cap \psigma} \cf\bigl(F_i(., q_i)\bigr)\\
 \nonumber& \qquad \prod_{j \in \Jcb} \cb(q_j) \prod_{j \in \Jcf\setminus \psigma} \cf(k_j)  \prod_{\underset{i<k}{i \in \scrA}} R_0\bigl(z-C_{i}(\sigma) - R_i\bigr) \\
\nonumber & \qquad\prod_{\underset{\tilde{\sigma}(i) < k+1}{i \in \scrB\setminus \psigma}} R_0\bigl(z-D_i^{(\preceq, \preceq_*)}(\sigma)\bigr)  \Biggl\{  \wf(k_{j_0}) R_0\bigl(z-C_{k}(\sigma^{(0)}) - R_{k}\bigr) \\
\nonumber  &  \qquad R_0\bigl(z-C_{k}(\sigma) - R_{k}\bigr) \prod_{\underset{\tilde{\sigma}(i) = k+1}{i \in \scrB\setminus \psigma}} R_0\bigl(z-D_i^{(\preceq^{(0)},\preceq_*^{(0)})}(\sigma^{(0)})\bigr)\Biggr\} \\
\nonumber & \qquad 
\prod_{\underset{i>k}{i \in \scrA}} R_0\bigl(z-C_{i}(\sigma^{(0)}) - R_i\bigr) \prod_{\underset{\tilde{\sigma}(i) > k+1}{i \in \scrB\setminus \psigma}} R_0\bigl(z-D_i^{(\preceq^{(0)},\preceq_*^{(0)})}(\sigma^{(0)})\bigr) \\
& \qquad \prod_{j \in \Jaf \setminus \psigma} \af(k_j) \prod_{j \in \Jab}\ab(q_j) \prod_{i \in \Jaf \cap \psigma} \af\bigl(\overline{F_i(., q_i)}\bigr)\prod_{i \in \llbracket 1, n \rrbracket \setminus \psigma}  dk_i\prod_{i \in \llbracket 1, n \rrbracket } dq_i.
\end{align}
}
For the $k$ under consideration, 
we define $\sigma^{(k)}\colon J_\cf\cup J_\af\to\mathbb N_\infty$ by setting: 
\[
\forall i\in J_\cf\cup J_\af:\qquad \sigma^{(k)}(i) = 
\begin{cases}
    \sigma(i), & \textup{if } i \neq j_0   \\
   k,  & \textup{if }  i = j_0.
\end{cases}
\]
Note that $\psigma^{(k)}= \psigma$, such that we may define
 new total orders $\preceq_*^{(k)}$ by 
\[
\forall a,b \in \bigl(\Jcf  \cup\Jaf\bigr) \setminus\psigma^{(k)}:\qquad
\begin{cases}
    a \preceq_*^{(k)} b,  & \textup{ if } a,b\neq j_0 \textup{ and } a \preceq_* b \\
     a \preceq_*^{(k)} j_0,   & \textup{ if }  \tilde{\sigma}^{(k)}(a) \leq k \\
      j_0 \preceq_*^{(k)} b,  & \textup{ if }   \tilde{\sigma}^{(k)}(b)>k.
\end{cases}
\]
and $\preceq^{(k)}$ by
\[
\forall a,b \in \bigl(\Jcf  \cup\Jaf\bigr) \setminus\psigma^{(k)}:\qquad
\begin{cases}
    a \preceq^{(k)} b,  & \textup{ if } a,b\neq j_0 \textup{ and } a \preceq b \\
     a \preceq^{(k)} j_0,   & \textup{ if }  \tilde{\sigma}^{(k)}(a) < k \\
      j_0 \preceq^{(k)} b,  & \textup{ if }   \tilde{\sigma}^{(k)}(b)\geq k.
\end{cases}
\]
It is easy to check that 
 \begin{equation*}
     \forall i,\ell \in \bigl(\Jcf \cup \Jaf\bigr) \setminus \psigma^{(k)}: \qquad \tilde{\sigma}^{(k)}(i) < \tilde{\sigma}^{(k)}(\ell) \quad  \Rightarrow \quad  i\prec^{(k)}_* \ell
 \end{equation*}
 and 
  \begin{equation*}
     \forall i,\ell \in \bigl(\Jcf \cup \Jaf\bigr) \setminus \psigma^{(k)}: \qquad \tilde{\sigma}^{(k)}(i) < \tilde{\sigma}^{(k)}(\ell) \quad  \Rightarrow \quad  i\prec^{(k)} \ell.
 \end{equation*}
The strategy is now to rewrite \eqref{Changingorder1} as an ordered Wick monomial using $\sigma^{(k)}$ and $\preceq^{(k)}$. 
One may readily verify, using that $k<j_0$, the following identities: 
\begin{align*}
    \forall i \in \scrA, i < k: \qquad & C_i(\sigma) = C_i(\sigma^{(k)}), \\
    \forall i \in \scrA, i\geq k:\qquad & C_i(\sigma^{(0)}) = C_i(\sigma^{(k)}).
\end{align*}

Let us now turn to the term involving a $D_i$. Let us recall that 
\begin{align*}
D_i^{(\preceq, \preceq_*)}(\sigma) & = \sum_{\underset{\ell > \tilde{\sigma}(i)}{\ell\in  \Jcb }} \wb(q_\ell) + \sum_{\underset{i\preceq_* \ell}{\ell\in \Jcf\setminus \psigma}}\wf(k_{\ell}) +  \sum_{\underset{\ell \leq \tilde{\sigma}(i)}{\ell\in \Jab  }} \wb(q_\ell)\\
& \qquad +\sum_{\underset{\ell \preceq i}{\ell\in \Jaf\setminus \psigma }}\wf(k_\ell) +R_{\tilde{\sigma}(i)}. 
\end{align*}
First, $j_0 \notin \scrB$ and therefore $i \neq j_0$. As a consequence, we have that $\tilde{\sigma}^{(k)}(i) = \tilde{\sigma}(i) $ and therefore $R_{\tilde{\sigma}(i)}   = R_{\tilde{\sigma}^{(k)}(i)}$, 
\begin{equation*}
    \sum_{\underset{\ell > \tilde{\sigma}(i)}{\ell\in  \Jcb }} \wb(q_\ell)  = \sum_{\underset{\ell > \tilde{\sigma}^{(k)}(i)}{\ell\in  \Jcb }} \wb(q_\ell)\quad \textup{and} \quad 
   \sum_{\underset{\ell \leq \tilde{\sigma}(i)}{\ell\in \Jab  }} \wb(q_\ell)  = \sum_{\underset{\ell \leq \tilde{\sigma}^{(k)}(i)}{\ell\in \Jab  }} \wb(q_\ell) .
\end{equation*}
Moreover, using the fact that $j_0 \notin \scrB$ and that $j_0 \notin \Jaf$, we have that: 
\[
\sum_{\underset{\ell \preceq i}{\ell\in \Jaf\setminus \psigma }}\wf(k_\ell) = \sum_{\underset{\ell \preceq^{(k)} i}{\ell\in \Jaf\setminus \psigma }}\wf(k_\ell).
\]
To treat the last term, let us distinguish two cases. First, if $\tilde{\sigma}(i)< k+1$ then for any $\ell \in \Jcf \setminus(\psigma\cup \{j_0\}) $, $i \preceq_* \ell$ is equivalent to $i \preceq^{(k)}_* \ell$. It remains to consider the case where $\ell=j_0$. First, $j_0>k\geq \tilde{\sigma}(i)$. As a consequence, $i\preceq_* j_0$. On the other hand, $ k \geq \tilde{\sigma}^{(k)}(i)$ implying that $i \preceq^{(k)}_* j_0$. Therefore, if $\tilde{\sigma}(i)< k+1$, we have 
\[
\sum_{\underset{i\preceq_* \ell}{\ell\in \Jcf\setminus \psigma}}\wf(k_{\ell}) = \sum_{\underset{i\preceq^{(k)}_* \ell}{\ell\in \Jcf\setminus \psigma}}\wf(k_{\ell}).
\]
If now $\tilde{\sigma}(i)\geq k+1$ then again it is enough to consider the case where $l=j_0$. Noticing that $\tilde{\sigma}^{(k)}(i) > k$, we have $j_0 \preceq^{(k)}_* i$ and since $j_0\neq i$, we then have $j_0 \prec^{(k)}_* i$. Consequently, 
\[
\sum_{\underset{i\preceq^{(0)}_* \ell}{\ell\in \Jcf\setminus \psigma}}\wf(k_{\ell}) = \sum_{\underset{i\preceq^{(k)}_* \ell}{\ell\in \Jcf\setminus \psigma}}\wf(k_{\ell}).
\]
The last term that remains to treat is $R_0\bigl(z-C_{k}(\sigma) - R_k\bigr)$. Let us recall that: 
\begin{align*}
C_{k}(\sigma) &= \sum_{\underset{\ell > k}{\ell\in  \Jcb }} \wb(q_\ell) + \sum_{\underset{{\sigma}(\ell)>k}{\ell\in  \Jcf }} \wf(k_\ell)+\sum_{\underset{\ell \leq k}{\ell\in \Jab }} \wb(q_\ell) +\sum_{\underset{{\sigma}(\ell) \leq k}{\ell\in \Jaf }}\wf(k_\ell).
\end{align*}
We need to prove that $C_{k}(\sigma) +R_k= D_{j_0}^{(\preceq^{(k)},\preceq_*^{(k)})}(\sigma^{(k)})$. Since $\tilde{\sigma}^{(k)}(j_0) = k$, we have $R_k = R_{\tilde{\sigma}^{(k)}(j_0)}$ and 
\begin{equation*}
   \sum_{\underset{\ell > k}{\ell\in  \Jcb }} \wb(q_\ell)  = \sum_{\underset{\ell > \tilde{\sigma}^{(k)}(j_0)}{\ell\in  \Jcb }} \wb(q_\ell) \quad \textup{and}\quad 
   \sum_{\underset{\ell \leq k}{\ell\in \Jab }} \wb(q_\ell)  = \sum_{\underset{\ell \leq \tilde{\sigma}^{(k)}(j_0) }{\ell\in \Jab }} \wb(q_\ell).
\end{equation*}
Moreover, for any $\ell\in \Jcf\setminus \psigma$ such that ${\sigma}(\ell)>k$ and $\ell\neq j_0$,  we have $\tilde{\sigma}(\ell)=\sigma(\ell)>k$ and therefore $j_0 \prec^{(k)}_* \ell$. Consequently, because $\sigma(\ell) = \tilde{\sigma}(\ell)$ for any $\ell\in\Jcf$, it is easy to deduce that:
\[
\sum_{\underset{{\sigma}(\ell)>k}{\ell\in  \Jcf }} \wf(k_\ell) = \sum_{\underset{j_0 \preceq^{(k)}_* \ell}{\ell\in  \Jcf\setminus \psigma}} \wf(k_\ell).
\]
Finally, for any $l\in \Jaf\setminus \psigma$ such that ${\sigma}(\ell) \leq k $ we have $\tilde{\sigma}(\ell) < k $ implying that $\ell \prec^{(k)} j_0$. Moreover, for any $\ell\neq j_0$ such that $\tilde{\sigma}(\ell)\geq k$, we know, by definition, that  $j_0 \prec^{(k)} \ell $ implying that: 
\[
\sum_{\underset{\ell\in \Jaf }{{\sigma}(\ell) \leq k}}\wf(k_\ell) = \sum_{\underset{\ell\in \Jaf }{\tilde{\sigma}(\ell) < k}}\wf(k_\ell)= \sum_{\underset{\ell\in \Jaf\setminus \psigma }{\ell \preceq^{(k)} j_0}}\wf(k_\ell).
\]
As a conclusion
\[
C_{k}(\sigma) + R_k= D_{j_0}^{(\preceq^{(k)},\preceq_*^{(k)})}(\sigma^{(k)}).
\]
All together, these remarks prove that \eqref{Changingorder1} can be written as an ordered Wick monomial $T^{(k)} = T^{(k)}_{(P_\cf^{(0)},P_\af^{(0)})}(z;\uF^{(n)})$. 

Finally, let $k\in \llbracket 1, n-1\rrbracket$ and $k'\in\scrB\setminus \psigma$ with $\tilde{\sigma}(k') = k+1$ and $k'\preceq j_0$. Recall the abbreviations \eqref{abbrev-LDelta2}. The $k,k'$-contribution, $T^{(k,k')}$,  to \eqref{telescopicsum} coming from the second term on the right-hand side of \eqref{ResDiff} is: 
 {\allowdisplaybreaks
\begin{align}\label{ChanginOrder2}
\nonumber & T^{(k,k')}(z;\uF^{(n)})\\
\nonumber   &\quad =-\int \prod_{i \in \llbracket 1, n \rrbracket \setminus \psigma} F_i(k_i, q_i) \scrL\, \Delta_\ab\Delta_\af \prod_{i \in \scrB\setminus \psigma}\wf(k_i)     \prod_{i \in \Jcf \cap \psigma} \cf\bigl(F(., q_i)\bigr) \\
 \nonumber &  \qquad \prod_{j \in \Jcb} \cb(q_j) \prod_{j \in \Jcf\setminus \psigma} \cf(k_j)  \prod_{\underset{i<k}{i \in \scrA}} R_0\bigl(z-C_{i}(\sigma) - R_i\bigr) \\
\nonumber  & \qquad \prod_{\underset{\tilde{\sigma}(i) < k+1}{i \in \scrB\setminus \psigma}} R_0\bigl(z-D_i^{(\preceq, \preceq_*)}(\sigma)\bigr)    \Bigl(\myTheta_\scrA(k) R_0\bigl(z-C_{k}(\sigma) - R_k\bigr)+ \myTheta_{\scrA^\comp}(k)\Bigr)\\
\nonumber  &  \qquad  \prod_{\underset{i \prec k'}{\underset{ \tilde{\sigma}(i) = k+1}{i \in \scrB\setminus \psigma}}} R_0\bigl(z-D_i^{(\preceq, \preceq_*)}(\sigma)\bigr)\Biggl[\wf(k_{j_0})  R_0\bigl(z-D_{k'}^{(\preceq,\preceq_*)}(\sigma)\bigr)
 \\
\nonumber  & \qquad
R_0\bigl(z-D_{k'}^{(\preceq^{(0)},\preceq_*^{(0)})}(\sigma^{(0)})\bigr) \Biggr] \prod_{\underset{k' \prec i }{\underset{\tilde{\sigma}(i) = k+1}{i \in \scrB\setminus \psigma}}} R_0\bigl(z-D_i^{(\preceq^{(0)},\preceq_*^{(0)})}(\sigma^{(0)})\bigr) 
\\
\nonumber  & \qquad
\prod_{\underset{i>k}{i \in \scrA}} R_0\bigl(z-C_{i}(\sigma^{(0)}) - R_i\bigr) \prod_{\underset{\tilde{\sigma}(i) > k+1}{i \in \scrB\setminus \psigma}} R_0\bigl(z-D_i^{(\preceq^{(0)},\preceq_*^{(0)})}(\sigma^{(0)})\bigr)\\
 &\qquad \prod_{j \in \Jaf \setminus \psigma} \af(k_j) \prod_{j \in \Jab}\ab(q_j) \prod_{i \in \Jaf \cap \psigma} \af\bigl(\overline{F(., q_i)}\bigr)\prod_{i \in \llbracket 1, n \rrbracket \setminus \psigma}  dk_i\prod_{i \in \llbracket 1, n \rrbracket } dq_i.
\end{align}
}
Let us define the relations $\preceq^{(k,k')}$ and $\preceq^{(k,k')}_*$ as follows, recalling that we have $\psigma^{(k+1)}=\psigma$. If $k'\in \Jcf$, we insert $j_0$ just before $k'$: 
\begin{equation}
\label{Case1}
\forall a,b \in \bigl(\Jcf \cup \Jaf\bigr) \setminus \psigma^{(k+1)}:\qquad 
\begin{cases}
    a \preceq^{(k,k')}_{\sharp} b,  & \textup{ if } a,b \neq j_0 \textup{ and } a \preceq_{\sharp} b \\
     a \preceq^{(k,k')}_{\sharp} j_0,  & \textup{ if }  a \prec_{\sharp} k' \\
      j_0 \preceq^{(k,k')}_{\sharp} b,  & \textup{ if }   k' \preceq_{\sharp} b.
\end{cases}
\end{equation}
If on the other hand $k'\in J_\af$, then we insert $j_0$ just after $k'$:
\begin{equation}
\label{Case2}
\forall a,b \in \bigl(\Jcf \cup \Jaf\bigr) \setminus \psigma^{(k+1)}:\qquad 
\begin{cases}
    a \preceq^{(k,k')}_{\sharp} b,  & \textup{ if } a,b \neq j_0 \textup{ and } a \preceq_{\sharp} b \\
     a \preceq^{(k,k')}_{\sharp} j_0,  & \textup{ if }  a \preceq_{\sharp} k' \\
      j_0 \preceq^{(k,k')}_{\sharp} b,  & \textup{ if }   k' \prec_{\sharp} b.
\end{cases}
\end{equation}

In either case
 \begin{equation*}
     \forall i,l \in \bigl(\Jcf\cup \Jaf\bigr) \setminus \psigma^{(k+1)}: \qquad \tilde{\sigma}^{(k+1)}(i) < \tilde{\sigma}^{(k+1)}(l) \quad  \Rightarrow \quad  i\prec^{(k,k')}_{\sharp} l.
 \end{equation*}

The strategy is again to rewrite \eqref{ChanginOrder2} as an ordered Wick monomial using $\sigma^{(k+1)}$ and $\preceq^{(k,k')}$. First, 
\begin{align}
    \forall i \in \scrA, i \leq k:\qquad & C_i(\sigma) = C_i(\sigma^{(k+1)}) \\
    \forall i \in \scrA, i> k:\qquad & C_i(\sigma^{(0)}) = C_i(\sigma^{(k+1)}).
\end{align}
Using the same type of argument as before, one can also check that 
\begin{align*}
    \forall i \in \scrB\setminus \psigma, i \prec k':\qquad & D_i^{(\preceq, \preceq_*)}(\sigma)  = D_i^{(\preceq^{(k,k')},\preceq_*^{(k,k')})}(\sigma^{(k+1)})\\
    \forall i \in \scrB\setminus \psigma, k' \preceq i:\qquad & D_i^{(\preceq^{(0)},\preceq_*^{(0)})}(\sigma^{(0)}) = D_i^{(\preceq^{(k,k')},\preceq_*^{(k,k')})}(\sigma^{(k+1)}).
\end{align*}
Hence, it remains to prove that 
$ D_{k'}^{(\preceq,\preceq_*)}(\sigma) = D_{j_0}^{(\preceq^{(k,k')},\preceq_*^{(k,k')})}(\sigma^{(k+1)})$. Let us note that 
\[
\begin{aligned}
D_{k'}^{(\preceq,\preceq_*)}(\sigma) =&  \sum_{\underset{\ell\in  \Jcb }{\ell > \tilde{\sigma}(k')}} \wb(q_\ell) + \sum_{\underset{\ell\in \Jcf\setminus \psigma}{{k'\preceq_* \ell}}}\wf(k_{\ell})\\
& +  \sum_{\underset{\ell\in \Jab  }{\ell \leq \tilde{\sigma}(k')}} \wb(q_\ell) +\sum_{\underset{\ell\in \Jaf\setminus \psigma }{{\ell \preceq k'}}}\wf(k_\ell) +R_{\tilde{\sigma}(k')}.
\end{aligned}
\]
From the fact that $\tilde{\sigma}(k') = k+1=\tilde{\sigma}^{(k+1)}(j_0)$, 
we have $R_{\tilde{\sigma}(k')} = R_{\tilde{\sigma}^{(k+1)}(j_0)}$,
\begin{equation*}
    \sum_{\underset{\ell\in  \Jcb }{\ell > \tilde{\sigma}(k')}} \wb(q_\ell)  = \sum_{\underset{\ell\in  \Jcb }{\ell > \tilde{\sigma}^{(k+1)}(j_0)}} \wb(q_\ell)
\end{equation*}
and 
\begin{equation*}
    \sum_{\underset{\ell\in \Jab  }{\ell \leq \tilde{\sigma}(k')}} \wb(q_\ell) =  \sum_{\underset{\ell\in \Jab  }{\ell \leq \tilde{\sigma}^{(k+1)}(j_0)}} \wb(q_\ell).
\end{equation*}
If $k'\in \Jcf$, the order $\preceq^{(k,k')}_*$ is defined by \eqref{Case1}. Then $k' \preceq_* l$ is equivalent to the condition $j_0 \preceq^{(k,k')}_* l$ and therefore 
\[
\sum_{\underset{l\in \Jcf\setminus \psigma}{{k'\preceq_* l}}}\wf(k_{l}) = \sum_{\underset{l\in \Jcf\setminus \psigma^{(k+1)}}{{j_0\preceq^{(k,k')}_* l}}}\wf(k_{l}).
\]
Moreover, if $k' \in \Jcf$,
\[
\sum_{\underset{l\in \Jaf\setminus \psigma }{{l \preceq k'}}}\wf(k_l) = \sum_{\underset{l\in \Jaf\setminus \psigma }{{l \prec k'}}}\wf(k_l) 
\]
and $l\prec k'$ implies that $l \preceq^{(k,k')} j_0$. Similarly if $l \preceq^{(k,k')} j_0$ then we claim that $l\preceq k'$. Indeed, if $k' \prec l$ then $j_0 \preceq^{(k,k')} l$ implying that $l =j_0$ contradicting the fact that $l\in \Jaf$. Note that $j_0\prec^{(k,k')} k'$, however, since $k' \notin \Jaf$ we can conclude that
\[
\sum_{\underset{l\in \Jaf\setminus \psigma }{{l \preceq k'}}}\wf(k_l) = \sum_{\underset{l\in \Jaf\setminus \psigma^{(k+1)} }{{l \preceq^{(k,k')} j_0}}}\wf(k_l).
\]
The strategy is similar if $k'\in \Jaf$, defining the orders $\prec^{(k,k')}$ and $\prec^{(k,k')}_*$ by \eqref{Case2}. In any case, we can conclude that $ D_{k'}^{(\preceq,\preceq_*)}(\sigma) = D_{j_0}^{(\preceq^{(k,k')},\preceq_*^{(k,k')})}(\sigma^{(k+1)})$.

In conclusion, the contribution $T^{(k,k')}$ in \eqref{ChanginOrder2}, is an ordered Wick monomial $T^{(k,k')}=T^{(k,k')}_{(P_\cf^{(0)},P_\af^{(0)})}(z;\uF^{(n)})$.

Summing up,
\begin{align*}
T_{(P_\cf,P_\af)}\bigl(z;\uF^{(n)}\bigr)  = & T^{(0)}_{(P_\cf^{(0)},P_\af^{(0)})}\bigl(z;\uF^{(n)}\bigr)+
\sum_{k\in\scrA, k<j_0} T^{(k)}_{(P_\cf^{(0)},P_\af^{(0)})}\bigl(z;\uF^{(n)}\bigr)\\
& +\sum_{k=0}^{n-1} \sum_{\underset{\underset{ \tilde{\sigma}(k')=k+1}{k'\preceq j_0}}{k'\in \scrB\setminus\psigma} }T^{(k,k')}_{(P_\cf^{(0)},P_\af^{(0)})}\bigl(z;\uF^{(n)}\bigr).
\end{align*}
We are done, since there are at most $2n$ terms on the right hand side above. 
\end{proof}

 The following lemma is a straightforward application of 
Lemma~\ref{InitialisationReorderingFermionLemma}. 
 
\begin{lem}
\label{ReorderingFermionLemma}
Let $n\in\NN$. There exists a constant $M=M(n)$, such that the following holds. For any ordered Wick monomial $T_{(P_{\cf},P_{\af})}$ of length $n$, There exist $N \in \NN$ with $N\leq M$ and a collection of ordered Wick monomials 
\[
\{T^{(i)}_{(\Jcf\setminus (\Jcb\cup\Jab),\Jaf\setminus (\Jcb\cup\Jab))}\}_{i=1}^N 
\]
all of length $n$, such that:
\begin{equation}
T_{(P_{\cf},P_{\af})} = \sum^N_{i = 1 } T^{(i)}_{(\Jcf\setminus (\Jcb\cup\Jab),\Jaf\setminus (\Jcb\cup\Jab))}.
\end{equation}
 Finally, the sets $\scrA, \Jcb, \Jab, \Jaf, \Jcf, I_\af, I_\ab$ and the functions $f_\ab,f_\af$ and $\scrL$ are identical for $T_{(P_{\cf},P_{\af})}$ and any $ T^{(i)}_{(\Jcf\setminus (\Jcb\cup\Jab),\Jaf\setminus (\Jcb\cup\Jab))}$.
\end{lem}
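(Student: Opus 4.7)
My plan is to iterate Lemma~\ref{InitialisationReorderingFermionLemma} until every index in the fermionic part of the cover that is allowed to be moved into the bosonic part has been moved. To measure progress, I will introduce the set
\[
Q\bigl(T_{(P_\cf,P_\af)}\bigr) := (P_\cf \cup P_\af) \cap (J_\cb \cup J_\ab),
\]
consisting of all those indices where a fermionic cover component coincides with a bosonic label. The proof will go by induction on $|Q(T)|$. Note that $Q(T) \subset \llbracket 1,n \rrbracket$, so $|Q(T)| \leq n$ and the induction terminates in at most $n$ steps.

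For the base case $|Q(T)|=0$, I will argue that necessarily $P_\cf = J_\cf \setminus (J_\cb \cup J_\ab)$ and $P_\af = J_\af \setminus (J_\cb \cup J_\ab)$: indeed, if $j \in J_\cf \setminus (J_\cb \cup J_\ab)$, then $j$ cannot lie in $P_\cb$ or $P_\ab$ (since $j \notin J_\cb \cup J_\ab$) and cannot lie in $P_\af$ (since $J_\cf \cap J_\af = \emptyset$), forcing $j \in P_\cf$; a symmetric argument handles $P_\af$. Together with the inclusions $P_\cf \subseteq J_\cf\setminus (J_\cb\cup J_\ab)$ and $P_\af \subseteq J_\af\setminus (J_\cb\cup J_\ab)$ that follow from $Q(T)=\emptyset$, this gives the desired equalities, and the single term $T_{(P_\cf,P_\af)}$ itself provides the required decomposition.

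For the inductive step, I will pick any $j_0 \in Q(T)$ and apply Lemma~\ref{InitialisationReorderingFermionLemma} to write
\[
T_{(P_\cf,P_\af)} = \sum_{i=1}^{L} T^{(i)}_{(P_\cf\setminus\{j_0\}, P_\af\setminus\{j_0\})}
\]
with $L \leq 2n$ and identical structural data $\mathscr A, J_\cb, J_\ab, J_\af, J_\cf, I_\ab, I_\af, f_\ab, f_\af, \mathscr L$. The crucial observation is that $j_0$ has been removed from the fermionic components of the cover (and by construction of the lemma, reassigned to $P_\cb$ or $P_\ab$ according to whether $j_0 \in J_\cb$ or $j_0 \in J_\ab$), while no other index has entered $P_\cf \cup P_\af$. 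Hence for each of the $L$ summands one has $Q\bigl(T^{(i)}\bigr) = Q(T) \setminus \{j_0\}$, which is strictly smaller, and the induction hypothesis applies to each summand.

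Iterating yields an overall bound of the form $M(n) \leq (2n)^n$ on the number of summands, which depends only on $n$. I do not expect a serious obstacle: the only thing to verify carefully is that at every stage the chosen $j_0$ satisfies the hypothesis $j_0 \in (P_\cf \cup P_\af) \cap (J_\cb \cup J_\ab)$ required by Lemma~\ref{InitialisationReorderingFermionLemma}, but this is precisely the definition of $Q(T)$. The persistence of the structural data $\mathscr A, J_{\#}, I_{\#}, f_{\#}, \mathscr L$ across applications is guaranteed by the last sentence of Lemma~\ref{InitialisationReorderingFermionLemma}, and propagates unchanged through the induction.
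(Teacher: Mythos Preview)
Your proof is correct and follows exactly the approach the paper intends: the paper states only that the lemma is ``a straightforward application of Lemma~\ref{InitialisationReorderingFermionLemma}'', and your induction on $|Q(T)|$ with the bound $(2n)^n$ is precisely the straightforward iteration in question. The base-case verification that $Q(T)=\emptyset$ forces $P_\cf = J_\cf\setminus(J_\cb\cup J_\ab)$ and $P_\af = J_\af\setminus(J_\cb\cup J_\ab)$ is a nice detail the paper leaves implicit.
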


\section{Estimating the Renormalized Blocks}\label{Sec-GSetsEst}

The goal of this section is to prove operator norm estimates on the renormalized blocks $T^{(n)}_\us$ from Subsect.~\ref{subsec-RenBlocks} that are uniform in the ultraviolet cutoff.
We do this by establishing operator norm estimates on regular Wick monomials, which suffices by Lemma~\ref{InductionLemma}. However, this we cannot do directly. We first rewrite regular Wick monomials as a sum of ordered Wick monomials, which was done in Lemma~\ref{ReorderingFermionLemma} in the previous section. We will then finally establish the desired estimates, first for regular Wick monomials in Propositions~\ref{RNFCT} and~\ref{RFCT}, and for the renormalized blocks in Propositions~\ref{RegularityofthegeneralisedGsets}, \ref{FinalEstimate1} and \ref{FinalEstimate2}.

\subsection{Preliminary}

Let us first introduce some convenient notations.

\begin{Def}[The functions $\theta$ and $\theta_*$]
\label{FunctionTheta}
Let $n\in\NN$ and consider subsets of $\llbracket 1,n\rrbracket$:  $\Jaf, \Jcf, \Jab, \Jcb, I_\af, I_\ab$, as well as functions $f_{\ab}, f_{\cb}$ as in Definition~\ref{Notation}, together with a cover $(P_{\cb},P_{\ab},P_{\cf},P_{\af})$, cf. Definition \ref{DefCover}, and an admissible map $\sigma$, cf. Definition \ref{Admissiblemaps}. We define two functions  $\theta_* \colon \Jcf\setminus\psigma \to  \llbracket 1, n \rrbracket$ and $\theta \colon  \Jaf\setminus\psigma \to  \llbracket 1, n \rrbracket$ by setting
\begin{equation}
\label{thetestar}
    \theta_*(i) = \begin{cases}
      f_{\ab}(i), & \textup{ if } i \in P_{\cf} \cap I_\ab\\
      n, &  \textup{ if } i \in P_{\cf} \cap \Jab\\
      \sigma(i), & \textup{ otherwise}
    \end{cases}
\end{equation}
and
\begin{equation}
    \theta(i) = \begin{cases}
      f^{-1}_{\ab}(i), & \textup{ if } i \in P_{\af} \cap f_{\ab}(I_\ab)\\
            1, &  \textup{ if } i \in P_{\af} \cap \Jcb\\
      \sigma(i), & \textup{ otherwise.}
    \end{cases}
\end{equation}
\end{Def}
When we estimate ordered operators in the proof of Proposition~\ref{RNFCT} below, we order the pointwise fermionic creation and annihilation operators using the functions $\theta_*$ and $\theta$ and the total orders $\preceq_*$ and $\preceq$, respectively. We then use kinetic energy bounds to control the pointwise annihilation and creation operators, both the bosonic and the fermionic ones. For this we insert the identity $(H_0-z)^{-1/2}(H_0-z)^{1/2}$ to the right of the creation operators and to the left of the annihilation operators. The unbounded factors $(H_0-z)^{1/2}$ are then pulled through onto the product of resolvents in the middle and controlled by those. The functions $\theta_*$ and $\theta$ in addition helps keep track of the factors of $\wf(k_i)$ and $\wb(q_i)$ that are picked up via pull-through in the process.

\begin{Def}
\label{SumsinRes2}
Let us consider $\Jaf, \Jcf, \Jab, \Jcb$  defined as in Definition~\ref{Notation}, a cover $(P_{\ab}, P_{\af}, P_{\cb}, P_{\cf})$ defined as in Definition \ref{DefCover}, together with an admissible map $\sigma$. Moreover, let $\theta_*\colon \Jcf \setminus \psigma\to \llbracket 1, n \rrbracket$ and $\theta\colon \Jaf \setminus \psigma \to  \llbracket 1, n \rrbracket$ be two functions. We abbreviate 
\begin{equation}\label{ABis}
\begin{aligned}
&\forall i\in \llbracket 1,n\rrbracket:&      A_i(\theta_*) & =  \sum_{\underset{j \geq i}{j\in \Jcb }} \wb(q_j)+\sum_{\underset{\theta_*(j) \geq i}{j \in 
    \Jcf\setminus \psigma }} \wf(k_{j}),\\
&\forall i\in \llbracket 1,n\rrbracket:&   B_i(\theta) & =   \sum_{\underset{j \leq i}{j\in \Jab }} \wb(q_j) +\sum_{\underset{\theta(j) \leq i  }{j \in \Jaf\setminus \psigma }} \wf(k_j).
\end{aligned}
\end{equation}
\end{Def}

The following result is an important first step towards estimating Wick monomials. 

\begin{lem}
\label{RegOpEstimates}
Let $T$ be a regular Wick monomial with associated index sets $\scrA,J_\ab, J_\cb, J_\af, J_\cf,I_\ab,I_\af\subset \llbracket 1,n\rrbracket$.  Define a cover (cf. Definition~\ref{DefCover}) by setting  $P_{\ab} = \Jab$, $P_{\cb} = \Jcb$, $P_{\af} = J_{\af}\setminus (J_{\ab}\cup J_{\cb})$ and $P_{\cf} = J_{\cf}\setminus (J_{\ab}\cup J_{\cb})$. Let $\sigma$ be an admissible map (cf. Definition~\ref{Admissiblemaps})
and $\{\alpha_i\}_{i=1}^n$ a family of non-negative numbers satisfying \eqref{Caraalpha}.
Then there exists a family of exponents $\{\beta_{i}\}_{i=1}^n$ with $0\leq \beta_i\leq \alpha_i$ and $\beta_i=0$ for $i\in\psigma$, such that for all $z\in\CC_-$:
\begin{equation}
\begin{aligned}
\label{regularitypropertyV2}
& \Bigl|\scrL\bigl( \{ k_j\}_{j \in I_\af}, \{ q_j\}_{j \in I_\ab}\bigr)\Bigr|  \prod_{i \in P_{\ab}\cup P_{\cb}} \wb(q_i)^{-\alpha_i} \prod_{ i \in P_{\af}\cup P_{\cf}} \wf(k_j)^{-\alpha_i} \\
& \qquad \prod_{i\in I_\ab}\kdelta\bigl(q_{i} - q_{f_\ab(i)}\bigr) \prod_{j\in I_\af}\kdelta\bigl(k_{j} - k_{f_\af(j)}\bigr)\biggl\| \prod_{i \in P_{\cb}}\bigl(H_0-z+ A_i(\theta_*)  \bigr)^{\alpha_i}\\
& \qquad \quad  \prod_{i \in  P_{\cf}}\bigl(H_0-z+ A_{\theta_*(i)}(\theta_*)  \bigr)^{\alpha_i} \prod_{i\in \scrA} R_0\bigl(z-C_i(\sigma) -R_i\bigr)
\\
& \qquad  \quad  \prod_{i \in P_{\ab} }\bigl(H_0-z+B_i(\theta) \bigr)^{\alpha_i} \prod_{i \in P_{\af} }\bigl(H_0-z+B_{\theta(i)}(\theta)\bigr)^{\alpha_i}   \biggr\| \\  
&\quad \leq  c_T\prod_{i\in (P_\cf\cap I_\ab)\cup (P_\af\cap f_\ab(I_\ab))}\biggl(1+\frac{\wf(k_i)}{ \wb(q_i) }\biggr)^{n-1} \\
& \qquad \quad \frac{ \prod_{i\in I_\ab}\kdelta\bigl(q_{i} - q_{f_\ab(i)}\bigr) \prod_{j\in I_\af}\kdelta\bigl(k_{j} - k_{f_\af(j)}\bigr)}{\prod^n_{i=1}[\wb(q_i)]^{\alpha_i-\beta_{i}}[\wf(k_i)]^{\beta_{i}}}.
\end{aligned}
\end{equation}
\end{lem}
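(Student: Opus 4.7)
The plan is to deduce \eqref{regularitypropertyV2} from the defining regularity estimate \eqref{regularityproperty} of Definition~\ref{RegularOperator}. First I apply Definition~\ref{RegularOperator} to $T$ with the specified cover $(P_\ab,P_\cb,P_\af,P_\cf)$, the given admissible map $\sigma$, and the given exponents $\{\alpha_i\}$. This produces admissible exponents $\{\gamma_{i;j}\}_{i\in\scrA,j\in\scrJ}$ (satisfying \eqref{ConstraintOnGamma}--\eqref{SumOfGammas}) and numbers $\{\beta_i\}_{i=1}^n$ (with $0\leq\beta_i\leq\alpha_i$ and $\beta_i=0$ on $\psigma$) for which \eqref{regularityproperty} holds. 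These $\{\beta_i\}$ will be exactly those asserted by the lemma; the task is then to upgrade the operator-norm factor in \eqref{regularityproperty}, which involves only the fractional resolvents $R_0(z-C_i(\sigma)-R_i)^{1-\overline{\gamma}_i}$, to the shifted structure appearing on the left-hand side of \eqref{regularitypropertyV2}.

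All operators in the product to be bounded (the shift operators $(H_0-z+X)^\alpha$ and the resolvents $R_0(z-C_i-R_i)$) are functions of $H_0$ together with the momentum variables, so they commute and may be rearranged freely. I would split each full resolvent as
\[
R_0(z-C_i-R_i) = R_0(z-C_i-R_i)^{1-\overline{\gamma}_i}\prod_{j\in\scrJ}R_0(z-C_i-R_i)^{\gamma_{i;j}},
\]
and each shift operator as $(H_0-z+X_j)^{\alpha_j}=\prod_{i\in\scrA}(H_0-z+X_j)^{\gamma_{i;j}}$, where $X_j$ denotes whichever of $A_j(\theta_*),\,A_{\theta_*(j)}(\theta_*),\,B_j(\theta),\,B_{\theta(j)}(\theta)$ corresponds to $j$'s membership in $P_\cb,P_\cf,P_\ab,P_\af$. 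The pointwise functional-calculus bound, valid for $a,c\geq 0$, $z\in\CC_-^*$, and $\gamma\in[0,1]$,
\[
\bigl|(x+a+|\re z|-i\im z)^\gamma(x+c+|\re z|-i\im z)^{-\gamma}\bigr|\leq \max(1,a/c)^\gamma,
\]
then reduces the norm bound for each paired factor $(H_0-z+X_j)^{\gamma_{i;j}}R_0(z-C_i-R_i)^{\gamma_{i;j}}$ to a pointwise comparison of $X_j$ with $C_i(\sigma)+R_i$.

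Whenever $\gamma_{i;j}>0$, the constraints \eqref{ConstraintOnGamma} guarantee that the single dispersion $\wb(q_j)$ or $\wf(k_j)$ (according to the type of $j$) is a summand of $C_i(\sigma)+R_i$. A direct inspection of the formulas \eqref{ABis} defining $A_i,B_i$ together with the formula for $C_i(\sigma)$ in Definition~\ref{SumsinRes} and the formula \eqref{Rcomponents} for $R_i$ shows the stronger statement that, pointwise, the \emph{entire} sum defining $X_j$ is dominated by $C_i(\sigma)+R_i$, with the single exception of the fermionic terms at indices where $\theta_*$ or $\theta$ disagrees with $\sigma$. This disagreement happens precisely at $\ell\in P_\cf\cap I_\ab$ (where $\theta_*(\ell)=f_\ab(\ell)>\ell=\sigma(\ell)$) and at $\ell\in P_\af\cap f_\ab(I_\ab)$ (the adjoint case); everywhere else we get the bound \eqref{regularitypropertyV2} with no correction factor.

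The main technical obstacle is handling these mismatched indices, which is exactly what produces the $(1+\wf(k_i)/\wb(q_i))^{n-1}$ factor. At such $\ell$, the dispersion $\wf(k_\ell)$ appears in $A_j(\theta_*)$ whenever $f_\ab(\ell)\geq \theta_*(j)$ without appearing in the corresponding $C_i(\sigma)$ for $i\geq \ell$. However, the bosonic partner $\wb(q_\ell)$ does appear in $R_i$ for every $i$ with $\ell\leq i<f_\ab(\ell)$, since the delta-function constraint forces $q_\ell=q_{f_\ab(\ell)}$ and the term $\wb(q_{f_\ab(\ell)})$ enters $R_i$ by definition. Using $\wf(k_\ell)\leq\wb(q_\ell)(1+\wf(k_\ell)/\wb(q_\ell))$ re-establishes domination by $C_i+R_i$ at the price of a correction factor $(1+\wf(k_\ell)/\wb(q_\ell))^{\gamma_{i;\ell}}$ for each problematic pair $(i,\ell)$. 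Bounding each such exponent by $1$ and the number of resolvent insertions in which a given $\ell$ can be implicated by $n-1$ produces the overall correction $\prod_{\ell\in (P_\cf\cap I_\ab)\cup(P_\af\cap f_\ab(I_\ab))}(1+\wf(k_\ell)/\wb(q_\ell))^{n-1}$. Inserting the resulting norm bound into \eqref{regularityproperty} and using the regularity hypothesis for $T$ then yields \eqref{regularitypropertyV2}.
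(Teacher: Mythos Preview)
Your proposal is correct and follows essentially the same route as the paper's proof: apply the regularity definition to obtain the admissible exponents $\{\gamma_{i;j}\}$ and the $\{\beta_i\}$, split both the shift operators and each resolvent according to these exponents, and then compare each $X_j$ pointwise with $C_i(\sigma)+R_i$, identifying the mismatch at indices $\ell\in (P_\cf\cap I_\ab)\cup(P_\af\cap f_\ab(I_\ab))$ where the bosonic partner $\wb(q_\ell)$ sitting in $R_i$ is used instead.

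One minor remark on bookkeeping: the correction factor arises for each triple $(i,j,\ell)$ (the problematic $\wf(k_\ell)$ is a summand of $X_j$, and the pair $(i,j)$ is what carries the exponent $\gamma_{i;j}$), not for each pair $(i,\ell)$ with exponent $\gamma_{i;\ell}$ as you wrote. The paper bounds, for fixed $i$, the total exponent on a given $\ell$ by $\sum_{j\in\scrJ}\gamma_{i;j}=\overline{\gamma}_i$, and then uses $\sum_{i\in\scrA}\overline{\gamma}_i=\sum_{j\in\scrJ}\alpha_j\leq n-1$ to get the final exponent. Your cruder accounting (``each exponent $\leq 1$, at most $n-1$ insertions'') reaches the same conclusion once you observe $\overline{\gamma}_i\leq 1$ and $|\scrA|\leq n-1$.
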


\begin{proof} Let $z\in\CC_-$ and
recall the notation $\scrJ= J_\ab\cup J_\cb\cup J_\af\cup J_\cf$. From Definition~\ref{RegularOperator}, we get exponents 
 $\{\beta_{i}\}_{i=1}^n$, as in the formulation of the lemma, as well as admissible exponents  
 $\{\gamma_{i;j}\}_{i\in \scrA, j\in \scrJ}$ obeying Definition~\ref{def-admexp}.

Let $i\in \scrA$. We aim to estimate 
\begin{align}\label{ReducToRegEstim}
&  \biggl\| \prod_{j \in P_{\cb}}\bigl(H_0-z+ A_j(\theta_*) \bigr)^{\gamma_{i;j}} \prod_{j \in  P_{\cf}}\bigl(H_0-z+ A_{\theta_*(j)}(\theta_*) \bigr)^{\gamma_{i;j}} 
\\
\nonumber & \quad R_0\bigl(z-C_i(\sigma) -R_i\bigr)  \prod_{j \in P_{\ab} }\bigl(H_0-z+B_j(\theta)\bigr)^{\gamma_{i;j}} \prod_{j \in P_{\af} }\bigl(H_0-z+B_{\theta(j)}(\theta)\bigr)^{\gamma_{i;j}}   \biggr\|.
\end{align}

Let now $j \in P_{\cb}$ and observe, recalling the first constraint in \eqref{ConstraintOnGamma}, that if $j\leq i$ then $\gamma_{i;j} = 0$. Therefore, for the purpose of estimating \eqref{ReducToRegEstim}, one can assume that $j>i$. To sum up, we have $i\in\scrA$ and $j\in P_\cb$ with $j>i$. Split the second sum in the definition \eqref{ABis} of $A_j(\theta_*)$ as follows
\begin{equation}\label{firstineqtoreg}
 A_j(\theta_*)  =  \sum_{\underset{\ell \geq j}{\ell\in \Jcb }} \wb(q_\ell)+\sum_{\underset{\sigma(\ell)> i}{\underset{\theta_*(\ell) \geq j}{\ell \in 
    \Jcf\setminus \psigma }}} \wf(k_{\ell}) + \sum_{\underset{\sigma(\ell)\leq i}{\underset{\theta_*(\ell) \geq j}{\ell \in 
    \Jcf\setminus\psigma }}} \wf(k_{\ell}) .
\end{equation}
As for the first two sums on the right-hand side of \eqref{firstineqtoreg}, let $\ell\in J_\cb$ with $\ell\geq j$. Since  $\ell\geq j  > i$, then the first two sums on the right-hand side of \eqref{firstineqtoreg} are controlled by $C_i(\sigma)$:
\begin{equation}\label{estbyCi}
\sum_{\underset{\ell \geq j}{\ell\in \Jcb }} \wb(q_l)+\sum_{\underset{\sigma(\ell)> i}{\underset{\theta_*(\ell) \geq j}{\ell \in 
    \Jcf\setminus \psigma }}} \wf(k_{\ell}) \leq C_i(\sigma).
\end{equation}
See Definition~\ref{SumsinRes} for the form of $C_i(\sigma)$. 

As far as the third and final sum on the right-hand side of \eqref{firstineqtoreg} is concerned, let $\ell\in \Jcf \setminus \psigma$ with $\sigma(\ell)\leq i$ and $\theta_*(\ell)\geq j$. Therefore $\sigma(\ell) \leq i < j\leq \theta_*(\ell)$ and in particular $\theta_*(\ell) > \sigma(\ell)$. From the fact that $P_{\cf}\cap \Jab = \emptyset$,  together with Definition~\ref{FunctionTheta}, we can consequently conclude that $\ell \in P_{\cf} \cap I_\ab$. Since this implies that $\sigma(\ell)=\ell$, we deduce $\ell \leq i < \theta_*(\ell) = f_{\ab}(\ell)$. Hence
\[
\sum_{\underset{\sigma(\ell)\leq i}{\underset{\theta_*(\ell) \geq j}{\ell \in 
    \Jcf \setminus \psigma }}} \wf(k_{\ell})
    \leq \biggl(\sum_{\underset{\ell\leq i}{\underset{f_\ab(\ell) \geq j}{\ell \in 
    P_\cf\cap I_\ab }}} \frac{\wf(k_{\ell})}{\wb(q_\ell)}\biggr)  R_i, 
\]
where $R_i$ is defined in \eqref{Rcomponents}.
Inserting this estimate together with \eqref{estbyCi}, into \eqref{firstineqtoreg} we obtain
\[
A_j(\theta_*) \leq \prod_{\underset{\ell\leq i}{\underset{f_\ab(\ell) \geq j}{\ell \in 
    P_\cf\cap I_\ab }}}\biggl( 1+ \frac{\wf(k_{\ell})}{\wb(q_\ell)}\biggr)\bigl(C_i(\sigma) +R_i\bigr).
\]
This implies that
\begin{equation*}
 \Bigl\| \bigl(H_0-z+ A_j(\theta_*) \bigr)^{\gamma_{i;j}} R_0\bigl(z-C_i(\sigma) -R_i\bigr)^{\gamma_{i;j}}   \Bigr\|
     \leq \prod_{\underset{\ell\leq i}{\underset{f_\ab(\ell) \geq j}{\ell \in  P_\cf\cap I_\ab}}}
     \biggl(1+\frac{\wf(k_{\ell})}{\wb(q_\ell)}\biggr)^{\gamma_{i;j}}.
\end{equation*}

Let us now consider the case where $j\in P_{\ab}$, recalling this time the third constraint in \eqref{ConstraintOnGamma}. Since $\gamma_{i;j}=0$ if $i<j$, we may assume that
$j \leq i$.
We may now conclude similarly to above that 
\begin{align*}
B_j(\sigma) &=  \sum_{\underset{\ell \leq j}{\ell\in \Jab }} \wb(q_\ell) +\sum_{\underset{\sigma(\ell)\leq i}{\underset{\theta(\ell) \leq j  }{\ell \in \Jaf\setminus \psigma }}} \wf(k_\ell)  + \sum_{\underset{\sigma(\ell) > i}{\underset{\theta(\ell) \leq j  }{\ell \in \Jaf\setminus \psigma }}} \wf(k_\ell) . \\
& \leq C_i(\sigma) + \sum_{\underset{\ell>i}{\underset{f_\ab^{-1}(\ell) \leq j}{\ell \in 
    P_\af\cap f_\ab(I_\ab) }}} \wf(k_{\ell})\\
    & 
    \leq \prod_{\underset{\ell>i}{\underset{f_\ab^{-1}(\ell) \leq j}{\ell \in 
    P_\af\cap f_\ab(I_\ab)}}} \biggl( 1+\frac{\wf(k_{\ell})}{\wb(q_\ell)}\biggr)
    \biggl(C_i(\sigma)+  \sum_{\underset{f_\ab^{-1}(\ell) \leq j < \ell}{\ell \in 
    f_\ab(I_\ab) }} \wb(q_{\ell})  \biggr),
\end{align*}
where we used that $P_\af\cap J_\cb = \emptyset$ and that for $\ell\in J_\af$ with  $\theta(\ell)\leq j \leq i < \sigma(\ell)$, we have $\ell\in P_\af \cap f_\ab(I_\ab)$ and hence, $\sigma(\ell)=\ell$ and $\theta(\ell) = f_\ab^{-1}(\ell)$. For the properties of $\theta$, see Definition~\ref{FunctionTheta}, and
for the form of $R_i$, see \eqref{Rcomponents}. Keeping in mind that boson momenta $q_\ell$ and $q_{f_\ab^{-1}(\ell)}$, for $\ell\in f_\ab(I_a)$, are identified by delta-functions in the estimate \eqref{regularitypropertyV2} from the lemma, we conclude - with some abuse of notation - that
\begin{equation*}
 \Bigl\| R_0\bigl(z-C_i(\sigma) -R_i\bigr)^{\gamma_{i;j}}  \bigl(H_0-z+ B_j(\theta) \bigr)^{\gamma_{i;j}}  \Bigr\|
     \leq \prod_{\underset{\ell>i}{\underset{f_\ab^{-1}(\ell) \leq j}{\ell \in 
    P_\af\cap f_\ab(I_\ab) }}} \biggl(1+\frac{\wf(k_{\ell})}{\wb(q_\ell)}\biggr)^{\gamma_{i;j}},
\end{equation*}
under the constraint that $q_\ell = q_{f_\ab^{-1}(\ell)}$, for $\ell\in f_\ab(I_\ab)$.

Now let $j\in P_\cf$. We may assume $\gamma_{i,j}\neq 0$, and therefore $\theta_*(j)>i$. Too see this, compare Definitions~\ref{def-admexp} and~\ref{FunctionTheta} and recall that $\sigma(j)=j$ for $j\in P_\cf$.
Write
\[
\begin{aligned}
A_{\theta_*(j)}(\theta_*)&  = \sum_{\underset{\ell \geq \theta_*(j)}{\ell\in \Jcb }} \wb(q_\ell)+\sum_{\underset{\sigma(\ell)> i}{\underset{\theta_*(\ell) \geq \theta_*(j)}{\ell \in 
    \Jcf\setminus \psigma }}} \wf(k_{\ell}) + \sum_{\underset{\sigma(\ell)\leq i}{\underset{\theta_*(\ell) \geq \theta_*(j)}{\ell \in 
    \Jcf\setminus \psigma }}} \wf(k_{\ell})\\
    & \leq C_i(\sigma)+ \sum_{\underset{\sigma(\ell)\leq i}{\underset{\theta_*(\ell) \geq \theta_*(j)}{\ell \in 
    \Jcf\setminus \psigma }}} \wf(k_{\ell}).
\end{aligned}
\]
 For indices $\ell\in J_\cf\setminus \psigma$ in the last summand, we have $\sigma(\ell) \leq i < \theta_*(j) \leq \theta_*(\ell)$. Recalling Definitions~\ref{FunctionTheta} and that $P_{\cf}\cap \Jab = \emptyset$, we have $\ell\in P_\cf\cap I_\af$, $\sigma(\ell)=\ell$ and $\theta_*(\ell) = f_\ab(\ell)$. We thus get
\[
A_{\theta_*(j)}(\sigma) \leq C_i(\sigma) + \sum_{\underset{\ell\leq i}{\underset{f_\ab(\ell) \geq \theta_*(j)}{\ell \in 
    P_\cf\cap I_\ab }}} \wf(k_{\ell})\leq \prod_{\underset{\ell\leq i}{\underset{f_\ab(\ell) \geq \theta_*(j)}{\ell \in 
    P_\cf\cap I_\ab }}} \biggl( 1+\frac{\wf(k_{\ell})}{\wb(q_\ell)}\biggr)\bigl(C_i(\sigma)+R_i\bigr),
\]
where we again used the form \eqref{Rcomponents} of $R_i$.

Finally, for $j\in P_\af$ with $\gamma_{i,j}\neq 0$, we similarly have $\theta(j) \leq i$ and
\[
B_{\theta(j)}(\sigma) \leq  C_i(\sigma) + \sum_{\underset{\ell>i}{\underset{f_\ab^{-1}(\ell) \leq \theta(j)}{\ell \in 
    P_\af\cap f_\ab(I_\ab) }}} \wf(k_\ell)
\leq 
\prod_{\underset{\ell>i}{\underset{f_\ab^{-1}(\ell) \leq \theta(j)}{\ell \in 
    P_\af\cap f_\ab(I_\ab) }}} \biggl( 1+\frac{\wf(k_{\ell})}{\wb(q_\ell)}\biggr) \bigl(C_i(\sigma)+ R_i\bigr),
\]
under identification of $q_\ell$ with $q_{f_\ab^{-1}(\ell)}$, for $\ell\in f_\ab(I_\ab)$, when estimating by $R_i$ in the last inequality. See also the estimate of $B_j(\sigma)$ above.

Taken together, and recalling that $\bar{\gamma}_i\leq 1$, we may estimate \eqref{ReducToRegEstim} for $z\in\CC_-$
\begin{align*}
&  \biggl\| \prod_{j \in P_{\cb}}\bigl(H_0-z+ A_j(\theta_*) \bigr)^{\gamma_{i;j}} \prod_{j \in  P_{\cf}}\bigl(H_0-z+ A_{\theta_*(j)}(\theta_*) \bigr)^{\gamma_{i;j}} 
 R_0\bigl(z-C_i(\sigma) -R_i\bigr)\\
& \qquad  \prod_{j \in P_{\ab} }\bigl(H_0-z+B_j(\theta)\bigr)^{\gamma_{i;j}} \prod_{j \in P_{\af} }\bigl(H_0-z+B_{\theta(j)}(\theta)\bigr)^{\gamma_{i;j}}   \biggr\|\\
&\quad \leq 
\prod_{\underset{f_\ab(\ell) > i \geq \ell}{\ell \in 
    P_\cf\cap I_\ab }} \biggl(1+\frac{\wf(k_{\ell})}{\wb(q_\ell)}\biggr)^{\sum_{j\in P_\cb\cup P_\cf}\gamma_{i;j}} \\
  & \qquad \prod_{\underset{f_\ab^{-1}(\ell) \leq i < \ell}{\ell \in 
    P_\af\cap f_\ab(I_\ab) }} \biggl(1+\frac{\wf(k_{\ell})}{\wb(q_\ell)}\biggr)^{\sum_{j\in P_\ab\cup P_\af}\gamma_{i;j}}  \Bigl\| R_0\bigl(z-C_i(\sigma) -R_i\bigr)^{1-\bar{\gamma}_i}\Bigr\|\\
     &\quad \leq \bigg\{ \prod_{\ell\in (P_\cf\cap I_\ab)\cup (P_\af\cap f_\ab(I_\ab))}\biggl(1+\frac{\wf(k_{\ell})}{\wb(q_\ell)}\biggr)^{\bar{\gamma}_{i}}\bigg\}
     \Bigl\| R_0\bigl(z-C_i(\sigma) -R_i\bigr)^{1-\bar{\gamma}_i}\Bigr\|,
\end{align*}
under the identification of $q_l$ with $q_{f_\ab^{-1}(l)}$, for $l\in f_\ab(I_\ab)$, coming from the delta-functions in \eqref{regularitypropertyV2}. Note that
$\sum_{i\in \scrA} \bar{\gamma}_i = \sum_{i\in\scrA,j\in\scrJ} \gamma_{i;j} =  \sum_{j\in\scrJ} \alpha_j \leq n-1$.
Therefore, taking the product over $i\in \scrA$, the lemma follows from \eqref{regularityproperty}.  
\end{proof}

\subsection{Estimates for regular Wick monomials}\label{subsec-estregWM}

Let $\uF^{(n)}=(F_1, \dotsc, F_n) \in (L^2(\RR^d\times \RR^d))^n$. We introduce the notation: 
\begin{align}\label{tildeK}
\nonumber& \tK_{\alpha}\bigl(\uF^{(n)}\bigr) = \\
 &\quad  \biggl( \prod^n_{i=1} \int  \biggl(1+\frac{\wf(k_i)}{ \wb(q_i) }\biggr)^{2 n } \biggl(1+\frac{\wb(q_i)}{ \wf(k_i) }\biggr)^{2\alpha } \frac{|F_i(k_i,q_i)|^2}{\wb(q_i)^{2\alpha}  }  dq_i dk_i\biggr)^\frac12,
\end{align}
where $\alpha$ and 
$\{\beta_{i}\}_{i=1}^n$ are real numbers
with $0\leq \beta_i\leq \alpha$, for all $i\in\llbracket 1,n\rrbracket$.

\begin{Prop}[Estimates of non fully contracted Wick monomials]
\label{RNFCT} Let $n\in \NN$. There exists a constant $c_n$, such that the following holds true:
\begin{enumerate}[label = \textup{(\arabic*)}]
\item\label{item-regbound1} Assume that  $\leftT$ is a left-handed Wick monomial of length $n$ and bounding constant $c_T$. If $n \geq 2$, respectively if $n=1$, then for any $0\leq \delta\leq 1$, respectively  $\frac{1}{2}\leq \delta\leq 1$, for any $z\in\CC_-$ and $\uF^{(n)}\in (L^2(\RR^d\times\RR^d))^n$, we have
\begin{equation}\Bigl\|R_0(z)^{\delta}\leftT\bigl(z,\uF^{(n)}\bigr) \Bigr\| \leq
c_n c_T \tK_{\alpha}\bigl(\uF^{(n)}\bigr),
\end{equation}
with $\alpha = 1-\frac{1-\delta}{n}$.
\item\label{item-regbound2} Assume that,  $\rightT$ is a right-handed Wick monomial of length $n$ and bounding constant $c_T$. If $n \geq 2$, respectively if $n=1$, then for any $0\leq \gamma\leq 1$, respectively  $\frac{1}{2}\leq \gamma\leq 1$, for any $z\in\CC_-$ and $\uF^{(n)}\in (L^2(\RR^d\times\RR^d))^n$, we have
\begin{equation}
\Bigl\|\rightT\bigl(z;\uF^{(n)}\bigr) R_0(z)^{\gamma} \Bigr\| \leq c_n c_T
\tK_{\alpha}\bigl(\uF^{(n)}\bigr),
\end{equation}
with $\alpha = 1-\frac{1-\gamma}{n}$.
\item\label{item-regbound3} Assume that,  $\lrT$ is both a left- and a right-handed Wick monomial of length $n$ and bounding constant $c_T$. Then $n \geq 2$ and for any $\gamma,\delta\geq 0$ with $\gamma+\delta\leq 1$, 
for any $z\in\CC_-$ and $\uF^{(n)}\in (L^2(\RR^d\times\RR^d))^n$, we have
\begin{equation}
\Bigl\|R_0(z)^{\delta}\lrT\bigl(z;\uF^{(n)}\bigr) R_0(z)^{\gamma} \Bigr\| \leq c_n c_T 
K_{ \alpha}\bigl(\uF^{(n)}\bigr),
\end{equation}
with $\alpha=1-\frac{1-\gamma-\delta}{n}$.
\end{enumerate}
\end{Prop}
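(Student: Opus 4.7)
The three claims are parallel: \ref{item-regbound2} follows from \ref{item-regbound1} by passing to adjoints (Lemma~\ref{LemAdjoint}), and \ref{item-regbound3} is the same argument executed on both sides. I focus on \ref{item-regbound1}. The starting point is to view the regular Wick monomial $\leftT$ as an ordered Wick monomial (Definition~\ref{OrderOperator}) with $\sigma = \id$, $\psigma = \emptyset$ and a cover chosen so that $\scrB = \emptyset$, so that the ordered form exactly reproduces the regular form. Lemma~\ref{ReorderingFermionLemma} then rewrites $\leftT$ as a sum of at most $M(n)$ ordered Wick monomials $T^{(j)}$ with the reduced cover $P_\cf = \Jcf \setminus (\Jab \cup \Jcb)$, $P_\af = \Jaf \setminus (\Jab \cup \Jcb)$. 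In this reduced form every fermionic index sharing a bosonic partner has been pushed into $\psigma^{(j)}$, and the corresponding fermion operator appears smeared against $F_i(\cdot, q_i)$, contributing the bounded factor $\|F_i(\cdot, q_i)\|_{L^2(dk_i)}$.

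For each $T^{(j)}$ the plan is to use the $n - 1$ internal resolvents, augmented by the external $R_0(z)^{\delta}$, to dominate the uncontracted creation and annihilation operators via kinetic-energy bounds. I invoke Lemma~\ref{RegOpEstimates} with the equal-distribution choice $\alpha_i = 1 - 1/n$, which satisfies \eqref{Caraalpha}. The lemma pairs each uncontracted $\cb(q_j)^{\#}$, $\ab(q_j)^{\#}$, $\cf(k_j)^{\#}$, $\af(k_j)^{\#}$ with an $\alpha_j$-power of the shifted kinetic operator $(H_0 - z + A_j(\theta_*))$ on its left or $(H_0 - z + B_j(\theta))$ on its right; these are absorbed using the standard estimates $\|\cb(q)(H_0 + c)^{-1/2}\| \leq \wb(q)^{-1/2}$ pulled through the resolvents. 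The external $R_0(z)^{\delta}$ supplies the additional $\delta$ of weight needed to push each $\alpha_j$ up to $\alpha = 1 - (1-\delta)/n$, and Cauchy--Schwarz in the $n$ momentum integrations reassembles the product structure of $\tK_\alpha(\uF^{(n)})$.

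Two weight factors in $\tK_\alpha$ absorb the residual penalties. The $(1 + \wf/\wb)^{2n}$ factor accommodates both the entanglement penalty $\prod_i(1 + \wf(k_i)/\wb(q_i))^{n-1}$ produced by Lemma~\ref{RegOpEstimates} at indices where boson and fermion momenta are identified by a $\kdelta$-contraction, and the extra $\wf(k_i)$ weights produced by the ordered form for $i \in \scrB \setminus \psigma^{(j)}$. The $(1 + \wb/\wf)^{2\alpha}$ factor absorbs the exchange $\wf(k_i)^{-\beta_i} = \wb(q_i)^{-\beta_i}(\wb(q_i)/\wf(k_i))^{\beta_i}$ with $\beta_i \leq \alpha_i \leq \alpha$, arising from the fractional exponents $\{\beta_i\}$ of Definition~\ref{RegularOperator} (with $\beta_i = 0$ on $\psigma^{(j)}$).

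The principal obstacle is the combinatorial bookkeeping of admissible exponents $\{\gamma_{i;j}\}$ in Definition~\ref{def-admexp}: they must respect both the distributional constraints \eqref{ConstraintOnGamma} and the aggregate constraint $\bar\gamma_i \leq 1$ of \eqref{SumOfGammas} while ensuring each momentum variable receives exactly the correct share of resolvent. For $n \geq 2$ the uniform choice $\alpha_i = 1 - 1/n < 1$ leaves enough slack to make this feasible; for $n = 1$ in \ref{item-regbound1} and \ref{item-regbound2} there are no internal resolvents and the external one must dominate the single momentum by itself, forcing $\delta \geq 1/2$ or $\gamma \geq 1/2$. The condition $n \geq 2$ in \ref{item-regbound3} is automatic since a Wick monomial of length $1$ cannot be simultaneously left- and right-handed.
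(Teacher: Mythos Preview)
Your overall architecture is right: rewrite as ordered Wick monomials via Lemma~\ref{ReorderingFermionLemma}, invoke the regularity property through Lemma~\ref{RegOpEstimates}, absorb the annihilation/creation operators with kinetic-energy weights, and finish with Cauchy--Schwarz. But the way you propose to handle the external resolvent $R_0(z)^\delta$ is a genuine gap.

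You take the uniform choice $\alpha_i = 1 - 1/n$ and then say the external $R_0(z)^\delta$ ``supplies the additional $\delta$ of weight needed to push each $\alpha_j$ up to $\alpha = 1 - (1-\delta)/n$''. This does not work mechanically. The external resolvent sits at one end of the operator product; after pull-through it is shifted by the dispersion of the outermost creation operator (for the left-handed case) and can supply decay $\wb(q_{j_0})^{-\delta}$ or $\wf(k_{j_0})^{-\delta}$ only at that single outermost index $j_0$, not $\wb(q_i)^{-\delta/n}$ spread across all $n$ indices. With your uniform choice you end up with decay $\wb^{-(1-1/n)}$ at every $i \neq j_0$, which is strictly less than the required $\wb^{-\alpha}$ whenever $\delta > 0$. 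The paper instead fixes this by choosing $j_0$ explicitly as the extremal index in $\Jcb \cup (\Jcf \setminus \Jab)$ (or the annihilation analogue for \ref{item-regbound2}; see \eqref{ChoiceOf-j}) and uses the \emph{unequal} distribution $\alpha_i = \alpha$ for $i \neq j_0$ and $\alpha_{j_0} = \alpha - \delta$. This still sums to $n-1$, and the missing $\delta$ of decay at $j_0$ is exactly what the external resolvent delivers once the outermost operator has been pulled through it. The handedness hypothesis is what guarantees such a $j_0$ exists and sits at the correct end.

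Two further points. First, you do not say how the extra resolvents $\prod_{i\in\scrB\setminus\psigma} R_0(z - D_i)$ and the accompanying $\wf(k_i)$ weights in the ordered form are controlled; the paper pairs each with $(H_0 - z + \tA_i)$ or $(H_0 - z + \tB_i)$ and proves the comparison inequalities \eqref{AibyDi}--\eqref{BibyDi}, producing another factor of $\prod W(k_i,q_i)$ that must be tracked into the $(1+\wf/\wb)^{2n}$ weight of $\tK_\alpha$. The smeared-fermion contributions at indices in $\psigma$ are not simply bounded by $\|F_i(\cdot,q_i)\|$ either; they sit inside conjugations by powers of $H_0 - z$ and are handled via Lemma~\ref{RegChainFermionOp}. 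Second, for \ref{item-regbound3} your ``same argument executed on both sides'' would require two special indices simultaneously; the paper avoids this by proving \ref{item-regbound1} and \ref{item-regbound2} with the combined exponent $\delta + \gamma$ and then interpolating via Hadamard's Three-line Theorem.
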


\begin{proof} It suffices to prove the estimates for $z\in\CC_-^*$. The estimates will then extend to $z=0$, cf.~Remark~\ref{rem-Wick}. Hence, in the following we consider only $z\in\CC_-^*$.
Let us start with item \ref{item-regbound2}. A right-handed Wick monomial $\rightT$ is a regular Wick monomial, and therefore an ordered Wick monomial of the form $T_{(\Jcf,\Jaf)}$. Here we use $\sigma=\mathrm{id}$ and the cover $P_\cf = J_\cf$, $P_\af=J_\af$ and hence $P_\cb = J_\cb\setminus (J_\cf\cup J_\af)$ and $P_\ab = J_\ab\setminus (J_\cf\cup J_\af)$. Lemma~\ref{ReorderingFermionLemma} implies that there exist $M(n) \in \NN$ and a collection of ordered Wick monomials $\{T^{(i)}_{(\Jcf\setminus ( \Jcb \cup \Jab),\Jaf\setminus ( \Jcb \cup \Jab ))}\}_{i=1}^N$ with $N\leq M(n)$, such that
\begin{equation}\label{eq-ooT-expansion}
\rightT = \sum^N_{i = 1}T^{(i)}_{(\Jcf\setminus ( \Jcb \cup \Jab ),\Jaf\setminus ( \Jcb \cup \Jab ))}.
\end{equation}
The sets $\scrA, \Jcb, \Jab, \Jaf, \Jcf, I_\af, I_\ab$ and the functions $f_\ab,f_\af$ and $\scrL$ are identical for all the $T^{(i)}_{(\Jcf\setminus ( \Jcb \cup \Jab ),\Jaf\setminus ( \Jcb \cup \Jab ))}$'s. What depends on $i$ are the admissible maps $\sigma^{(i)}$ and the total orders $\preceq^{(i)}$ and $\preceq_*^{(i)}$.

The formula \eqref{eq-ooT-expansion} implies the following cover that we will use in the rest of the proof:
\begin{equation}\label{FinalCover5}
P_\cf = J_\cf\setminus (J_\cb\cup J_\ab), \quad P_\af = J_\af \setminus (J_\cb\cup J_\ab), \quad P_\cb = J_\cb \quad \textup{and}\quad P_\ab = J_\ab.
\end{equation}
With this cover, we have 
\[
\scrB = (J_\cf\cup J_\af)\setminus (P_\cf\cup P_\af) = (J_\cf\cup J_\af)\cap (J_\cb\cup J_\ab).
\]
We will now fix an $i\in\{1,2,\dotsc,N\}$ and estimate the contribution from $T^{(i)}_{(\Jcf\setminus ( \Jcb \cup \Jab ),\Jaf\setminus ( \Jcb \cup \Jab ))}$.
To lighten the notation, we drop the superscript $(i)$ from the notation for the admissible map $\sigma^{(i)}$ and the total orders $\preceq^{(i)}$ and $\preceq_*^{(i)}$. The operator $T^{(i)}_{(P_\cf,P_\af)}(z;F_1,\dots F_n)$, is of the form:
\begin{equation}
\label{ToEstimate}
\begin{aligned}
  &   \int \prod_{i \in \llbracket 1, n \rrbracket \setminus \psigma} F_i(k_i, q_i) \scrL\, \Delta_\ab\Delta_\af \prod_{i \in \scrB\setminus \psigma}\wf(k_i)  \prod_{i \in \Jcf \cap \psigma} \cf\bigl(F_i(., q_i)\bigr)  \\
 & \qquad \prod_{j \in \Jcb} \cb(q_j) \prod_{j \in \Jcf\setminus \psigma} \cf(k_j)  \prod_{i \in \scrA} R_0\bigl(z-C_{i}(\sigma) - R_i\bigr)  \\
  &\qquad  \prod_{i \in \scrB\setminus \psigma} R_0\bigl(z-D_i(\sigma)\bigr)   \prod_{j \in \Jaf \setminus \psigma} \af(k_j) \prod_{j \in \Jab}\ab(q_j)\\
  & \qquad  \prod_{i \in \Jaf \cap \psigma} \af\bigl(\overline{F_i(., q_i)}\bigr)\prod_{i \in \llbracket 1, n \rrbracket } dq_i\prod_{i \in \llbracket 1, n \rrbracket \setminus \psigma} dk_i,
\end{aligned}
\end{equation}
where we have recycled the abbreviations $\scrL$, $\Delta_\ab$ and $\Delta_\af$ from \eqref{abbrev-LDelta2} as well as introduced the abbreviation $D_i(\sigma) = D_i^{(\preceq,\preceq_*)}(\sigma)$ that we will be employing throughout this proof.

The strategy is now to reorder the creation and annihilation terms in \eqref{ToEstimate} so that it can be estimated by using the pull-through formula and the regularity conditions arising from the definitions of ordered and regular Wick monomials.  
 
 First, since $\rightT$ was assumed to be a right-handed Wick monomial (cf. Definition~\ref{def-handedWick}~\ref{item-RHWick}), we know that at least one of the sets $J_\ab$ and $P_\af\subseteq J_\af\setminus\psigma$ is not empty. Let $j_1 = \max(\Jab)$ and $j_2= \max_{\preceq}\bigl(\Jaf\setminus \psigma\bigr) $. We define
 \begin{equation}\label{ChoiceOf-j}
     j_0 = \begin{cases}
         j_1 & \textup{ if } \theta(j_2) \leq j_1\\
         j_2 &  \textup{ if } j_1 < \theta(j_2)
     \end{cases}
  \end{equation}
  with the convention that $j_0 = j_1$ if $J_\af\setminus \psigma=\emptyset$ and $j_0=j_2$ if $J_\ab=\emptyset$.

 Recall that $\scrJ = J_\ab\cup J_\cb\cup J_\af\cup J_\cf$.  Let us now define  $\{\alpha_i\}_{i=1}^n$ to be: 
\begin{equation}\label{alphadeltainbound}
\begin{aligned}
\forall i \in \llbracket 1, n \rrbracket\backslash \{ j_0 \}:\quad &\alpha_i = 1 - \frac{1-\gamma}{n} = \alpha\\
&\alpha_{ j_0 } = 1-\frac{1}{n}-\frac{n-1}{n}\gamma = \alpha-\gamma.
\end{aligned}
\end{equation}
Note that for any $i \in \llbracket 1, n \rrbracket\setminus \{ j_0 \}, \alpha_i \geq \frac12$. 
It is also useful to introduce sequences $\{\delta^{(\af)}_i\}_{i=1}^n$ and $\{\delta^{(\ab)}_i\}_{i=1}^n$ as follows
\[
\delta^{(\af)}_i = 
\begin{cases}
  \alpha_i,  & \text{if~} i \in P_\af\cup P_\cf \\
    1, & \text{if~} i \in \scrB\setminus \psigma\\
    0, & \text{otherwise}
\end{cases}
\qquad
\textup{and} 
\qquad
\delta^{(\ab)}_i = 
\begin{cases}
  \alpha_i,  & \text{if~} i \in P_\cb\cup P_\ab \\
    0, & \text{otherwise.}
\end{cases}
\]
We introduce the following notation for $i\in\llbracket 1,n\rrbracket$:
\[
\ab_{i}= \begin{cases}
    \ab(q_i), & \textup{if~} i \in \Jab  \\
    1,  & \text{otherwise} 
\end{cases}
\qquad
\textup{and} 
\qquad
\ab_i^*= \begin{cases}
    \cb(q_i), & \textup{if~} i \in \Jcb  \\
    1, & \text{otherwise.} 
\end{cases}
\]
Finally, we define variants of $A_i(\theta_*)$ and $B_i(\theta)$ from \eqref{ABis}. In this proof, we drop the argument from $A_i = A_i(\theta_*)$ and $B_i = B_i(\theta)$ for brevity,
and introduce the variants $\tA_i$ and $\tB_i$ with this more compact notation:
\begin{equation}\label{ABtildes}
\begin{aligned}
    &\forall i \in J_\cf\setminus\psigma :& &\tA_i  = \sum_{\underset{j \in\Jcb}{j> \theta_*(i)}}\wb(q_j) + \sum_{\underset{j \in \Jcf\setminus \psigma}{\underset{\theta_*(i)=\theta_*(j)}{i \preceq j}}} \wf(k_{j})  + \sum_{\underset{j \in \Jcf\setminus \psigma}{\theta_*(i)< \theta_*(j)}} \wf(k_{j}),\\
     &\forall i \in J_\af\setminus \psigma:& & \tB_i  = \sum_{\underset{j\in \Jab}{j<\theta(i)}}\wb(q_j) +\sum_{\underset{j\in \Jaf\setminus \psigma}{\underset{\theta(j)= \theta(i)}{j \preceq i}}}\wf(k_{j}) +\sum_{\underset{j\in \Jaf\setminus \psigma}{\theta(j) < \theta(i)}}\wf(k_{j}).
\end{aligned}
\end{equation}
Recalling \eqref{ABis}, we note that
\begin{equation}\label{FromABtilde-to-AB}
  \forall i \in J_\cf\setminus\psigma :\quad  \tA_i\leq A_{\theta_*(i)} \quad \textup{and}\quad  \forall i\in J_\af\setminus \psigma:\quad   \tB_i\leq B_{\theta(i)}.
\end{equation}

Using the abbreviation
\begin{equation}\label{W-function}
    W(k,q) = 1+\frac{\wf(k)}{\wb(q)},
\end{equation}
we may now rewrite \eqref{ToEstimate} in the following way
\begin{align}
\label{ToEstimateStepTwo}
\nonumber  &   \int \prod_{i \in \llbracket 1, n \rrbracket \setminus \psigma} F_i(k_i, q_i)  \prod_{ j\in ((J_\ab\cap J_\af)\cup (J_\cb\cap J_\cf))\cap \psigma} \bigl\|W(\cdot,q_j)^{n} F_j(\cdot,q_j)\bigr\| \\
\nonumber   & \qquad \scrL\,\Delta_\ab\Delta_\af \prod_{i \in P_\ab\cup P_\cb} \wb(q_i)^{-\alpha_i} \prod_{i \in P_\af\cup P_\cf} \wf(k_i)^{-\alpha_i}\prod_{i \in (\Jcf \cap \psigma)\setminus \Jcb}  \cf\bigl(F_i(., q_i)\bigr) \\
\nonumber  & \qquad \prod_{i=1}^n\biggl\{ R_0(z)^{\delta^{(\ab)}_i}\ab_i^* \wb(q_i)^{\delta^{(\ab)}_i} \bB^*_i {\prod_{\underset{\theta_*(j)=i}{j \in \Jcf\setminus \psigma}}}^{\hspace{-0.3cm}(\preceq_*)}  \biggl(R_0(z)^{\delta^{(\af)}_j} \cf(k_{j})\wf(q_{j})^{\delta^{(\af)}_{j}} \biggr)\biggr\} \\
\nonumber  &\qquad \prod_{i \in P_\cf} \bigl( H_0-z+ \tA_{i} \bigr)^{\alpha_i} \prod_{i\in P_\cb} \bigl(H_0-z+ A_i  \bigr)^{\alpha_i}  \\
\nonumber  &  \qquad \prod_{i \in \scrA} R_0\bigl(z-C_{i}(\sigma) - R_i\bigr) \prod_{i \in P_\ab} \bigl(H_0-z + B_i\bigr)^{\alpha_i}\prod_{i \in P_\af}\bigl(H_0-z+\tB_{i}\bigr)^{\alpha_i} \\
\nonumber  & \qquad \prod_{i \in \Jcf\cap (\scrB\setminus \psigma)} \bigl( H_0-z+\tA_{i} \bigr) \prod_{i \in \scrB\setminus \psigma} R_0\bigl(z-D_i(\sigma)\bigr)  \prod_{i \in \Jaf\cap (\scrB\setminus \psigma)}  \bigl(H_0-z+\tB_{i}\bigr)\\
\nonumber   & \qquad \prod_{i=1}^n\biggl\{{\prod_{\underset{\theta(j)=i}{j \in \Jaf \setminus \psigma}}}^{\hspace{-0.2cm}(\preceq)}\biggl( \wf(k_{j})^{\delta^{(\af)}_{j}}\af(k_{j})R_0(z)^{\delta^{(\af)}_{j}}\biggr)\bB_{i} \,\wb(q_{i})^{\delta^{(\ab)}_i} \ab_i R_0(z)^{\delta^{(\ab)}_i}\biggr\}\\
  &\qquad \prod_{i \in (\Jaf \cap \psigma)\setminus J\ab} \af\bigl(\overline{F_i(., q_i)}\bigr)  \prod_{i\in \llbracket 1,n\rrbracket \setminus \psigma} dk_i 
\prod_{j\in \llbracket 1,n\rrbracket} dq_j,
\end{align}
where the following notation for operator-valued functions $\bB_i$ and $\bB_i^*$ have been used: if $i  \in \Jcf \cap \psigma\cap \Jcb$ then
\begin{equation*}
\begin{aligned}
  \bB_i^* & :=  \bB^*_i\bigl(\{q_j\}_{\underset{j\leq i}{j\in J_\cb}},\{k_j\}_{\underset{\theta_*(j)<i}{j\in J_\cf\setminus\psigma}}\bigr) \\
 & =  \frac1{\bigl\|W(\cdot,q_i)^{n} F_i(\cdot,q_i)\bigr\|} \prod_{\underset{j\leq i}{j \in \Jcb}} \bigl(H_0-z+A_{j;i} \bigr)^{\alpha_j}  \prod_{\underset{\theta_*(j) < i}{j\in \Jcf\setminus \psigma}}
 \bigl( H_0-z+ \tA_{j;i}\bigr)^{\delta^{(\af)}_j}
  \\
&\quad  \cf\bigl(F_i(.,q_i)\bigr) \prod_{\underset{\theta_*(j) < i}{j\in \Jcf\setminus \psigma}}
 \bigl( H_0-z+ \tA_{j;i}\bigr)^{-\delta^{(\af)}_l}   \prod_{\underset{j\leq i}{j\in \Jcb}} \bigl(H_0-z+A_{j;i}\bigr)^{-\alpha_j}
\end{aligned}
  \end{equation*}
 and otherwise, for $i\not\in \Jcf \cap \psigma\cap \Jcb$, we just set $\bB_i^* = 1$, the identity operator. Here we make use of the following two abbreviations for $i,j\in\llbracket 1,n\rrbracket$. If $j\leq i$, we set
 \begin{equation*}
  A_{j;i}   =  \sum_{\underset{j\leq \ell\leq i }{\ell\in \Jcb }} \wb(q_\ell)+\sum_{\underset{j\leq \theta_*(q) < i}{q \in 
    \Jcf\setminus \psigma }} \wf(k_{q})
    \end{equation*}
    and, if $j\in J_\cf\setminus\psigma$ with $\theta_*(j)< i$, we set
    \begin{equation*}
\tA_{j;i}  =  \sum_{\underset{\theta_*(j)< \ell\leq i }{\ell\in \Jcb }} \wb(q_\ell)+\sum_{\underset{\theta_*(q) = \theta_*(j), j\preceq_* q}{q \in 
    \Jcf\setminus \psigma }} \wf(k_{q})+\sum_{\underset{\theta_*(j) < \theta_*(q) < i}{q \in 
    \Jcf\setminus \psigma }} \wf(k_{q}).
 \end{equation*}

 In the same way, if $i \in \Jaf \cap \psigma\cap \Jab $ then
\begin{equation*}
\begin{aligned}
\bB_i :=&  \bB_i\bigl(\{q_j\}_{\underset{j\geq i}{j\in J_\ab}},\{k_j\}_{\underset{\theta_*(j)>i}{j\in J_\af\setminus\psigma}}\bigr) \\
  = & \frac1{\bigl\|W(\cdot,q_i)^{n} F_i(\cdot,q_i)\bigr\|} \prod_{\underset{j\geq i}{j \in \Jab}} \bigl(H_0-z + B_{j;i}\bigr)^{-\alpha_j} \prod_{\underset{\theta(j) >i }{j\in \Jaf\setminus \psigma}} \bigl(H_0-z+  \tB_{j;i}\bigr)^{-\delta^{(\af)}_j} \\
& \af\bigl(\overline{F_i(.,q_i)}\bigr)\prod_{\underset{\theta(j) >i }{j\in \Jaf\setminus \psigma}} \bigl(H_0-z  + \tB_{j;i}\bigr)^{\delta^{(\af)}_j} \prod_{\underset{j\geq i}{j \in \Jab}} \bigl(H_0-z  + B_{j;i}\bigr)^{\alpha_j}
\end{aligned}
\end{equation*}
and otherwise, for $i \not\in \Jaf \cap \psigma\cap \Jab $, we use the convention $\bB_i=1$. Similar to above, we make use of two abbreviations for  $i,j\in\llbracket 1,n\rrbracket$. If $j\geq i$, we set
 \begin{equation*}
B_{j;i}  =   \sum_{\underset{i \leq  \ell\leq j}{\ell\in \Jab }} \wb(q_\ell) +\sum_{\underset{i< \theta(q) \leq j  }{q \in \Jaf\setminus \psigma }} \wf(k_q)
\end{equation*}
and, if $j\in J_\af\setminus\psigma$ with $\theta(j)> i$, we set
\begin{equation*}
\tB_{j;i} =   \sum_{\underset{i \leq \ell <  \theta(j)}{\ell\in \Jab }} \wb(q_\ell) 
+\sum_{\underset{\theta(q) = \theta(j), q\preceq j}{q \in 
    \Jaf\setminus \psigma }} \wf(k_{q})
+\sum_{\underset{i< \theta(q) < \theta(j)  }{q \in \Jaf\setminus \psigma }} \wf(k_q).
 \end{equation*}

A reader worried about diving by zero in the definitions of $\bB_i$ and $\bB_i^*$, may add $\varepsilon \cdot\exp(-|q_j|^2)$ to $\|\tF_j(\cdot,q_j)\|$, and at the end of the estimates to follow take the limit $\varepsilon\to 0$ to recover the same conclusion.

Due to \eqref{FromABtilde-to-AB}, we find that
\begin{align*}
   & \Biggl\| \prod_{i\in P_\cb} \bigl(H_0-z+A_i \bigr)^{\alpha_i}  \prod_{i \in P_\cf} \bigl( H_0-z+ \tA_{i}\bigr)^{\alpha_i} \prod_{i \in \scrA} R_0\bigl(z-C_{i}(\sigma) - R_i\bigr)  \\
 & \qquad \prod_{i \in P_\af} \bigl(H_0-z +\tB_{i}\bigr)^{\alpha_i} \prod_{i \in P_\ab}\bigl(H_0-z+B_i\bigr)^{\alpha_i}\Biggr\|\\ 
 & \quad \leq  \biggl\| \prod_{i \in P_\cb }\bigl(H_0-z+A_i\bigr)^{\alpha_i} \prod_{i \in P_\cf }\bigl(H_0-z+A_{\theta_*(i)}\bigr)^{\alpha_i}
\\
& \qquad \prod_{i\in \scrA} R_0\bigl(z-C_{i}(\sigma)-R_i\bigr) \prod_{i \in P_\af }\bigl(H_0-z+B_{\theta(i)}\bigr)^{\alpha_i} \prod_{i \in P_\ab }\bigl(H_0-z+B_{i}\bigr)^{\alpha_i} \biggr\|. 
\end{align*}
We can therefore apply Lemma~\ref{RegOpEstimates} to conclude that there exists a family of real numbers  
$\{\beta_{i}\}_{i=1}^n$ with $0\leq \beta_i\leq \alpha_i$ and $\beta_i=0$ for $i\in\psigma$. Here the $\alpha_i$'s were introduced in \eqref{alphadeltainbound}. The estimate  \eqref{regularitypropertyV2} then yields
\begin{align}\label{EstimFirstStep}
     \nonumber&   \bigl| \scrL\bigr| \Delta_\ab\Delta_\af \prod_{i \in P_\ab\cup P_\cb} \wb(q_i)^{-\alpha_i} \prod_{i \in P_\af\cup P_\cf} \wf(k_i)^{-\alpha_i} \\
 \nonumber &  \qquad  \Biggl\| \prod_{i\in P_\cb} \bigl(H_0-z+A_i \bigr)^{\alpha_i} \prod_{i \in P_\cf} \bigl( H_0-z+\tA_{i}\bigr)^{\alpha_i} \\
 \nonumber &  \qquad\quad    \prod_{i \in \scrA} R_0\bigl(z-C_{i}(\sigma) - R_i\bigr)  \prod_{i \in P_\ab} \bigl(H_0-z + B_i\bigr)^{\alpha_i} \prod_{i \in P_\af} \bigl(H_0-z+\tB_{i}\bigr)^{\alpha_i}\Biggr\|\\
 & \quad \leq c_T \biggl(\prod_{i\in P_{\cf}\cup P_{\af}}W(k_i,q_i)^{n-1}\biggr) \frac{ \Delta_\ab\Delta_\af}{\prod^n_{i=1}[\wb(q_i)]^{\alpha_i - \beta_{i}}[\wf(k_i)]^{\beta_{i}}}.
\end{align}
Here $c_T$ is the bounding constant for the right-handed Wick monomial $\rightT$ that we started out with.

 We now turn to the estimate of 
 \begin{align*}
    & \prod_{i \in \Jcf\cap (\scrB\setminus \psigma)} \bigl( H_0-z+ \tA_{i}\bigr) \prod_{i \in \scrB\setminus \psigma} R_0\bigl(z-D_i(\sigma)\bigr)  \prod_{i \in \Jaf\cap (\scrB\setminus \psigma)}  \bigl(H_0-z+\tB_{i}\bigr).
\end{align*}
Recall from \eqref{Dis} that $D_i(\sigma)=D_i^{(\preceq,\preceq_*)}(\sigma)$ is given by the expression
\begin{equation}\label{Di-in-proof}
      \sum_{\underset{\ell > \tilde{\sigma}(i)}{\ell\in  \Jcb }} \wb(q_\ell) + \sum_{\underset{i\preceq_* \ell}{\ell\in \Jcf\setminus \psigma}}\wf(k_{\ell}) +  \sum_{\underset{\ell \leq \tilde{\sigma}(i)}{\ell\in \Jab  }} \wb(q_\ell)
     +\sum_{\underset{\ell \preceq i}{\ell\in \Jaf\setminus \psigma }}\wf(k_\ell) +R_{\tilde{\sigma}(i)}.
\end{equation}
Consider first terms with $i \in \Jcf\cap (\scrB\setminus \psigma)$. 
Recall from \eqref{ABtildes} that
\begin{equation}\label{tA-in-proof}
\tA_{i} = \sum_{\underset{j> \theta_*(i)}{j \in\Jcb}}\wb(q_j) + \sum_{\underset{\theta_*(i)= \theta_*(j)}{\underset{i \preceq_* j}{j \in \Jcf\setminus \psigma}}} \wf(k_{j})  + \sum_{\underset{\theta_*(i)< \theta_*(j)}{j \in \Jcf\setminus \psigma}} \wf(k_{j}).
\end{equation}
Since, $i\in\scrB$, we have $i\not\in P_\cf$ and therefore, cf.  Definition~\ref{FunctionTheta}, we may conclude that  $\theta_*(i) = \sigma(i)=\tilde{\sigma}(i)$. Hence the first term on the right-hand sides of \eqref{Di-in-proof} and \eqref{tA-in-proof} are identical. 
As for the sum of the last two terms in \eqref{tA-in-proof}, we compute
\begin{equation}\label{Di-compu}
\sum_{\underset{\theta_*(i)= \theta_*(j)}{\underset{i \preceq_* j}{j \in \Jcf\setminus\psigma}}} \wf(k_{j})  + \sum_{\underset{\theta_*(i)< \theta_*(j)}{j \in \Jcf\setminus \psigma}} \wf(k_{j})  = \sum_{\underset{\sigma(i)\leq \theta_*(j)}{\underset{i\preceq_* j}{j\in \Jcf\setminus \psigma}}}\wf(k_{j}) + \sum_{\underset{\sigma(i) < \theta_*(j)}{\underset{j\prec_* i}{j\in \Jcf\setminus \psigma }}}\wf(k_{j}).
\end{equation}
 The first sum on the right-hand side of \eqref{Di-compu} above is bounded by the second sum in the expression \eqref{Di-in-proof} for $D_i^{(\preceq,\preceq_*)}(\sigma)$. As for the second sum on the right hand side of \eqref{Di-compu}, let us note that for $j\in J_\cf\setminus \psigma$ with $j\prec_* i$ we have $\tilde{\sigma}(j) \leq \tilde{\sigma}(i)$ and since $\tilde{\sigma}(j)= \sigma(j)$, $\tilde{\sigma}(i)= \sigma(i)$ and $\theta_*(j)>\sigma(i)$, we therefore have $\sigma(j)\leq\sigma(i) < \theta_*(j)$. Here we used the definition \eqref{tildesigma} of $\tilde{\sigma}$ and the property \eqref{orderprop} of the total order $\preceq_*$. Hence, $\theta_*(j) \neq \sigma(j)$ and since $P_\cf \cap \Jab = \emptyset$, we therefore have $j \in P_\cf\cap I_\ab$, cf.~Definition~\ref{FunctionTheta}. By Definition~\ref{Admissiblemaps} it follows that $\sigma(j)=j$. By definition $\theta_*(j) = f_{\ab}(j)$, therefore  
 \[
 j = \sigma(j) \leq \sigma(i) < \theta_*(j) \leq f_{\ab}(j).
 \]
 Moreover, 
\begin{align*}
    R_{\sigma(i)} & =   \sum_{\underset{ j \leq \sigma(i) <  f_{\ab}(j)}{j\in I_\ab}}\wb(q_j) + \sum_{\underset{ j\leq \sigma(i) <  f_{\af}(j)}{j\in I_\af}}\wf(k_j).
\end{align*}
As a conclusion 
\begin{align*}
\sum_{\underset{\sigma(i) < \theta_*(j)}{\underset{j\prec_* i}{j\in \Jcf\setminus \psigma }}}\wf(k_{j}) & \leq 
\sum_{{\underset{j \leq \sigma(i) < f_a(j)}{j\in P_{\cf}\cap I_\ab}}} W(k_j,q_j)\wb(q_{j})\leq
\biggl(\prod_{{\underset{j \leq \sigma(i) < f_a(j)}{j\in P_{\cf}\cap I_\ab}}}
W(k_j,q_j)\biggr)
R_{\sigma(i)}
\end{align*}
 and therefore: 
 \begin{equation}\label{AibyDi}
     \tA_i\leq \biggl(\prod_{{\underset{j \leq \sigma(i) < f_a(j)}{j\in P_{\cf}\cap I_\ab}}} W(k_j,q_j)\biggr)
     D_i(\sigma).
 \end{equation}
 Similarly, if $i \in \Jaf\cap (\scrB \setminus \psigma)$, then 
 \begin{equation}\label{tB-in-proof}
 \tB_i = \sum_{\underset{j<\theta(i)}{j\in \Jab}}\wb(q_j) +\sum_{\underset{\theta(j)\leq \theta(i)}{\underset{j \preceq i}{j\in \Jaf\setminus \psigma}}}\wf(k_{j}) +\sum_{\underset{\theta(j) < \theta(i)}{\underset{i \prec j}{j\in \Jaf\setminus \psigma}}}\wf(k_{j}).
 \end{equation}
 First, $\theta(i) = \sigma(i)=\tilde{\sigma}(i)+1$ so the first term in \eqref{tB-in-proof} is identical to the third term in \eqref{Di-in-proof}. In addition, the second term in \eqref{tB-in-proof} can be bounded by the fourth term in \eqref{Di-in-proof}.   As for the third and last term in \eqref{tB-in-proof}, $i\prec j$ implies that $\tilde{\sigma}(i) \leq \tilde{\sigma}(j)$ and since $i,j \in \Jaf$ we have $\sigma(i) \leq \sigma(j)$ (recall from \eqref{tildesigma} the definition of $\tilde{\sigma}$). Consequently, $\theta(j) < \sigma(i)\leq \sigma(j)$. Hence, $j\in P_\af\cap f_\ab(I_\ab)$ implying both that $\theta(j)=f_\ab^{-1}(j)$ and $\sigma(j)=j$. Therefore, $f_{\ab}^{-1}(j)< \sigma(i) \leq j$ which implies that $f_{\ab}^{-1}(j)\leq \tilde{\sigma}(i) < j$. Consequently, the last term in \eqref{tB-in-proof} can be estimated as follows 
 \[
\sum_{\underset{\theta(j) < \theta(i)}{\underset{i \prec j}{j\in \Jaf\setminus \psigma}}}\wf(k_{j})\leq
 \sum_{\underset{f^{-1}_{\ab}(j) \leq \tilde{\sigma}(i) < j}{j\in f_\ab(I_\ab)}}\wf(k_j).
 \]
 As a conclusion
 \begin{equation}\label{BibyDi}
     \tB_i\leq  \biggl(\prod_{\underset{ f_\ab^{-1}(\ell) \leq \tilde{\sigma}(i) \leq \ell}{\ell \in  P_{\af}\cap f_\ab(I_\ab)}} W(k_j,q_j)\biggr)
     D_i(\sigma),
 \end{equation}
with the convention that $q_\ell = q_{f_\ab^{-1}(\ell)}$ for $\ell\in f_\ab(I_\ab)$, due to the presence of the delta distributions. 
 
 Combining \eqref{AibyDi} and \eqref{BibyDi}, we arrive at
 \begin{align}\label{EstimOnDis}
  \nonumber & \biggl\|\prod_{i \in \Jcf\cap (\scrB\setminus \psigma)} \bigl( H_0-z+ \tA_{i}\bigr) \prod_{i \in \scrB\setminus \psigma} R_0\bigl(z-D_i(\sigma)\bigr)  \prod_{i \in \Jaf\cap (\scrB\setminus \psigma)}  \bigl(H_0-z+\tB_{i}\bigr)\biggr\|\\
   & \qquad
  \leq \prod_{j\in P_\cf\cup P_\af} W(k_j,q_j).
  \end{align}
Combining \eqref{EstimFirstStep} and \eqref{EstimOnDis},
we obtain:
\begin{align}\label{EstimSecondStep}
  \nonumber   &   \bigl| \scrL\bigr| \Delta_\ab\Delta_\af   \prod_{i   \in P_\ab\cup P_\cb} \wb(q_i)^{-\alpha_i} \prod_{i \in P_\af\cup P_\cf} \wf(k_i)^{-\alpha_i}  \Biggl\| \prod_{i\in P_\cb} \bigl(H_0-z+A_i \bigr)^{\alpha_i}\\
  \nonumber     & \qquad\quad  \prod_{i \in P_\cf} \bigl( H_0-z+ \tA_{i}\bigr)^{\alpha_i}   \prod_{i \in \scrA} R_0\bigl(z-C_{i}(\sigma) - R_i\bigr) \\
  \nonumber &\qquad \quad \prod_{i \in P_\ab} \bigl(H_0-z + B_i\bigr)^{\alpha_i} \prod_{i \in P_\af}\bigl(H_0-z+\tB_{i}\bigr)^{\alpha_i}\Biggr\|\\
  \nonumber &\qquad \Biggl\| \prod_{i \in \Jcf\cap \scrB\setminus\psigma} \bigl( H_0-z+ \tA_{i}\bigr) \prod_{i \in \scrB\setminus\psigma} R_0\bigl(z-D_i(\sigma)\bigr) \prod_{i \in \Jaf\cap \scrB\setminus \psigma}  \bigl(H_0-z+\tB_{i}\bigr)\Biggr\|\\
 & \quad \leq c_T \biggl(\prod_{i \in P_\cf\cup P_\af} W(k_i,q_i)^{n}\biggr) \frac{ \Delta_\ab\Delta_\af}{\prod^n_{i=1}[\wb(q_i)]^{\alpha_i-\beta_{i}}[\wf(k_i)]^{\beta_{i}}}.
\end{align}
In light of the estimate \eqref{EstimSecondStep}, we introduce for all $i\in\llbracket 1,n\rrbracket$ the modified $F_i$:
\[
\tF_i(k_i,q_i) = \frac{W(k_i,q_i)^{n \kdelta_{i\in P_\af\cup P_\cf}+n \kdelta_{i\in ((\Jab\cap \Jaf)\cup (\Jcb\cap \Jcf))\cap \psigma}} \ F_i(k_i,q_i)}{[\wb(q_i)]^{\alpha-\beta_{i}}[\wf(k_i)]^{\beta_{i}}},
\]
where the weight $W$ was defined in \eqref{W-function}.
Recall, from Definition~\ref{RegularOperator} and Lemma~\ref{RegOpEstimates} that $\beta_{i}  = 0$ if $i \in \psigma$. This means that for $i\in\psigma$, the $\tF_i$'s only differ from the $F_i$'s by the function $\wb(q_i)^{-\alpha}$ of the boson momentum $q_i$. Hence, for $i\in\psigma$, we have $\wb(q_i)^{-\alpha}\af(F_i(\cdot,q_i)) = \af(\tF_i(\cdot,q_i))$. 

From now one we will assume that $j_0 \in \Jab$. If $j_0\not\in J_\ab$, then $j_0\in J_\af\setminus\psigma$ and we will comment on how to deal with this case along the way.
Recalling from \eqref{alphadeltainbound} that $\alpha_i = \alpha$ for $i\neq j_0$ and $\alpha_{j_0} = \alpha -\gamma$, we may now use \eqref{EstimSecondStep} to estimate
\begin{equation}
\label{ToEstimateRHS}
\begin{aligned}
&\Bigl|\Bigl\langle \phi \Big| T^{(k)}_{(P_\cf,P_\af)}\bigl(z;\uF^{(n)}\bigr) R_0(z)^{\gamma}\psi \Bigr\rangle\Bigr|\\
&\quad \leq c_T \int \Delta_\ab\Delta_\af \prod_{i \in \llbracket 1, n \rrbracket \setminus\psigma} 
\bigl| \tF_i(k_i, q_i) \bigr| \prod_{j\in ((J_\ab\cap J_\af)\cup (J_\cb\cap J_\cf))\cap \psigma} \bigl\|\tF_j(\cdot,q_j)\bigr\|  \\
& \qquad \Biggl\| \prod^n_{i=1}  \Biggl[  {\prod_{\underset{\theta_*(j)=n+1-i}{j \in \Jcf\setminus \psigma}}}^{\hspace{-0.5cm}(\succeq_*)} \Biggl(\wf(k_{j})^{\delta^{(\af)}_{j}}\af(k_{j})  R_0(\overline{z})^{\delta^{(\af)}_j} \Biggr) \Bigl(\bB^*_{n+1-i}\Bigr)^*  \\
&\qquad \quad
 \wb(q_{n+1-i})^{\delta^{(\ab)}_{n+1-i}} (\ab_{n+1-i}^*)^* R_0(\overline{z})^{\delta^{(\ab)}_{n+1-i}}\Biggr]\prod_{i \in (\Jcf \cap \psigma)\setminus \Jcb} \af\bigl(\tF_i(., q_i)\bigr)\phi \Biggr\|
\\
    & \qquad  \Biggl\| \prod^n_{i=1}\Biggl[ {\prod_{\underset{\theta(j)=i}{j \in \Jaf \setminus \psigma}}}^{\hspace{-0.2cm}(\preceq)} \Biggl(\wf(k_{j})^{\delta^{(\af)}_{j}}\af(k_{j})R_0(z)^{\delta^{(\af)}_{j}}\Biggr)\bB_{i}\, \wb(q_{i})^{\delta^{(\ab)}_i} \ab_{i} R_0(z)^{\delta^{(\ab)}_i}\Biggr]\\
    & \qquad \wb(q_{j_0})^{\gamma} \quad \prod_{i \in (\Jaf \cap \psigma)\setminus J_\ab} \af\bigl(\overline{\tF_i(., q_i)}\bigr) R_0(z)^{\gamma}\psi\Biggr\|\prod_{i\in \llbracket 1,n\rrbracket \setminus \psigma} dk_i 
\prod_{j\in \llbracket 1,n\rrbracket} dq_j  \\
    & \quad = c_T\int \Delta_\ab\Delta_\af  \prod_{i \in \llbracket 1, n \rrbracket \setminus\psigma} 
\bigl| \tF_i(k_i, q_i) \bigr|  \prod_{j\in ((J_\ab\cap J_\af)\cup (J_\cb\cap J_\cf))\cap \psigma} \bigl\|\tF_j(\cdot,q_j)\bigr\| \\
& \qquad \Fl\bigl(\{k_i\}_{i\in J_\cf\setminus\psigma}, \{q_j\}_{j\in J_\cb\cup (J_\cf \cap \psigma\cap J_\ab)}\bigr) \Fr\bigl(\{k_i\}_{J_\af\setminus\psigma}, \{q_j\}_{J_\ab\cup (J_\af \cap \psigma\cap J_\cb)}\bigr) \\
& \qquad \prod_{i\in \llbracket 1,n\rrbracket \setminus \psigma} dk_i 
\prod_{j\in \llbracket 1,n\rrbracket} dq_j ,
\end{aligned}
\end{equation}
where 
\begin{align*}
  & \Fl\bigl(\{k_i\}_{J_\cf\setminus\psigma}, \{q_j\}_{J_\cb\cup (J_\cf \cap \psigma\cap J_\ab)}\bigr) \\
   & \quad = 
  \Biggl\| \prod^n_{i=1}\Biggl[    {\prod_{\underset{\theta_*(j)=n+1-i}{j \in \Jcf\setminus \psigma}}}^{\hspace{-0.5cm}(\succeq_*)} \Biggl(\wf(k_{j})^{\delta^{(\af)}_{j}}\af(k_{j})  R_0(\overline{z})^{\delta^{(\af)}_j} \Biggr) \Bigl(\bB^*_{n+1-i}\Bigr)^* \\
& \qquad \quad
 \wb(q_{n+1-i})^{\delta^{(\ab)}_{n+1-i}} (\ab_{n+1-i}^*)^* R_0(\overline{z})^{\delta^{(\ab)}_{n+1-i}}\Biggr]\prod_{i \in (\Jcf \cap \psigma)\setminus \Jcb} \af\bigl(\tF_i(., q_i)\bigr)\phi \Biggr\|
 \end{align*}
 and
 \begin{align*}
 & \Fr\bigl(\{k_i\}_{i\in J_\af\setminus\psigma}, \{q_j\}_{j\in J_\ab\cup (J_\af \cap \psigma\cap J_\cb)}\bigr) \\
 & \quad= 
 \Biggl\| \prod^n_{i=1}\Biggl[{\prod_{\underset{\theta(j)=i}{j \in \Jaf \setminus \psigma}}}^{\hspace{-0.2cm}(\preceq)} \Biggl(\wf(k_{j})^{\delta^{(\af)}_{j}}\af(k_{j})R_0(z)^{\delta^{(\af)}_{j}}\Biggr)\bB_{i}\\
 &\qquad \quad \wb(q_{i})^{\delta^{(\ab)}_i} \ab_{i} R_0(z)^{\delta^{(\ab)}_i} \Biggr] \wb(q_{j_0})^{\gamma}
 \prod_{i \in (\Jaf \cap \psigma)\setminus J_\ab} \af\bigl(\overline{\tF_i(., q_i)}\bigr) R_0(z)^{\gamma}\psi\Biggr\|.
\end{align*}
Note that $j_0$ is chosen  such that $\bB_{j_0}\, \wb(q_{j_0})^{\delta^{(\ab)}_{j_0}} \ab(q_{j_0}) R_0(z)^{\delta^{(\ab)}_{j_0}}$ is the right-most term in the product in $\Fr$. If $j_0\not\in J_\ab$, then $\wf(k_{j_0})^{\delta^{(\af)}_{j_0}}\af(k_{j_0})R_0(z)^{\delta^{(\af)}_{j_0}}$ is the right-most term, and the term $\wb(q_{j_0})^{\gamma}$ should be replaced by $\wf(k_{j_0})^{\gamma}$.

Abbreviate 
\begin{align*}
  & \tFl\bigl(\{k_i\}_{J_\cf\setminus\psigma}, \{q_j\}_{J_\cb\setminus (J_\af \cap \psigma)}\bigr)  \\
   & \quad  = \biggl(\int \Fl\bigl(\{k_i\}_{J_\cf\setminus\psigma}, \{q_j\}_{J_\cb\cup (J_\cf \cap \psigma\cap J_\ab)}\bigr)^2 \prod_{j\in ((J_\ab\cap J_\cf)\cup (J_\cb\cap J_\af))\cap \psigma} dq_j\biggr)^\frac12,\\
   & \tFr\bigl(\{k_i\}_{J_\af\setminus\psigma}, \{q_j\}_{J_\ab\setminus (J_\cf \cap \psigma)}\bigr) \\
    & \quad = \biggl( \int \Fr\bigl(\{k_i\}_{J_\af\setminus\psigma}, \{q_j\}_{J_\ab\cup (J_\af \cap \psigma\cap J_\cb)}\bigr)^2\prod_{j\in((J_\ab\cap J_\cf)\cup (J_\cb\cap J_\af))\cap \psigma} dq_j \biggr)^\frac12.
\end{align*}
Note that $\| \tFl\| = \| \Fl\|$ and  $\| \tFr\| = \| \Fr\|$. Let us focus on the estimate of $\Fr$. Remark that the following estimates are easy to derive: 
\begin{align*}
    \forall j \in \Jaf\setminus \psigma: & \qquad\int \Bigl\| \wf(k_{j})^{\delta^{(\af)}_{j}}\af(k_{j})R_0(z)^{\delta^{(\af)}_{j}} \Bigr\|^2 d k_{j}  \leq 1,\\
    \forall j \in \Jab\setminus\{j_0\}: & \qquad \int \Bigl\| \wb(q_{j})^{\delta^{(\ab)}_{j}}\ab(k_{j})R_0(z)^{\delta^{(\ab)}_{j}} \Bigr\|^2 d q_{j}  \leq 1\\
    \forall j \in \Jab\cap \Jaf \cap \psigma: & \qquad \Bigl\|\bB_j\bigl(\{q_j\}_{\underset{j\leq i}{j\in J_\ab}},\{k_j\}_{\underset{\theta(j)<i}{j\in J_\af\setminus\psigma}}\bigr) \Bigr\| \leq c_{2n},
\end{align*}
where we used Lemma \ref{RegChainFermionOp} for the last inequality, which is the source of the constant $c_{2n}$. 
Therefore, if one defines 
\[
\psi' = \psi'\bigl(z;q_{j_0}, \{q_i\}_{(J_\af\cap \psigma)\setminus J_\ab}\bigr)=\wb(q_{j_0})^{\gamma} \prod_{i \in (\Jaf \cap \psigma)\setminus J_\ab} \af\bigl(\overline{\tF_i(., q_i)}\bigr) R_0(z)^{\gamma} \psi,
\]
then 
\begin{align*}
\int 
 \Biggl\| \prod^n_{i=1}\Biggl[ & {\prod_{\underset{\theta(j)=i}{j \in \Jaf \setminus \psigma}}}^{\hspace{-0.2cm}(\preceq)} \Biggl(\wf(k_{j})^{\delta^{(\af)}_{j}}\af(k_{j})R_0(z)^{\delta^{(\af)}_{j}}\Biggr)\bB_{i}\bigl(\{q_j\}_{\underset{j\leq i}{j\in J_\ab}},\{k_j\}_{\underset{\theta(j)<i}{j\in J_\af\setminus\psigma}}\bigr) \\
 & \wb(q_{i})^{\delta^{(\ab)}_i} \ab_{i} R_0(z)^{\delta^{(\ab)}_i}\Biggr] \psi' \Biggr\|^2 \prod_{j\in \Jaf\setminus \psigma}d k_j \prod_{j\in \Jab\setminus \{j_0\} }d q_j
\end{align*}
is of the form described in Lemma \ref{lem6}. Consequently, applying Lemma~\ref{lem6} -- leaving the last $j_0$-term in the product -- yields
\begin{align*}
\|F_r\|^2 \leq & c^{2n-2}_{2n}\int \int \Bigl\| \bB_{j_0}\bigl(\{q_j\}_{\underset{j\leq j_0}{j\in J_\ab}},\{k_j\}_{\underset{\theta(j)<j_0}{j\in J_\af\setminus\psigma}}\bigr) \wb(q_{j_0})^{\delta^{(\ab)}_{j_0}} \ab(q_{j_0})\\
& \qquad R_0(z)^{\delta^{(\ab)}_{j_0}}\psi'\bigl(z;q_{j_0}, \{q_i\}_{(J_\af\cap \psigma)\setminus J_\ab}\bigr) \Bigr\|^2  d q_{j_0} \prod_{j \in (\Jaf\cap\psigma )\setminus\Jab} d q_j \\
\leq &  c^{2n}_{2n} \int  \int \biggl\|\wb(q_{j_0})^{\delta^{\ab}_{j_0}+\gamma} \ab(q_{j_0}) R_0(z)^{\delta^{(\ab)}_{j_0}}\\
& \qquad \prod_{i \in (\Jaf \cap \psigma)\setminus J_\ab} \af\bigl(\overline{\tF_i(., q_i)}\bigr) R_0(z)^{\gamma} \psi\biggr\|^2 d q_{j_0} \prod_{j \in (\Jaf\cap\psigma )\setminus\Jab} d q_j.
\end{align*}
Let us now focus on 
\begin{align*}
&  \int  \biggl\|\wb(q_{j_0})^{\delta^{(\ab)}_{j_0}+\gamma} \ab(q_{j_0}) R_0(z)^{\delta^{(\ab)}_{j_0}}\prod_{i \in (\Jaf \cap \psigma)\setminus J_\ab} \af\bigl(\overline{\tF_i(., q_i)}\bigr) R_0(z)^{\gamma}\psi\biggr\|^2  d q_{j_0}\\
&\ = \int  \biggl\|\wb(q_{j_0})^{\delta^{(\ab)}_{j_0}+\gamma-\frac12}\wb(q_{j_0})^{\frac12} \ab(q_{j_0}) R_0(z)^{\delta^{(\ab)}_{j_0}+\gamma}\\
& \ \quad (H_0-z)^{\gamma}\prod_{i \in (\Jaf \cap \psigma)\setminus J_\ab} \af\bigl(\tF_i(., q_i)\bigr) R_0(z)^{\gamma}\psi\biggr\|^2  d q_{j_0}\\
&\ \leq 
\int  \biggl\|\wb(q_{j_0})^{\frac12} \ab(q_{j_0}) R_0(z)^{\frac12} (H_0-z)^{\gamma}\prod_{i \in (\Jaf \cap \psigma)\setminus J_\ab} \af\bigl(\tF_i(., q_i)\bigr) R_0(z)^{\gamma}\psi\biggr\|^2  d q_{j_0}\\
&\ \leq
\biggl\|(H_0-z)^{\gamma}\prod_{i \in (\Jaf \cap \psigma)\setminus J_\ab} \af\bigl(\overline{\tF_i(., q_i)}\bigr) R_0(z)^{\gamma}\psi\biggr\|^2.
\end{align*}
From Lemma \ref{RegFermionProp2} we can therefore conclude that 
\begin{equation}\label{Fr-norm}
  \bigl\| \Fr \bigr\|^2  \leq  (2c_{2n})^{2n} \bigl\|\psi\bigr\|^2\prod_{i \in (\Jaf \cap \psigma)\setminus J_\ab} \bigl\|\tF_i\bigr\|^2 .
\end{equation}
The function $\Fl$ can be treated in the same way to get the estimate
\begin{equation}\label{Fl-norm}
\bigl\|\Fl\bigr\|\leq  (2c_{2n})^{n}\bigl\|\phi\bigr\| \prod_{i \in (\Jcf \cap \psigma)\setminus J_\cb} \bigl\|\tF_i\bigr\| ,
\end{equation}
but without the extra complication coming from treating the $j_0$-term separately. With the above notation, and using Cauchy-Schwarz, we estimate \eqref{ToEstimateRHS} as follows
\begin{align*}
&\Bigl|\Bigl\langle \phi \Big| T^{(k)}_{(P_\cf,P_\af)}\bigl(z;\uF^{(n)}\bigr) R_0(z)^{\gamma}\psi \Bigr\rangle\Bigr|\\
&\quad \leq c_T\int \Delta_\ab\Delta_\af
\prod_{i \in \llbracket 1, n \rrbracket \setminus\psigma} 
\bigl| \tF_i(k_i, q_i) \bigr|  \prod_{j\in ((J_\ab\cap J_\af)\cup (J_\cb\cap J_\cf))\cap \psigma} \bigl\|\tF_j(\cdot,q_j)\bigr\|\\
&\qquad \tFl\bigl(\{k_i\}_{J_\af\setminus\psigma}, \{q_j\}_{J_\ab\setminus (J_\cf \cap \psigma)}\bigr) \tFr\bigl(\{k_i\}_{J_\cf\setminus\psigma}, \{q_j\}_{J_\cb\setminus (J_\af \cap \psigma)}\bigr) \\
& \qquad \prod_{i\in \llbracket 1,n\rrbracket \setminus \psigma} dk_i 
\prod_{j\in \llbracket 1,n\rrbracket\setminus ((J_\ab\cap J_\cf)\cup (J_\cb\cap J_\af))\cap \psigma)} dq_j.
\end{align*}

Using Cauchy-Schwarz again, now with respect to the integration variables appearing in $\tFl$ and $\tFr$, yields
\begin{align}
\label{ToEstimateRHS2}
\nonumber &\Bigl|\Bigl\langle \phi \Big| T^{(k)}_{(P_\cf,P_\af)}\bigl(z;\uF^{(n)}\bigr) R_0(z)^{\gamma}\psi \Bigr\rangle\Bigr|\leq  c_T\bigl\|\tFl\bigr\| \bigl\|\tFr\bigr\| \int \Delta_\ab\Delta_\af \\
\nonumber& \qquad \biggl( \int \prod_{i\in \llbracket 1,n\rrbracket\setminus \psigma} \bigl| \tF_i(k_i,q_i)\bigr|^2
  \prod_{j\in ((J_\ab\cap J_\af)\cup (J_\cb\cap J_\cf))\cap \psigma}\bigl\|\tF_j(\cdot,q_j)\bigr\|^2 \\
\nonumber  & \qquad\quad  \prod_{i\in (J_\af\cup J_\cf)\setminus\psigma} dk_i\prod_{j\in (J_\cb\setminus (J_\af\cap\psigma))\cup (J_\ab\setminus (J_\cf\cap \psigma))} dq_j\biggr)^\frac12 \\
\nonumber&\qquad
 \prod_{i\in I_\af} dk_i dk_{f_\af(i)} \prod_{j\in I_\ab} dq_j dq_{f_\ab(j)}\\
\nonumber & \quad  =  c_T\bigl\|\Fl\bigr\| \bigl\|\Fr\bigr\|\prod_{j\in ((J_\ab\cap J_\af)\cup (J_\cb\cap J_\cf))\cap \psigma}\bigl\|\tF_j\bigr\| 
 \int  \Delta_\ab\Delta_\af
  \biggl( \int \prod_{i\in \llbracket 1,n\rrbracket\setminus \psigma} \bigl| \tF_i(k_i,q_i)\bigr|^2\\
  &\qquad\quad  \prod_{i\in (J_\af\cup J_\cf)\setminus\psigma} dk_i\prod_{j\in (J_\ab\cup J_\cb )\setminus \psigma} dq_j\biggr)^\frac12 \prod_{i\in I_\af} dk_i dk_{f_\af(i)} \prod_{j\in I_\ab} dq_j dq_{f_\ab(j)}.
\end{align}

Recall from \eqref{FinalCover5} that $P_\ab = J_\ab$, $P_\cb = J_\cb$ and that $P_\af,P_\cf,P_\ab,P_\cb$ are pairwise disjoint and form a cover of $\scrJ$. From the computation
\begin{align*}
& \int \prod_{i\in \llbracket 1,n\rrbracket\setminus \psigma} \bigl| \tF_i(k_i,q_i)\bigr|^2 \prod_{i\in (J_\af\cup J_\cf)\setminus\psigma} dk_i\prod_{j\in (J_\ab\cup J_\cb )\setminus \psigma} dq_j \\
& \qquad  = \prod_{i\in \llbracket 1,n\rrbracket \setminus \scrJ} \bigl|\tF_i(k_i,q_i)\bigr|^2
\prod_{j\in P_\af\cup P_\cf} \bigl\| \tF_j(\cdot, q_j) \bigr\|^2 \\
& \qquad \quad\prod_{i\in (J_\ab\cup J_\cb)\setminus (J_\af\cup J_\cf)} \bigl\| \tF_i (k_i,\cdot) \bigr\|^2 \prod_{i\in ((J_\ab\cup J_\cb) \cap ( J_\af\cup J_\cf))\setminus \psigma} \bigl\|\tF_i\bigr\|^2
\end{align*}
and the estimates
\[
\bigl\|\Fl\bigr\|\leq  (2c_{2n})^{n}\bigl\|\phi\bigr\|\!\! \prod_{i \in (\Jcf \cap \psigma)\setminus J_\cb} \bigl\|\tF_i\bigr\| 
\quad \textup{and} \quad
\bigl\|\Fr\bigr\|\leq  (2c_{2n})^{n}\bigl\|\psi\bigr\|\!\! \prod_{i \in (\Jaf \cap \psigma)\setminus J_\ab} \bigl\|\tF_i\bigr\| ,
\]
we conclude from \eqref{ToEstimateRHS2} that
\begin{equation}\label{ToEstimateRHS3}
\begin{aligned}
&\Bigl|\Bigl\langle \phi \Big| T^{(k)}_{(P_\cf,P_\af)}\bigl(z;\uF^{(n)}\bigr) R_0(z)^{\gamma}\psi \Bigr\rangle\Bigr| \\
& \quad
\leq (2c_{2n})^{2n} c_T \prod_{i\in (J_\ab\cup J_\cb) \cap ( J_\af\cup J_\cf)} \bigl\|\tF_i\bigr\|\int \prod_{i \in I_\af}\kdelta\bigl(k_i-k_{f_{\af}(i)}\bigr)   \prod_{j \in I_\ab}\kdelta\bigl(q_j-q_{f_{\ab}(j)}\bigr) \\
& \qquad  \prod_{i\in \llbracket 1,n\rrbracket \setminus \scrJ} \bigl|\tF_i(k_i,q_i)\bigr|
\prod_{j\in P_\af\cup P_\cf} \bigl\| \tF_j(\cdot, q_j) \bigr\| \prod_{i\in (J_\ab\cup J_\cb)\setminus (J_\af\cup J_\cf)} \bigl\| \tF_i (k_i,\cdot) \bigr\|\\
& \qquad \prod_{i\in I_\af} dk_i dk_{f_\af(i)} \prod_{j\in I_\ab} dq_j dq_{f_\ab(j)} \bigl\|\phi \bigr\| \bigl\| \psi \bigr\|.
\end{aligned}
\end{equation}
From the four simple estimates
\begin{equation}    \label{EstimatesForCauchySchwartz}
\begin{aligned}
 \int \bigl|\tF_i(k,q_i)\bigr| \bigl| \tF_{f_\af(i)}(k,q_{f_\af(i)})\bigr| dk & \leq \bigl\| \tF_i(\cdot,q_i)\bigr\| \bigl\| \tF_{f_\af(i)}(\cdot,q_{f_\af(i)})\bigr\|, \\
\int \bigl\|\tF_i(k,\cdot)\bigr\| \bigl\| \tF_{f_\af(i)}(k,\cdot)\bigr\| dk & \leq \bigl\| \tF_i\bigr\| \bigl\| \tF_{f_\af(i)}\bigr\|,\\
   \int \bigl\|\tF_i(k,\cdot)\bigr\| \bigl| \tF_{f_\af(i)}(k,q_{f_\af(i)})\bigr| dk &  \leq \bigl\| \tF_i\bigr\| \bigl\| \tF_{f_\af(i)}(\cdot,q_{f_\af(i)})\bigr\|,\\
\int \bigl|\tF_i(k,q_i)\bigr| \bigl\| \tF_{f_\af(i)}(k,\cdot)\bigr\| dk & \leq \bigl\| \tF_i(\cdot,q_i)\bigr\| \bigl\| \tF_{f_\af(i)}\bigr\|,  
\end{aligned}
\end{equation}
the computation $\llbracket 1,n\rrbracket \cup ( (J_\ab\cup J_\cb)\setminus (J_\af\cup J_\cf) ) = I_\af\cup f_\af(I_\af)$, and Cauchy-Schwarz, we conclude that 
\begin{align*}
& \int \prod_{i \in I_\af}\kdelta\bigl(k_i-k_{f_{\af}(i)}\bigr) \prod_{i\in \llbracket 1,n\rrbracket \setminus \scrJ} \bigl|\tF_i(k_i,q_i)\bigr|\! \prod_{i\in (J_\ab\cup J_\cb)\setminus (J_\af\cup J_\cf)} \bigl\| \tF_i (k_i,\cdot) \bigr\|\prod_{i\in I_\af} dk_i dk_{f_\af(i)}\\
&\qquad \leq \prod_{j\in \llbracket 1,n\rrbracket\setminus \scrJ}  \bigl\| \tF_j(\cdot,q_j)\bigr\|
\prod_{i\in (J_\ab\cup J_\cb)\setminus (J_\af\cup J_\cf)}\bigl\| \tF_i\bigr\|.
\end{align*}
Inserting back into \eqref{ToEstimateRHS3}, recalling that $(\llbracket 1,n\rrbracket \setminus\scrJ)\cup (P_\af\cup P_\cf) = \llbracket 1,n\rrbracket \setminus (J_\ab\cup J_\cb) = I_\ab\cup f_\ab(I_\ab)$ and repeating the argument now with respect to the $q_i$ integration, we arrive at
\begin{equation*}
\begin{aligned}
&\Bigl|\Bigl\langle \phi \Big| T^{(k)}_{(P_\cf,P_\af)}\bigl(z;\uF^{(n)}\bigr) R_0(z)^{\gamma}\psi \Bigr\rangle\Bigr|\leq (2c_{2n})^{2n} c_T\prod_{i\in J_\ab\cup J_\cb} \bigl\|\tF_i\bigr\|\\
& \qquad \int \prod_{j \in I_\ab}\kdelta\bigl(q_j-q_{f_{\ab}(j)}\bigr) 
 \prod_{j\in I_\ab\cup f_\ab(I_\ab)}  \bigl\| \tF_j(\cdot,q_j)\bigr\|
\prod_{j\in I_\ab} dq_j dq_{f_\ab(j)} \bigl\|\phi\bigr\| \bigl\|\psi\bigr\|\\
& \quad \leq  (2c_{2n})^{2n} c_T\Bigl(\prod_{i=1}^n \bigl\| \tF_i\bigr\| \Bigr)\bigl\|\phi\bigr\| \bigl\|\psi\bigr\|.
\end{aligned}
\end{equation*}
Observe now that 
\begin{align*}
\bigl\| \tF_i\bigr\|  & \leq \biggl( \int W(k_i,q_i)^{2n} \frac{|F_i(k_i,q_i)|^2}{[\wb(q_i)]^{2(\alpha-\beta_{i})}[\wf(k_i)]^{2\beta_{i}}} dq_i dk_i\biggr)^{\frac12}\\
& \leq \biggl( \int \Bigl(1+\frac{\wf(k_i)}{ \wb(q_i) }\Bigr)^{2n} \Bigl(1+\frac{\wb(q_i)}{\wf(k_i)}\Bigr)^{2\alpha}\frac{|F_i(k_i,q_i)|^2}{[\wb(q_i)]^{2\alpha}} dq_i dk_i\biggr)^{\frac12},
\end{align*}
where we inserted $W$ from \eqref{W-function} and used that $a^{2\beta_i} \leq (1+a)^{2\alpha}$ with $a = \frac{\wb(q_i)}{\wf(k_i)}$. 
As a conclusion 
\begin{align*}
 \Bigl|\Bigl\langle \phi\, \Big\vert\, \rightT\bigl(z;\uF^{(n)}\bigr) R_0(z)^{\gamma}\psi \Bigr\rangle\Bigr|
&\leq \sum^N_{k =1}\Bigl|\Bigl\langle \phi \Big| T^{(k)}_{(P_\cf,P_\af)}\bigl(z;\uF^{(n)}\bigr) R_0(z)^{\gamma}\psi \Bigr\rangle\Bigr|\\
& \leq \sum^N_{k =1} (2c_{2n})^{2n} c_{T} \Bigl( \prod_{i=1}^n\bigl\| \tF_i\bigr\| \Bigr) \bigl\|\phi\bigr\|\bigl\|\psi\bigr\|\\
& \leq N (2c_{2n})^{2n} c_{T} \tK_{\alpha}\bigl(\uF^{(n)}\bigr) \bigl\|\phi\bigr\|\bigl\|\psi\bigr\|, 
\end{align*}
which completes the proof of \ref{item-regbound2} (note that $N$ depends only on $n$).  The claim \ref{item-regbound1} follows from \ref{item-regbound2} as the adjoint of a left-handed Wick monomial is right-handed. 

Finally, item \ref{item-regbound3} can be derived from \ref{item-regbound1} and \ref{item-regbound2} using Hadamard's Three-line Theorem. Indeed, we already know that there exist exponents $\{\beta_i^{(\ell)}\}_{i=1}^n$ for $\ell=1,2$, such that
\begin{equation}\label{HadamardStart}
\begin{aligned}
    \Bigl\|R_0(z)^{\delta+\gamma}\lrT\bigl(z;\uF^{(n)}\bigr) \Bigr\|  & \leq N(2c_{2n})^{2n} c_{T} 
\tilde{K}_{ \alpha}\bigl(\uF^{(n)}\bigr),\\
\Bigl\|\lrT\bigl(z;\uF^{(n)}\bigr) R_0(z)^{\delta+\gamma} \Bigr\|  & \leq N(2c_{2n})^{2n} c_{T} 
\tilde{K}_{ \alpha}\bigl(\uF^{(n)}\bigr).
\end{aligned}
\end{equation}
Consider for any $\Psi, \Phi \in \scrH$, the continuous function $f\colon \set{z\in\CC}{0\leq \re(z)\leq 1}\to \CC$ defined by 
\[
f(\theta) =\Bigl\langle \Psi\,\Big\vert\, R_0(z)^{(\delta+\gamma)(1-\theta)}\lrT\bigl(z;\uF^{(n)}\bigr) R_0(z)^{(\delta+\gamma)\theta}   \Phi\Bigr\rangle.
\]
Note that $n \geq 2$ and that for $F_1, \dotsc, F_n \in S_{\mathrm{sc}}$,  $\lrT(z;\uF^{(n)})$ is bounded and therefore $f$ is bounded and analytic on $\set{\theta \in \CC}{ 0 < \re(\theta) < 1}$. Moreover, for $\kappa=0,1$, we have
\begin{equation*}
   \sup_{\re(\theta)=\kappa} |f(\theta)| \leq N (2c_{2n})^{2n} c_{T} \tK_{\alpha}\bigl(\uF^{(n)}\bigr)\bigl\|\Phi\bigr\| \bigl\|\Psi\bigr\|
\end{equation*}
and therefore, by Hadamard's Three-line Theorem (cf.~\cite[Theorem~5.2.1]{RS}), 
\[
\sup_{0\leq \re(\theta) \leq 1}|f(\theta)| \leq  N(2c_{2n})^{2n} c_{T} \tK_{\alpha}\bigl(\uF^{(n)}\bigr)\bigl\|\Phi\bigr\| \bigl\|\Psi\bigr\|,
\]
which, together with the fact that $N\leq M(n)$, the universal upper bound on the number of terms in the sum \eqref{eq-ooT-expansion}, concludes the proof of \ref{item-regbound3}, the last item.
\end{proof}

\begin{Prop}[Estimate of fully contracted Wick monomials]
\label{RFCT}
Let $n \in \NN$  and a constant $c_n$ depending only on $n$, $\uF^{(n)} = (F_1,\dotsc, F_n)\in (L^2(\RR^d\times \RR^d))^n$. Let $T$ be a fully contracted Wick monomial of length $n$ and let us introduce: 
\begin{equation}
E \colon \uF^{(n)} \to \bigl\langle \Omega \,\big\vert\, T(0;\uF^{(n)})\Omega \bigr\rangle.
\end{equation}
Then, for any $0\leq \gamma, \delta\leq 1$ we have 
\begin{equation}
\bigl\| R_0(z)^{\delta} \bigl(T(z;\uF^{(n)}) - E(\uF^{(n)}) \bigr) R_0(z)^{\gamma} \bigr\| \leq  c_T\cdot c_n 
\tK_{\alpha}(\uF^{(n)}),
\end{equation}
with $\alpha = 1-\frac{1-\delta-\gamma}{n}$.
\end{Prop}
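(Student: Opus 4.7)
The starting point is the telescopic identity already used in the proof of Lemma~\ref{HBFCTANFCO},
\begin{align*}
\prod_{i\in\scrA} R_0(z-R_i) - \prod_{i\in\scrA}\tfrac{1}{R_i}
 = -\sum_{k\in\scrA}\Bigl\{\prod_{i\in\scrA,\,i\leq k}\tfrac{1}{R_i}\Bigr\}(H_0-z)\Bigl\{\prod_{i\in\scrA,\,i\geq k}R_0(z-R_i)\Bigr\},
\end{align*}
which inserted into the defining integral for $T(z;\uF^{(n)})-E(\uF^{(n)})$ (with the abbreviations $\scrL,\Delta_\ab,\Delta_\af$ from \eqref{abbrev-LDelta2}) makes the cancellation between $T$ and $E$ manifest. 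The key structural observation is that for a fully contracted Wick monomial the integrand contains no creation or annihilation operators; every factor is therefore a spectral function of $H_0$, and the algebraic identity $R_0(z)^\delta (H_0-z) R_0(z)^\gamma = (H_0-z)^{1-\delta-\gamma}$ lets me sandwich the telescopic sum between $R_0(z)^\delta$ and $R_0(z)^\gamma$ to obtain, with $u:=1-\delta-\gamma$,
\[
R_0(z)^\delta(T-E)R_0(z)^\gamma = -\sum_k\int \scrL\prod F_i \Delta_\ab\Delta_\af \prod_{i\leq k}\tfrac{1}{R_i}(H_0-z)^u\!\!\prod_{i\geq k}R_0(z-R_i)\,dk\,dq.
\]

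I treat the regime $\delta+\gamma\leq 1$ (so $u\in[0,1]$) directly. The elementary bounds $|\lambda-z+R_i|\geq|\lambda-z|$ and $|\lambda-z+R_i|\geq R_i$ interpolate to $|\lambda-z+R_i|^{-1}\leq |\lambda-z|^{-(1-a_i)} R_i^{-a_i}$ for any $a_i\in[0,1]$; choosing $a_i=1-u/m_k$ uniformly, where $m_k:=|\{i\in\scrA:\,i\geq k\}|$, the $|\lambda-z|$-powers cancel exactly against $|\lambda-z|^u$. The spectral theorem in $H_0$ then gives the pointwise-in-the-integration-variables operator bound
\[
\Bigl\|(H_0-z)^u\prod_{i\geq k}R_0(z-R_i)\Bigr\|\leq \prod_{i\geq k}R_i^{-(1-u/m_k)},
\]
and hence an overall bound on $\|R_0(z)^\delta(T-E)R_0(z)^\gamma\|$ of the form $\sum_k\int |\scrL|\prod|F_i|\Delta_\ab\Delta_\af\prod_i R_i^{-\epsilon_i^{(k)}}$, where the total exponent $\sum_i\epsilon_i^{(k)}=n-u=n\alpha$ is independent of $k$.

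The final step, and the main obstacle, is to convert this into the $\tK_\alpha$ norm. For this I invoke the regularity of $T$ via Lemma~\ref{RegOpEstimates} with the empty cover appropriate to a fully contracted monomial (so the $W$-factors there are trivial), together with Young's generalised inequality $(\sum_j w_j)^{-p}\leq \prod_j w_j^{-p\theta_j}$ (for $\theta_j\geq 0$, $\sum_j\theta_j=1$) applied to each $R_i^{-\epsilon_i^{(k)}}$. The delicate bookkeeping is to choose the exponents $\alpha_i$ in Lemma~\ref{RegOpEstimates} (subject to $\sum\alpha_i=n-1$) together with Young weights $\theta_{i,j}$ so that, after summing over $i\in\scrA$, each variable $q_j$ or $k_j$ receives total weight exactly $\alpha$; this depends on the contraction data $(I_\ab,I_\af,f_\ab,f_\af)$ specifying which $R_i$'s support a given momentum. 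Once the resulting pointwise bound of the form $c_T\prod_i\wb(q_i)^{-\alpha+\beta_i}\wf(k_i)^{-\beta_i}$ is in hand, Cauchy--Schwarz over the integration variables (handling the $\kdelta$-pairings as in \eqref{EstimatesForCauchySchwartz}) delivers the stated estimate $c_T\cdot c_n\tK_\alpha(\uF^{(n)})$. The complementary range $\delta+\gamma>1$ follows by Hadamard's three-line theorem applied to $(\delta,\gamma)\mapsto R_0(z)^\delta(T-E)R_0(z)^\gamma$, interpolating from the diagonal $\delta+\gamma=1$ just established, exactly as in the proof of Proposition~\ref{RNFCT}~\ref{item-regbound3}.
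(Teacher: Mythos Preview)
Your overall architecture matches the paper's (telescopic identity, sandwiching to produce $(H_0-z)^{1-\delta-\gamma}$, regularity, Cauchy--Schwarz), but the uniform choice $a_i=1-u/m_k$ in step~3 creates a genuine obstruction. After that interpolation you carry $|\scrL|\prod_{i\in\scrA}R_i^{-\epsilon_i^{(k)}}$ with $\epsilon_i^{(k)}=1-u/m_k<1$ for $i>k$. The only available control on $|\scrL|$ is the regularity bound \eqref{regularityproperty}, and for a fully contracted monomial ($\scrJ=\emptyset$, hence every $\bar\gamma_i=0$) this bounds precisely $|\scrL|\prod_{i\in\scrA}R_i^{-1}$ with all exponents \emph{fixed} equal to~$1$. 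Factoring $\prod_iR_i^{-\epsilon_i}=\bigl(\prod_iR_i^{-1}\bigr)\prod_iR_i^{1-\epsilon_i}$ therefore leaves \emph{positive} powers $R_i^{u/m_k}$ for every $i>k$. These grow in integration variables not present in $R_k$ (e.g.\ take $n=4$, $\scrA=\{1,2,3\}$, $k=1$: the leftover $R_1^{-(1-u/3)}R_2^{u/3}R_3^{u/3}$ diverges as $q_3\to\infty$), and the Young inequality you quote only controls \emph{negative} powers of a sum. The ``delicate bookkeeping'' you allude to cannot repair this.

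The paper's route is simply the non-uniform choice in your own framework: take $a_k=1-u$ and $a_i=1$ for $i>k$, i.e.\ absorb all of $(H_0-z)^{1-\gamma-\delta}$ into the single resolvent $R_0(z-R_k)$ to obtain $R_0(z-R_k)^{\gamma+\delta}$. Then the regularity estimate \eqref{regularityproperty} (applied directly, not via Lemma~\ref{RegOpEstimates}) with $\alpha_i=\alpha$ for $i\neq j_0$ and $\alpha_{j_0}=\alpha-(\gamma+\delta)$, where $j_0$ is any index with $\wb(q_{j_0})$ contributing to $R_k$, handles $|\scrL|\prod_iR_i^{-1}$; the trivial leftover bound $\|R_0(z-R_k)^{\gamma+\delta}\|\leq\wb(q_{j_0})^{-(\gamma+\delta)}$ then supplies exactly the missing weight at index~$j_0$. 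No Young inequality is needed, and the Hadamard step is unnecessary: the direct argument already covers the range $\gamma+\delta\leq1$, which is all that is used downstream.
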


\begin{proof} We again employ the abbreviations $\scrL$, $\Delta_\ab$ and $\Delta_\af$ from \eqref{abbrev-LDelta2}.
From the proof of Lemma~\ref{HBFCTANFCO}, we know that $T(z;\uF^{(n)}) - E(\uF^{(n)})$ can be written as follows: 
\begin{align*}
    T(z;\uF^{(n)}) - E(\uF^{(n)}) & = - \sum^N_{k=1} \Biggl(\int \prod^n_{i = 1} F_i(k_i, q_i) \scrL\, \Delta_\ab\Delta_\af  \\
 &  \biggl[ \prod_{j\in \llbracket 1, k \rrbracket \cap \scrA}\frac{1}{R_j} ( H_0-z) \prod_{j \in \llbracket k, n \rrbracket \cap \scrA} R_0(z-R_j)  \biggr]\prod_{j=1}^n dq_j dk_j\Biggr).
\end{align*}
Moreover, it follows from Remark~\ref{rem-Wick} that for any $j \in \scrA$, $R_j \neq 0$. Therefore, 
\begin{align*}
  &  R_0(z)^{\delta} \bigl(T(z;\uF^{(n)}) - E(\uF^{(n)}) \bigr) R_0(z)^{\gamma}   = - \sum^N_{k=1}   \Biggl(\int \prod^n_{i = 1} F_i(k_i, q_i) \scrL\, \Delta_\ab\Delta_\af   \\
& \qquad  \biggl[ \prod_{j\in \llbracket 1, k \rrbracket \cap \scrA}\frac{1}{R_j} ( H_0-z)^{1-\gamma-\delta} \prod_{j \in \llbracket k, n \rrbracket \cap \scrA} R_0(z-R_j)  \biggr]\prod_{j=1}^n dq_j dk_j\Biggr),
\end{align*}
which can be estimated as follows
\begin{align*}
&    \Bigl\|R_0(z)^{\delta} \bigl(T(z;\uF^{(n)}) - E(\uF^{(n)}) \bigr) R_0(z)^{\gamma}\Bigr\|   \leq \sum^N_{k=1}  \Biggl(\int \prod^n_{i = 1} \bigl| F_i(k_i, q_i)\bigr|  \bigl|\scrL\bigr| \Delta_\ab\Delta_\af \\
 & \qquad \biggl\| \prod_{j\in \llbracket 1, k \rrbracket \cap \scrA}\frac{1}{R_j} R_0(z-R_k) ^{\gamma+\delta} \prod_{j \in \llbracket k+1, n \rrbracket \cap \scrA} R_0(z-R_j)  \biggr\|\prod_{j=1}^n dq_j dk_j \Biggr).
\end{align*}

From the definition of $R_k$ (see \eqref{Rcomponents}) and the fact that $R_k\neq 0$, there exists an index $j \in I_\ab$ such that $j\leq k < f_{\ab}(j)$ or $j \in I_\af$ such that $j\leq k < f_{\af}(j)$. We may assume, without loss of generality that $j\in I_\ab$. Let us consider
\begin{align*}
\forall i\in \llbracket 1, n \rrbracket \setminus \{j\}:\qquad   \alpha_i &= 1-\frac{1}{n} + \frac{\delta+\gamma}{n} \\
 \alpha_{j} &= 1 - \frac{1}{n}-\frac{n-1}{n}(\delta+\gamma).
\end{align*}
Then, using Definition~\ref{RegularOperator} we obtain $\{\beta_{i}\}_{i=1}^n$ with  $0\leq \beta_{i} \leq \alpha_i $, for any $i\in \llbracket 1 , n \rrbracket $, and $\beta_i=0$ for $i\in\psigma$; and invoking \eqref{regularityproperty} together with Remark \ref{rem-Wick}, we arrive at
\begin{align*}
& \Bigl\|R_0(z)^{\delta} \bigl(T(z;\uF^{(n)}) - \bigl\langle \Omega \,\big\vert\, T(0;\uF^{(n)}) \Omega \bigr\rangle  \bigr) R_0(z)^{\gamma}\Bigr\|\\
& \quad \leq c_T
  \sum^N_{k=1}  \int  \prod^n_{i = 1}   \frac{\bigl| F_i(k_i, q_i)\bigr|   }{[\wb(q_i)]^{\alpha_i-\beta_{i}}[\wf(k_i)]^{\beta_{i}}} \Delta_\ab\Delta_\af   \Bigl\|  R_0(z-R_k) ^{\gamma+\delta}   \Bigr\|\prod_{j=1}^n dq_j dk_j.
\end{align*}
We can conclude using the fact that:
\begin{equation*}
\Bigl\|  R_0(z-R_k) ^{\gamma+\delta}   \Bigr\| \leq \frac{1}{[\wb(q_j)]^{\gamma+\delta}}
\end{equation*}
and the estimates \eqref{EstimatesForCauchySchwartz}, following the same, but slightly simpler, sequence of estimates as at the end of the proof of Proposition \ref{RNFCT}. Note that the weight $\prod_{1=1}^n W(k_i,q_i)^{2n}\geq 1$, cf.~\eqref{tildeK} and \eqref{W-function}, appearing in $\tK$ is not needed here.
\end{proof}

\subsection{Estimates of the renormalized handed blocks of operators}

We first introduce the following set which is convenient to state our  results. 
\begin{equation}
\label{SetOfF}
S_{\mathrm{sc}} = \Bigset{ F \in L^2(\RR^d \times \RR^d ) }{ F(k,q) = h(k,q)g(k\pm q), h\in L^2(\RR^d \times \RR^d ) }.
\end{equation} 
Here $g$ is a spatial cutoff as in Hypothesis~\ref{MainHypothesis}
For any $\uF^{(k)} \in S^{k}_{\mathrm{sc}}$ we define 
\begin{equation}
    K_{\alpha}\bigl(\uF^{(k)}\bigr) = \prod^k_{i=1}\biggl( \int \frac{|F_i(k_i,q_i)|^2}{\wb(q_i)^{2\alpha}} dk_i dq_i\biggr)^{\frac12},
\end{equation}
for $\alpha \in [0,1]$. 

\begin{Prop}[Estimates of the renormalized blocks]
\label{RegularityofthegeneralisedGsets}
Let $N,\ell\in \NN$ with $\ell< N$ and $\uF^{(\ell)} \in S^{\ell}_{\mathrm{sc}}$. There exists a constant $C(\ell) >0$, such that the following holds. 
\begin{enumerate}[label = \textup{(\arabic*)}]
\item\label{item-Gbounds1} if $\us \in \scrS^{(\ell)}_\Right$ then: 
\begin{equation}
\Bigl\| T_\us^{(\ell)}\bigl(z;\uF^{(\ell)}\bigr) R_0(z)^{1-\frac{\ell}{N}}  \Bigr\| \leq C(\ell) K_{ 1-\frac{1}{N}}\bigl(\uF^{(\ell)}\bigr).
\end{equation}
\item\label{item-Gbounds2} if $\us \in \scrS^{(\ell)}_\Left$ then: 
\begin{equation}
\Bigl\| R_0(z)^{1-\frac{\ell}{N}} T_\us^{(\ell)}\bigl(z;\uF^{(\ell)}\bigr)  \Bigr\| \leq C(\ell) K_{ 1-\frac{1}{N}}\bigl(\uF^{(\ell)}\bigr).
\end{equation}
\item\label{item-Gbounds3} if $\us \in \scrS^{(\ell)}_\leftrightarrow$, then for any $\alpha, \beta \in [0,1]$ such that $\alpha+\beta = 1 - \frac{\ell}{N}$ we have: 
\begin{equation}
\Bigl\| R_0(z)^{\alpha} T_\us^{(\ell)}\bigl(z;\uF^{(\ell)}\bigr) R_0(z)^{\beta} \Bigr\| \leq C(\ell) K_{ 1-\frac{1}{N}}\bigl(\uF^{(\ell)}\bigr).
\end{equation}
\end{enumerate}
\end{Prop}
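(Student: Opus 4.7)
The plan is to combine the structural decomposition of the renormalized blocks from Lemma~\ref{InductionLemma} with the norm estimates on regular and fully contracted Wick monomials proved in Propositions~\ref{RNFCT} and~\ref{RFCT}, and then to exploit the special structure of coupling functions in $S_{\mathrm{sc}}$ to convert $\tK_{\alpha}$ bounds into $K_{\alpha}$ bounds. More precisely, for $\us\in\scrS^{(\ell)}_{\Right}$ (case~\ref{item-Gbounds1}), Lemma~\ref{InductionLemma}~\ref{itemtotcont1} writes $T^{(\ell)}_{\us}$ as a finite sum (bounded in cardinality by a constant depending only on $\ell$) of right-handed Wick monomials $\rightT_i$ each with bounding constant $c_{\rightT_i}=1$. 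I then apply Proposition~\ref{RNFCT}~\ref{item-regbound2} with the choice $\gamma = 1-\frac{\ell}{N}$, for which the resulting exponent is $\alpha = 1-\frac{1-\gamma}{\ell} = 1-\frac{1}{N}$. The hypothesis $\ell<N$ ensures $\gamma\geq \frac12$ when $\ell=1$, so Proposition~\ref{RNFCT}~\ref{item-regbound2} applies in every case. This produces a bound of the form $C(\ell)\,\tK_{1-\frac{1}{N}}(\uF^{(\ell)})$. The left-handed case~\ref{item-Gbounds2} is identical after invoking Lemma~\ref{InductionLemma}~\ref{itemtotcont2} and Proposition~\ref{RNFCT}~\ref{item-regbound1}.

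For the ambidextrous case~\ref{item-Gbounds3}, Lemma~\ref{InductionLemma}~\ref{itemtotcont3} expresses $T^{(\ell)}_{\us}$ as a finite sum of Wick monomials that are simultaneously left- and right-handed, plus a finite sum of fully contracted Wick monomials each paired with their counter-term $T_i - E_i$. Note $\ell\geq 2$ automatically (cf.\ Remark~\ref{rem-signs}~\ref{rem-sign-evenodd}). The first type is estimated by Proposition~\ref{RNFCT}~\ref{item-regbound3} with $\gamma=\alpha$, $\delta=\beta$ satisfying $\gamma+\delta = 1-\frac{\ell}{N}\leq 1$, giving the regular exponent $1-\frac{1-\gamma-\delta}{\ell} = 1-\frac{1}{N}$. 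The second type, the subtracted fully contracted terms, is precisely what Proposition~\ref{RFCT} is designed to handle, with the same choice of $\gamma,\delta$ and the same resulting exponent. In all cases the estimate takes the form $C(\ell)\,\tK_{1-\frac{1}{N}}(\uF^{(\ell)})$.

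The last step is to convert $\tK_{1-\frac{1}{N}}$ into $K_{1-\frac{1}{N}}$ using the spatial cutoff hypothesis. For $F_i\in S_{\mathrm{sc}}$, the function $F_i(k_i,q_i) = h_i(k_i,q_i)g_i(k_i\pm q_i)$ is supported where $|k_i\pm q_i|<1$ by Hypothesis~\ref{MainHypothesis}. Hence on its support one has $|k_i|\leq |q_i|+1$ and $|q_i|\leq |k_i|+1$, which gives elementary two-sided bounds
\[
 \wf(k_i)\leq C\,\wb(q_i) \qquad\textup{and}\qquad \wb(q_i)\leq C\,\wf(k_i),
\]
where $C$ depends only on $\mf,\mb$. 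Therefore both weight factors
\[
\Bigl(1+\tfrac{\wf(k_i)}{\wb(q_i)}\Bigr)^{2\ell}\quad\textup{and}\quad \Bigl(1+\tfrac{\wb(q_i)}{\wf(k_i)}\Bigr)^{2(1-\frac{1}{N})}
\]
appearing in the definition~\eqref{tildeK} of $\tK_{1-\frac{1}{N}}$ are bounded by a constant depending only on $\ell,N,\mb,\mf$ on the support of $F_i$. Consequently $\tK_{1-\frac{1}{N}}(\uF^{(\ell)})\leq C(\ell,N)\,K_{1-\frac{1}{N}}(\uF^{(\ell)})$, which, combined with the estimates above, completes the proof.

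The main obstacle here is conceptual rather than technical: essentially all the heavy lifting has been done in the preceding sections (the reordering into handed blocks, the recursive definition of renormalized $T^{(\ell)}_{\us}$, the decomposition into Wick monomials, and the delicate estimates in Propositions~\ref{RNFCT} and~\ref{RFCT} that distribute resolvent powers across multiple annihilation/creation operators). What remains is bookkeeping to verify that the chosen exponent $1-\frac{\ell}{N}$ of $R_0(z)$ matches, via the formula $\alpha = 1-\frac{1-\gamma-\delta}{\ell}$, exactly the uniform target exponent $1-\frac{1}{N}$, and the final observation that the spatial cutoff renders the boson and fermion momenta comparable, which absorbs the excess weights in $\tK$.
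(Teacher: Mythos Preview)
Your proposal is correct and follows essentially the same approach as the paper's proof: decompose via Lemma~\ref{InductionLemma}, estimate the resulting Wick monomials with Propositions~\ref{RNFCT} and~\ref{RFCT} at exponent $1-\tfrac{1}{N}$, and pass from $\tK_{1-\frac{1}{N}}$ to $K_{1-\frac{1}{N}}$ using the spatial cutoff. The only cosmetic difference is that the paper packages your final weight-absorption step as Lemma~\ref{EstimateWtihRespectToOtherPowers}, whereas you inline the same $|k\pm q|<1\Rightarrow \wf(k)\sim\wb(q)$ argument directly.
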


\begin{proof}
Let us consider \ref{item-Gbounds1} and assume that $\us \in \scrS^{(\ell)}_\Right$. From Lemma~\ref{InductionLemma} we know that 
\[
T_\us^{(\ell)}\bigl(z;\uF^{(\ell)}\bigr) = \sum^{N_1}_{i=1}  \rightT_i\bigl(z;\uF^{(\ell)}\bigr),
\]
and that there exists a constant $M=M(\ell)$ such that $N_1 \leq M(\ell)$. As consequence, 
\begin{equation*}
\Bigl\| T_\us^{(\ell)}\bigl(z;\uF^{(\ell)}\bigr)  R_0(z)^{1-\frac{\ell}{N}}  \Bigr\| \leq  \sum^{N_1}_{i=1}  \Bigl\|  \rightT_i\bigl(z;\uF^{(\ell)}\bigr) 
 R_0(z)^{1-\frac{\ell}{N}}  \Bigr\|.
\end{equation*}
 Using Proposition~\ref{RNFCT}, we have: 
\begin{align*}
& \forall i \in \llbracket 1, N_1 \rrbracket: \quad  \Bigl\| \rightT_i\bigl(z; \uF^{(\ell)}\bigr) 
 R_0(z)^{1-\frac{\ell}{N}}  \Bigr\|  \leq  c_\ell 
\tK_{1-\frac{1}{N}}\bigl(\uF^{(\ell)}\bigr).
\end{align*}
Remember that, from Lemma~\ref{InductionLemma}, $c_{\rightT_i}=1$ for any $i\in  \llbracket 1, N_1 \rrbracket$. We then use Lemma~\ref{EstimateWtihRespectToOtherPowers}  to conclude that: 
\begin{equation}
 \tK_{1-\frac{1}{N}}\bigl(\uF^{(\ell)}\bigr)\leq C_{2(1-\frac{1}{N}),2\ell}^\ell K_{ 1-\frac{1}{N}}\bigl(\uF^{(\ell)}\bigr).
\end{equation}
In the rest of the proof, we abbreviate $C:= C_{2(1-\frac{1}{N}),2\ell}$.
Consequently, 
\[
\Bigl\| T_\us^{(\ell)}\bigl(z;\uF^{(\ell)}\bigr)  R_0(z)^{1-\frac{\ell}{N}}  \Bigr\|   \leq M(\ell) C^\ell c_\ell   K_{ 1-\frac{1}{N}}\bigl(\uF^{(\ell)}\bigr).
\]
The statement \ref{item-Gbounds2} follows from \ref{item-Gbounds1} by taking the adjoint. The statement \ref{item-Gbounds3} can be derived in the same way, using Lemma~\ref{InductionLemma} we have
 \begin{equation*}
  \begin{aligned}
 T_\us^{(\ell)} & = \sum^{N_1}_{i=1}  \lrT_i +  \sum^{N_2}_{i=1}  (T_i - E_i),\\
   E_i \colon &  (F_1,\dots, F_\ell)\to \bigl\langle\Omega \, \big\vert\, T_i\bigl(0; F_1, \dots, F_\ell\bigr) \Omega \bigr\rangle.
 \end{aligned}
\end{equation*}
and we conclude using Proposition~\ref{RNFCT} and Proposition~\ref{RFCT}.
\end{proof}

\begin{Prop} \label{FinalEstimate1}Let $N, k \in \NN$. Let $\uut \in \scrT^{(N,k)}$, $z\in\CC_-^*$ and $\uF^{(k)} \in S^{k}_{\mathrm{sc}}$, then there exists a constant $B(N)$, only depending on $N$ and the masses $\mb$ and $\mf$,  such that
\[
\bigl\| S_\uut^{(N,k)}\bigl(z;\uF^{(k)}\bigr) \bigr\|\leq \frac{1}{|z|} \Bigl(\frac{B(N)}{|z|^{\frac{1}{N}}}\Bigr)^{k}  K_{1-\frac{1}{N+1}}\bigl(\uF^{(k)}\bigr).
\]
\end{Prop}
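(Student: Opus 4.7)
By Definition~\ref{def-RenSummands}, with $\uut=(\us_1,\dotsc,\us_\ell)\in\scrT^{(N,k)}$ and $j_i$ the length of $\us_i$, we have
\[
S^{(N,k)}_\uut(z;\uF^{(k)}) = R_0(z)\prod_{i=1}^\ell \Bigl[T^{(j_i)}_{\us_i}\bigl(z;F_{b(i;\uut)},\dotsc,F_{e(i;\uut)}\bigr) R_0(z)\Bigr],
\]
an alternating product of $\ell+1$ free resolvents and $\ell$ renormalized handed blocks, with $\sum_i j_i = k$ and $j_i\leq N$ for every $i$. The plan is to invoke Proposition~\ref{RegularityofthegeneralisedGsets} with $N$ replaced by $N+1$ on each block (which is legitimate since $j_i\leq N<N+1$), thereby obtaining uniformly the $K_{1-\frac{1}{N+1}}$-factor stated in the conclusion, and then to distribute the $\ell+1$ available resolvents among the $\ell$ blocks so that each block is supplied with the resolvent power its handedness demands.

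For each block write $d_i := 1-\frac{j_i}{N+1}\in[0,1]$, the total resolvent power demanded by Proposition~\ref{RegularityofthegeneralisedGsets}. I allocate this power as $(L_i,R_i)$ to the left and right of the block by
\[
(L_i,R_i) = \begin{cases}(0,d_i) & \us_i\in\scrS^{(j_i)}_\Right,\\ (d_i,0) & \us_i\in\scrS^{(j_i)}_\Left\cup\scrS^{(j_i)}_\LR,\end{cases}
\]
so that for an ambidextrous block I use the ambidextrous estimate with $\alpha=d_i$, $\beta=0$. The feasibility constraints are $L_1,R_\ell\leq 1$ (immediate from $d_i\leq 1$) together with the interior boundary constraints $R_i+L_{i+1}\leq 1$. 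Inspecting my allocation, the only boundaries where both summands are positive are of type $(R,L)$ or $(R,A)$ with sum $d_i+d_{i+1}$, or $(R,R)$ with sum $d_i\leq 1$. In the first two cases $\us_i\circ\us_{i+1}$ is a handed signature string (Proposition~\ref{prop-signcomp}), so the tuple condition of Definition~\ref{def-tuples} forces $j_i+j_{i+1}\geq N+1$, whence $d_i+d_{i+1}=2-\frac{j_i+j_{i+1}}{N+1}\leq 1$. Feasibility is established.

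Splitting each of the $\ell+1$ resolvents accordingly as $R_0(z)=R_0(z)^{R_{i-1}}R_0(z)^{\epsilon_i}R_0(z)^{L_i}$ with leftover exponent $\epsilon_i:=1-R_{i-1}-L_i\geq 0$ (with the convention $R_0:=L_{\ell+1}:=0$ at the ends), and then applying Proposition~\ref{RegularityofthegeneralisedGsets} to each block together with its adjacent resolvent pieces, yields per block a factor $C(j_i)\leq B(N):=\max_{j\leq N}C(j)$ times the local factor $K_{1-\frac{1}{N+1}}(F_{b(i;\uut)},\dotsc,F_{e(i;\uut)})$. Each leftover piece $R_0(z)^{\epsilon_i}$ contributes a factor $|z|^{-\epsilon_i}$ (using $H_0\geq 0$ and $z\in\CC_-^*$), and $\sum_{i=0}^\ell\epsilon_i=(\ell+1)-\sum_i d_i=1+\frac{k}{N+1}$. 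Since $\ell\leq k$ and $K_{1-\frac{1}{N+1}}(\uF^{(k)})$ equals the product of the local factors, collecting everything gives a bound of the desired form; the stated exponent $|z|^{-1/N}$ per factor is weaker than the $|z|^{-1/(N+1)}$ naturally produced here and can be absorbed into $B(N)$ (or sharpened by using Proposition~\ref{RegularityofthegeneralisedGsets} with $N$ directly on blocks of length strictly less than $N$).

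The main obstacle is exactly the feasibility verification for the resolvent distribution, and it is here that the combinatorial machinery of Section~\ref{Sec-Signs} pays off: the tuple condition was crafted precisely to rule out those consecutive-block patterns that would otherwise overflow an interior resolvent, and to enforce the length bound $j_i+j_{i+1}\geq N+1$ in the delicate cases that remain. Once feasibility is in hand, the remaining argument is routine operator-norm book-keeping driven by Proposition~\ref{RegularityofthegeneralisedGsets}.
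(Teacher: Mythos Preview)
Your argument is correct and mirrors the paper's proof: both distribute the $\ell+1$ free resolvents among the $\ell$ blocks via Proposition~\ref{RegularityofthegeneralisedGsets} applied with parameter $N+1$, and both verify the interior constraint $R_i+L_{i+1}\leq 1$ by appealing to the tuple condition in Definition~\ref{def-tuples} through Proposition~\ref{prop-signcomp}. The only cosmetic difference is that the paper splits ambidextrous blocks symmetrically as $\nu=\mu=d_i/2$ rather than your asymmetric $(d_i,0)$; your choice actually makes the feasibility case analysis slightly cleaner.

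One correction on the last paragraph: your claim that the statement's exponent $|z|^{-1/N}$ is ``weaker'' and ``can be absorbed into $B(N)$'' is not right --- for $|z|>1$ the statement is \emph{stronger} than what either argument produces, and no $N$-dependent constant can bridge the gap uniformly in $z$. In fact the $1/N$ in the displayed bound is a typo for $1/(N+1)$: the paper's own proof derives $|z|^{-1-k/(N+1)}$ just as you do, and when the proposition is invoked in the proof of Theorem~\ref{THinterm} it is used with the exponent $1/(N+1)$.
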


\begin{proof}
    Let $\uut = (\us_1, \dots, \us_\ell) \in \scrT^{(N,k)}$. Here $\us_{\ell'}\in\scrS^{(j_{\ell'})}$ for $\ell'\in\llbracket 1,\ell\rrbracket$ and $j_1+\dotsc+j_\ell = k$. From Definition \ref{def-RenSummands}, we have: 
    \[
      S_\uut^{(N,k)}\bigl(z;\uF^{(k)}\bigr) = R_0(z) \prod^{\ell}_{i=1}\Bigl[ T^{(j_i)}_{\us_i}\bigl(z;F_{b(i;\uut)},\dotsc,F_{e(i;\uut)}\bigr) R_0(z)\Bigr].
    \]
Let $\{\nu_{j_{\ell'}}\}_{\ell' \in \llbracket 1, \ell+1 \rrbracket}$ and $\{\mu_{j_{\ell'}}\}_{\ell' \in \llbracket 1, \ell\rrbracket}$ be two collections of real numbers defined for $\ell'\in\llbracket 1,\ell\rrbracket$ as follows
\begin{itemize}[left=0pt .. \parindent]
    \item  if $s_{\ell'} \in \scrS^{(j_{\ell'})}_\Right$ then $\nu_{j_{\ell'}}=0$ and $\mu_{j_{\ell'}} = 1-\frac{j_{\ell'}}{N+1}$,
    \item  if $s_{\ell'} \in \scrS^{(j_{\ell'})}_\Left$ then $\nu_{j_{\ell'}}= 1-\frac{j_{\ell'}}{N+1}$ and $\mu_{j_{\ell'}} =0$,
    \item  if $s_{\ell'} \in \scrS^{(j_{\ell'})}_\LR$ then $\nu_{j_{\ell'}}= \mu_{j_{\ell'}} =  \frac12-\frac{j_{\ell'}}{2(N+1)}$,
    \item $\nu_{j_{\ell+1}} = 0$.
\end{itemize}
Note that for any $\ell'\in\llbracket 1,\ell\rrbracket$, we have $\mu_{j_{\ell'}}+\nu_{j_{\ell'+1}}\leq 1$, which follows from the definition of $\scrT^{(N,k)}$, cf.~Definition~\ref{def-tuples}. Rewrite
\begin{align*}
     &S_\uut^{(N,k)}\bigl(z;\uF^{(k)}\bigr)\\
     & = R_0(z) \prod^{\ell}_{i=1} T^{(j_i)}_{\us_i}\bigl(z;F_{b(i;\uut)},\dotsc,F_{e(i;\uut)}\bigr) R_0(z)\\
     & = R_0(z)^{1-\nu_{j_1}} \prod^{\ell}_{i=1} \Bigl( R_0(z)^{\nu_{j_{i}}}T^{(j_i)}_{\us_i}\bigl(z;F_{b(i;\uut)},\dotsc,F_{e(i;\uut)}\bigr) R_0(z)^{\mu_{j_{i}}}\Bigr)R_0(z)^{1-\nu_{j_{i+1}}-\mu_{j_{i}}}.
\end{align*}
From Proposition~\ref{RegularityofthegeneralisedGsets}, we have 
\[
\Bigl\|R_0(z)^{\nu_{j_{i}}}T^{(j_i)}_{\us_i}\bigl(z;F_{b(i;\uut)},\dotsc,F_{e(i;\uut)}\bigr) R_0(z)^{\mu_{j_{i}}}\Bigr\|\leq C(j_i) K_{1-\frac{1}{N+1}}\bigl(F_{b(i;\uut)},\dotsc,F_{e(i;\uut)}\bigr) 
\]
and as a consequence, noting that $\|R_0(z)\| = |z|^{-1}$,
\begin{align*}
    & \bigl\|S_\uut^{(N,k)}\bigl(z;\uF^{(k)}\bigr)\bigr\|\\
   &  \quad \leq  \Bigl\|R_0(z)^{1-\nu_{j_1}} \prod^{\ell}_{i=1} \Bigl[\Bigl\{ R_0(z)^{\nu_{j_{i}}}T^{(j_i)}_{\us_i}\bigl(z;F_{b(i;\uut)},\dotsc,F_{e(i;\uut)}\bigr) R_0(z)^{\mu_{j_{i}}}\Bigr\}   \\
   & \qquad \quad  R_0(z)^{1-\nu_{j_{i+1}}-\mu_{j_{i}}}\Bigr]\Bigr\|\\
     & \quad  \leq \Bigl(\frac1{|z|}\Bigr)^{1-\nu_{j_1} + \sum^\ell_{i=1} (1-\nu_{j_{i+1}}-\mu_{j_{i}})}\prod^\ell_{i=1} \Bigl[C(j_i) K_{1-\frac{1}{N+1}}\bigl(F_{b(i;\uut)},\dotsc,F_{e(i;\uut)}\bigr) \Bigr].
\end{align*}
Note that, since $\nu_{\ell+1}=0$,
\begin{align*}
   1-\nu_{j_1} + \sum^\ell_{i=1} (1-\nu_{j_{i+1}}-\mu_{j_{i}}) & =  1+ \sum^\ell_{i=1}(1 - \mu_{j_i} - \nu_{j_i})\\
    =1+ \sum^\ell_{i=1}\bigl(1 - \bigl(1-\frac{j_{i}}{N+1}\bigr)\bigr) & = 1+\frac{k}{N+1},
\end{align*}
leading to 
\begin{align*}
    & \bigl\|S_\uut^{(N,k)}\bigl(z;\uF^{(k)}\bigr)\bigr\| \leq  \frac{\prod^\ell_{i=1} C(j_i)}{|z|^{1+\frac{k}{N+1}}} K_{1-\frac{1}{N+1}}\bigl(\uF^{(k)}\bigr).
\end{align*}
Abbreviating
\[
B(N) = \max\bigl\{1,\max\bigset{ C(j)}{ j \in \llbracket 1, N \rrbracket } \bigr\},
\]
we then have 
\begin{align*}
    \bigl\|S_\uut^{(N,k)}\bigl(z;\uF^{(k)}\bigr)\bigr\| & \leq  \frac{\prod^\ell_{i=1} C(j_i)}{|z|^{1+\frac{k}{N+1}}} K_{1-\frac{1}{N+1}}\bigl(\uF^{(k)}\bigr)\\
    &\leq  \frac{ B(N)^\ell}{|z|^{1+\frac{k}{N+1}}} K_{1-\frac{1}{N+1}}\bigl(\uF^{(k)}\bigr).
\end{align*}
which concludes the proof, since $\ell\leq k$.
\end{proof}

\begin{Prop}
\label{FinalEstimate2}
Let $N, k \in \NN$. Let $\uut \in \scrT^{(N,k)}$, $z\in\CC_-^*$ and $\uF^{(k)}_1, \uF^{(k)}_2 , \in  S^{k}_{\mathrm{sc}}$, then there exists a constant $B(N)$, depending only on $N$ and the masses $\mb$ and $\mf$, such that
\[
\begin{aligned}
&\bigl\| S_\uut^{(N,k)}\bigl(z;\uF^{(k)}_1\bigr)-S_\uut^{(N,k)}\bigl(z;\uF^{(k)}_2\bigr) \bigr\| \leq  \frac{1}{|z|} \Bigl(\frac{B(N)}{|z|^{\frac{1}{N+1}}}\Bigr)^{k}   \sum^k_{i=1}  K_{1-\frac{1}{N+1}}\bigl(\uF^{(i,k)}_{1,2}\bigr),
\end{aligned}
\]
where 
\[
\uF^{(i,k)}_{1,2} = (F_{1;1},\dotsc, F_{1;i-1}, F_{1;i}-F_{2;i},F_{2;i+1},\dotsc,F_{2;k}).
\]
\end{Prop}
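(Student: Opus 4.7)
The plan is to reduce Proposition~\ref{FinalEstimate2} to Proposition~\ref{FinalEstimate1} via a standard telescoping (multilinearity) argument. Each renormalized handed block $T^{(j)}_{\us}(z;\cdot)$ is real-multilinear in its coupling arguments—and in fact each variable is either complex linear or conjugate linear, depending on whether the corresponding signature is a creation or annihilation term in the fermionic slot (cf.\ the commented remark following Definition~\ref{def-remblocks}). Since the resolvents $R_0(z)$ inserted between the blocks are independent of the $F_i$'s, the summand $S^{(N,k)}_\uut(z;\cdot)$ from Definition~\ref{def-RenSummands} inherits real-multilinearity in the $k$ arguments. In particular, for any real-multilinear map $M$ on a $k$-tuple of vectors, one has the telescoping identity
\[
M(x_1,\dotsc,x_k) - M(y_1,\dotsc,y_k) = \sum_{i=1}^k M(x_1,\dotsc,x_{i-1},x_i - y_i, y_{i+1},\dotsc,y_k),
\]
which is valid regardless of whether each slot is complex linear or conjugate linear (since $\lambda(a-b) = \lambda a - \lambda b$ and $\overline{\lambda}(a-b) = \overline{\lambda}a - \overline{\lambda} b$ both hold).

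First I would apply this telescoping identity with $x_i = F_{1;i}$ and $y_i = F_{2;i}$, so that
\[
S^{(N,k)}_\uut\bigl(z;\uF^{(k)}_1\bigr) - S^{(N,k)}_\uut\bigl(z;\uF^{(k)}_2\bigr) = \sum_{i=1}^k S^{(N,k)}_\uut\bigl(z;\uF^{(i,k)}_{1,2}\bigr),
\]
in the notation of the proposition. Next I would apply Proposition~\ref{FinalEstimate1} to each summand on the right-hand side. Since $\uF^{(i,k)}_{1,2}$ is a $k$-tuple in $S^{k}_{\mathrm{sc}}$ (using that $S_{\mathrm{sc}}$ is a vector space, so $F_{1;i}-F_{2;i}\in S_{\mathrm{sc}}$), Proposition~\ref{FinalEstimate1} yields
\[
\bigl\|S^{(N,k)}_\uut\bigl(z;\uF^{(i,k)}_{1,2}\bigr)\bigr\| \leq \frac{1}{|z|}\Bigl(\frac{B(N)}{|z|^{1/(N+1)}}\Bigr)^k K_{1-\frac{1}{N+1}}\bigl(\uF^{(i,k)}_{1,2}\bigr)
\]
with the same constant $B(N)$. (Here I am using the value of the $|z|$-exponent that is actually produced at the end of the proof of Proposition~\ref{FinalEstimate1}, namely $1+k/(N+1)$.) Summing over $i$ and applying the triangle inequality gives exactly the stated bound.

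No serious obstacle arises; the only thing to watch is the issue of real- vs.\ complex-linearity in each slot, which is handled uniformly because only differences of coupling functions appear in the telescoping identity. Everything else is a direct invocation of Proposition~\ref{FinalEstimate1}, with no new estimates required.
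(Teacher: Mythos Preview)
Your proof is correct and follows essentially the same telescoping argument as the paper. The paper telescopes in two stages---first over the $\ell$ blocks $T^{(j_i)}_{\us_i}$, then within each block over its $j_i$ arguments---and then says ``the same strategy as in the proof of Proposition~\ref{FinalEstimate1} can now be used''; you telescope directly over all $k$ arguments of $S^{(N,k)}_\uut$ and invoke Proposition~\ref{FinalEstimate1} as a black box, which is slightly cleaner and yields the same $k$-term sum. Your remark that the proof of Proposition~\ref{FinalEstimate1} actually produces the exponent $1+k/(N+1)$ (rather than the $1+k/N$ in its displayed statement) is correct and is exactly what is needed to match the exponent in Proposition~\ref{FinalEstimate2}.
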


\begin{proof}
    Let $\uut = (\us_1, \dots, \us_\ell) \in \scrT^{(N,k)}$. We therefore have 
    \begin{align*}
      S_\uut^{(N,k)}\bigl(z;\uF^{(k)}_1\bigr) & = R_0(z) \prod^{\ell}_{i=1}\Bigl[ T^{(j_i)}_{\us_i}\bigl(z;F_{1;b(i;\uut)},\dotsc,F_{1;e(i;\uut)}\bigr) R_0(z)\Bigr]\\
      S_\uut^{(N,k)}\bigl(z;\uF^{(k)}_2\bigr) & = R_0(z) \prod^{\ell}_{i=1} \Bigl[T^{(j_i)}_{\us_i}\bigl(z;F_{2;b(i;\uut)},\dotsc,F_{2;e(i;\uut)}\bigr) R_0(z)\Bigr].
    \end{align*}
    As a consequence, 
    \begin{align*}
      & S_\uut^{(N,k)}\bigl(z;\uF^{(k)}_1\bigr) -   S_\uut^{(N,k)}\bigl(z;\uF^{(k)}_2\bigr)  \\
      &  = \sum^\ell_{i=1} R_0(z) \prod^{i-1}_{j=1} \Bigl[ T^{(j_i)}_{\us_i}\bigl(z;F_{2;b(i;\uut)},\dotsc,F_{2;e(i;\uut)}\bigr) R_0(z)\Bigr]\\
      & \qquad \Bigl(T^{(j_i)}_{\us_i}\bigl(z;F_{1;b(i;\uut)},\dotsc,F_{1;e(i;\uut)}\bigr)-T^{(j_i)}_{\us_i}\bigl(z;F_{2;b(i;\uut)},\dotsc,F_{2;e(i;\uut)}\bigr)\Bigr) R_0(z) \\
      & \qquad \prod^{\ell}_{j=i+1} \Bigl[T^{(j_i)}_{\us_i}\bigl(z;F_{2;b(i;\uut)},\dotsc,F_{2;e(i;\uut)}\bigr) R_0(z)\Bigr]
    \end{align*}
    and 
    \begin{align*}
      &  T^{(j_i)}_{\us_i}\bigl(z;F_{1;b(i;\uut)},\dotsc,F_{1;e(i;\uut)}\bigr)-T^{(j_i)}_{\us_i}\bigl(z;F_{2;b(i;\uut)},\dotsc,F_{2;e(i;\uut)}\bigr)\\
      & = \sum^{e(i;\uut)}_{j=b(i;\uut)} T^{(j_i)}_{\us_i}\bigl(z;F_{2;b(i;\uut)},\dotsc,F_{2;j-1},F_{1,j}-F_{2;j},F_{1;j+1},\dotsc, F_{1;e(i;\uut)}\bigr).
    \end{align*}
    The same strategy as the one used in the proof of Proposition \ref{FinalEstimate1} can now be used to conclude the proof.
\end{proof}

\section{Proof of Theorem \ref{MainTh} }\label{Sec-MainProof}

The goal of this section is to prove Theorem~\ref{MainTh}. We already know, from \cite{AlvaMoll2021}, that if $p>\frac{d}{2}-\frac{3}{4}$ then $H_{\Lambda} - E^{(2)}_{\Lambda}$ converges in norm resolvent sense to a self-adjoint operator.  We first prove the following theorem which generalise this result: 

\begin{Th}
\label{THinterm}
Let $p>\frac{d}{2}-1$. Pick an $N\in\NN$ with $p> \frac{d}{2}- \frac{N}{N+1}$. Then, in the limit $\Lambda\to+\infty$, the operator $H_{\Lambda}-E^{(N)}_{\Lambda}$ converges in norm resolvent sense to a self-adjoint and operator $H$, which is bounded from below. Moreover, the operator $H$ does not depend on the choice of the cutoff function $\chi$.
\end{Th}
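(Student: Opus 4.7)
The plan is to deduce Theorem~\ref{THinterm} from the reordered Neumann expansion provided by Theorem~\ref{thm-reordering} together with the uniform estimates of Propositions~\ref{FinalEstimate1} and~\ref{FinalEstimate2}. For any fixed $\Lambda$, Theorem~\ref{thm-reordering} writes
\[
(H_\Lambda - E^{(N)}_\Lambda - z)^{-1} = R_0(z) + \sum_{k=1}^\infty \sum_{[\uut]\in\scrT^{(N,k)}/\!\sim} S^{(N,k)}_\uut\bigl(z;G_{s_1,\Lambda},\dotsc,G_{s_k,\Lambda}\bigr)
\]
as a norm-convergent series for $\re z \leq -C_N(\Lambda)$. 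The whole task is therefore to upgrade this to norm convergence on a half-plane independent of $\Lambda$, and then to show that the series defines a Cauchy net as $\Lambda\to\infty$.

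The critical input is a $\Lambda$-independent bound on $K_{1-\frac{1}{N+1}}(G^{(j)}_\Lambda)$. Under Hypotheses~\ref{Hypothesis-h}, \ref{Hypothesis-chi} and~\ref{MainHypothesis}, using $|h^{(j)}|\leq \|h^{(j)}\|_\infty$ and $0\leq \chi\leq 1$, the square of this norm is dominated by
\[
\|h^{(j)}\|_\infty^2 \int \frac{|g(k\mp q)|^2}{\wb(q)^{2p+2(1-\frac{1}{N+1})}}\,dk\,dq,
\]
which, after integrating $k$ first using $g\in L^2$ with compact support, is finite precisely when $2p+2-\tfrac{2}{N+1}>d$, i.e.\ when $p>\tfrac{d}{2}-\tfrac{N}{N+1}$. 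Since $|\scrT^{(N,k)}/\!\sim|=4^k$ by Proposition~\ref{prop-tuple-bijection}, Proposition~\ref{FinalEstimate1} then yields a $\Lambda$-uniform geometric majorant
\[
\sum_{k=1}^\infty \sum_{[\uut]} \bigl\|S^{(N,k)}_\uut(z;\,\cdot\,)\bigr\|\leq \frac{1}{|z|}\sum_{k=1}^\infty \Bigl(\frac{4\,B(N)\,C}{|z|^{1/(N+1)}}\Bigr)^{k},
\]
convergent uniformly in $\Lambda$ on a half-plane $\re z\leq -Z$ depending only on $N$ and the data. Analytic continuation (both sides are norm-holomorphic in $z$ on their common domain of validity) extends the reordering identity to this $\Lambda$-independent half-plane.

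For the Cauchy property, I would apply Proposition~\ref{FinalEstimate2} to $G^{(j)}_\Lambda - G^{(j)}_{\Lambda'}$: the same integrable majorant as above combined with dominated convergence gives $K_{1-\frac{1}{N+1}}(G^{(j)}_\Lambda - G^{(j)}_{\Lambda'})\to 0$ as $\Lambda,\Lambda'\to\infty$. The telescoping in Proposition~\ref{FinalEstimate2} and the geometric bound then produce
\[
\bigl\|(H_\Lambda - E^{(N)}_\Lambda - z)^{-1} - (H_{\Lambda'} - E^{(N)}_{\Lambda'} - z)^{-1}\bigr\| \xrightarrow[\Lambda,\Lambda'\to\infty]{} 0
\]
uniformly on $\re z\leq -Z$. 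The limit $R(z)$ inherits the first resolvent identity and is injective (each factor in the product is); since each $H_\Lambda - E^{(N)}_\Lambda$ is self-adjoint and, by Theorem~\ref{ThSA} together with a $\Lambda$-uniform lower bound coming from the same series for $\re z\leq -Z$ large, bounded from below by a constant independent of $\Lambda$, a standard norm-resolvent argument (as in \cite[Theorem~D.1]{AlvaMoll2021}) identifies $R(z)$ as the resolvent of a self-adjoint, bounded-below operator $H$. Independence from $\chi$ follows at once: for two admissible cutoffs $\chi_1,\chi_2$, continuity at the origin with $\chi_\ell(0)=1$ yields $K_{1-\frac{1}{N+1}}(G^{(j)}_{\chi_1,\Lambda}-G^{(j)}_{\chi_2,\Lambda})\to 0$, and Proposition~\ref{FinalEstimate2} applied termwise to the reordered series produces the same limiting resolvent.

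The main obstacle I expect is precisely the extension of the reordering identity from $\re z\leq -C_N(\Lambda)$ to the $\Lambda$-independent half-plane; this requires verifying $\Lambda$-uniform lower bounds on $H_\Lambda - E^{(N)}_\Lambda$ (to avoid the resolvent developing poles) before one can invoke analytic continuation. The rest of the argument is a fairly mechanical combination of the geometric estimate, dominated convergence for the coupling functions, and standard norm-resolvent convergence technology.
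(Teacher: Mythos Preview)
Your proposal is correct and follows essentially the same route as the paper: reorder via Theorem~\ref{thm-reordering}, bound termwise with Proposition~\ref{FinalEstimate1} and the $\Lambda$-uniform estimate on $K_{1-\frac{1}{N+1}}$ (which is exactly Lemma~\ref{RegularityOfK}), extend the identity to a $\Lambda$-independent half-plane by analytic continuation, use Proposition~\ref{FinalEstimate2} plus dominated convergence for the Cauchy property, and conclude with \cite[Theorem~D.1]{AlvaMoll2021}. The only small adjustment is that the construction theorem requires checking $zR(z)\Psi\to\Psi$ as $\re z\to -\infty$ (which the paper verifies via the same geometric majorant) rather than injectivity of $R(z)$; your parenthetical ``each factor in the product is'' does not quite apply to the limit object, but since you already invoke \cite[Theorem~D.1]{AlvaMoll2021} this is a cosmetic point.
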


\begin{proof}
  Theorem~\ref{thm-reordering} shows that there exists $C_N(\Lambda)>0$ such that for any $z\in \CC$ fulfilling $\re(z)\leq -C_{N}(\Lambda)$, the Neumann series of the resolvent of $H_\Lambda+ E^{(N)}_\Lambda$ can be reordered as 
  \begin{equation}\label{ReorderedIn7}
R_\Lambda(z):=      R_0(z) + \sum_{k=1}^\infty
    \sum_{[\uut]\in\scrT^{(N,k)}/\!\sim} S^{(N,k)}_\uut\bigl(z;G_{s_1,\Lambda},\dotsc, G_{s_k,\Lambda}\bigr),
  \end{equation}
where for $\uut = (\us_1,\dotsc,\us_\ell)\in\scrT^{(N,k)}$.
  From Proposition \ref{FinalEstimate1}, we can estimate $R_\Lambda(z)$ as follows
  \begin{equation}\label{FinThmStepI}
  \bigl\|R_\Lambda(z)\bigr\|  \leq \frac{1}{|z|} + \sum^{\infty}_{k=1} \frac{1}{|z|}  \sum_{[\uut]\in\scrT^{(N,k)}/\!\sim} \Bigl(\frac{B(N)}{|z|^{\frac{1}{N+1}}}\Bigr)^{k} K_{1-\frac{1}{N+1}}\bigl(\uF^{(k)}_{\us,\Lambda} \bigr),
     \end{equation}
    where 
  \[
\uF^{(k)}_{\us,\Lambda}  = (G_{s_1,\Lambda},\dotsc,G_{s_k,\Lambda}).
\]
From Lemma \ref{RegularityOfK}, we have
\[
 K_{1-\frac{1}{N+1}}\bigl(\uF^{(k)}_{\us,\Lambda} \bigr)\leq \kappa_{\alpha,p}^k,
 \]
 where 
 \[
 \kappa_{\alpha,p} =  \max\{\|h^{(1)}\|_\infty,\|h^{(2)}\|_\infty\}  \|g\|_2  \Bigl(\int \frac{1}{\wb(q)^{\frac{2}{N+1}+2p} } dq\Bigr)^{\frac{1}{2}}.
\]
Exploiting that by Proposition~\ref{prop-tuple-bijection}, $\scrT^{(N,k)}/\!\sim$ has cardinality  $4^k$, we may continue the estimate \eqref{FinThmStepI} and obtain
  \begin{equation*}
     \bigl\|R_\Lambda(z)\bigr\| \leq \frac{1}{|z|}\sum^{\infty}_{k=0}  4^{k} \Bigl(\frac{B(N)\kappa_{\alpha,p}}{|z|^\frac{1}{N+1}}\Bigr)^{k}.
  \end{equation*}
Choose $C_N^{(1)}\in\RR_+$ such that for any $z \in \CC$ with $\re(z)\leq -C_N^{(1)}$, we have 
\[
4 \frac{B(N)\kappa_{\alpha,p}}{|z|^{\frac{1}{N+1}} }\leq \frac12.
\]
As a conclusion, for $z\in\CC$ with $\re(z)\leq -C_N^{(1)}$, the reordered Neumann series \eqref{ReorderedIn7} converges in norm to the holomorphic operator-valued function $R_\Lambda(z)$. Since the limiting operator $R_\Lambda(z)$ agrees with $(H_\Lambda-E_\Lambda^{(N)}-z)^{-1}$ for $z\in\CC$ with $\re(z)\leq - C_N(\Lambda)$, we conclude by unique continuation that $R_\Lambda(z)$ given by the series  \eqref{ReorderedIn7} actually equals $(H_\Lambda-E_\Lambda^{(N)}-z)^{-1}$ for all $z$ with $\re(z)\leq -C_N^{(1)}$. Note in particular that $H_\Lambda -E_\Lambda^{(N)} \geq - C_N^{(1)}\one$.

  We proceed to argue that $\Lambda\to R_\Lambda(z) = (H_\Lambda-E_\Lambda^{(N)}-z)^{-1}$ is Cauchy in norm, possibly after choosing $z$ further left in the complex plane. Let $\varepsilon >0$ and let $\Lambda_1,\Lambda_2\in\RR$ with $\Lambda_1,\Lambda_2>0$ and abbreviate, for $k\in\NN$,  $\us\in\scrS^{(k)}$ and $\ell = 1,2$: $\uF^{(k)}_{\us,\ell} = (G_{s_1,\Lambda_\ell},\dotsc, G_{s_k,\Lambda_\ell})$. 
  
  By what has just been established, for any $z\in \CC$ fulfilling $\re(z) \leq -C^{(1)}_N$, $R_{\Lambda_1}(z) - R_{\Lambda_2}(z)$ can be computed as a difference of two absolutely convergent series: 
  \begin{align*}
      \sum_{k=1}^\infty
    \sum_{[\uut]\in\scrT^{(N,k)}/\!\sim} \Bigl(S^{(N,k)}_\uut\bigl(z;\uF^{(k)}_{\us,1}\bigr)-S^{(N,k)}_\uut\bigl(z;\uF^{(k)}_{\us,2}\bigr)\Bigr)
  \end{align*}
  with the convention that for $\uut=(\us_1,\dotsc,\us_\ell)\in\scrT^{(N,k)}$, $\us= \us_1\circ\cdots\circ\us_\ell = (s_1,\dotsc,s_k)$.
This series can be estimated using Proposition \ref{FinalEstimate2}:
\begin{align*}
  &   \bigl\|R_{\Lambda_1}(z) - R_{\Lambda_2}(z) \bigr\|\\
  & \quad \leq  \sum_{k=1}^\infty
    \sum_{[\uut]\in\scrT^{(N,k)}/\!\sim} \bigl\|S^{(N,k)}_\uut\bigl(z;\uF^{(k)}_{\us,1}\bigr)-S^{(N,k)}_\uut\bigl(z;\uF^{(k)}_{\us,2}\bigr)\bigr\| \\
    & \quad \leq \sum_{k=1}^\infty
    \sum_{[\uut]\in\scrT^{(N,k)}/\!\sim} \frac{1}{|z|} \Bigl(\frac{B(N)}{|z|^{\frac{1}{N+1}}}\Bigr)^{k}  \sum^k_{i=1} K_{1-\frac{1}{N+1}}\bigl(\uF^{(i,k)}_{\us,\Lambda_1,\Lambda_2}\bigr),
\end{align*}
where we used the linearity of $\uF^{(k)}\mapsto S^{(N,k)}_\uut(z;\uF^{(k)})$ and the computation
$\uF^{(k)}_{\us,1} - \uF^{(k)}_{\us,2} = \sum_{i=1}^k \uF^{(i,\ell)}_{\us,\Lambda_1,\Lambda_2}$ with 
\[
\uF^{(i,k)}_{\us,\Lambda_1,\Lambda_2} = \bigl(G_{s_1,\Lambda_1},\dotsc, G_{s_{i-1},\Lambda_1}, G_{s_i,\Lambda_1}-G_{s_i,\Lambda_2},G_{s_{i+1},\Lambda_2},\dotsc,G_{s_\ell,\Lambda_2}\bigr).
\]
From Lemma~\ref{RegularityOfK}~\ref{item-K-Cauchy}, there exists $\Lambda_0>0$, such that for $\Lambda_1, \Lambda_2\geq \Lambda_0$, we have: 
\begin{equation*}
    K_{1-\frac{1}{N+1}}\bigl(\uF^{(i,k)}_{\us,\Lambda_1,\Lambda_2}\bigr) \leq \varepsilon,
\end{equation*}
for all $i\in\llbracket 1,k\rrbracket$ and $\us\in\scrS^{(k)}_0$.
We may now choose $C_N^{(2)}\in \RR$ with $C_N^{(2)}\geq C_N^{(1)}$, such that for any $z \in \CC$ with $\re(z) \leq -C^{(2)}_N$, we have 
\[
4 \frac{B(N)}{|z|^{\frac{1}{N+1}}}\leq \frac12\quad \textup{and}\quad
\frac{1}{|z|} \leq \frac12.
\]
In that case 
\begin{align*}
    \bigl\|R_{\Lambda_1}(z) - R_{\Lambda_2}(z) \bigr\| &\leq \varepsilon \sum_{k=1}^\infty
    k 4^{k} \frac{1}{|z|} \Bigl(\frac{B(N)}{|z|^{\frac{1}{N+1}}}\Bigr)^{k} \\
    & = \varepsilon \frac{1}{|z|}\frac{4 \frac{B(N)}{|z|^{\frac{1}{N+1}}}}{\bigl(1-4\frac{B(N)}{|z|^{\frac{1}{N+1}}}\bigr)^2}\\
    & \leq 2 \varepsilon \frac{1}{|z|}\leq \varepsilon.
\end{align*}
We may therefore conclude that for $z\in\CC$ with $\re(z) \leq -C_N^{(2)}$, the norm limit $R(z):= \lim_{\Lambda\to +\infty}R_{\Lambda}(z)$ exists. Note that for $z, z' \in \CC$ fulfilling $\re(z), \re(z') \leq -C_N^{(2)}$, we have
\begin{align*}
  R(z)^* & = R(\overline{z})\\
        R(z) - R(z') & = (z-z') R(z) R(z').
\end{align*}
This is just standard resolvent identities for finite $\Lambda$ that are carried over to the norm limit as $\Lambda\to+\infty$.
We now aim to prove that for any $\Psi \in \scrH$ we have $z R(z) \Psi \to \Psi$ when $\re(z) \to -\infty$. First, for any $\Phi \in \scrH$
\begin{align*}
    \bigl|\bigl\langle\Phi \,\big|\,   z R(z) \Psi - \Psi \bigr\rangle\bigr| & = \lim_{\Lambda \to \infty} \bigl|\bigl\langle\Phi \,\big|\,   z R_{\Lambda}(z) \Psi - \Psi \bigr\rangle\bigr|\\
    & \leq \|\Phi\| \bigl\| z R_{0}(z) \Psi - \Psi \bigr\|  +  \sum^{\infty}_{k=1}  4^{k} \Bigl(\frac{B(N)\kappa_{\alpha,p}}{|z|^{\frac{1}{N+1}}}\Bigr)^k \|\Psi\| \|\Phi\| \\
    & =  \|\Phi\| \bigl\| z R_{0}(z) \Psi - \Psi \bigr\| +\frac{4B(N)\kappa_{\alpha,p} }{|z|^{\frac{1}{N+1}} -4 B(N)\kappa_{\alpha,p} }  \|\Psi\| \|\Phi\|.
\end{align*}
Since the right-hand side converges to $0$ in the limit $\re(z)\to -\infty$, we conclude that $z R(z) \Psi \to \Psi$ when $\re(z) \to -\infty$. Hence, \cite[Theorem~D.1]{AlvaMoll2021} can then be invoked to obtain the limiting semi-bounded operator $H$.

Let us now prove that this limit does not depend on the choice of cutoff function $\chi$. Consider $\chi_1, \chi_2$ fulfilling  Hypothesis~\ref{Hypothesis-chi}. Define for $j\in\{1,2\}$ 
\begin{align*}
    R_{\chi_j,\Lambda}(z) & = \bigl(H\bigl(G^{(1)}_{\chi_j,\Lambda}, G^{(2)}_{\chi_j,\Lambda}\bigr)-z\bigr)^{-1}\\
    R_{\chi_j}(z) & = \lim_{\Lambda \to +\infty} R_{\chi_j,\Lambda}(z).
\end{align*}
Here we have amended our notation by adding the choice of ultraviolet cutoff function, $\chi_1$ or $\chi_2$, to the subscript of the coupling functions $G^{(1)}_\Lambda$ and $G^{(2)}_\Lambda$. 

First, there exists $C_N >0$ such that for any $\Lambda>0$ and any $z \in \CC$ fulfilling $\Re(z) \leq - C_N$ both $R_{\chi_1,\Lambda}(z)$ and $R_{\chi_2,\Lambda}(z)$ can be expanded as absolutely convergent reordered Neumann series.

 Proceeding as in the first part of the proof, we may estimate the norm of the difference of the two resolvents (with the same $\Lambda$) by subtracting the two reordered Neumann series as follows
\begin{align*}
  &   \bigl\|R_{\chi_1,\Lambda}(z) - R_{\chi_2,\Lambda}(z) \bigr\|\\
    & \quad \leq \sum_{k=1}^\infty
    \sum_{[\uut]\in\scrT^{(N,k)}/\!\sim} \frac{1}{|z|} \Bigl(\frac{B(N)}{|z|^{\frac{1}{N+1}}}\Bigr)^{k}  \sum^k_{i=1} K_{1-\frac{1}{N+1}}\bigl(\uF^{(i,k)}_{\us,\chi_1,\chi_2,\Lambda}\bigr),
\end{align*}
where 
\[
\uF^{(i,k)}_{\us,\chi_1,\chi_2,\Lambda} = \bigl(G_{s_1,\chi_1,\Lambda},\dotsc, G_{s_{i-1},\chi_1,\Lambda}, G_{s_i,\chi_1,\Lambda}-G_{s_i,\chi_2,\Lambda},G_{s_{i+1},\chi_2,\Lambda},\dotsc,G_{s_\ell,\chi_2,\Lambda}\bigr).
\]
Let $\varepsilon>0$. 
Invoking Lemma~\ref{RegularityOfK}~\ref{item-K-Cauchy2}, we obtain  a  $\Lambda_0'\geq \Lambda_0$, such that  $\|R_{\chi_1,\Lambda}(z) - R_{\chi_2,\Lambda}(z) \|\leq \varepsilon$ for $\Lambda\geq \Lambda_0'$. Taking the limit $\Lambda\to +\infty$ in this estimate yields $\|R_{\chi_1}(z) - R_{\chi_2}(z) \|\leq \varepsilon$. Since $\varepsilon>0$ was arbitrary, the proof is complete.
\end{proof}

\begin{proof}[Proof of Theorem \ref{MainTh}]
According to Theorem~\ref{THinterm}, for any $N\geq 1$ and $p>\frac{d}{2}-\frac{N}{N+1}$, there exists $H_{N}$ which is the norm resolvent limit of $H_{\Lambda}-E^{(N)}_{\Lambda}$ as $\Lambda\to +\infty$. The same argument as in \cite[Proof of Theorem~1]{AlvaMoll2021} can be used to prove that the counter-term $E^{(N)}_{\Lambda}$ can be replaced by $E_{\Lambda} = \inf(\sigma(H_{\Lambda}))$. 
\end{proof}

\appendix

\section{Useful Estimates}

\begin{lem}\label{EstimateWtihRespectToOtherPowers}
Let $F \in S_{\mathrm{sc}}$ (see \eqref{SetOfF}). Then  for any  set of exponents $\alpha$, $\beta$ and  $\gamma$ with $\alpha, \beta, \gamma\geq 0$, there exists $C_{\beta,\gamma}>0$, depending only on the two exponents $\beta$ and $\gamma$ as well as the masses $\mb$ and $\mf$, such that
we have 
\begin{equation*}
 \int\Bigl(1 + \frac{\wb(q)}{\wf(k)}\Bigr)^{\beta} \Bigl(1 + \frac{\wf(k)}{\wb(q)}\Bigr)^{\gamma}\frac{|F(k,q)|^2}{\wb(q)^{\alpha}} dk dq \leq C_{\beta,\gamma} \int\frac{|F(k,q)|^2}{\wb(q)^{\alpha}} dk dq.
\end{equation*}
\end{lem}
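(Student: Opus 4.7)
The plan is to exploit the specific structure of functions in $S_{\mathrm{sc}}$, namely that any $F\in S_{\mathrm{sc}}$ factors as $F(k,q)=h(k,q)g(k\pm q)$ with $g$ supported in the unit ball. This forces $|k\pm q|<1$ on the support of $F$, and hence $\bigl||k|-|q|\bigr|<1$ there. The entire lemma will reduce to the pointwise statement that the weight $(1+\wb(q)/\wf(k))^\beta(1+\wf(k)/\wb(q))^\gamma$ is bounded on this set by a constant depending only on $\beta,\gamma,\mb,\mf$.

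First I would verify the two-sided comparison $\wb(q)\asymp \wf(k)$ on the support of $g(k\pm q)$. The upper bound on $\wb(q)$ follows from $|q|\leq |k|+1$: using the elementary inequality $2|k|\leq k^2+1$ we get $(|k|+1)^2\leq 2k^2+2$, hence
\[
\wb(q)^2 = q^2+\mb^2\leq 2k^2+2+\mb^2\leq \Bigl(2+\frac{2+\mb^2}{\mf^2}\Bigr)(k^2+\mf^2) = C_1(\mb,\mf)^2\wf(k)^2.
\]
The symmetric inequality $\wf(k)\leq C_2(\mb,\mf)\wb(q)$ is obtained in exactly the same way starting from $|k|\leq |q|+1$. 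Consequently, on the support of $g(k\pm q)$,
\[
\Bigl(1+\frac{\wb(q)}{\wf(k)}\Bigr)^{\beta}\Bigl(1+\frac{\wf(k)}{\wb(q)}\Bigr)^{\gamma}\leq (1+C_1)^\beta(1+C_2)^\gamma =: C_{\beta,\gamma}.
\]

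Plugging this pointwise estimate into the integral on the left-hand side of the claimed inequality and noting that the weight is bounded by $C_{\beta,\gamma}$ on the support of $F$ (and $F$ itself vanishes off that support) yields the lemma immediately. There is no real obstacle here: once one observes the compact-support constraint from Hypothesis~\ref{MainHypothesis} on $g$, the rest is a direct elementary estimate, and the constant $C_{\beta,\gamma}$ depends only on $\beta,\gamma,\mb,\mf$ as required, independently of $\alpha$.
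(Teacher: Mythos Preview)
Your proof is correct and follows essentially the same approach as the paper: both exploit that the compact support of $g$ (Hypothesis~\ref{MainHypothesis}) forces $|k\pm q|<1$ on the support of $F$, whence $\wb(q)$ and $\wf(k)$ are comparable by constants depending only on the masses, making the weight pointwise bounded. The paper performs a change of variables $v=k\pm q$ and invokes a comparability lemma from \cite{AlvaMoll2021}, whereas you carry out the elementary estimate directly; the content is the same.
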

\begin{proof}
   First, recall from \eqref{SetOfF} that $F$ is of the form
   $F(k,q) = g(k\pm q)h(k,q)$ with $g$ being the spatial cutoff from Hypothesis~\ref{MainHypothesis}. We may assume without loss of generality that $F(k,q) = g(k- q)h(k,q)$, and compute
\begin{align*}
     & \int\Bigl(1 + \frac{\wb(q)}{\wf(k)}\Bigr)^{\beta} \Bigl(1 + \frac{\wf(k)}{\wb(q)}\Bigr)^{\gamma}\frac{|F(k,q)|^2}{\wb(q)^{\alpha}} dk dq\\
     & =  \int\Bigl(1 + \frac{\wb(q)}{\wf(v+q)}\Bigr)^{\beta} \Bigl(1 + \frac{\wf(v+q)}{\wb(q)}\Bigr)^{\gamma}\frac{|g(v)|^2 |h(v+q,q)|^2}{\wb(q)^{\alpha}} dv dq.
\end{align*}
Using the same strategy as the one used in the proof of \cite[Lemma B.5]{AlvaMoll2021} one can prove that there exists a constant $c>0$ depending on $\mb$ and $\mf$ such that
\[
\forall v,q\in\RR^d, \|v\|\leq 1:\quad \frac1{c}\wf(q+v)\leq \wb(q) \leq c \wf(q + v).
\]
As a consequence 
\begin{align*}
     & \int\Bigl(1 + \frac{\wb(q)}{\wf(k)}\Bigr)^{\beta} \Bigl(1 + \frac{\wf(k)}{\wb(q)}\Bigr)^{\gamma}\frac{|F(k,q)|^2}{\wb(q)^{\alpha}} dk dq\\
     & \quad =  \int\Bigl(1 + \frac{\wb(q)}{\wf(q+v)}\Bigr)^{\beta} \Bigl(1 + \frac{\wf(q+v)}{\wb(q)}\Bigr)^{\gamma}\frac{|g(v)|^2|h(q+v,q)|^2}{\wb(q)^{\alpha}} dk dq\\
     &\quad  \leq (1+c)^{\beta+\gamma} \int \frac{|g(v)|^2 |h(v+q,q)|^2}{\wb(q)^{\alpha}} dv dq\\
& \quad = (1+c)^{\beta+\gamma} \int \frac{|g(k-q)|^2 |h(k,q)|^2}{\wb(q)^{\alpha}} dk dq,    
\end{align*}
which completes the proof.
\end{proof}

\begin{lem}
\label{RegFermionProp2}
Let $n\in \NN$, $\{z_i\}_{i=1}^n$ such that $z_i \in  \CC_-^*$  and $F\in L^2(\RR^d)$, then for any family of real numbers $\{\alpha_i\}_{i=1}^n$ with $0\leq\alpha_i\leq 1$, we have $\af(F) \colon \scrD (\prod^n_{i=1}(H_0-z_i)^{\alpha_i}) \to  \scrD(\prod^n_{i=1}(H_0-z_i)^{\alpha_i} ) $ and
\begin{align*}
\biggl\| \prod^n_{i=1}(H_0-z_i)^{\alpha_i} \af(F)  \prod^n_{i=1}R_0(z_i)^{\alpha_i} \biggr\| & \leq  (n+1) \bigl\| F \bigr\|,\\
\biggl\|  \prod^n_{i=1} R_0(z_i)^{\alpha_i} \cf(F)  \prod^n_{i=1}(H_0-z_i)^{\alpha_i} \biggr\| & \leq  (n+1) \bigl\| F \bigr\|.
\end{align*}
\end{lem}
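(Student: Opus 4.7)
The two stated inequalities are adjoint to each other: taking the adjoint of $\prod_{i=1}^n(H_0-z_i)^{\alpha_i}\af(F)\prod_{i=1}^n R_0(z_i)^{\alpha_i}$ (using $\af(F)^*=\cf(F)$ together with the commutativity of all functions of $H_0$) yields the operator in the second inequality with $z_i$ replaced by $\bar z_i\in\CC_-^*$. Hence it suffices to prove the second. The plan is to prove the second, since pull-through for $\cf(k)$ is better-behaved than for $\af(k)$: the former produces uniformly bounded shifted resolvents, while the latter can produce unbounded ones for real $z_i<0$ (with $\wf(k)>|z_i|$).

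The first step is to pull $\cf(k)$ through the resolvent product iteratively using $R_0(\bar z_i)\cf(k)=\cf(k)R_0(\bar z_i-\wf(k))$. Since $\re(\bar z_i-\wf(k))\leq-\wf(k)<0$, each operator $R_0(\bar z_i-\wf(k))$ is uniformly bounded in $k$, so its fractional powers are unambiguous. Combining with $(H_0-\bar z_i)^{\alpha_i}$ on the right yields the representation
\[
T_n^\ast:=\prod_{i=1}^n R_0(\bar z_i)^{\alpha_i}\cf(F)\prod_{i=1}^n (H_0-\bar z_i)^{\alpha_i}=\int F(k)\,\cf(k)\,\Theta(k)\,\mathrm{d}k,
\]
where $\Theta(k)=\prod_i\bigl[(H_0-\bar z_i)/(H_0+\wf(k)-\bar z_i)\bigr]^{\alpha_i}$ is a function of $H_0$. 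Spectral calculus gives $\|\Theta(k)\|\leq 1$ uniformly (each factor has modulus $\leq 1$ on $\sigma(H_0)\subset[0,\infty)$, as $|\lambda-\bar z_i|\leq|\lambda+\wf(k)-\bar z_i|$ for $\lambda\geq 0$, $\re z_i\leq 0$).

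Then I would establish $\|T_n^\ast\|\leq(n+1)\|F\|$ by induction on $n$, combined with Hadamard's three-line theorem applied in each $\alpha_i$ to reduce to $\alpha_i\in\{0,1\}$. The base case $n=0$ is the CAR bound $\|\cf(F)\|\leq\|F\|$. For the inductive step with $\alpha_{n+1}=1$, the commutator identity $[H_0,\cf(F)]=\cf(\wf F)$ yields
\[
T_{n+1}^\ast=T_n^\ast-R_0(\bar z_{n+1})\,T_n^{\ast,(\wf F)},
\]
where $T_n^{\ast,(\wf F)}$ denotes $T_n^\ast$ with $F$ replaced by $\wf F$. Repeating the pull-through on the correction produces an integral of the same form as $T_n^\ast$ with an additional factor $\wf(k)R_0(\bar z_{n+1}-\wf(k))$; the latter is uniformly bounded by $1$, so the extra $\wf(k)$ from $\cf(\wf F)$ is exactly absorbed. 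Combining the inductive hypothesis $\|T_n^\ast\|\leq(n+1)\|F\|$ with the bound $\|R_0(\bar z_{n+1})T_n^{\ast,(\wf F)}\|\leq\|F\|$ yields, via the triangle inequality, $\|T_{n+1}^\ast\|\leq(n+2)\|F\|$.

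The hard part will be establishing the uniform operator-norm bound $\bigl\|\int F(k)\cf(k)\Theta(k)\,\mathrm{d}k\bigr\|\leq\|F\|$ for operator-valued symbols $\Theta(k)$ that are functions of $H_0$ satisfying $\|\Theta(k)\|\leq 1$. The strategy would be to compute $T^\ast T$ directly for $T:=\int F(k)\cf(k)\Theta(k)\,\mathrm{d}k$: applying the CAR $\af(k_1)\cf(k_2)=\delta(k_1-k_2)-\cf(k_2)\af(k_1)$ splits $T^\ast T$ into a diagonal piece $\int|F(k)|^2\Theta(k)^\ast\Theta(k)\,\mathrm{d}k\leq\|F\|^2\,I$ and a cross term which is shown to be nonnegative (so its contribution to $T^\ast T$ is nonpositive) by a fermionic positivity argument parallel to the CAR identity $\|\cf(F)\phi\|^2=\|F\|^2\|\phi\|^2-\|\af(F)\phi\|^2$, extended to operator-valued symbols via the repeated pull-through $\af(k)f(H_0)=f(H_0+\wf(k))\af(k)$. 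A secondary subtlety is that $\wf F$ need not belong to $L^2$, so $\cf(\wf F)$ is generically unbounded; the pull-through structure together with the absorbing resolvent $R_0(\bar z_{n+1}-\wf(k))$ is precisely what keeps the correction bounded by $\|F\|$ rather than by the divergent $\|\wf F\|$, making the induction well-posed. The choice of the adjoint formulation (with $\cf(F)$ rather than $\af(F)$) is essential for handling the fractional powers $\alpha_i\in[0,1]$, since pulling $\af(k)$ would instead produce $(H_0-\wf(k)-z_i)^{\alpha_i}$ whose spectrum may cross the branch cut of $\lambda^{\alpha_i}$ when $z_i$ is real.
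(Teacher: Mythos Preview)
Your overall architecture --- induction on $n$, Hadamard three-line theorem in the newest exponent $\alpha_{n+1}$, commutator/pull-through at $\alpha_{n+1}=1$ --- is precisely the paper's. The gap is your ``hard part'': the claim that $\bigl\|\int F(k)\,\cf(k)\,\Theta(k)\,dk\bigr\|\leq\|F\|$ for arbitrary $\Theta(k)=\theta(k;H_0)$ with $\|\Theta(k)\|\leq 1$ is \emph{false}. A red flag you missed: if it held, applying it directly to $\Theta_n$ would give $\|T_n^\ast\|\leq\|F\|$, rendering both the induction and the constant $(n{+}1)$ superfluous. For a concrete counterexample, take three discrete fermion modes with distinct energies $\wf(j)$, the two-particle state $\phi=\tfrac{1}{\sqrt3}(|12\rangle+|13\rangle+|23\rangle)$, $F\equiv\tfrac{1}{\sqrt3}$, and $\theta(k;\wf(j))=1$ for all $(k,j)$ except $\theta(3;\wf(2))=-1$; a direct computation gives $\|T^\ast\phi\|^2=\tfrac{4}{3}>1=\|F\|^2\|\phi\|^2$. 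Your positivity sketch fails because after pull-through the kernel $\Theta(k_1;H_0+\wf(k_2))^\ast\,\Theta(k_2;H_0+\wf(k_1))$ does not factor as $A(k_2)^\ast A(k_1)$.

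The paper bounds the correction term not via a general lemma but by exploiting the \emph{specific} factor $\wf(k)R_0(z_{n+1}-\wf(k))$ it contains: split this as $\wf(k)^{1/2}R_0(\cdot)^{1/2}\cdot\wf(k)^{1/2}R_0(\cdot)^{1/2}$, pull one half through via $R_0(z_{n+1}-\wf(k))^{1/2}\af(k)=\af(k)R_0(z_{n+1})^{1/2}$ to extract a $k$-independent $R_0(z_{n+1})^{1/2}$ on the right, and then Cauchy--Schwarz in $k$ together with $\int\wf(k)\|\af(k)\psi\|^2\,dk\leq\|H_0^{1/2}\psi\|^2$ yields the bound $\|F\|\|\Psi\|\|\Phi\|$. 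Incidentally, your branch-cut worry (motivating the switch to $\cf(F)$) is unfounded: the paper works with $\af(F)$ throughout, pulling $\af(k)$ rightward through the \emph{resolvents} to produce $R_0(z_i-\wf(k))^{\alpha_i}$ (well-defined since $\re(z_i-\wf(k))<0$), and uses the composite $(H_0-z_i)^{\alpha_i}R_0(z_i-\wf(k))^{\alpha_i}$ only as a single bounded function of $H_0$ --- never $(H_0-z_i-\wf(k))^{\alpha_i}$ on its own.
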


\begin{proof}
    We proceed by induction on $n$. The initialisation with $n=1$ is done in \cite[Lemma~B.1]{AlvaMoll2021}. Assume that the proposition holds true for some $n\in \NN$. Consider $\{z_i\}_{i=1}^{n+1}$ such that $z_i \in  \CC_-^*$  and a family of real numbers $\{\alpha_i\}_{i=1}^{n+1}$ with $0\leq\alpha_i\leq 1$, we aim to estimate 
    \[
    \biggl\| \prod^{n+1}_{i=1}(H_0-z_i)^{\alpha_i} \af(F)  \prod^{n+1}_{i=1}R_0(z_i)^{\alpha_i} \biggr\|.
    \]
    Let $\Psi,\Phi\in\scrH$ with $\Psi\in\scrD(H_0^{n+1})$.
Consider the bounded function $f\colon \set{z\in\CC}{ 0\leq \re(z)\leq 1}\to \CC$
\[
f (\theta) = \Bigl\langle \Psi \,\Big|\, \Bigl(\prod^{n}_{i=1}(H_0-z_i)^{\alpha_i}\Bigr) (H_0-z_{n+1})^{\theta} \af(F) R_0(z_{n+1})^{\theta} \Bigl(\prod^{n}_{i=1}R_0(z_i)^{\alpha_i} \Bigr) \Phi \Bigr\rangle.
\]
Observe, from the induction hypothesis, that for $\lambda\in \RR$, we have
\[
|f(i\lambda)| \leq (n+1) \| F \|\Psi\|\|\Phi\|.
\]
 Moreover, for $\lambda\in \RR$ let 
 $\Psi' = (H_0-\bar{z}_{n+1})^{-i\lambda}\Psi$,
 $\Phi' = R_0(z_{n+1})^{i\lambda}\Phi$,
 \[
 \Psi'' = \Bigl(\prod^{n}_{i=1}(H_0-\bar{z}_i)^{\alpha_i}\Bigr)\Psi'
 \quad \textup{and}\quad
 \Phi'' = \Bigl(\prod^{n}_{i=1}R_0(z_i)^{\alpha_i}\Bigr)\Phi'.
 \]
 Note that $\|\Psi'\|=\|\Psi\|$ and $\|\Phi'\|=\|\Phi\|$. Estimate
 \begin{align*}
     |f(1+i\lambda)| & = \Big| \Bigl\langle \Psi'' \,\Big|\,  (H_0-z_{n+1}) \af(F) R_0(z_{n+1})  \Phi''\Bigr\rangle\Bigr|\\
     & = \Bigl|\Bigl\langle \Psi'' \,\Big|\,  (H_0-z_{n+1}) \int R_0(z_{n+1}-\wf(k)) F(k)\af(k) dk  \Phi'' \Bigr\rangle\Bigr|\\
      & \leq \Bigl|\Bigl\langle \Psi' \,\Big|\, \prod^{n}_{i=1}(H_0-z_i)^{\alpha_i} \af(F) \prod^{n}_{i=1}R_0(z_i)^{\alpha_i}  \Phi'\Bigr\rangle\Bigr|\\
     & \quad +   \Bigl|\Bigl\langle \Psi'' \,\Big|\,  \int \wf(k) R_0(z_{n+1}-\wf(k)) F(k)\af(k) dk   \Phi''\Bigr\rangle\Bigr|.
 \end{align*}
The induction hypothesis applies to the first term on the right-hand side. By Hadamard’s Three-line Theorem \cite[Theorem~5.2.1]{RS}, we are done if we can bound the second term on the right-hand side by $\|F\|\|\Psi\|\Phi\|$. To see this, we estimate
\begin{align*}
    & \Bigl|\Bigl\langle \Psi' \,\Big|\, \prod^{n}_{i=1}(H_0-z_i)^{\alpha_i}  \int \wf(k) R_0(z_{n+1}-\wf(k)) F(k)\af(k) dk \prod^{n}_{i=1}R_0(z_i)^{\alpha_i}  \Phi' \Bigr\rangle\Bigr|\\
    &\qquad   \leq \| \Psi' \| \int \Bigl\| \prod^{n}_{i=1}(H_0-z_i)^{\alpha_i} R_0(z_i-\wf(k))^{\alpha_i} \Bigr\|  \\
    & \qquad\quad \Bigl\| \wf(k) R_0(z_{n+1}-\wf(k))^{\frac12} \af(k) R_0(z_{n+1})^{\frac12} \Phi'\Bigr\| dk\\
    &\qquad  \leq \|F\| \| \Psi \|\| \Phi \|,
\end{align*}
where we used the pull-through formula before estimating.
This completes the proof.
 \end{proof}

\begin{lem}
\label{RegChainFermionOp}
Let  $n\in \NN$ and $F\in L^2(\RR^d\times\RR^d)$ be such that for each $q\in\RR^d$, we have $\|\wf(\cdot)^n F(\cdot,q)\|<\infty$. There exists a constant $c_n>0$, only depending on $n$, such that: For $\{z_i\}_{i=1}^n$ with $z_i \in \CC_-^*$ and any  collection of real numbers $\{\gamma_i\}_{i=1}^n$  with $0\leq \gamma_i \leq 1$, for $i\in\llbracket 1,n\rrbracket$, we have for all  $q\in\RR^d$ that
\begin{align*}
 N(q) &:= \biggl\|\prod^n_{i=1} R_0\bigl(z_i-\wb(q)\bigr)^{\gamma_i} \af\bigl(F(.,q)\bigr) \prod^n_{i=1} \bigl(H_0-z_i+\wb(q)\bigr)^{\gamma_i}\biggr\|,\\
 N_*(q) &:= \biggl\|\prod^n_{i=1} \bigl(H_0-z_i+\wb(q)\bigr)^{\gamma_i} \cf\bigl(F(.,q)\bigr) \prod^n_{i=1} R_0\bigl(z_i-\wb(q)\bigr)^{\gamma_i} \biggr\|
\end{align*}
both satisfy the bound $N(q),N_*(q) \leq c_n\|W(\cdot,q)^n F(\cdot,q)\|$. Recall the notation $W$ from \eqref{W-function}.
\end{lem}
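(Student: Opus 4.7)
The proof will proceed by induction on $n$, with the bound on $N_*(q)$ following from that on $N(q)$ by a straightforward adjoint computation. Indeed, using $(R_0(z-\wb(q))^{\gamma})^* = R_0(\bar z - \wb(q))^{\gamma}$, $((H_0 - z + \wb(q))^{\gamma})^* = (H_0 - \bar z + \wb(q))^{\gamma}$, and $\af(F(\cdot,q))^* = \cf(F(\cdot,q))$, together with $\bar z_i \in \CC_-^*$ whenever $z_i \in \CC_-^*$, one sees that $N_*(q)$ for $(z_1,\ldots,z_n)$ equals the norm of the adjoint of the operator defining $N(q)$ for $(\bar z_1,\ldots,\bar z_n)$. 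Hence it suffices to bound $N(q)$.

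For the base case $n = 1$ at $\gamma_1 = 0$, we have $N(q) = \|\af(F(\cdot,q))\| \leq \|F(\cdot,q)\| \leq \|W(\cdot,q) F(\cdot,q)\|$ since $W \geq 1$. At $\gamma_1 = 1$, the commutator identity $[\af(F(\cdot,q)), H_0] = \af(\wf F(\cdot,q))$ (obtained from $H_0 \af(k) = \af(k)(H_0 - \wf(k))$ and the convention $\af(f) = \int \overline{f(k)}\af(k)dk$) gives
\[
R_0(z-\wb(q))\,\af(F(\cdot,q))\,(H_0-z+\wb(q)) = \af(F(\cdot,q)) + R_0(z-\wb(q))\,\af(\wf F(\cdot,q)),
\]
whose norm is at most $\|F(\cdot,q)\| + \wb(q)^{-1}\|\wf F(\cdot,q)\| \leq 2\|W(\cdot,q) F(\cdot,q)\|$, using $\wf(k)/\wb(q) \leq W(k,q)$. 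Hadamard's three-line theorem applied to $\zeta \mapsto \langle\Psi | R_0(z-\wb(q))^{\zeta}\,\af(F(\cdot,q))\,(H_0-z+\wb(q))^{\zeta}\,\Phi\rangle$ (for $\Psi, \Phi \in \Hfin$) then interpolates across $\gamma_1 \in [0,1]$, yielding $c_1 = 2$.

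For the inductive step $n \to n+1$, the same Hadamard interpolation will be applied in the last exponent $\gamma_{n+1}$. At $\re \zeta = 0$ the induction hypothesis handles the middle factor $\prod_{i=1}^n R_0(z_i-\wb(q))^{\gamma_i}\,\af(F(\cdot,q))\prod_{i=1}^n (H_0-z_i+\wb(q))^{\gamma_i}$, with the outer operators $R_0(z_{n+1}-\wb(q))^{i\lambda}$ and $(H_0-z_{n+1}+\wb(q))^{i\lambda}$ bounded by spectral calculus. At $\re \zeta = 1$, commutativity of all the $H_0$-function factors allows to bring the outermost pair $R_0(z_{n+1}-\wb(q))\cdot\ldots\cdot(H_0-z_{n+1}+\wb(q))$ together, and the commutator identity again produces two terms. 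The first reduces to an $n$-level expression bounded by $c_n\|W^n F(\cdot,q)\|$; the second, after commuting $R_0(z_{n+1}-\wb(q))$ to the left (which contributes $\wb(q)^{-1}$ in norm) and invoking the induction hypothesis with $\wf F(\cdot,q)$ in place of $F(\cdot,q)$, is bounded by $\wb(q)^{-1} c_n \|W^n \wf F(\cdot,q)\| \leq c_n \|W^{n+1} F(\cdot,q)\|$. Hadamard's theorem then furnishes $c_{n+1} = 2 c_n$, so that $c_n = 2^n$ suffices.

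The main technical nuisance I expect will be accommodating the fact that $\|(H_0-z)^{i\lambda}\|$ is not uniformly bounded in $\lambda \in \RR$ when $z \in \CC_-^*$ is not real (it grows at most like $e^{\pi|\lambda|/2}$ via functional calculus on a spectral ray of argument in $(-\pi/2,\pi/2)$), so Hadamard's three-line theorem must be invoked in its Phragmén–Lindelöf form; this is standard and can be realized by multiplying the test function by an auxiliary factor $e^{\varepsilon(\zeta^2 - \gamma_{n+1}^2)}$ before passing to $\varepsilon \to 0$. All operator manipulations are first performed on the dense subspace $\Hfin$ (where every factor is well-defined and every product makes literal sense), and the operator-norm bound is then extended to $\scrH$ by density; note that the required auxiliary hypothesis $\|\wf^{n+1} F(\cdot,q)\|<\infty$ at stage $n+1$ ensures that the induction step also applies to $\wf F(\cdot,q)$ thanks to the lower bound $\wf \geq \mf > 0$.
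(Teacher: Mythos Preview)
Your inductive scheme with Hadamard interpolation in each successive exponent $\gamma_j$ is a genuinely different route from the paper's. The paper avoids interpolation in this lemma: writing $A_i=H_0-z_i+\wb(q)$ and $R_i=A_i^{-1}$, it inserts commuting factors $\prod_i A_i^{1-\gamma_i}$ and $\prod_i R_i^{1-\gamma_i}$ on either side to reduce to the integer-exponent core $\prod_i R_i\,\af(F)\prod_i A_i$, expands that algebraically as $\sum_{S\subset\llbracket 1,n\rrbracket}\bigl(\prod_{j\in S}R_j\bigr)\af(\wf^{|S|}F)$ via repeated use of $R_j\af(G)A_j=\af(G)+R_j\af(\wf G)$, and then invokes Lemma~\ref{RegFermionProp2} for the remaining fractional sandwich $\prod_i A_i^{1-\gamma_i}\af(\cdot)\prod_i R_i^{1-\gamma_i}$. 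This yields $c_n=(n+1)2^n$; your argument is more self-contained (it does not call Lemma~\ref{RegFermionProp2} as a black box) and produces the sharper constant $c_n=2^n$.

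The one concrete step that fails as written is the $e^{\varepsilon(\zeta^2-\gamma_{n+1}^2)}$ regularization for non-real $z_{n+1}$. Gaussian damping does make the test function bounded on the strip, but it does not remove the exponential growth from the \emph{boundary suprema}: if $|f(j+i\lambda)|\leq M_j e^{c|\lambda|}$ then $\sup_\lambda e^{-\varepsilon\lambda^2}|f(j+i\lambda)|= M_j e^{c^2/(4\varepsilon)}$, and Hadamard gives only $|f(\gamma)|\leq M_0^{1-\gamma}M_1^\gamma e^{c^2/(4\varepsilon)}$, which diverges as $\varepsilon\to0$. So the trick regularizes but does not neutralize the growth, and Stein-type interpolation would inject a $z$- and $q$-dependent factor into $c_n$. (The paper's own proof of Lemma~\ref{RegFermionProp2}, on which its argument here ultimately rests, glosses over the same difficulty when it asserts $\|(H_0-\bar z_{n+1})^{-i\lambda}\Psi\|=\|\Psi\|$ for non-real $z_{n+1}$.) A clean repair within your framework: for real $\gamma_i$ one has $A_i^{\gamma_i}=|A_i|^{\gamma_i}e^{i\gamma_i\arg A_i}$ with $e^{i\gamma_i\arg A_i}$ unitary and commuting with every function of $H_0$, so the operator you are bounding is unitarily conjugate to $\prod_i|A_i|^{-\gamma_i}\af(F)\prod_i|A_i|^{\gamma_i}$; with the positive self-adjoint $|A_i|$ the imaginary powers are genuinely unitary and your interpolation goes through, at the price of a slightly messier commutator at the inductive step (now involving $|A_{n+1}|$ rather than $H_0$, but still controlled by $\wf/\wb$ via the mean-value theorem).
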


\begin{proof} It suffices to prove the first of the two estimates. The second estimate follows from the first by taking adjoints. For the purpose of this proof, we abbreviate $H_i = H_0-z_i$ and $R_i(\lambda) = R_0(z_i+\lambda)$, for $\lambda\leq 0$ and $i=1,2,\dotsc,n$. We prove only the first estimate, the second one being its adjoint.
First, we compute
\begin{align*}
&\prod^n_{i=1} R_i\bigl(-\wb(q)\bigr)^{\gamma_i} \af\bigl(F(.,q)\bigr) \prod^n_{i=1} \bigl(H_i+\wb(q)\bigr)^{\gamma_i}\\
 &\quad =  \prod^n_{i=1}\bigl(H_i+\wb(q)\bigr)^{1-\gamma_i}\\
 &\qquad 
 \biggl\{\prod_{i=1}^n R_i\bigl(-\wb(q)\bigr) \af\bigl(F(.,q)\bigr) \prod^n_{i=1} \bigl(H_i+\wb(q)\bigr)\biggr\}\prod_{i=1}^n R_i\bigl(-\wb(q)\bigr)^{1-\gamma_i}\\
& \quad =  \prod^n_{i=1}\bigl(H_i+\wb(q)\bigr)^{1-\gamma_i}\\
&\qquad
\biggl\{\sum_{S\subset \llbracket 1,n\rrbracket}  \prod_{j\in S} R_j\bigl(-\wb(q)\bigr) \af\bigl(\wf(.)^{|S|}F(.,q)\bigr) \biggr\} \prod^n_{i=1}R_i\bigl(-\wb(q)\bigr)^{1-\gamma_i}.
\end{align*}
Secondly, using Lemma \ref{RegFermionProp2} we have for $S\subset\llbracket 1,n\rrbracket$
\begin{align*}
  &\biggl\| \prod^n_{i=1}\bigl(H_i+\wb(q)\bigr)^{1-\gamma_i}\af\bigl(\wf(.)^{|S|}F(.,q)\bigr)  \prod^n_{i=1}R_i\bigl(-\wb(q)\bigr)^{1-\gamma_i}  \biggr\|\\
  &\qquad \leq (n+1) \bigl\|\wf(.)^{|S|} F(.,q)\bigr\|.
\end{align*}
We may now conclude by estimating
\begin{align*}
 &\Biggl\| \prod^n_{i=1}\bigl(H_i+\wb(q)\bigr)^{1-\gamma_i}
 \biggl\{\sum_{S\subset \llbracket 1,n\rrbracket} \prod_{j\in S} R_j\bigl(-\wb(q)\bigr) \af\bigl(\wf(.)^{|S|}F(.,q)\bigr) \biggr\} \\
 & \qquad \prod^n_{i=1}R_i(-\wb(q))^{1-\gamma_i}\Biggr\| \\
 & \quad \leq (n+1) \sum_{S\subset \llbracket 1,n\rrbracket} \prod_{j\in S}  \bigl\|R_j\bigl(-\wb(q)\bigr)  \bigr\| \bigl\|\wf(.)^{|S|} F(.,q)\bigr\|\\
 & \quad \leq (n+1) \sum_{S\subset \llbracket 1,n\rrbracket}   \frac{1}{\wb(q)^{|S|}} \bigl\|\wf(.)^{|S|} F(.,q)\bigr\|.
 \end{align*}
 Observe that 
 \[
 \frac{1}{\wb(q)^{|S|}} \bigl\|\wf(.)^{|S|} F(.,q)\bigr\| \leq  \Bigl\|\Bigl(1+\frac{\wf(.)}{\wb(q)} \Bigr)^{|S|} F(.,q)\Bigr\|.
 \]
 Moreover, using the fact that the cardinality of $S$ is at most $n$ and that there are $2^n$ subset of $\llbracket 1, n \rrbracket$, we finally have
 \begin{align*}
& \biggl\| \prod^n_{i=1}\bigl(H_i+\wb(q)\bigr)^{1-\gamma_i}
 \biggl\{\sum_{S\subset \llbracket 1,n\rrbracket} \prod_{j\in S} R_j\bigl(-\wb(q)\bigr) \af\bigl(\wf(.)^{|S|}F(.,q)\bigr) \biggr\} \\
 & \qquad \prod^n_{i=1}R_i(-\wb(q))^{1-\gamma_i}\biggr\|
  \leq (n+1)  2^{n}    \Bigl\|\Bigl(1+\frac{\wf(.)}{\wb(q)}\Bigr)^n F(.,q)\Bigr\|,
\end{align*}
which completes the proof.
\end{proof}

In the following lemma, we use the amended notation from the proof of Theorem~\ref{MainTh} in Section~\ref{Sec-MainProof}, where we add the ultraviolet cutoff function $\chi$ to the subscript of the coupling functions $G^{(1)}_\Lambda$ and $G^{(2)}_\Lambda$, provided there is more than one cutoff function in play.

\begin{lem}
\label{RegularityOfK}
Let $\ell,N\in\NN$ and let us assume that $p>\frac{d}{2}-\frac{N}{N+1}$. For $\us\in\scrS^{(\ell)}_0$, $i\in\llbracket 1,\ell\rrbracket$, $\chi$ and $\chi'$ two cutoff function fulfilling Hypothesis \ref{Hypothesis-chi}, and $\Lambda,\Lambda'\in \RR$ with $\Lambda>0$ and $\Lambda'>0$, we set
\[
\begin{aligned}
\uF^{(\ell)}_{\us,\Lambda}  &= \bigl(G_{s_1,\Lambda},\dotsc,G_{s_\ell,\Lambda}\bigr),\\
\uF^{(i,\ell)}_{\us,\Lambda,\Lambda'} &= \bigl(G_{s_1,\Lambda},\dotsc, G_{s_{i-1},\Lambda}, G_{s_i,\Lambda}-G_{s_i,\Lambda'},G_{s_{i+1},\Lambda'},\dotsc,G_{s_\ell,\Lambda'}\bigr),\\
\uF^{(i,\ell)}_{\us,\chi,\chi',\Lambda} &= \bigl(G_{s_1,\chi,\Lambda},\dotsc, G_{s_{i-1},\chi,\Lambda}, G_{s_i,\chi,\Lambda}-G_{s_i,\chi',\Lambda},G_{s_{i+1},\chi',\Lambda},\dotsc,G_{s_\ell,\chi',\Lambda}\bigr).
\end{aligned}
\]
Define $\alpha = 1 - \frac{1}{N+1}$. Then $\int \wb(q)^{-2\alpha -2p}dq<\infty$ and
\begin{enumerate}[label = \textup{(\arabic*)}]
    \item\label{item-K-bounded} We have the estimates
    \begin{align*}
         K_\alpha\bigl(\uF^{(\ell)}_{\us,\Lambda}\bigr) & \leq \max\{\|h^{(1)}\|_\infty,\|h^{(2)}\|_\infty\}^{\ell}  \|g\|^{\ell}_2 \biggl(\int \frac{1}{\wb(q)^{2\alpha+2p} } dq\biggr)^{\frac{\ell}{2}},\\
         K_\alpha\bigl(\uF^{(i,\ell)}_{\us,\Lambda,\Lambda'}\bigr) & \leq \max\{\|h^{(1)}\|_\infty,\|h^{(2)}\|_\infty\}^{\ell}  \|g\|^{\ell}_2\biggl(\int \frac{1}{\wb(q)^{2\alpha+2p} } dq\biggr)^{\frac{\ell}{2}}, \\
         K_\alpha\bigl(\uF^{(i,\ell)}_{\us,\chi,\chi',\Lambda}\bigr) & \leq \max\{\|h^{(1)}\|_\infty,\|h^{(2)}\|_\infty\}^{\ell}  \|g\|^{\ell}_2  \biggl(\int \frac{1}{\wb(q)^{2\alpha+2p} } dq\biggr)^{\frac{\ell}{2}}.
    \end{align*}
    \item\label{item-K-Cauchy} For any $\varepsilon >0$, there exists $M \in \RR$ such that for any $\us\in\scrS_0^{(\ell)}$, $i\in\llbracket 1,\ell\rrbracket$ and $\Lambda'\geq  \Lambda \geq M$, we have $K_\alpha(\uF^{(i,\ell)}_{\us,\Lambda,\Lambda'})\leq \varepsilon$.
    \item\label{item-K-Cauchy2} For any $\varepsilon >0$, there exists $M \in \RR$ such that for any $\us\in\scrS_0^{(\ell)}$, $i\in\llbracket 1,\ell\rrbracket$ and $  \Lambda \geq M$, we have $K_\alpha(\uF^{(i,\ell)}_{\us,\chi,\chi',\Lambda})\leq \varepsilon$.
\end{enumerate}
\end{lem}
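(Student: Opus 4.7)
The integrability of $\int \wb(q)^{-2\alpha-2p}\,dq$ is the condition $2\alpha + 2p > d$, which rearranges to $p > \frac{d}{2} - \alpha = \frac{d}{2} - \frac{N}{N+1}$; this is exactly the hypothesis, so the integral is finite.

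For item \ref{item-K-bounded}, since each factor in $\uF^{(\ell)}_{\us,\Lambda}$ is of the form $G_{s_i,\Lambda}$, with $|G^{(j)}_\Lambda(k,q)| \leq \|h^{(j)}\|_\infty \wb(q)^{-p} |g(k \pm q)|$ (using $0\leq \chi\leq 1$ from Hypothesis~\ref{Hypothesis-chi}), I would estimate each factor of $K_\alpha$ by
\[
\int \frac{|G^{(j)}_\Lambda(k,q)|^2}{\wb(q)^{2\alpha}} \,dk\,dq \leq \|h^{(j)}\|_\infty^2 \int \frac{|g(k\pm q)|^2}{\wb(q)^{2\alpha+2p}}\,dk\,dq = \|h^{(j)}\|_\infty^2 \|g\|_2^2 \int \frac{dq}{\wb(q)^{2\alpha+2p}},
\]
where the last equality uses the change of variables $v = k \pm q$ to decouple the integrals. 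Taking the square root, and bounding $\|h^{(j)}\|_\infty \leq \max\{\|h^{(1)}\|_\infty, \|h^{(2)}\|_\infty\}$, one obtains the first stated bound. The bounds on $K_\alpha(\uF^{(i,\ell)}_{\us,\Lambda,\Lambda'})$ and $K_\alpha(\uF^{(i,\ell)}_{\us,\chi,\chi',\Lambda})$ follow by the same argument, noting that the $i$-th factor, which is a difference of two coupling functions, is bounded pointwise in absolute value by $2 \|h^{(j)}\|_\infty \wb(q)^{-p} |g(k\pm q)|$ thanks to $0\leq \chi,\chi'\leq 1$, giving the same bound (with a harmless factor of $2$ that I absorb).

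For items \ref{item-K-Cauchy} and \ref{item-K-Cauchy2}, the strategy is dominated convergence applied to the $i$-th factor, since all other factors are uniformly bounded by \ref{item-K-bounded}. For \ref{item-K-Cauchy}, write the $i$-th integrand as
\[
\frac{|h^{(j)}(k,q)|^2 |g(k\pm q)|^2}{\wb(q)^{2p+2\alpha}} \bigl| \chi(k/\Lambda)\chi(q/\Lambda) - \chi(k/\Lambda')\chi(q/\Lambda') \bigr|^2,
\]
which is dominated by $4 \|h^{(j)}\|_\infty^2 \wb(q)^{-2p-2\alpha}|g(k\pm q)|^2$, an integrable majorant independent of $\Lambda,\Lambda'$. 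By continuity of $\chi$ at $0$ with $\chi(0)=1$, for each fixed $(k,q)$ the bracketed difference tends to $1 \cdot 1 - 1 \cdot 1 = 0$ as $\min(\Lambda,\Lambda') \to \infty$. Dominated convergence then gives convergence of the $L^2$ integral to $0$. For \ref{item-K-Cauchy2}, exactly the same argument works with $\chi(k/\Lambda)\chi(q/\Lambda) - \chi'(k/\Lambda)\chi'(q/\Lambda)$, which again converges pointwise to $0$ as $\Lambda\to\infty$, since both $\chi$ and $\chi'$ satisfy Hypothesis~\ref{Hypothesis-chi}.

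The uniformity in $\us \in \scrS_0^{(\ell)}$ and $i \in \llbracket 1,\ell\rrbracket$ is automatic: the dominated convergence conclusion only depends on which of the four signatures $s_i$ is in play (hence on which of $G^{(1)}$ or $G^{(2)}$ and whether the argument of $g$ is $k-q$ or $k+q$), and there are only finitely many such cases, so one may choose $M$ as the maximum over this finite collection. There is no serious obstacle here; the only subtlety is to check that the dominating function does not depend on $\Lambda, \Lambda'$ (or $\chi, \chi'$), which is immediate from $0 \leq \chi, \chi' \leq 1$.
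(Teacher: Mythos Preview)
Your approach is essentially the same as the paper's: change of variables $v=k\pm q$ for the bounds, and dominated convergence for the Cauchy statements. One small correction: in item~\ref{item-K-bounded} you cannot simply ``absorb'' a factor of $2$, since the constants in the lemma are explicit. But the factor is not needed: since $0\leq\chi,\chi'\leq 1$, both products $\chi_\Lambda(k)\chi_\Lambda(q)$ and $\chi_{\Lambda'}(k)\chi_{\Lambda'}(q)$ lie in $[0,1]$, so their difference has absolute value at most $1$ (not $2$), and the stated bound holds as written. The paper makes exactly this observation. For item~\ref{item-K-Cauchy}, the paper decomposes the difference via $\tchi=\chi-1$ to reduce to single-parameter integrals $I_\Lambda$, $I_{\Lambda'}$, whereas you apply dominated convergence directly with two parameters; both are fine.
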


\begin{proof} First note that by the constraint on $p$ and the choice of $\alpha$, we have $2\alpha +2p> d+2-\frac{2N+2}{N+1} = d$ and hence, $\int \wb(q)^{-2\alpha -2p}dq<\infty$.

    We begin with \ref{item-K-bounded}. Note that 
        \[
        K_\alpha\bigl(\uF^{(\ell)}_{\us,\Lambda}\bigr) = \prod^\ell_{i=1} \biggl(\int \frac{\bigl|G_{\us,\Lambda}(k_i,q_i)\bigr|^2}{\wb(q_i)^{2\alpha} } dk dq\biggr)^{\frac12}
        \]
        and for any $j \in \llbracket 1, \ell \rrbracket$,
        \begin{align}\label{EstK1}
          \nonumber \int \frac{\bigl|G_{s_j,\Lambda}(k_j,q_j)\bigr|^2}{\wb(q_j)^{2\alpha} } dk_j dq_j & =   \int \frac{\bigl|h^{\sharp}(k_j,q_j) g(k_j\pm q_j) \chi_\Lambda(k_j) \chi_\Lambda(q_j)\bigr|^2}{\wb(q_j)^{2\alpha+2p} } dk_j dq_j\\
          \nonumber & \leq \|h^{\sharp}\|_{\infty} \int \frac{|g(v) |^2}{\wb(q_j)^{2\alpha+2p} } dv dq_j\\
          & =\|h^{\sharp}\|^2_{\infty}  \|g\|^2_2 \int \frac{1}{\wb(q_j)^{2\alpha+2p} } dq_j.
        \end{align}
        That $K_\alpha(\uF^{(i,\ell)}_{\us,\Lambda,\Lambda'})$ satisfies the same estimate, follows from the observation that $|\chi_\Lambda(q_j)\chi_\Lambda(k_j)-\chi_{\Lambda'}(q_j)\chi_{\Lambda'}(k_j)|\leq 1$.  Moreover, $K_\alpha(\uF^{(i,\ell)}_{\us,\chi,\chi',\Lambda})$ can be estimated following the same strategy and from the observation $|\chi_\Lambda(q_j)\chi_\Lambda(k_j)-\chi'_{\Lambda}(q_j)\chi'_{\Lambda}(k_j)|\leq 1$.
        
         We now turn to \ref{item-K-Cauchy}. Let $\varepsilon>0$.  Due to \eqref{EstK1}, we have already established that
           \[
           \begin{aligned}
                    K_\alpha\bigl(\uF^{(i,\ell)}_{\us,\Lambda,\Lambda'}\bigr)  &\leq 
                   \|h^{\sharp}\|^{\ell-1}_{\infty}  \|g\|^{\ell-1}_2 \left(\int \frac{1}{\wb(q)^{2\alpha+2p} } dq \right)^{\frac{\ell-1}{2}}\\
                   & \quad 
                     \biggl(\int \frac{\bigl|G^{\sharp}_{\Lambda'}(k_i,q_i)-G^{\sharp}_{\Lambda}(k_i,q_i)\bigr|^2}{\wb(q_i)^{2\alpha} } dk_i
                     dq_i\biggr)^{\frac12}.
            \end{aligned}
        \]
Abbreviate $\tchi = \chi-1$ and for $\Lambda>0$, set $\tchi_\Lambda(k) = \tchi(k/\Lambda)$. 
We may without loss of generality assume that $\Lambda'\geq \Lambda$ and estimate, using that $|\tchi|\leq 1$,
\begin{align*}
& \int \frac{\bigl|G^{\sharp}_{\Lambda'}(k_i,q_i)-G^{\sharp}_{\Lambda}(k_i,q_i)\bigr|^2}{\wb(q_i)^{2\alpha} } dk_i dq_i \\
 &  \qquad \leq \|h^{\sharp}\|_{\infty}^2 \int \frac{|g(k\pm q)|^2\bigl|\chi_{\Lambda'}(q)\chi_{\Lambda'}(k)-\chi_\Lambda(q)\chi_\Lambda(k)\bigr|^2}{\wb(q)^{2\alpha+2p} }dq dk\\
&  \qquad  \leq \|h^{\sharp}\|_{\infty}^2 \int \frac{|g(k\pm q)|^2 4\bigl(\tchi_{\Lambda'}(q)+\tchi_\Lambda(q) +\tchi_{\Lambda'}(k)+\tchi_\Lambda(q)\bigr)^2}{\wb(q)^{2\alpha+2p} }dq dk\\
& \qquad\leq 8 \|h^{\sharp}\|_{\infty}^2\bigl(I_\Lambda+I_{\Lambda'}\bigr),
\end{align*}
where, for $\Lambda>0$, 
\[
I_\Lambda = \int \frac{|g(k\pm q)|^2\bigl|\bigl(\tchi_\Lambda(q) +\tchi_\Lambda(k)\bigr)^2}{\wb(q)^{2\alpha+2p} }dq dk
\]
By Lebesgue's dominated convergence theorem, $\lim_{\Lambda\to\infty} I_\Lambda = 0$, where we used that $\tchi$ is continuous at zero and $\tchi(0)=0$. Hence we may pick $\Lambda_0>0$ large enough, such that $K_\alpha\bigl(\uF^{(i,\ell)}_{\us,\Lambda,\Lambda'}\bigr) \leq \varepsilon$, for $\Lambda,\Lambda'\geq \Lambda_0$. This completes the proof of  \ref{item-K-Cauchy}.

Finally, we turn to \ref{item-K-Cauchy2}. Observe that 
\begin{align*}
    K_\alpha\bigl(\uF^{(i,\ell)}_{\us,\chi,\chi',\Lambda}\bigr) & \leq  \max\{\|h^{(1)}\|_\infty,\|h^{(2)}\|_\infty\}^{\ell-1}  \|g\|^{\ell-1}_2 \biggl(\int \frac{1}{\wb(q)^{2\alpha+2p} } dq\biggr)^{\frac{\ell-1}{2}} \\
    &\qquad   \biggl(\int \frac{\bigl|G^{\sharp}_{\chi,\Lambda}(k_i,q_i)-G^{\sharp}_{\chi',\Lambda}(k,q)\bigr|^2}{\wb(q)^{2\alpha} } dk
                     dq\biggr)^{\frac12}.
\end{align*}
Let $\varepsilon>0$. Estimating as we did above, we have
\begin{align*}
& \int \frac{\bigl|G^{\sharp}_{\Lambda'}(k_i,q_i)-G^{\sharp}_{\Lambda}(k_i,q_i)\bigr|^2}{\wb(q_i)^{2\alpha} } dk_i dq_i \\
& \qquad   \leq \|h^{\sharp}\|^2_{\infty}\int \frac{|g(k\pm q)|^2 4\bigl(\tchi'_{\Lambda}(q)+\tchi_\Lambda(q)+\tchi'_{\Lambda}(k)+\tchi_{\Lambda}(k)\bigr)^2}{\wb(q)^{2\alpha+2p} }dq dk,
\end{align*}
where, as for $\tchi$, we abbreviate $\tchi' = \chi'-1$ and $\tchi'_\Lambda(k) = \tchi'(k/\Lambda)$.
By Lebesgue's dominated convergence theorem, applied to the right-hand side above, we find that 
there exists $\Lambda_0>0$ such that for any $\Lambda \geq \Lambda_0$, we have $K_\alpha(\uF^{(i,\ell)}_{\us,\chi,\chi',\Lambda})\leq \varepsilon$.
\end{proof}

The following lemma follows easily by induction.

\begin{lem}
\label{lem6}
    Let $\ell\in \NN$ and consider operator-valued functions $B_i(\{p_k\}_{k=i}^\ell)$ and $A_i(p_{i})$, with $i=1,2,\dotsc,\ell$ and variables $p_1,\dotsc,p_\ell\in\RR^n$. Assume 
    \begin{align*}
       & \forall i \in \llbracket 1, \ell\rrbracket, ~ \exists b_i \in \RR: \quad \bigl\|B_i\bigl(\{p_k\}_{k=i}^\ell\bigr)\bigr\| \leq b_i, \quad \textup{for all } p_i,\dotsc,p_\ell\in\RR^n, \\
       & \forall i \in  \llbracket 1, \ell\rrbracket, ~ \exists a_i \in \RR: \quad  \int \bigl\|A_i(p_{i}) \phi\bigr\|^2  dp_{i}\leq a_i\bigl\| \phi \bigr\|^2, \quad \textup{for all } \phi \in \scrH.
    \end{align*}
    then, for all $\phi \in \scrH$,
    \[
    \int \biggl\| \prod_{i=1}^{\ell} B_i\bigl(\{p_k\}_{k=i}^\ell\bigr) A_i(p_{i}) \phi \biggr\|^2 \prod^\ell_{i=1} d p_{i}\leq \prod_{i=1}^{\ell} \bigl(a_i b^2_i\bigr) \bigl\| \phi \bigr\|^2 . 
    \]
\end{lem}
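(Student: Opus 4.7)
The proof will proceed by a straightforward induction on $\ell$. The base case $\ell=1$ reduces, after using the pointwise operator bound $\|B_1(p_1)A_1(p_1)\phi\|^2\leq b_1^2\|A_1(p_1)\phi\|^2$, to the hypothesis on $A_1$.

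For the induction step, assume the estimate has been established for $\ell-1$ and let $\phi\in\scrH$. Set
\[
\psi(p_2,\dotsc,p_\ell) = \prod_{i=2}^{\ell} B_i\bigl(\{p_k\}_{k=i}^\ell\bigr) A_i(p_i) \phi.
\]
Using the pointwise operator norm bound on $B_1$, uniform in $(p_1,\dotsc,p_\ell)$, we have
\[
\biggl\| B_1\bigl(\{p_k\}_{k=1}^\ell\bigr) A_1(p_1)\psi(p_2,\dotsc,p_\ell)\biggr\|^2 \leq b_1^2 \bigl\|A_1(p_1)\psi(p_2,\dotsc,p_\ell)\bigr\|^2.
\]
By Tonelli's theorem, we may integrate first in $p_1$ and apply the hypothesis on $A_1$ with $\psi(p_2,\dotsc,p_\ell)$ in place of $\phi$. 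This yields
\[
\int \biggl\|\prod_{i=1}^\ell B_i\bigl(\{p_k\}_{k=i}^\ell\bigr)A_i(p_i)\phi\biggr\|^2 dp_1 \leq a_1 b_1^2 \bigl\|\psi(p_2,\dotsc,p_\ell)\bigr\|^2.
\]
Integrating now over $p_2,\dotsc,p_\ell$ and applying Tonelli again, the right-hand side becomes $a_1 b_1^2$ times the integral to which the inductive hypothesis applies (with the shifted collection of $\ell-1$ operator-valued functions indexed by $i=2,\dotsc,\ell$). The induction hypothesis then gives the bound $\prod_{i=2}^\ell (a_i b_i^2)\|\phi\|^2$, and combining with the prefactor $a_1 b_1^2$ yields the claim.

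No substantial obstacle is anticipated; the only point requiring a little care is the measurability/integrability needed to justify Tonelli at each step, which is standard given that the hypotheses are formulated as integral inequalities, and the uniformity of the bound $b_i$ on $B_i$ in the remaining variables $(p_i,\dotsc,p_\ell)$ ensures the pointwise estimate can be extracted before integration.
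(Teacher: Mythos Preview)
Your proposal is correct and matches the paper's approach: the paper simply states that the lemma ``follows easily by induction'' without giving further details, and your argument is precisely the natural inductive proof.
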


\section{Ordered Wick Monomials, motivated by examples}\label{app-oo}

   The purpose of this appendix is to discuss some examples that motivate our definition of ordered Wick monomials in Section~\ref{Sec-OrdOp}. The basic challenge is how to exploit that smeared fermionic annihilation and creation operators are bounded, when estimating regular Wick monomials. After normal ordering renormalized handed blocks, expressing them as sums of regular Wick monomials, cf. Subsection~\ref{subsec-NO-blocks},
we do not directly have such smeared objects, since the integration variables, due to the pull-trough formula, appear also in the resolvents. One first attempt of dealing with this is to undo the pull-through in order to arrive at such smeared expressions. A simple example where this works comes from the term without contractions obtained from normal ordering the operator $H^{\ab\af}(F_1) R_0(z) H^{\cb\cf}(F_2)$. The term without contractions has the following kernel: 
\begin{equation}\label{OWM-ex1}
\begin{aligned}
F_1(q_1,k_1)& F_2(q_2,k_2)\cf(k_2)\cb(q_2)  \\
 & R_0\bigl(z-\wb(q_1)-\wb(q_2)-\wf(k_1)-\wf(k_2)\bigr)\ab(q_1) \af(k_1).
\end{aligned}
\end{equation}
One may undo the normal ordering of the $\af,\cf$'s in order to recover 
\begin{equation}\label{OWM-ex1-term}
\af\bigl(\overline{F_1(q_1,\cdot)}\bigr)
\cb(q_2) R_0\bigl(z-\wb(q_1)-\wb(q_2)\bigr)\ab(q_1)  \cf\bigl(F_2(q_2,\cdot)\bigr),
\end{equation}
plus a contraction term that can be estimated separately.
Here we have completed a double integration to arrive at smeared operators.
The above expression, after another double integration, gives rise to an operator that is easily estimated, yielding an estimate with the desired UV behavior.

However, in general, we have not been able to make this strategy work. An obstacle arises for higher-order Wick monomials, when the smeared operators end up in the middle of a product of resolvents, acting as a one-way roadblock.
In fact, Lemma~\ref{RegChainFermionOp} only allows us to pull resolvents through such a smeared object in one direction, whereas our method requires us to be able to distribute fractional resolvents more freely. 

Let us now explain, first on the kernel \eqref{OWM-ex1}, how we elect instead to estimate Wick monomials. In the following computation, we write $z' = z - \wb(q_1) - \wb(q_2)$ and abbreviate $\omega = \wf$. Rewrite by adding and subtracting resolvents 
     \begin{align*}
     & R_0\bigl(z'-\omega(k_1)-\omega(k_2)\bigr)  \\ &\quad = -\omega(k_2)R_0\bigl(z'-\omega(k_1)-\omega(k_2)\bigr)R_0\bigl(z'-\omega(k_1)\bigr) + R_0\bigl(z'-\omega(k_1)\bigr) \\
     & \quad = \bigl\{\omega(k_2)R_0\bigl(z'-\omega(k_2)\bigr)\bigr\}
      \bigl\{\omega(k_1)R_0\bigl(z'-\omega(k_1)-\omega(k_2))\bigr\}
     R_0(z'-\omega(k_1))\\
     &\qquad  +\bigl\{\omega(k_2)R_0\bigl(z'-\omega(k_2)\bigr)\bigr\}\bigl\{\omega(k_1)R_0\bigl(z'-\omega(k_1)\bigr)\bigr\}R_0(z')\\
      &\qquad  -\underline{\bigl\{\omega(k_2)R_0\bigl(z'-\omega(k_2)\bigr)\bigr\}}R_0(z')
      \\  & \qquad 
     - \underline{R(z')\bigl\{\omega(k_1) R_0\bigl(z'-\omega(k_1)\bigr)\bigr\} } + \underline{R(z')}.  
     \end{align*}
     The first two summands (not underlined) can be dealt with using kinetic energy bounds that do not exploit fermi statistics.
Inserting the three underlined terms into \eqref{OWM-ex1} and completing the $k_1$ and $k_2$ integrals, yield
\begin{align*}
&-\int F_2(q_2,k_2) \cf(k_2)\cb(q_2) \omega(k_2)R_0\bigl(z'-\omega(k_2)\bigr)R_0(z')\ab(q_1) \af\bigl(\overline{F_1(\cdot,q_1)}\bigr) dk_2\\
& \quad - \int F_1(q_1,k_1) \cf\bigl(F_2(\cdot,q_2)\bigr)\cb(q_2) R(z')\omega(k_1) R_0\bigl(z'-\omega(k_1)\bigr)\ab(q_1) \cf(k_1) dk_1\\
& \quad +  \cf\bigl(F_2(\cdot,q_2)\bigr)\cb(q_2) R(z') \ab(q_1) \af\bigl(\overline{F_1(\cdot,q_1)}\bigr).
\end{align*}
The third summand above is almost the same as the one we arrived at in \eqref{OWM-ex1-term}, except the smeared $\cf$ and $\af$ factors have switched places. 
In all three terms above, we have effectively gained a full resolvent that is not needed to control the remaining pointwise annihilation and creation operators and therefore supplies extra momentum decay. After integrating out $q_1$ and $q_2$, one may estimate, using boundedness of the smeared fermionic operators and kinetic energy bounds for the rest, gaining an extra momentum decay of the order $(1+|q_1|)^{-\frac12} (1+|q_2|)^{-\frac12}$, apart from what is gained from the kinetic energy bounds alone. 

For Wick monomials of higher order, with multiple resolvents, several of which may include an $\wf(k_j)$ energy shift that should be removed, we have to make a choice of order in which we add and subtract resolvents in a telescopic expansion. We have chosen a particular order that seems to minimize the number of times one has to invoke (almost) momentum conservation to pass between decay in $q_j$ and decay in $k_j$.

\noindent\textbf{Acknowledgement.} The authors are grateful for support by the Independent Research Fund Denmark, via the project grant “Mathematical Aspects of Ultraviolet Renormalization” (8021-00242B) and for the support of CNRS, via the project PEPS JCJC 2023.
Finally, the second author would like to thank Jeremy Faupin and Universit\'e de Lorraine for hospitality.

\bibliographystyle{amsalpha}

\listoffixmes

\end{document}